\DeclareFontFamily{U}{mathx}{\hyphenchar\font45}
\DeclareFontShape{U}{mathx}{m}{n}{
<5> <6> <7> <8> <9> <10>
<10.95> <12> <14.4> <17.28> <20.74> <24.88>
mathx10}{}
\DeclareSymbolFont{mathx}{U}{mathx}{m}{n}
\DeclareMathAccent{\widecheck}{0}{mathx}{"71}
\numberwithin{equation}{section}
\DeclareMathOperator{\dvol}{dvol}
\def\eerr{{\mathcal{E}rr}}
\newcommand{\BBb}{\underline{\mathcal{B}}}
\newcommand{\bea}{\begin{eqnarray}}
\newcommand{\eea}{\end{eqnarray}}
\def\beaa{\begin{eqnarray*}}
\def\eeaa{\end{eqnarray*}}
\def\ba{\begin{array}}
\def\ea{\end{array}}
\def\be#1{\begin{equation}\label{#1}}
\def\eeq{\end{equation}}
\def\Gc{{\widecheck{G}}}
\def\gcm{{}^{(PG)}}
\def\dko{\dkb^{\leq 1}}
\def\dkk{\dkb^{\leq k}}
\def\Hbc{\widecheck{\Hb}}
\def\Mk{\mathfrak{M}}
\def\Bk{\mathfrak{B}}
\def\rS{r^\S}
\def\DDb{\overline{\DD}}
\def\fl{\mathcal{L}}
\def\Mint{{\, ^{(int)}\MM}}
\def\Mtop{{ \, ^{(top)} \MM}}
\def\Bkd{{}^*\Bk}
\def\ks{{k_{small}}}
\def\kl{{k_{large}}}
\def\sc{{}^{({LG})}}
\def\JpS{{J^{(p,\S)}}}
\def\Jpp{{J'}^{(p)}}
\def\Zscr{{\mathscr{Z}}}
\def\SSS{\mathbb{S}}
\def\hk{\mathfrak{h}}
\def\doo{\overset{\circ}{d}}
\def\ddo{\overset{\circ}{d}}
\def\Fb{{\underline{F}}}
\def\dec{\de_{dec}}
\def\dual{{}^*}
\def\Xb{{\underline{X}}}
\def\Mt{\widetilde{\MM}}
\def\vr{\varrho}
\def\RRR{\mathbb{R}}
\def\OO{O^+}
\def\Cb{{\underline{C}}}
\def\atrchb{{}^{(a)}\trchb}
\def\Gag{\Ga_g}
\def\Gab{\Ga_b}
\def\Xh{\widehat{X}}
\def\atrch{{}^{(a)}\trch}
\def\Xhb{\widehat{\underline{X}}}
\def\Deo{\overset{\circ}{\De}}
\def\ovla{\overset{\circ}{\la}}
\def\nabo{\overset{\circ}{\nab}}
\def\divo{\overset{\circ}{\sdiv}\,}
\def\curlo{\overset{\circ}{\curl}\,}
\def\Uscr{{\mathscr{U}}}
\def\Jp{J^{(p)}}
\def\CC{\mathcal{C}}
\def\Wbscrone{{\underline{\mathscr{W}}}_1}
\def\Wbscr{{\underline{\mathscr{W}}}_2}
\def\Jscr{\mathscr{J}}
\def\rhod{\dual\rho}
\def\DD{\mathcal{D}}
\def\mub{{\underline{\mu}}}
\def\trchc{\widecheck{\trch}}
\def\trXc{\widecheck{\tr X}}
\def\trXbc{\widecheck{\tr\Xb}}
\def\Pc{\widecheck{P}}
\def\Xib{\underline{\Xi}}
\def\Xbh{\widehat{\Xb}}
\def\DDov{{\ov{\DD}}}
\def\sk{\mathfrak{s}}
\def\qb{\ov{q}}
\def\D{\mathbf{D}}
\def\DD{\mathcal{D}}
\def\dkb{\slashed{\dk}}
\def\dko{\dkb^{\leq 1}}
\def\TT{\mathcal{T}}
\renewcommand{\c}{\cdot}
\def\Hb{\mathbb{H}}
\def\Hc{\widecheck{H}}
\DeclareMathOperator{\sdiv}{div}
\def\fb{{\underline{f}}}
\def\D{\mathbf{D}}
\def\EE{\mathcal{E}}
\def\th{\theta}
\newcommand{\ov}{\overline}
\def\NNN{\mathbb{N}}
\def\Xib{{\underline{\Xi}}}
\renewcommand{\O}{\mathbf{O}}
\newtheorem{thm}{Theorem}[section]
\newtheorem{prp}[thm]{Proposition}
\newtheorem{cor}[thm]{Corollary}
\newtheorem{lm}[thm]{Lemma}
\newtheorem{df}[thm]{Definition}
\newtheorem{rk}[thm]{Remark}
\newtheorem{remark}[thm]{Remark}
\newtheorem{prop}[thm]{Proposition}
\newtheorem{proposition}[thm]{Proposition}
\newtheorem{lem}[thm]{Lemma}
\newtheorem{lemma}[thm]{Lemma}
\renewcommand{\div}{\sdiv}
\def\AA{\mathcal{A}}
\def\om{\omega}
\def\hot{\widehat{\otimes}}
\newcommand{\ep}{\varepsilon}
\newcommand{\la}{\lambda}
\newcommand{\Up}{\Upsilon}
\newcommand{\ga}{\gamma}
\def\II{\mathscr{I}}
\def\les{\lesssim}
\def\MM{\mathcal{M}}
\DeclareMathOperator{\curl}{curl}
\newcommand{\f}{\frac}
\DeclareMathOperator{\tr}{tr}
\DeclareMathOperator{\grad}{grad}
\def\The{\Theta}
\def\Thb{\underline{\The}}
\newcommand{\vphi}{\varphi}
\renewcommand{\a}{\alpha}
\renewcommand{\b}{\beta}
\newcommand{\Ga}{\Gamma}
\newcommand{\pr}{\partial}
\newcommand{\chib}{{\underline{\chi}}}
\newcommand{\etab}{{\underline{\eta}}}
\newcommand{\xib}{{\underline{\xi}}}
\newcommand{\omb}{{\underline{\omega}}}
\newcommand{\g}{{\bf g}}
\newcommand{\gt}{\widetilde{\g}}
\renewcommand{\aa}{\underline{\a}}
\renewcommand{\S}{{\mathbf{S}}}
\DeclareMathOperator{\lot}{l.o.t.}
\def\HH{\mathcal{H}}
\newcommand{\bb}{\underline{\b}}
\DeclareMathOperator{\err}{Err}
\numberwithin{equation}{section}
\def\M{\MM}
\def\trch{\tr\chi}
\def\trchb{\tr\chib}
\def\La{\Lambda}
\def\Si{\Sigma}
\def\LL{\mathcal{L}}
\def\hch{\widehat{\chi}}
\def\hchb{\widehat{\chib}}
\def\ombc{\widecheck{\omb}}
\def\ze{\zeta}
\def\nab{\nabla}
\def\De{\Delta}
\def\R{\mathbf{R}}
\def\BB{\mathcal{B}}
\def\JJ{\mathcal{J}}
\def\D{{\bf D}}
\def\de{\delta}
\def\si{\sigma}
\def\far{{}^{(far)}\M}
\def\ombc{\widecheck{\omb}}
\def\zec{\widecheck{\ze}}
\def\PP{\mathcal{P}}
\def\QQ{\mathcal{Q}}
\def\dk{\mathfrak{d}}
\def\Ab{{\underline{A}}}
\def\Bb{{\underline{B}}}
\def\Thb{{\underline{\Theta}}}
\def\Hb{\underline{H}}
\def\Xscr{\mathscr X}
\def\Xbscr{\underline{\mathscr X}}
\def\aXscr{{}^{(a)}\mathscr X}
\def\aXbscr{{}^{(a)}\underline{\mathscr X}}
\def\Ybscr{\underline{\mathscr Y}}
\def\Bbscr{\underline{\mathscr B}}
\def\Abscr{\underline{\mathscr A}}
\def\Fscr{\mathscr{F}}
\def\Hscr{\mathscr H}
\def\Mscr{\mathscr M}
\def\Mbscr{\underline{\mathscr M}}
\def\Pscr{\mathscr P}
\def\Qscr{\mathscr Q}
\def\JJ{\mathcal J}
\def\s2{\sqrt{2}}
\def\trXc{\widecheck{\tr X}}
\def\Zc{\widecheck{Z}}
\def\Jk{\mathfrak{J}}
\def\Mext{{}^{(ext)}\MM}
\def\chih{\widehat{\chi}}
\def\chibh{\widehat{\chib}}
\DeclareMathOperator{\KSAF}{KSAF^+}
\title{A canonical foliation on null infinity in perturbations of Kerr}
\author{Sergiu Klainerman, Dawei Shen, Jingbo Wan}
\date{\vspace{-5ex}}
\begin{document}
\maketitle
\vspace{-2ex}
\begin{center}{\it\large Dedicated\footnote{In recognition for their fundamental contributions connected to the subject matter of this paper.} to Demetrios Christodoulou and Roger Penrose.}
\end{center}
\vspace{0cm}
\begin{abstract}
Kerr stability for small angular momentum has been proved by Klainerman-Szeftel, Giorgi-Klainerman-Szeftel and Shen in the series of works \cite{KS:Kerr1,KS:Kerr2,KS:main,GKS,Shen}. Some of the most basic conclusions of the result, concerning various physical quantities on the future null infinity $\II^+$ are derived in Section 3.8 of \cite{KS:main}. Further important conclusions were later derived in \cite{AHS} and \cite{K:Chen}. In this paper, based on the existence and uniqueness results for GCM spheres of \cite{KS:Kerr1,KS:Kerr2}, we establish the existence of a canonical foliation on $\II^+$ for which the null energy, linear momentum, center of mass and angular momentum are well defined and satisfy the expected physical laws of gravitational radiation. The rigid character of this foliation eliminates the usual ambiguities related to these quantities in the physics literature. We also show that under the initial assumption of \cite{KS:main,GKS}, the center of mass of the black hole has a large deformation (recoil) after the perturbation.
\end{abstract}

{\centering\subsubsection*{\small Keywords}}
\noindent Kerr stability, future null infinity, outgoing PG foliation, intrinsic GCM spheres, geodesic foliation, LGCM foliation, regularity up to null infinity, supertranslation ambiguity, Bondi mass loss, center of mass, angular momentum, gravitational wave recoil, black hole kick.
\vspace{0.5cm}
\tableofcontents
\section{Introduction}
The definitions of conserved quantities such as mass and angular momentum have been among the most controversial problems of general relativity, see \cite{Wald,J:G} for a more recent survey. Among these challenges, a major issue is the definition of an angular momentum for a distant observer at null infinity. The difficulty, rooted in the equivalence principle, is due to the absence of a local energy-momentum density for the gravitational field. Indeed, the symmetric energy-momentum of matter $T_{\mu\nu}$\footnote{$T_{\mu\nu}=\frac{\pr L_M}{\pr\g^{\mu\nu}}-\frac 1 2\g_{\mu\nu}L_M$.}, verifying the local conservation laws $\D^\nu T_{\mu\nu}=0$, is obtained by taking the variation of the corresponding action integral $\int L_M\dvol_\g$ of matter. The conserved quantities can then be defined, in causal regions of a spacetime that admits a Killing vectorfield $X$, by integrating the divergence $\D^\mu P_\mu=0$, with $P_\mu=T_{\mu\nu} X^\nu$. In the particular case of a Kerr spacetime this leads to a well defined of local versions of energy and angular momentum for the corresponding matter field. The same procedure applied to the gravitational action $\int\R\dvol_\g$ leads, however, to the Einstein tensor $\R_{\mu\nu}-\frac 1 2\g_{\mu\nu}\R$, that is, the left hand side of the actual Einstein field equations\footnote{We note however the existence of higher-order gravitational energy-momentum objects, such as the Bel-Robinson tensor at the level of  the Riemann curvature tensor, see \cite{Sen} and the references within. The Bel-Robinson tensor was first used to derive bounds for energy-type curvature quantities in \cite{Ch-Kl}. Nonsymmetric coordinate-dependent notions of energy-momentum (pseudo-tensors) can also be defined, see \cite{J:G} and references within}.  

In view of these difficulties, mathematical physicists have given up on a local notion of energy-momentum for general gravitational fields and restrict instead to asymptotically flat metrics where the null energy and angular momentum can be defined as limits of appropriate quantities at spacelike and null infinity. The definitions of the correct limits (ADM quantities) at $i_0$ have been well understood since the seminal \cite{ADM} paper. The first definitions at null infinity, due to \cite{Bondi,Sachs,Pen}, have generated much more controversy due to the ad hoc, non dynamical, definition of null infinity.  Indeed, to be able to define the Bondi mass and angular momentum, one has to make specific asymptotic assumptions about the behavior of the metric at null infinity, which can only be verified by a rigorous, evolutionary, mathematical construction from prescribed initial data. The first such result, for general perturbations of Minkowski spacetime, is due to \cite{Ch-Kl}. In the context of general perturbations of a slowly rotating Kerr spacetime, this was first achieved in the sequence of papers \cite{KS:main,GKS,KS:Kerr1,KS:Kerr2,Shen}.

Prior to such a construction, the definition of the main physical quantities at null infinity, especially the angular momentum\footnote{The definition of angular momentum  for a distant observer at null infinity has been particularly controversial.} remained subject to possible ambiguities due to the presence of supertranslations, an infinite-dimensional subgroup of the Bondi-Metzner-Sachs (BMS) group\footnote{More generally, this is the group of asymptotic symmetry transformations that leave  invariant the boundary conditions  at null infinity, for a general asymptotic flat Einstein vacuum spacetime. In \cite{Bondi,Sachs}, it was shown that the group is independent of the particular gravitational field and that it contains, in addition to the Poincar\'e group, an infinite dimensional group of \emph{supertranslations}.} of transformations that leave invariant the (ad hoc) boundary conditions at null infinity. In fact, as Penrose clarified in \cite{PenroseUnsolved}, the notion of \emph{angular momentum carried away by gravitational radiation} can be altered by supertranslations. Therefore, it is crucial to establish a rigorous definition of angular momentum that is free of supertranslation ambiguity We refer to \cite{J:G} for traditional attempts to define the null angular momentum and to \cite{CWY,CWWY,KS:Kerr2,Rizzi} for recent mathematical approaches to the problem. To address this ambiguity,  which can be traced down to the freedom of choosing specific sections of $\II^+$,  there are two possible approaches: 
\begin{enumerate}
    \item Modify the definitions of angular momentum and center of mass to ensure supertranslation invariance.
    \item Find a canonical foliation on $\II^+$ that anchors these definitions in a rigid fashion.
\end{enumerate}
The first approach was taken by Chen-Keller-Wang-Wang-Yau \cite{CKWWY,CWWY}, whose definition of angular momentum ($J_{CWY}$) is supertranslation invariant between various foliations on $\II^+$ in the Bondi-Sachs gauge. Notably, the expression of $J_{CWY}$ was first discovered by taking the limit of an appropriate notion of quasilocal angular momentum for compact 2D surfaces of a given spacetime; see \cite{CWY,KeWaYa}.

In this paper, we rely instead on the constructive approach introduced in the context of Schwarzschild \cite{KS} and Kerr stability \cite{KS:main}. We recall, see Section \ref{section:Recall1}, that the final Kerr spacetime was derived in \cite{KS:main} as a limit of finite GCM admissible spacetimes. These  come equipped  with  specified GCM   foliations  and  quasilocal  notions of  mass and angular momentum, defined on their essential boundaries $\Si_*$,\footnote{The spacelike GCM hypersurfaces $\Si_*$ were constructed in \cite{Shen} based on the GCM spheres of \cite{KS:Kerr1,KS:Kerr2}. All these were first introduced, albeit in the special case of axially symmetric polarized spacetimes, in \cite{KS}.} see Figure \ref{fig1-introd1} below. We give a rigorous definition of future null infinity $\II^+$ as the limit of a sequence of incoming null cones $\Cb_n$ and show that $\II^+$, defined constructively in this way, inherits a canonical GCM foliation and a canonical definition of the null angular momentum compatible with the foliation. We show that the definition is unique, thus circumventing the problem of supertranslation ambiguity. the null energy, linear momentum, and center of mass on $\II^+$ are defined in a similar way. Moreover, the familiar physical laws for these quantities hold in the canonical GCM foliation. Finally we show that under the initial assumption of \cite{KS,KS:main}, the center of mass of the final black hole has a large deformation (recoil) relative to the reference Kerr spacetime before perturbation.

As can be seen from the text above, the General Covariant Modulation (GCM) procedure plays a fundamental role in our approach, as it addresses three key aspects of the Kerr black hole stability problem. From an analytic point of view, it is an infinite-dimensional modulation procedure designed to handle the full diffeomorphism group. group of Einstein equations, making it possible to treat the stability problem as a Cauchy evolution problem. Geometrically, it provides a rigid, canonical, foliation at null infinity that eliminates the above-mentioned supertranslation ambiguities. From a physical point of view, the GCM procedure determines the center-of-mass frame of the final black hole, providing the best adapted framework for the definition of the main physical quantities at null infinity and the derivation of their evolution laws.
\subsection{Kerr stability for small angular momentum}\label{section:Recall1}
The main result stated in \cite{KS:main} and proved in the sequence of papers   \cite{KS:main, GKS, KS:Kerr1, KS:Kerr2, Shen} can be summarized as follows.
\begin{thm}\label{MainThm-firstversion}
The future globally hyperbolic development of a general asymptotically flat initial data set, sufficiently close (in a suitable topology) to a $Kerr(a_0, m_0)$ initial data set, for sufficiently small $|a_0|/m_0$,  has a complete future null infinity $\II^+$ and converges in its causal past $\JJ^{-1}(\II^{+})$ to another nearby Kerr spacetime $Kerr(a_f, m_f)$ with parameters $(a_f, m_f)$ close to the initial ones $(a_0, m_0)$.
\end{thm}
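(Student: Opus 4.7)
The plan would be to implement a bootstrap/continuity argument on a family of finite \emph{GCM admissible spacetimes} $\MM$ that exhaust the causal past $\JJ^{-1}(\II^+)$ of the eventual null infinity. Each admissible $\MM$ would be welded from an exterior region $\Mext$, a top region $\Mtop$ and an interior red-shift region, each endowed with an adapted foliation: on $\Mext$ an outgoing principal geodesic (PG) foliation by null hypersurfaces $\CC_u$ and level sets of a geodesic function $u$, while the essential future boundary $\Si_*$ is a spacelike GCM hypersurface on which intrinsic GCM conditions fix the residual gauge and isolate the candidate final parameters $(a_f,m_f)$ of the limiting Kerr.

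The bootstrap assumptions would prescribe, in each admissible $\MM$, pointwise and energy decay for all Ricci coefficients and curvature components at the rates predicted by the linear theory on a Kerr background, with the appropriate $r$-weights. I would close them in two stages. First, to control the \emph{gravitational radiation}, i.e.\ the extreme curvature components $\a,\aa$ satisfying the Teukolsky equations on a perturbation of $Kerr(a_f,m_f)$, I would apply a Chandrasekhar-type physical-space transformation to replace them by gauge-invariant quantities $\qk$ satisfying a generalized Regge-Wheeler system. On this system I would prove energy-Morawetz estimates that degenerate at the trapped set and handle superradiant low frequencies, coupled with an $r^p$-hierarchy, yielding integrated local energy decay and weighted decay up to $\II^+$. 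Second, I would integrate the null transport and Hodge systems for the remaining Ricci coefficients and curvature components along the PG foliation, using the GCM conditions on $\Si_*$ as boundary data, to transfer decay from the gauge-invariant sector back to the full geometry. The continuity step would extend each bootstrap spacetime by constructing a slightly later $\Si_*$ via the existence/uniqueness theorems for GCM spheres of \cite{KS:Kerr1,KS:Kerr2} and the hypersurface result of \cite{Shen}; the parameters $(a_f,m_f)$ read off successive $\Si_*$'s would form a Cauchy sequence, and letting the construction run to infinity would produce a complete $\II^+$ and the claimed convergence.

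The hardest step will be the closure of the energy-Morawetz estimate for the generalized Regge-Wheeler system on a slowly rotating Kerr. The obstructions are well known: the trapped set is no longer a single photon sphere but a submanifold in phase space, and the absence of a globally timelike Killing field allows superradiant modes to extract energy from the black hole. Constructing a physical-space multiplier that simultaneously degenerates on trapping, tames the superradiant frequencies, and absorbs the nonlinear and curvature-coupled error terms produced by the Chandrasekhar transformation is the analytic core of the argument \cite{GKS}, and is what forces the restriction to small $|a_0|/m_0$.
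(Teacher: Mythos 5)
Your proposal matches the strategy the paper itself outlines in Section 1.1: a continuity/bootstrap argument over finite GCM admissible spacetimes $\MM=\Mext\cup\Mtop\cup\Mint$ anchored at GCM spheres and hypersurfaces $\Si_*$, with the curvature components improved first via Teukolsky-type equations treated by wave-equation (Morawetz/$r^p$) methods as in \cite{GKS}, the remaining Ricci and curvature coefficients recovered from transport and Hodge systems using the GCM gauge, and the extension step supplied by the GCM existence and uniqueness results of \cite{KS:Kerr1,KS:Kerr2,Shen}. Note that the present paper does not prove Theorem \ref{MainThm-firstversion} but only recalls this scheme from \cite{KS:main,GKS,KS:Kerr1,KS:Kerr2,Shen}, so your sketch is consistent with, and essentially identical to, the paper's own account.
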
 
\begin{figure}[H]\label{fig0-introd}
\centering
\includegraphics[scale=0.8]{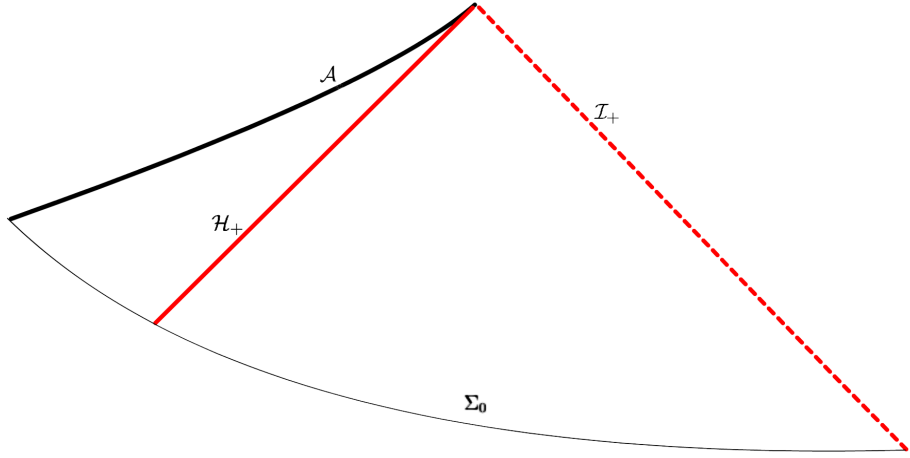}
\caption{\small{The Penrose diagram of the final space-time in Theorem \ref{MainThm-firstversion} with initial hypersurface $\Si_0$, future space-like boundary $\AA$, and $\II^+$ the complete future null infinity. The hypersurface $\HH_+$} is the future event horizon of the final Kerr.}
\end{figure} 
The proof of Theorem \ref{MainThm-firstversion} is based on a limiting argument\footnote{For the reader interested in a more in depth account the slowly rotating Kerr stability result discussed here, including a short description of its history, we refer to \cite{KS:short}. The following paragraph is essentially the same as that in Section 1.3.2 of \cite{KS:short}.} for a continuous family of such spacetimes $\MM=\Mext\cup\Mtop\cup\Mint$, represented graphically in Figure \ref{fig1-introd1}, together with a set of bootstrap assumptions (BA)  for the connection and curvature coefficients, relative to the adapted frames. The main features of these spacetimes are as follows:
\begin{figure}[ht!]
\centering
\includegraphics[scale=1]{kerr_1.pdf}
\caption{\small{The Penrose diagram of a finite GCM admissible space-time $\MM=\Mext\cup\Mtop\cup\Mint$. The future boundary $\Si_*$ initiates at the GCM sphere $S_*$. The past boundary of $\MM$, $\BB_1\cup\BBb_1$, is included in the initial layer $\LL_0$, in which the spacetime is assumed given.}}
\label{fig1-introd1} 
\end{figure} 
\begin{itemize}
\item The capstone of the entire construction is the sphere $S_*$, on the future boundary $\Si_*$ of $\Mext$, which verifies a set of specific extrinsic and intrinsic conditions denoted by the acronym\footnote{Short hand for general covariant modulated.} GCM.
\item The spacelike hypersurface $\Si_*$, initialized at $S_*$, verifies a set of additional GCM conditions.
\item Once $\Si_*$ is specified the whole GCM admissible spacetime $\MM$ is determined by a more conventional construction, based on geometric transport type equations\footnote{More precisely $\Mext$ can be determined  from $\Si_*$ by a specified  outgoing  foliation terminating in the timelike boundary $\TT$, $\Mint$ is determined from $\TT$ by a specified incoming one, and $\Mtop$ is a complement of $\Mext\cup\Mint$ which makes $\MM$ a  causal domain.}.  
\item The construction, which also allows us to specify adapted null frames, is made possible by the covariance properties of the Einstein vacuum equations.
\item The past boundary $\BB_1\cup\BBb_1$ of $\MM$, which is itself to be constructed, is included in the initial layer $\LL_0$ in which the spacetime is assumed to be known\footnote{The passage form  the initial data, specified on the initial spacelike hypersurface $\Si_0$, to the initial layer spacetime $\LL_0$, can be justified by arguments similar to those of \cite{Kl-Ni,knpeeling,Caciotta}, see \cite{ShenKerr}.}, i.e. a small vacuum  perturbation of a Kerr solution.
\end{itemize}
Assuming that a given finite, GCM admissible spacetime $\MM$ saturates the bootstrap assumptions BA we reach a contradiction as follows:
\begin{itemize}
\item First, improve the BA for some of the components of the curvature tensor with respect to the frame. These verify equations (called Teukolsky equations) that decouple, up to terms quadratic in the perturbation, and are treated by wave equation methods.
\item Use the information provided by these curvature coefficients together with the gauge choice on $\MM$, induced by the GCM condition on $\Si_*$, to improve BA for all other Ricci and curvature components.
\item Use these improved estimates to extend $\MM$ to a strictly larger spacetime $\MM'$ and then construct a new GCM sphere $S_*'$, a new boundary $\Si_*'$ that initiates on $S_*'$, and a new GCM admissible spacetime $\MM'$, with $\Si_*'$ as the boundary, strictly larger than $\MM$.
\item The final spacetime $\big(\MM_{\infty},\g_{\infty}\big)$,  $\MM_{\infty}=\Mext\cup\Mint $, derived in the proof of Theorem \ref{MainThm-firstversion}, is equipped with specified PG structures\footnote{See Section 2.4 in \cite{KS:main} for the precise definition of an outgoing PG structure.} in both $\Mext$ and $\Mint$. In particular,  $\Mext$  has an outgoing PG structure $\big\{(e_3,e_4;\HH), r, \th, u)\big\}$, as well as a set of $\ell=1$ basis $\Jp$, for $p=+,0,-$, relative to each, the null connections and the curvature components have specified decay properties. 
\end{itemize}
\begin{remark}\label{remark:final state1}
The final state $\MM_{\infty} $ verifies all the relevant properties described in Section 3.4 in \cite{KS:main} in which the precise version of Theorem \ref{MainThm-firstversion} is given. In this paper, we are only interested in the limits at future null infinity, that is, we are only interested in $\Mext_{\infty}$, the external part of $\MM_{\infty}$. We refer to these as \emph{$\KSAF$ spacetimes},\footnote{When we want to precise the final parameters, we refer to these as $\KSAF(a_f,m_f)$ spacetime with $(a_f,m_f)$ the final parameters in Theorem \ref{MainThm-firstversion}.} (see Definition \ref{dfKSAF} for more explanations). Since there is no danger of confusion, we simply denote them by $\MM$. Thus, a $\KSAF$ spacetime $\MM$ is future null complete and verifies all the properties of $\Mext_\infty$ described in the main theorem\footnote{Note however that all the results  concerning $\KSAF$ spacetimes, including those derived here, hold for the full sub-extremal range  $|a|<m$.} of \cite{KS:main}. In particular $\KSAF$ spacetimes verify all the assumptions made in \cite{KS:Kerr1,KS:Kerr2,Shen} concerning the existence of GCM spheres and GCM hypersurfaces.
\end{remark}
\begin{remark}\label{remark:final state2}
Section 3.8 of \cite{KS:main} contains various conclusions concerning the completeness of $\II^+$, the existence of a future event horizon, the definition of Bondi mass and angular momentum at null infinity, and the derivation of the Bondi mass loss formula. All these follow easily from the decay properties embodied in the main result of \cite{KS:main}, (see Section 3.4.3 in \cite{KS:main}) and the basic equations of motion.        
Using the same ingredients, a memory-type result for the angular momentum was later derived by An-He-Shen \cite{AHS}. The regularity of the horizon was recently proved by Chen-Klainerman \cite{K:Chen}. It is important to note that all these conclusions remain true for the full sub-extremal range $|a|<m$ provided that similar decay estimates hold.\footnote{These, of course can only be established by extending the stability result to the entire sub-extremal case.}
\end{remark}
\subsection{General covariant modulated (GCM) procedure}\label{reviewsection}
The GCM spheres introduced in \cite{KS:Kerr1,KS:Kerr2} play an essential role throughout this paper. For the sake of this introduction, we review the main definitions needed to understand the statements of the results of Section \ref{secfirstmain}. A more comprehensive discussion of these is given in Section \ref{secpre}.
\subsubsection{Null horizontal structures}\label{subsection:horizstructures-intro}
We use in what follows the language of null horizontal structures $e_3, e_4, \HH$ developed in \cite{GKS} and \cite{KS:main} and summarized here in section \ref{sect:horizontal-structures}. In a first approximation, we need the definitions of the Ricci coefficients $\chi,\chib,\ze$, with the first two decomposed into $\hch,\trch,\atrch$ and, respectively, $\chibh,\trchb,\atrchb$, and the curvature components $\b,\rho,\rhod,\bb$. We also need the mass aspect functions
\begin{equation*}
\mu:=-\div\ze-\rho+\f12\hch\cdot\hchb,\qquad\quad\mub:=\div\ze-\rho+\f12\hch\cdot\hchb.
\end{equation*}
Recall that in the canonical outgoing geodesic foliation of Schwarzschild, we have
\begin{align}\label{Introd:GCMspheres1}
\trch=\frac{2}{r},\qquad\quad\trchb=-\frac{2\Up}{r},\quad\qquad \mu=\frac{2m}{r^3},
\end{align}
where we denoted
\begin{align}\label{dfUp}
    \Up:=1-\frac{2m}{r}.
\end{align}
Note that $r$ and $m$ coincide with the area radius and Hawking mass of surfaces  $S$ of fixed $r$, i.e.
\begin{equation*}
r:=\sqrt{\frac{|S|}{4\pi}},\qquad\qquad\frac{2m}{r}:=1+\frac{1}{16\pi}\int_S\trch\trchb.
\end{equation*}
We now introduce the following definition of \emph{outgoing principle geodesic (PG) structure}, which plays an essential role in this paper.
\begin{df}\label{dfoutgoingPGfirst}
An outgoing PG structure consists of a null pair $(e_3, e_4)$ and the induced horizontal structure $\HH$, together with a scalar function r such that:
\begin{enumerate}
    \item $e_4$ is a null outgoing geodesic vectorfield, i.e. $\D_4e_4=0$,
    \item $r$ is an affine parameter, i.e. $e_4(r)=1$,
    \item the gradient of $r$, defined by $\grad r:=\g^{\a\b}\pr_\b r\pr_\a$, is perpendicular to $\HH$.
\end{enumerate}
\end{df}
\begin{rk}
    By abuse of language, we call an outgoing PG structure $\{S(u,r),(e_3,e_4,\HH)\}$ simply an outgoing PG $S(u,r)$--foliation. The asymptotic properties of $\KSAF$ spacetimes, mentioned in Remark \ref{remark:final state1}, are defined relative to a given outgoing PG structure.
\end{rk}
\subsubsection{Deformations of spheres and frame transformations}
We consider a spacetime region $(\M,\g)$ endowed with an outgoing PG $S(u,r)$--foliation. The construction of GCM spheres in \cite{KS:Kerr1,KS:Kerr2} was obtained by deforming a given background sphere $S(u,r)$ by a map $\Psi:S(u,r)\to S'\subset\M$ of the form
\begin{align}
\Psi(u,r,\th^1,\th^2)=\left(u+U(\th^1,\th^2),r+R(\th^1,\th^2),\th^1,\th^2\right)
\end{align}
with $(U,R)$ smooth functions on $S$, vanishing at a fixed point of $S$, and $(\th^1,\th^2)$ spherical coordinates on $S$. Given such a deformation, we identify, at any point on $S'$, two important null frames:
\begin{enumerate}
\item The null frame $(e_3, e_4, e_1, e_2)$ of the background outgoing PG foliation;
\item A new  null frame $(e'_3, e'_4, e'_1, e'_2)$  adapted to $S'$ (i.e. $e_1'$, $e_2'$ tangent to $S'$),  obtained  according to the transformation formulae  \eqref{eq:Generalframetransf-intro}.
\end{enumerate}
In general, two null frames $(e_3, e_4, e_1, e_2)$ and $(e_3', e_4', e_1', e_2')$ are related by a frame transformation as follows:\footnote{See also Lemma \ref{transformation}.}
\begin{align}\label{eq:Generalframetransf-intro}
\begin{split}
e_4'&=\la\left(e_4+f^be_b+\frac 1 4 |f|^2e_3\right),\\
e_a'&= \left(\de_{ab} +\frac{1}{2}\fb_af_b\right)e_b+\frac 1 2\fb_ae_4+\left(\frac 1 2 f_a +\frac{1}{8}|f|^2\fb_a\right)e_3,\\
e_3'&=\la^{-1}\left(\left(1+\frac{1}{2}f\c\fb+\frac{1}{16}|f|^2|\fb|^2\right)e_3 +\left(\fb^b+\frac 1 4 |\fb|^2f^b\right)e_b+\frac 1 4 |\fb|^2 e_4 \right),
\end{split}
\end{align}
where the scalar function $\la$ and the 1-forms $f$ and $\fb$ are called the \emph{transition functions} from $(e_3,e_4,e_1,e_2)$ to $(e_3',e_4',e_1',e_2')$.
\subsubsection{Basis of \texorpdfstring{$\ell=1$}{} modes}
We introduce the following generalization of the $\ell=1$ spherical harmonics of the standard sphere,\footnote{Recall that on the standard sphere $\SSS^2$, in spherical coordinates $(\th,\vphi)$, these are $J^{(0)}=\cos\th$, $J^{(+)}=\sin\th\cos\vphi$ and  $J^{(-)}=\sin\th\sin\vphi$.} which are used to define the GCM conditions.
\begin{df}\label{jpdefintro}
On an $\ep$--almost round sphere $(S, g^S)$, in the sense of
\begin{equation}\label{almostroundintro}
    \left\|K^S - \frac{1}{(r^S)^2} \right\|_{L^\infty(S)} \leq \frac{\ep}{(r^S)^2},
\end{equation}
where $K^S$ denotes the Gauss curvature of $S$ and $r^S$ denotes the area radius of $S$.
\begin{enumerate}
    \item We define an $\ep$--approximate basis of $\ell=1$ modes on $S$ to be a triplet of functions $\Jp$ on $S$ verifying
\begin{align}
\begin{split}\label{def:Jpsphericalharmonicsintro}
(r^2\De+2)\Jp&= O(\ep),\qquad p=0,+,-,\\
\frac{1}{|S|} \int_{S}\Jp J^{(q)}&=\frac{1}{3}\de_{pq}+O(\ep),\qquad p,q=0,+,-,\\
\frac{1}{|S|}\int_{S}\Jp&=O(\ep),\qquad p=0,+,-,
\end{split}
\end{align}
where $\ep>0$ is a sufficiently small constant.
\item Let $(\Phi, u)$ be the unique, up to isometries of $\mathbb{S}^2$, uniformization\footnote{This was called effective uniformization in  \cite{KS:Kerr2}.}, as in Corollary 3.8 of \cite{KS:Kerr2}, that is, a unique diffeomorphism $\Phi:\SSS^2\to S$ and a unique centered conformal factor $u$ s.t.
\[
\Phi^\#(g^S)=(r^S)^2 e^{2u}\ga_0.
\]
We define the \emph{canonical} choice of $\ep$--approximate basis of $\ell=1$ modes\footnote{This was simply called \emph{canonical $\ell=1$ modes} in \cite{KS:Kerr2}.} on $S$ by
\[
J^S := J^{\SSS^2} \circ \Phi^{-1},
\]
where $J^{\SSS^2}$ denotes the standard $\ell=1$ spherical harmonic on round sphere. 
\end{enumerate}
\end{df}
\begin{rk}
On an $\ep$--almost round sphere $(S,g^S)$, the \emph{canonical} choice of $\ep$--approximate basis $J^{(S,p)}$ of $\ell = 1$ modes on $S$ satisfy an addition property
\begin{align*}
\frac{1}{|S|}\int_{S}J^{(S,p)}&=0,\qquad p=0,+,-.
\end{align*}
This is a simple consequence of the centered (balancing) condition of the conformal factor $u$ in the definition.
\end{rk}
Assuming the existence of such a basis $\Jp$, $p\in\{-,0,+\}$, we define, for any scalar function $h$ on $S$,
\begin{align}\label{defl=1intro}
(h)^S_{\ell=1}:=\left\{\frac{1}{|S|}\int_{S}h\Jp,\quad p=-,0,+\right\}.
\end{align}
A scalar function $h$ on $S$ is said to be supported on $\ell\leq 1$ modes, i.e.  $(h)^S_{\ell\ge 2}=0$, if there exist constants $A_0, B_{-}, B_0, B_{+} $ such that
\begin{align}
h=A_0+B_{-}J^{(-)}+B_0J^{(0)}+B_{+}J^{(+)}.
\end{align}
\subsubsection{GCM spheres and incoming geodesic foliation}
The idea behind GCM spheres in perturbations of Kerr, is that of finding 2-surfaces verifying conditions as close as possible to \eqref{Introd:GCMspheres1}. In fact, the GCM spheres are topological spheres $\S$ endowed with a null frame $(e_3^\S,e_4^\S,e_1^\S,e_2^\S)$ adapted to $\S$ (i.e. $e_1^\S,e_2^\S$ tangent to $\S$), relative to which the null expansions $\trch^\S$, $\trchb^\S$ and mass aspect function $\mu^\S$ satisfy:
\begin{align}\label{Introd:GCMspheres2}
\trch^\S-\frac{2}{r^\S}=0,\qquad\left(\trchb^\S+\frac{2\Up^\S}{r^\S}\right)_{\ell\ge 2}=0,\qquad\left(\mu^\S-\frac{2m^\S}{(r^\S)^3} \right)_{\ell\ge 2}=0,
\end{align}
where $r^\S$ and $m^\S$ denote the area radius and Hawking mass of $\S$ and $\Up^\S$ is defined as in \eqref{dfUp}. A GCM sphere is called an intrinsic GCM sphere if in additional to \eqref{Introd:GCMspheres2}, it satisfies:
\begin{align}\label{Introd:GCMspheres3}
    \trchb^\S=-\frac{2\Up^\S}{r^\S},\qquad\qquad(\div^\S\b^\S)_{\ell=1}=0,
\end{align}
w.r.t. the \emph{canonical} choice of $\ep$--approximate basis of $\ell=1$ modes of $\S$. Note that all the quantities we have introduced above, based on the definitions made in Section \ref{subsection:horizstructures-intro}, are well defined on a given sphere. We now recall the definition of an incoming geodesic foliation. To this end, we need to define the transverse Ricci coefficients, see Section \ref{sect:horizontal-structures},
\begin{align*}
\eta_a=\frac 1 2 \g(\D_3e_4,e_a),\qquad\xib_a=\frac 1 2 \g(\D_3 e_3 , e_a),\qquad \omb=\frac{1}{4}\g(\D_3e_3,e_4).
\end{align*}
Recall that the incoming geodesic conditions (see Sections 2.3--2.4 in \cite{KS:main}) are given by:
\begin{align}\label{Geodesiccondtions}
\eta=0,\qquad\quad\xib=0,\qquad\quad \omb=0.
\end{align}
\begin{rk}
Emanating from an intrinsic GCM sphere $\S$, we can construct an incoming null cone $\Cb$, endowed with an incoming geodesic foliation. We have by construction that \eqref{Introd:GCMspheres2} and \eqref{Introd:GCMspheres3} hold on the last sphere $\S$ while \eqref{Geodesiccondtions} holds on the null cone $\Cb$.
\end{rk}
\subsection{First version of main results}\label{secfirstmain}
We are now ready to state simple versions of our main results and discuss the main ideas in their proofs.
\subsubsection{Regularity of conformal metric up to \texorpdfstring{$\II^+$}{}}
Let $(\M,\g)$ be a $\KSAF$ spacetime  as defined in Remark \ref{remark:final state1}. We denote $(\Mt,\gt)$ its Penrose compactification:
\begin{align}\label{compactfirstdefined}
\Mt:=\M\cup\II^+,\qquad\quad\gt:=\vr^2\g.
\end{align}
with $\vr:=r^{-1}$ the \emph{boundary defining function} and the future null infinity $\II^+$ is defined by:
\begin{align*}
    \II^+:=\{\vr=0\}.
\end{align*}
\begin{thm}\label{firstregularity}
Let $(\M,\g)$ be a $\KSAF$ spacetime and let $(\Mt,\gt)$ be the compactified spacetime defined in \eqref{compactfirstdefined}. Then we have
$$
\gt\in C^{\ks}(\M)\cap C^{1,\frac{1}{2}+\dec}(\Mt),
$$
where $\ks$ is an integer that denotes the regularity of the decay estimates of \cite{KS:main} and $0<\dec\ll 1$ is a fixed small constant\footnote{See Section 3.4.1 in \cite{KS:main} for more explanations on the choice of $\ks$ and $\dec$.}.
\end{thm}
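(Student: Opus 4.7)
\emph{Setup.} The plan is to work in coordinates $(u,r,\th^1,\th^2)$ associated with the outgoing PG $S(u,r)$--foliation on $\M$ and to introduce the boundary defining function $\vr=r^{-1}$. The PG conditions $\D_4 e_4=0$, $e_4(r)=1$, $\grad r\perp\HH$ force $e_4=\pr_r$ and $g_{rr}=g^{rA}=0$, yielding a metric of the form
\[
\g=-2\varsigma\,du\,dr+\varsigma^2\Omb\,du^2+g^S_{AB}\bigl(d\th^A-\bbb^A du\bigr)\bigl(d\th^B-\bbb^B du\bigr),
\]
with $\varsigma,\Omb,\bbb^A$ expressible in terms of the Ricci coefficients in the PG frame and $g^S_{AB}$ the induced metric on the spheres $S(u,r)$. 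Changing variables to $(u,\vr,\th^A)$, the conformal rescaling becomes
\[
\gt=\vr^2\g=2\varsigma\,du\,d\vr+\vr^2\varsigma^2\Omb\,du^2+\vr^2 g^S_{AB}\bigl(d\th^A-\bbb^A du\bigr)\bigl(d\th^B-\bbb^B du\bigr).
\]
The cross term $2\varsigma\,du\,d\vr$ is non-degenerate at $\vr=0$, and since $\vr^2 g^S_{AB}$ tends to the limiting round-sphere metric by the asymptotic behavior of $g^S_{AB}$, while $\vr^2\varsigma^2\Omb\to 0$, the tensor $\gt$ admits a continuous extension across $\II^+=\{\vr=0\}$.

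\emph{Regularity.} The interior statement $\gt\in C^{\ks}(\M)$ is immediate from the $C^{\ks}$-control of all metric and Ricci coefficients in the PG foliation supplied by Section 3.4.3 of \cite{KS:main}. For the $C^{1,1/2+\dec}$ regularity up to $\II^+$, I would express each first derivative of $\gt$ in the coordinates $(u,\vr,\th^A)$ in terms of the PG Ricci coefficients $\hch,\trchc,\ze$, etc., and convert the weighted $(u,r)$-decay from \cite{KS:main} through the identity $\pr_r=-\vr^2\pr_\vr$ into a H\"older bound of the form $|F(\vr,u,\th)-F(\vr',u,\th)|\lesssim|\vr-\vr'|^{1/2+\dec}$ for each such derivative $F$ near $\II^+$. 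The exponent $1/2+\dec$ arises from the $u$-integrable rate $\ujp^{-1-\dec}$ of the news-type quantities, which improves Lipschitz control to $C^{0,1/2+\dec}$ after balancing with the $r$-weights.

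\emph{Main obstacle.} The technical core is sharp bookkeeping: the transverse derivative $\pr_\vr$ picks up a factor $r^2$ relative to $\pr_r$, so recovering the advertised exponent $1/2+\dec$ requires that the $r^{-1}$-part of $\hch$ (the asymptotic shear) and the $r^{-1}$ corrections to $\trch-2/r$, $\Omb$, etc., decay at the precise weighted rate encoded in the KSAF assumption, rather than at the weaker rate one would naively expect. Ensuring that every first derivative of $\gt$ acquires exactly the H\"older exponent $1/2+\dec$, and that no mixed derivative is degraded by the coordinate change $r\mapsto\vr$, is the heart of the argument; it is the choice of PG gauge tied to the GCM boundary $\Si_*$ that makes the required weighted estimates available.
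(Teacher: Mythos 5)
Your overall strategy is the paper's: pass to coordinates adapted to the outgoing PG foliation, set $\vr=r^{-1}$, and convert the decay estimates of \cite{KS:main} into H\"older regularity of the conformal metric at $\{\vr=0\}$; the interior statement $\gt\in C^{\ks}(\M)$ is indeed immediate from Proposition \ref{GagGabdecay}. However, there is a genuine conceptual error in where you locate the source of the exponent $\frac12+\dec$. You attribute it to ``the $u$-integrable rate $\langle u\rangle^{-1-\dec}$ of the news-type quantities \ldots after balancing with the $r$-weights.'' This is not where it comes from, and an argument built on that balancing would not close: the limiting regularity is \emph{transverse} to $\II^+$, i.e.\ in the variable $\vr$ along a fixed outgoing cone $C_u$, so the $u$-decay is irrelevant to it. The correct mechanism (Proposition \ref{metricexpansion} and Lemma \ref{Holder} in the paper) is a \emph{second-order} expansion of each metric component, $G=G^{[0]}+G^{[1]}\vr+\eerr_G$, obtained by integrating the $\nab_4$-transport equations for the metric components (Propositions \ref{prop:e_4(xyz)}, \ref{prop:e_4J}) together with the $r^{-1}$-Taylor expansions of the Ricci coefficients (Theorem \ref{expansionexist}); one shows $\pr_\vr\eerr_G\to0$ and $\pr_\vr^2\eerr_G=\OO_0(\vr^{-\frac12+\dec})$, and a single integration in $\vr$ then yields $|\pr_\vr\eerr_G(\vr_1)-\pr_\vr\eerr_G(\vr_2)|\les|\vr_1-\vr_2|^{\frac12+\dec}$. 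The exponent traces back to the \emph{radial} decay $|A|,|B|\les\ep_0 r^{-\frac72-\dec}$ of the curvature components (equivalently the $\OO_1(r^{-\frac52-\dec})$ remainders in \eqref{GagTaylor}), which is exactly the obstruction to a full $\vr^2$-coefficient in the expansion; this is consistent with Definition \ref{def6.3} and Table \ref{tab:decay-rates-reg}, where the H\"older exponent is $\{\frac{s-1}{2}\}$ for the radial decay parameter $s=4+2\dec$.

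Two smaller points. First, your plan to ``express each first derivative of $\gt$ in terms of the PG Ricci coefficients and convert the weighted decay into a H\"older bound'' is the right instinct but skips the step that actually produces the modulus of continuity, namely controlling the \emph{second} $\pr_\vr$-derivative of the remainder and integrating once; without that, Lipschitz-minus-$\epsilon$ information on the first derivative alone gives nothing. Second, the paper works with $\gt^{-1}$ rather than $\gt$, because Proposition \ref{Mext-metric} gives $\g^{-1}$ in the clean form ``Kerr plus a weighted matrix times $\Gc$''; one then recovers $\gt\in C^{1,\frac12+\dec}(\Mt)$ by a Neumann series, using that $C^{1,\frac12+\dec}(\Mt)$ is a Banach algebra. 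If you insist on working with $\gt$ directly in your Bondi-type form, you would need to justify separately that $\varsigma$, $\Omb$, $\bbb^A$ and $\vr^2 g^S_{AB}$ each admit the two-term expansion in $\vr$ with the $\OO_0(\vr^{-\frac12+\dec})$ second derivative, which amounts to redoing Proposition \ref{metricexpansion} componentwise.
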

Theorem \ref{firstregularity} is restated as Theorem \ref{regularityKerr} in Section \ref{secscr}. The proof is based on the asymptotic behaviors of the metric components of $\KSAF$ spacetimes.
\subsubsection{Limiting GCM (LGCM) foliation on \texorpdfstring{$\II^+$}{}}
The following theorem shows the existence of a canonical foliation on the future null infinity $\II^+$, which is called a Limiting GCM (LGCM) foliation.
\begin{thm}\label{firstGCMlimit}
Let $(\M,\g)$ be a $\KSAF(a_f,m_f)$ spacetime, endowed with a background outgoing PG $S(u,r)$--foliation. Then, there exists a sphere foliation $S'(u',r')$ and a null frame $(e'_3, e'_4,e_1',e_2')$ near $\II^+$, relative to which the Ricci coefficients and curvature components have the following asymptotic behavior:
\begin{align}
\begin{split}\label{eq:Def-LGCM-gen}
\lim_{u\to\infty}\lim_{C_u,r\to\infty}r^2\left(\trch'-\frac{2}{r'}\right)&=0,\qquad\quad\;\,\lim_{u\to\infty}\lim_{C_u,r\to\infty}r^2\left(\trchb'+\frac{2}{r'}\right)=4m_f,\\
\lim_{u\to\infty}\lim_{C_u,r\to\infty}(r^3\mu')_{\ell\geq 2}&=0,\qquad\qquad\quad\, \lim_{u\to\infty}\lim_{C_u,r\to\infty}(\div'\b')_{\ell=1}=0,\\ \lim_{C_u,r\to\infty}\left(r^2\,\atrch',r^2\,\atrchb'\right)&=0,\qquad\qquad\lim_{u\to\infty}\lim_{C_u,r\to\infty}r^5(\div'\b')_{\ell=1}=2a_fm_f,
\end{split}
\end{align}
where the modes are taken w.r.t. to an approximate basis of $\ell=1$ modes $\Jp$ introduced in Definition \ref{jpdefintro}. We also have the following asymptotic behavior:
\begin{equation}\label{geodesiclim}
    \lim_{C_u,r\to\infty}(r\eta,r\xib,r\omb)=0.
\end{equation}
Moreover, a sphere foliation near $\II^+$ that satisfies \eqref{eq:Def-LGCM-gen} and \eqref{geodesiclim} is called a \emph{Limiting GCM (LGCM)} foliation.
\end{thm}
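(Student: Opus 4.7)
\textbf{Proof proposal for Theorem \ref{firstGCMlimit}.}
The plan is to realize the LGCM conditions as asymptotic limits, on $\II^+$, of \emph{exact} GCM conditions holding on a family of intrinsic GCM spheres $\S=\S(u,r)$ constructed inside the $\KSAF$ spacetime. The main input is the existence and uniqueness theorem for intrinsic GCM spheres from \cite{KS:Kerr1,KS:Kerr2}, which applies here by Remark \ref{remark:final state1}, together with the peeling-type decay of Ricci and curvature coefficients established in \cite{KS:main}.

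First, for each sufficiently large $u$ and each sufficiently large $r$, I would construct an intrinsic GCM sphere $\S(u,r)$ as a deformation $\Psi:S(u,r)\to\S(u,r)$ of the background sphere of the outgoing PG foliation. Uniform smallness of the deformation data $(U,R)$ and of the transition functions $(\la,f,\fb)$ is controlled by the $\KSAF$ decay estimates and tends to zero as $r\to\infty$, uniformly in $u$. From each $\S(u,r)$ I would then emanate an incoming null cone $\Cb(u,r)$ equipped with an incoming geodesic foliation. Parametrizing the resulting spheres by $u'=u$ and $r'=r^\S$, the family $\{\S(u,r)\}$ provides the sphere foliation $S'(u',r')$ near $\II^+$ together with its adapted null frame $(e_3',e_4',e_1',e_2')$.

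Next, I would verify the asymptotic identities one by one. The GCM conditions \eqref{Introd:GCMspheres2}--\eqref{Introd:GCMspheres3} yield on each $\S$ the exact equalities $\trch^\S=2/r^\S$, $\trchb^\S=-2\Up^\S/r^\S$, $(\mu^\S-2m^\S/(r^\S)^3)_{\ell\ge 2}=0$, and $(\div^\S\b^\S)_{\ell=1}=0$, where $\Up^\S=1-2m^\S/r^\S$. The first four limits in \eqref{eq:Def-LGCM-gen} follow at once (the second giving $r^2(\trchb^\S+2/r^\S)\to 4m^\S$), and the $\eta,\xib,\omb$ limits in \eqref{geodesiclim} follow from the incoming geodesic conditions \eqref{Geodesiccondtions} on each $\Cb(u,r)$. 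To pass $u\to\infty$, I would invoke the Bondi mass loss formula and the asymptotic Kerr structure derived in Section 3.8 of \cite{KS:main}, yielding $m^\S(u)\to m_f$; the vanishing of $r^2\atrch',r^2\atrchb'$ in the limit follows from the $r$-decay of these quantities in the $\KSAF$ framework, where they are subleading relative to the principal Kerr coefficients.

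The main obstacle will be the angular momentum identity $\lim_{u\to\infty}\lim_{r\to\infty}r^5(\div'\b')_{\ell=1}=2a_fm_f$. Since $(\div^\S\b^\S)_{\ell=1}$ vanishes identically with respect to the canonical basis of $\S$, the correct $r^{-5}$ coefficient must be extracted from the interplay between the GCM frame, the basis $\Jp$, and the background Kerr limit. I would proceed by expanding $\b'$ via the frame transformation formulas \eqref{eq:Generalframetransf-intro}, tracking the leading order in $(\la,f,\fb)$, and comparing the canonical $\ell=1$ basis of $\S$ with the Kerr $\ell=1$ basis in the limit $u\to\infty$; the resulting leading coefficient must then be matched with the angular momentum of $Kerr(a_f,m_f)$, producing $2a_fm_f$. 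A secondary technical point is the commutation of the inner limit $r\to\infty$ and the outer limit $u\to\infty$, which I would control via uniform-in-$u$ peeling and decay estimates inherited from \cite{KS:main}.
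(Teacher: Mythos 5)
Your overall strategy---anchor the construction on intrinsic GCM spheres, emanate incoming geodesic cones, and pass to the limit at $\II^+$---is the same as the paper's, but the way you set it up contains a structural gap. You propose to construct an intrinsic GCM sphere $\S(u,r)$ for \emph{every} large $(u,r)$ and to declare the resulting two-parameter family a foliation on which the exact GCM conditions \eqref{Introd:GCMspheres2}--\eqref{Introd:GCMspheres3} \emph{and} the incoming geodesic conditions \eqref{Geodesiccondtions} both hold. These two requirements are incompatible on an open region: the GCM conditions rigidly fix each sphere and its adapted frame, while $\eta=\xib=\omb=0$ is a transport condition in the $e_3$ direction that dictates how the leaves must evolve; a foliation cannot in general satisfy both. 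Moreover, Theorem \ref{GCMS2} only produces GCM spheres from background spheres on the curve $r=\frac{\de_*}{\ep_0}u^{1+\dec}$, and even granting a sphere for each $(u,r)$, independently constructed spheres need not be level sets of two functions nor carry adapted frames that patch into a single null frame near $\II^+$. The paper resolves this by constructing \emph{one} anchor sphere $\S_{*,n}$ per $n$ at $r_n=\frac{\de_*}{\ep_0}u_n^{1+\dec}$, propagating it backwards along an incoming geodesic cone $\Cb_n$ (so the GCM conditions degrade to $O(\ep_0 u^{-\dec})$ errors away from the anchor, cf. \eqref{GCMquantitiesdecayonSin}), and extracting the foliation as an Arzel\`a--Ascoli limit of the data $(U_n,R_n,rf_n,r\fb_n,r\ovla_n,J^{(p,\S_n(u))})$ as $\Cb_n\to\II^+$. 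This is exactly why the GCM identities in \eqref{eq:Def-LGCM-gen} carry the double limit $\lim_{u\to\infty}\lim_{C_u,r\to\infty}$ while the geodesic conditions \eqref{geodesiclim} hold for each fixed $u$: your claim that the first four limits ``follow at once'' from exact equalities on every leaf would make the outer limit $u\to\infty$ superfluous, which should have been a warning sign.

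Two further points. First, the deformation data are not small: the paper shows $\|(U_n,R_n)\|_{\hk_{\ks+4}(S)}\lesssim r$, i.e.\ $U,R=O(1)$ pointwise (only the transition functions satisfy $(f,\fb,\ovla)=O(r^{-1})$), and the $O(1)$ quantity $R^\#$ enters the transformation formulae for $r^2(\trch'-2/r')$ and $r^2(\trchb'+2/r')$ at leading order, see \eqref{riccit}, so it must be tracked rather than sent to zero. Second, the value $2a_fm_f$ does not need to be extracted by comparing $\ell=1$ bases with Kerr: it is the calibration condition \eqref{curlbS=0}, $(\curl^\S\b^\S)_{\ell=1,0}=2a^\S m^\S/(r^\S)^5$, imposed on the anchor spheres (the second occurrence of $\div'\b'$ in \eqref{eq:Def-LGCM-gen} is a typo for $\curl'\b'$, consistent with \eqref{angularmomentumcalibration}); combined with $|a^\S-a_f|+|m^\S-m_f|\lesssim \ep_0u^{-\frac{1}{2}-\dec}$ and propagation along $\Cb_n$, it passes directly to the limit.
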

Theorem \ref{firstGCMlimit} is restated as Theorem \ref{LGCMconstruction} and proved in Section \ref{ssecendproof}, which constructs the desired foliation from the background foliation of $(\M,\g)$. The construction is based on a limiting process, which shows that a sequence of incoming null cones, emanating from intrinsic GCM spheres, endowed with incoming geodesic foliation, converges to the desired foliation on $\II^+$.
\subsubsection{Absence of supertranslation and spatial translation}
The next theorem illustrates the uniqueness of the LGCM foliation defined in Theorem \ref{firstGCMlimit}.
\begin{thm}\label{firstunique}
Let $(\M,\g)$ be a $\KSAF$ spacetime. Any two LGCM foliations of $(\M,\g)$ near $\II^+$, differ by a translation along $\II^+$. More precisely, denoting $(f,\fb,\la)$ the transition functions from the frame of the first LGCM to that of the second, as defined by \eqref{eq:Generalframetransf-intro}, we must have
\begin{align}\label{ffbovla=0first}
\lim_{C_u,r\to\infty}(rf,r\fb,r\ovla)=0.
\end{align}
\end{thm}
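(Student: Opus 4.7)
The plan is to derive decay rates for the transition functions $(f,\fb,\la)$ by substituting the standard frame-transformation formulas for the Ricci and curvature coefficients into the LGCM conditions \eqref{eq:Def-LGCM-gen}--\eqref{geodesiclim} imposed by \emph{both} foliations, and exploiting the cancellation of the common leading asymptotics. Denoting primed quantities relative to the second LGCM frame, the transformation laws (see \cite{KS:main,GKS}) take the schematic form
\begin{align*}
\trch' &= \la\bigl(\trch+\div f\bigr) + O(|f|^2), &
\atrch' &= \la\bigl(\atrch+\curl f\bigr) + O(|f|^2),\\
\trchb' &= \la^{-1}\bigl(\trchb+\div\fb\bigr) + O(|\fb|^2), &
\atrchb' &= \la^{-1}\bigl(\atrchb+\curl\fb\bigr) + O(|\fb|^2),
\end{align*}
with analogous formulas for $\eta',\xib',\omb',\mu',\b'$. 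Because the primed sides obey the same asymptotics as the unprimed ones, the leading terms drop out and each identity becomes a constraint on $(f,\fb,\ovla)$ at $\II^+$.

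First I would control $f$. The matching of the $\trch,\atrch$ conditions forces $r\div f\to 0$ and $r\curl f\to 0$ on the asymptotic spheres. Hodge theory on the $\ep$-almost round spheres --- whose $(\div,\curl)$ operator has its kernel spanned by the $\ell=1$ conformal Killing 1-forms --- then gives $(rf)_{\ell\geq 2}\to 0$. The remaining $\ell=1$ modes of $rf$ are killed by comparing the conditions $(r^3\mu')_{\ell\geq 2}\to 0$ and $r^5(\div'\b')_{\ell=1}\to 2a_fm_f$ between the two foliations: the former pins down the $\ell=1$ gradient mode, and the latter, which is the angular-momentum aspect prescribed by the intrinsic GCM structure, pins down the rotational $\ell=1$ mode. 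This yields $\lim rf=0$.

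Next I would control $\fb$ and $\la$. With $f$ already subleading, the identities coming from $r^2(\trchb'+2/r')\to 4m_f$ and $r^2\atrchb'\to 0$ force $r\div\fb,r\curl\fb\to 0$, hence $r\fb\to 0$ off the $\ell=1$ sector; agreement of the limiting Hawking masses and of the intrinsic $\trchb$ condition in \eqref{Introd:GCMspheres3} absorbs the $\ell=1$ modes. For $\la$, I would substitute the now-controlled $f,\fb$ into the transformation laws of the geodesic quantities $(\eta,\xib,\omb)$ and invoke the common vanishing \eqref{geodesiclim}; the identity produced by $\omb$ becomes a transport equation for $\log\la$ along incoming null generators, with source and data decaying faster than $1/r$ at $\II^+$, giving $r\ovla\to 0$.

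The hard part will be the careful treatment of the $\ell=1$ modes, because the basis $\Jp$ of Definition \ref{jpdefintro} depends on the sphere through the canonical uniformization and hence, \emph{a priori}, differs between the two foliations. One must first show that the two canonical $\Jp$-bases agree modulo $o(1)$ under the deformation induced by $(f,\fb)$, so that the $\ell=1$ projection conditions on the two sides can legitimately be subtracted. This step relies essentially on the intrinsic conditions \eqref{Introd:GCMspheres3} rather than the weaker \eqref{Introd:GCMspheres2}, together with the smoothness of the centered conformal factor in the uniformization of \cite{KS:Kerr2}. A secondary technical point is that the quadratic remainders $O(|f|^2),O(|\fb|^2)$ in the transformation laws must close at the required rate, which uses the background decay estimates of \cite{KS:main} for $\KSAF$ geometries.
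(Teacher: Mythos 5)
Your Step-1-style elliptic argument is essentially the paper's Step 1, and you correctly flag the delicate point about the $\ell=1$ basis $\Jp$ differing between the two foliations (the paper resolves it via the $\curl\b$ calibration \eqref{angularmomentumcalibration}, which yields ${J''}^{(p)}-\Jpp=O(\ep_0)\dko(rf,r\fb,r\ovla)$ and is absorbed by smallness). However, there is a genuine gap in where your conclusion lives. Almost all of the GCM conditions you feed into the elliptic system --- the $\trch$, $\trchb$, $\mu$, $\div\b$ and $\curl\b$ conditions in \eqref{eq:Def-LGCM-gen} --- are imposed only as double limits $\lim_{u\to\infty}\lim_{C_u,r\to\infty}$, i.e.\ only at $i^+$; at a fixed $u$-section of $\II^+$ the only conditions available are the integrability conditions on $\atrch,\atrchb$ and the limiting geodesic conditions \eqref{geodesiclim}. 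Your elliptic system therefore only proves $\lim(rf,r\fb,r\ovla)=0$ at $i^+$, not the assertion \eqref{ffbovla=0first} at every $u$. The missing step is the transport argument: the transformation laws for $\eta$ and $\xib$, combined with $\Hscr'=\Hscr''=0$ and $\Ybscr'=\Ybscr''=0$ from \eqref{geodesiclim} for \emph{both} foliations, give $\lim_{C_u,r\to\infty}\nab'_3(rf)=\lim_{C_u,r\to\infty}\nab'_3(r\fb)=0$ along $\II^+$, and one integrates these from $i^+$ (where Step 1 gives vanishing) backwards in $u$. You invoke the geodesic conditions only for $\omb$ to control $\la$; without the analogous transport equations for $f$ and $\fb$ your argument cannot reach a general $u$-section.

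Two secondary points. First, your claim that the kernel of $(\div,\curl)$ on $1$-forms on an almost-round sphere is spanned by $\ell=1$ conformal Killing forms is incorrect: by Proposition \ref{ellipticLp}(3) that operator has trivial kernel, so if you truly had $r\div f\to 0$ and $r\curl f\to 0$ you would get all of $rf\to 0$ at once. The genuine reason the $\ell=1$ modes need separate treatment is that the $\trch$ and $\trchb$ constraints do not isolate $\div f$ and $\div\fb$: they couple them to $r'-r''$ and $r\ovla$ (e.g.\ $r^2(\trch'-2/r')=r^2(\trch''-2/r'')+r\div'(rf)+2(r'-r'')+2r\ovla+O(r^{-1})$), and the $\mu$ condition only sees $\ell\geq 2$ modes; the paper closes the resulting coupled system using $\De'(r''-r')=-\tfrac12\div'(f-\fb)+\dots$ and the nonvanishing of $\Pscr$ in the $\div\b$ condition. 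Second, the error terms one must absorb are not just quadratic in $(f,\fb)$ but of the form $O(\ep_0)\dko(rf,r\fb,r\ovla,r'-r'')$, linear with small coefficients from the background $\Ga_g,\Ga_b$; the conclusion then requires $\ep_0$ small, which you should state explicitly.
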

Theorem \ref{firstunique} is restated as Theorem \ref{translation} and proved in Section \ref{secsuper},  based on the fact that, since the GCM conditions \eqref{eq:Def-LGCM-gen} hold for both LGCM foliations near $i^+$ we can derive an elliptic system for $(f,\fb,\la)$ near $i^+$, which implies that \eqref{ffbovla=0first} holds near $i^+$. Since geodesic conditions \eqref{geodesiclim} hold for both LGCM foliations, we obtain a transport system for $(f,\fb,\la)$ along $\II^+$, from which the uniqueness of $(f,\fb,\la)$ on $\II^+$ follows easily.
\begin{rk}
The condition \eqref{ffbovla=0first} implies that the physical quantities defined in two different LGCM foliations coincide with each other. This eliminates the supertranslation ambiguity and spatial translation, see Remark \ref{nosuperspatialrk} for more explanations.
\end{rk}
\subsubsection{Physical laws on \texorpdfstring{$\II^+$}{}}
The uniqueness of the LGCM foliation, proved in Theorem \ref{firstunique}, allows us to make unambiguous definitions of the main physical quantities (energy, linear and angular momentum and center of mass)  at null infinity.
\begin{df}\label{firstphysicalquantities}
Relative to the associated LGCM foliation of a $\KSAF$  spacetime, constructed by Theorem \ref{firstGCMlimit}, we introduce the following limiting quantities \footnote{The existence of the all the limits here is proved in Theorem \ref{LGCMconstruction}.}
\begin{itemize}
\item Define the \emph{shear} $\sc\The$ and \emph{news} $\sc\Thb$ at $\II^+$:
\begin{align}\label{shearnews}
\sc\The:=\lim_{C_u,r\to\infty}r^2\,\sc\hch,\qquad\quad\sc\Thb:=\lim_{C_u,r\to\infty}r\,\sc\hchb.
\end{align}
\item Define the \emph{energy} $\sc\EE$, \emph{linear momentum} $\sc\PP$, \emph{center of mass} $\sc\CC$ and \emph{angular momentum} $\sc\JJ$ as follows:
\begin{align}
\begin{split}\label{firstdfphy}
    \sc\EE&:=\lim_{C_u,r\to\infty}\frac{r^3}{2}\left(\sc\mu+\sc\mub\right)_{\ell=0},\\
    \sc\PP&:=\lim_{C_u,r\to\infty}\frac{r^3}{2}\left(\sc\mu+\sc\mub\right)_{\ell=1},\\
    \sc\CC&:=\lim_{C_u,r\to\infty}r^5\left(\sc\div\sc\b\right)_{\ell=1},\\
    \sc\JJ&:=\lim_{C_u,r\to\infty}r^5\left(\sc\curl\sc\b\right)_{\ell=1}.
\end{split}
\end{align}
\item For any $1$--form $f$ and tensorfield $U$ satisfying
$$
\lim_{C_u,r\to\infty}f=\Fscr, \qquad\quad\lim_{C_u,r\to\infty}U=\Uscr,
$$
we define the limiting differential operators:
\begin{align}\label{dfnabo}
    (\divo,\curlo)\Fscr:=\lim_{C_u,r\to\infty}r(\div,\curl)f,\qquad (\nabo_3,\nabo)\Uscr:=\lim_{C_u,r\to\infty}(\nab_3,r\nab)U.
\end{align}
\end{itemize}
\end{df}
\begin{remark}
The shear $\The$ and news $\Thb$ defined in \eqref{shearnews} describe the nonlinear terms in the evolution of physical quantities defined in \eqref{firstdfphy}, see Theorem \ref{firstphysical} below. Note also that relative to a coordinates chart $(u,\th,\vphi)$ on $\II^+$, the limiting differential operators $\nabo_3$ and $\nabo_a$ with $a=1,2$, defined in \eqref{dfnabo},  are linear combinations of $\pr_u$, $\pr_\th$ and $\pr_\vphi$.
\end{remark}
We are now ready to describe the evolution of physical quantities along $\II^+$ in the LGCM foliation.
\begin{thm}\label{firstphysical}
Let $(\M,\g)$ be a $\KSAF$ spacetime. Then in the LGCM foliation, the null energy, linear momentum, center of mass and angular momentum introduced in Definition \ref{firstphysicalquantities} satisfy the following evolution equations:\footnote{Here, the $\ell=0$ modes of a scalar means its average on the sphere of the LGCM foliation.}
\begin{align}
    \begin{split}\label{firstevolution}
        \pr_u\left(\sc\EE\right)&=-\frac{1}{4}\left(\big|\sc\Thb\big|^2\right)_{\ell=0},\\
        \pr_u\left(\sc\PP\right)&=-\frac{1}{4}\left(\big|\sc\Thb\big|^2\right)_{\ell=1},\\
        \pr_u\left(\sc\CC\right)&=\sc\PP+\left(\frac{1}{2}\sc\The\c\sc\Thb+\sc\divo\left(\sc\divo\sc\Thb\c\sc\The\right)\right)_{\ell=1},\\
        \pr_u\left(\sc\JJ\right)&=\left(\frac{1}{2}\sc\The\wedge\sc\Thb+\sc\curlo\left(\sc\divo\sc\Thb\c\sc\The\right)\right)_{\ell=1},
    \end{split}
\end{align}
where all the modes are taken w.r.t. the $\ell=1$ basis $\Jp$ of the LGCM foliation.
\end{thm}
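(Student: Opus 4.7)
All four evolution laws follow from one common scheme. For each bulk quantity---$\mu'+\mub'$ for $\sc\EE,\sc\PP$; $\div'\b'$ for $\sc\CC$; $\curl'\b'$ for $\sc\JJ$---I would start from the null structure equation or null Bianchi identity governing its $\nab_3'$ derivative in the LGCM frame, multiply by the appropriate power of $r$, use the LGCM asymptotics of Theorem \ref{firstGCMlimit} together with the peeling estimates of $\KSAF$ spacetimes (Section 3.4 of \cite{KS:main}) to identify the leading-order behavior, and pass to the limit $C_u,r\to\infty$. Under the incoming geodesic normalization \eqref{geodesiclim} the operator $\nab_3'$ descends on $\II^+$ to a scalar multiple of $\pr_u$, and the $\ell=1$ projections against the canonical basis $\Jp$ will commute with the limit because the LGCM spheres are $\ep$--almost round in the sense of \eqref{almostroundintro} with $\ep\to 0$, so $\Jp$ converges uniformly to the standard $\ell=1$ spherical harmonics of a round sphere.

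\textbf{Energy and linear momentum.} First I would write $\mu+\mub=-2\rho+\hch\c\hchb$ and combine the $e_3$ Bianchi equation for $\rho$ with the $e_3$ structure equation for $\hch$. Using $\trch'\sim 2/r'$, $\trchb'\sim -2/r'$ and $r\eta',r\omb',r\xib'\to 0$, all linear error terms will drop out and the remaining nonlinear contribution from $\nab_3'(\hch'\c\hchb')$ should, after rescaling by $r^3$, reproduce the square of the news $\sc\Thb=\lim r\hchb'$. Projecting onto $\ell=0$ and $\ell=1$ against $\Jp$ then yields the classical Bondi mass loss formula and its linear momentum analogue, which are the first two lines of \eqref{firstevolution}.

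\textbf{Center of mass and angular momentum.} For these two I would start from the Bianchi equation for $\b$,
\[
\nab_3\b+\trchb\,\b=\nab\rho+\dual\nab\si+Q(\hch,\hchb,\b,\bb),
\]
with $Q$ quadratic, and take $\div$ and $\curl$. Using $\div\nab=\curl\dual\nab=\De$ and $\curl\nab=\div\dual\nab=0$, the $\ell=1$ projections produce transport equations for $(\div\b)_{\ell=1}$ and $(\curl\b)_{\ell=1}$ in which the linear inhomogeneities are $(\De\rho)_{\ell=1}$ and $(\De\si)_{\ell=1}$. After rescaling by $r^5$, the identification $-2r^3\rho_{\ell=1}\to 2\sc\PP$ coming from the energy/momentum step will give the $\sc\PP$ term in $\pr_u\sc\CC$, while the corresponding $\si$ term in the $\curl$ equation contributes nothing to $\pr_u\sc\JJ$ since $r^3\si_{\ell=1}$ is conserved at leading order under the pure Kerr evolution. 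Rewriting $Q$, rescaled by $r^5$, in terms of $\sc\The=\lim r^2\hch'$, $\sc\Thb=\lim r\hchb'$ and the limiting operators $\sc\divo,\sc\curlo$ of \eqref{dfnabo} should then produce the announced bilinear expressions $(\The\c\Thb)_{\ell=1}$, $\divo(\divo\Thb\c\The)_{\ell=1}$ together with their $\wedge$ analogues.

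\textbf{Main obstacle.} The hard part will be the careful bookkeeping of all error terms: verifying that every Ricci coefficient and curvature component not appearing in \eqref{firstevolution} decays strictly faster than the relevant leading order and hence disappears in the limit. This will require the full strength of the peeling rates and the quantitative LGCM asymptotics of Theorem \ref{firstGCMlimit}. A secondary subtlety is justifying the exchange of the $\ell=1$ projection with the $r\to\infty$ limit, which will follow from the explicit almost-round convergence of the LGCM spheres built into Definition \ref{jpdefintro}.
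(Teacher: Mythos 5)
Your overall strategy --- passing the $e_3$ null structure and Bianchi equations to the limit along $\II^+$, using the LGCM normalizations ($\Hscr=\Ybscr=\Wbscrone=0$, $\aXscr=0$) to kill the linear error terms, converting $\nab_3$ into $2\pr_u$, and projecting onto the $\ell\leq 1$ modes of $\Jp$ --- is exactly the route the paper takes: it first records the limiting system on $\II^+$ as Lemma \ref{evolutiononscri} and then computes $\nabo_3$ of $\Mk$, $\Bk$, $\Bkd$. Your treatment of $\sc\EE$ and $\sc\PP$ is sound, since $\nabo_3\Mk=-\frac12|\Thb|^2+\divo\divo\Thb$ and the double divergence is annihilated by the $\ell\leq 1$ projection.

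The gap is in the angular momentum (and, to a lesser extent, the center of mass) computation. After taking $\curl$ of the $\nab_3\b$ Bianchi equation and rescaling by $r^5$, the linear inhomogeneity $(\De\rhod)_{\ell=1}$ produces $-2(r^3\rhod)_{\ell=1}\to -2(\dual\Pscr)_{\ell=1}$, and you dismiss this on the grounds that ``$r^3\si_{\ell=1}$ is conserved at leading order under the pure Kerr evolution.'' That is not correct: $(\dual\Pscr)_{\ell=1}$ is neither zero nor negligible (its $\ell=1,0$ component is calibrated to the final angular momentum $2a_fm_f$), and its contribution to the transport equation is precisely the radiative flux term $\frac12\big(\sc\The\wedge\sc\Thb\big)_{\ell=1}$ appearing in \eqref{firstevolution}. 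The paper extracts this from the limiting torsion equation $\curlo\Zscr=\dual\Pscr-\frac12\The\wedge\Thb$ combined with the limiting Codazzi equation $\divo\The=\frac12\nabo\Xscr+\Zscr$ (valid because $\aXscr=\Hscr=0$ in the LGCM foliation), which give $(\curlo\Zscr)_{\ell=1}=(\curlo\divo\The)_{\ell=1}=0$ and hence $(\dual\Pscr)_{\ell=1}=\frac12(\The\wedge\Thb)_{\ell=1}$. The quadratic remainder $Q$ of the Bianchi equation cannot supply this term: in the LGCM frame the only quadratic term surviving at the relevant order is $2\bb\c\hch$, which yields $\curlo(\divo\Thb\c\The)$ but no undifferentiated $\The\wedge\Thb$. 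A parallel issue affects $\sc\CC$: the identification $-2(r^3\rho)_{\ell=1}\to 2\sc\PP$ is exact only up to the correction coming from the definition of the Bondi mass aspect $\Mk=-\Pscr+\frac12\The\c\Thb$, and it is that correction --- not $Q$ --- which is the source of the $\frac12(\The\c\Thb)_{\ell=1}$ term. As written, your bookkeeping would either drop or misattribute these bilinear terms.
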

Theorem \ref{firstphysical} is restated as Theorem \ref{thm-BHdynamic} and proved in Section \ref{ssecphy}. The evolution laws in \eqref{firstevolution} follow by taking the limits of the null structure equations in the direction $e_3$.
\begin{rk}
The first equation in \eqref{firstevolution} implies the Bondi mass loss formula. It was first derived and used by Christodoulou \cite{Ch-memory} in connection with his famous memory effect. The last equation in \eqref{firstevolution} was derived and used in \cite{AHS} to derive an angular momentum memory type effect. See Remark \ref{rkBHdynamic} for more explanations about \eqref{firstevolution}.
\end{rk}
\subsubsection{Gravitational wave recoil}
\begin{thm}\label{firstkick}
Let $(\M,\g)$ be the final spacetime derived in the proof of Theorem \ref{MainThm-firstversion} and let $\LL_0$ be the initial layer region, endowed with a $S^{(0)}$--foliation, given as the initial assumption in \cite{KS:main}.\footnote{In \cite{KS:main}, the curvature components on the initial layer $\LL_0$ are assumed to be $O(\ep_0 r^{-\frac{7}{2}-\dec})$     close to the reference Kerr data, see Section 3.4.2 in \cite{KS:main} for more details. The decay rate $r^{-\frac{7}{2}-\dec}$ is consistent with the main result of the external stability of Kerr in \cite{ShenKerr}.} Let $S(1,r)$ be a leaf of the LGCM foliation constructed in Theorem \ref{firstGCMlimit} and let $S^{(0)}_1$ be a leaf of the initial layer foliation that has a point common to $S(1,r)$. We define the center of mass on $S(1,r)$ and $S^{(0)}_1$ as in \eqref{firstdfphy}:
\begin{align*}
    \CC_1:=r^5(\div\b)_{\ell=1,S(1,r)},\qquad \CC^{(0)}_1:=(r^{(0)})^5\left(\div^{(0)}\b^{(0)}\right)_{\ell=1,S^{(0)}}.
\end{align*}
Then, we have
\begin{align*}
    \CC_1-\CC^{(0)}_1=O\left(\ep_0 r^{\frac{1}{2}-\dec}\right).
\end{align*}
\end{thm}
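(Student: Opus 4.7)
The plan is to bound $\CC_1$ and $\CC^{(0)}_1$ individually using the decay estimates available in each of the two settings, and then conclude by the triangle inequality. The key observation that makes this work is that the reference Kerr$(a_0,m_0)$ and Kerr$(a_f,m_f)$ spacetimes are of Petrov type D, so that in their principal null frames the curvature component $\b$ vanishes identically. Consequently, in both the initial-layer frame (which is close to the Kerr$(a_0,m_0)$ principal frame) and the LGCM frame (which is close to the Kerr$(a_f,m_f)$ principal frame) the full $\b$ is itself an $\ep_0$-perturbation, and the dominant contribution to $(\div\b)_{\ell=1}$ is driven by this perturbative scale — not by a reference Kerr background.

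For $\CC^{(0)}_1$, I would use the initial-layer assumption of \cite{KS:main} to get $|\b^{(0)}|\lesssim \ep_0 r^{-7/2-\dec}$ on $\LL_0$; applying $\div^{(0)}$ costs one power of $r^{-1}$, so $|(\div^{(0)}\b^{(0)})_{\ell=1,S^{(0)}_1}|\lesssim \ep_0 r^{-9/2-\dec}$, and multiplying by $(r^{(0)})^5$ yields $\CC^{(0)}_1=O(\ep_0 r^{1/2-\dec})$. Because $S(1,r)$ and $S^{(0)}_1$ share a point, $r^{(0)}\simeq r$ there, so the two quantities are directly comparable.

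For $\CC_1$, I would first invoke the $\KSAF$ decay estimates of \cite{KS:main} to bound $|\b|\lesssim \ep_0 r^{-7/2-\dec}$ in the background outgoing PG frame near $S(1,r)$, and then convert to the LGCM frame using the transition functions $(f,\fb,\la)$ supplied by Theorem \ref{firstGCMlimit} together with the transformation formula \eqref{eq:Generalframetransf-intro}. Schematically $\b' = \b + O(f\rho) + \mathrm{l.o.t.}$, and since $f=O(\ep_0 r^{-1-\dec})$ and $\rho=O(r^{-3})$, the correction $f\rho$ is $O(\ep_0 r^{-4-\dec})$, already subdominant. Hence $|(\div\b')_{\ell=1,S(1,r)}|\lesssim\ep_0 r^{-9/2-\dec}$ and $\CC_1=O(\ep_0 r^{1/2-\dec})$. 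The triangle inequality then yields the claimed bound.

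The hard part will be ensuring that the transition functions relating the background PG frame to the LGCM frame (and, separately, the background PG frame to the initial-layer frame on $\LL_0$) decay at a rate sharp enough that the $r^{-7/2-\dec}$ decay of $\b$ is preserved under the change of frame — any loss here would immediately spoil the $r^{1/2-\dec}$ bound. This is precisely where the GCM construction underlying Theorem \ref{firstGCMlimit} and the quantitative mechanism behind Theorem \ref{firstunique} are essential.
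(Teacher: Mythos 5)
Your treatment of $\CC^{(0)}_1$ is essentially the paper's: the initial-layer assumption gives $(\div^{(0)}\b^{(0)})_{\ell=1}=O(\ep_0 r^{-\frac92-\dec})$ up to the Kerr background value, and multiplying by $r^5$ yields $O(\ep_0 r^{\frac12-\dec})$. One caveat: you dismiss the Kerr contribution by appealing to the vanishing of $\b$ in a principal null frame, but the initial-layer frame is not an exact principal frame, and the paper explicitly tracks $\div_{Kerr}\b_{Kerr}=O(r^{-6})$ (Lemma 2.52 of \cite{KS:main}), whose contribution $r^5\cdot r^{-6}=r^{-1}$ is absorbed into $\ep_0 r^{\frac12-\dec}$ only in the regime $r\gg\ep_0^{-3}$ — a restriction present in the paper's statement of Theorem \ref{kickthm} and absent from your argument.

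The genuine gap is in your bound on $\CC_1$. Your argument hinges on the transition functions from the background PG frame to the LGCM frame satisfying $f=O(\ep_0 r^{-1-\dec})$, so that the correction $f\rho$ to $\b'$ is subdominant to $\b$. This is false: the construction in Theorem \ref{LGCMconstruction} (see \eqref{ffblaintrodu} and \eqref{limitingestimates}) gives only $rf=F=O(1)$, i.e.\ $f=O(r^{-1})$ with no gain of $\ep_0$ or of extra $r$-decay, and Remark \ref{anormalfb} stresses that $\fb$ and $\ovla$ do not decay in $r$ at all — this anomalous non-decay \emph{is} the recoil phenomenon. Consequently $f\rho=O(r^{-4})$, which dominates $\ep_0 r^{-\frac72-\dec}$ unless $r\gg\ep_0^{-2}$, and the frame-change correction to $r^5(\div'\b')_{\ell=1}$ is $O(1)$ rather than negligible; your chain of estimates breaks exactly at the step you flagged as "the hard part." The paper avoids this entirely by a different mechanism (Proposition \ref{totalflux}): the LGCM condition \eqref{GCMS2limit} forces $\CC'(+\infty)=\lim_{u'\to\infty}\Bk'=0$ at $i^+$, whence $\PP'(+\infty)=0$; integrating the evolution law \eqref{eq-evole-com} \emph{backward} from $i^+$ gives $\PP'(u')=O(\ep_0 {u'}^{-1-2\dec})$ and then $\CC'_1=O(\ep_0)$, with no growth in $r$. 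This is strictly stronger than the $O(\ep_0 r^{\frac12-\dec})$ you would obtain even after repairing your estimates and restricting to $r\gg\ep_0^{-3}$, and it carries the physical content of the theorem: the final center of mass is $O(\ep_0)$ while the initial one can be as large as $\ep_0 r^{\frac12-\dec}$, and it is this asymmetry — invisible in your symmetric triangle-inequality treatment — that constitutes the gravitational wave recoil.
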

Theorem \ref{firstkick} is restated as Theorem \ref{kickthm} and proved in Section \ref{seckick}  by combining the evolution equations of the center of mass in \eqref{firstevolution} with the initial perturbation assumption of \cite{KS:main}.
\begin{rk}
Theorem \ref{firstkick} encodes the fact that there is a large displacement of the center of mass of the initial layer foliation relative to that of the LGCM foliation. This leads to non trivial technical difficulties encountered in the proof of Theorem M0 in \cite{KS:main}. The large deformation of the center of mass is sometimes referred to as \emph{gravitational wave recoil} or \emph{black hole kick} in the physical literature, see \cite{Fitchett,HHS,VV}. See also Remark \ref{anormalfb} for more explanations.
\end{rk}
\begin{rk}
    The initial assumption in \cite{KS:main} implies that\footnote{Here and the line below, we ignore the difference of $r$ and $r^{(0)}$ since one can show that $r\sim r^{(0)}$.}
    \begin{align*}
        \div^{(0)}\b^{(0)}=O\left(\frac{\ep_0}{r^{\frac{9}{2}+\dec}}\right).
    \end{align*}
    Hence, we have
    \begin{align*}
        \CC^{(0)}_1=(r^{(0)})^5\left(\div^{(0)}\b^{(0)}\right)_{\ell=1,S^{(0)}}=O(\ep_0 r^{\frac{1}{2}-\dec}),
    \end{align*}
    which implies the non triviality of \emph{kick}. However, under stronger assumptions of the decay rates of the initial data for a black hole stability problem, one can show that the black hole kick has size $O(\ep_0)$, see Theorem \ref{kickthmgeneral} for more details.
\end{rk}
\subsection{Structure of the paper}
\begin{itemize}
\item In Section \ref{secpre}, we introduce the basic geometric setup and recall the fundamental notions and the basic equations. We also establish the null frame transformation formulae.
\item In Section \ref{seclimit}, we study the asymptotic behavior of the null Ricci and curvature components. More precisely, we prove the existence of their appropriate weighted limits and deduce their Taylor expansions in terms of $r^{-1}$ near future null infinity $\II^+$.
\item In Section \ref{secscr}, we introduce the Penrose conformal compactification of the $\KSAF$ spacetime $(\M,\g)$, with $\II^+$ its future null boundary. We show that the regularity of the conformal metric up to $\II^+$ is determined by its decay properties in $r$.
\item In Section \ref{secLG}, we recall the main definitions and results of the GCM spheres introduced in \cite{KS:Kerr1,KS:Kerr2} and construct a sequence of incoming null cones embedded in $(\M,\g)$, which converges to $\II^+$. We prove that in the limit they induce a canonical foliation on $\II^+$, called a Limiting GCM (LGCM) foliation.
\item In Section \ref{secsuper}, we show that the LGCM foliation defined in Section \ref{secLG} is unique on $\II^+$ up to a trivial transformation. Thus, LGCM foliations exclude Pernrose's supertranslation ambiguity and also the freedom of spatial translation on $\II^+$.
\item In Section \ref{ssecphy}, we use LGCM foliation to define the null energy, linear momentum, center of mass and angular momentum on $\II^+$. In view of the uniqueness of LGCM foliation, these quantities are unambiguously defined. We derive the basic evolution laws for these physical quantities along null infinity and draw comparisons with the ones for the classical mechanics of isolated systems.
\item In Section \ref{seckick}, we compare the center of mass for the initial layer foliation, introduced as the initial assumptions\footnote{These assumptions where rigorously analyzed in \cite{ShenKerr}.} of \cite{KS:main} on $\LL_0$, and the LGCM foliation of $\II^+$. We show that there is a large deformation of the two centers of mass near $\II^+\cap\LL_0$. This phenomenon corresponds to a gravitational wave recoil, or \emph{black hole kick} in the physics literature, see for example \cite{Fitchett,HHS,VV}.
\item In Appendix \ref{section:appendix}, we recall the main equations and commutation identities in \cite{GKS,KS:main} that are used in this paper. We also provide the proof of Theorem \ref{expansionexist}, which describes the asymptotic behavior of all geometric quantities near $\II^+$.
\item In Appendix \ref{summary}, we summarize all the limiting quantities we defined in this paper for the convenience of the reader.
\end{itemize}
\subsection{Acknowledgements}
The authors are very grateful to Xinliang An, Elena Giorgi, Taoran He, J\'er\'emie Szeftel, Mu-Tao Wang and Pin Yu for their interest in this work and for many helpful discussions and comments.
\section{Preliminaries}\label{secpre}
We summarize the general formalism introduced in \cite{GKS} and the main assumptions in \cite{KS:main}, which will be used in this paper, see also Appendix \ref{section:appendix} for more details.
\subsection{Null horizontal structures}
\label{sect:horizontal-structures}
Let $(\MM,\g)$ be a Lorentzian spacetime. Consider a fixed null pair of vectorfields  $(e_3,e_4)$, i.e. 
\begin{align*}
\g(e_3, e_3)=\g(e_4, e_4)=0, \qquad \g(e_3,e_4)=-2, 
\end{align*}
and denote by $\HH$ the vector space of horizontal vectorfields $X$ on $\MM$, i.e.  $\g(e_3, X)= \g(e_4, X)=0$. A null frame $(e_3, e_4, e_1, e_2)$ on $\MM$ consists, in addition to the null pair $(e_3, e_4)$, of a choice of horizontal vectorfields $(e_1, e_2)$, such that 
$$
\g(e_a, e_b)=\de_{ab}\qquad a, b=1,2.
$$
However, the commutator $[X,Y]$ of two horizontal vectorfields may not be horizontal. We say that the pair $(e_3, e_4)$ is integrable if $\HH$ forms an integrable distribution, i.e. $X, Y\in\HH $ implies $[X,Y]\in\HH$. As is well known, the principal null pair in Kerr fails to be integrable, see also Remark \ref{rem:nonintegrabilityandatrchatrchb}. Given an arbitrary vectorfield $X$ we denote by $^{(h)}X$ its horizontal projection, 
$$
^{(h)}X=X+\frac 1 2 \g(X,e_3)e_4+\frac 1 2\g(X,e_4) e_3.
$$ 
For any $X,Y\in \O(\MM)$ we define $\ga(X, Y)=\g(X, Y)$ and\footnote{In the particular case where the horizontal structure is integrable, $\ga$ is the induced metric, and $\chi$ and $\chib$ are the null second fundamental forms.} 
$$
\chi(X,Y)=\g(\D_Xe_4 ,Y), \qquad \chib(X,Y)=\g(\D_Xe_3,Y).
$$
Observe that $\chi$ and $\chib$ are symmetric if and only if the horizontal structure is integrable. Indeed this follows easily from the formulae\footnote{Note that we can view  $\chi$ and $\chib$ as horizontal 2-covariant tensor-fields by extending their definition to arbitrary $X,Y$, $\chi(X, Y)= \chi( ^{(h)}X, ^{(h)}Y)$, $\chib(X, Y)= \chib(^{(h)}X,^{(h)}Y)$.},
\begin{align*}
\chi(X,Y)-\chi(Y,X)&=\g(\D_X e_4, Y)-\g(\D_Ye_4,X)=-\g(e_4, [X,Y]),\\
\chib(X,Y)-\chib(Y,X)&=\g(\D_X e_3, Y)-\g(\D_Ye_3,X)=-\g(e_3, [X,Y]).
\end{align*}
We define their trace $\trch$, $\trchb$, and anti-trace $\atrch$, $\atrchb$ as follows
$$
\trch:=\de^{ab}\chi_{ab},\qquad\trchb:=\de^{ab}\chib_{ab},\qquad \atrch:=\in^{ab}\chi_{ab},\qquad\atrchb:=\in^{ab}\chib_{ab}.
$$
Accordingly, we decompose $\chi,\chib$ as follows
\begin{align*}
\chi_{ab}&=\chih_{ab} +\frac 1 2\de_{ab} \trch+\frac 1 2 \in_{ab}\atrch,\\
\chib_{ab}&=\chibh_{ab} +\frac 1 2\de_{ab} \trchb+\frac 1 2 \in_{ab}\atrchb.
\end{align*}
where $\chih$ and $\chibh$ are the symmetric traceless parts of $\chi$ and $\chib$ respectively.
\begin{remark}\label{rem:nonintegrabilityandatrchatrchb}
The non integrability of $(e_3,e_4)$ corresponds to non vanishing $\atrch$ and $\atrchb$. This is the case of the principal null frame of Kerr for which $\atrch$ and $\atrchb$ are indeed non trivial. The quantities $\trch$ and $\trchb$ are also called null expansions.
\end{remark}
We define the horizontal covariant operator $\nab$ as follows:
\begin{align*}
\nab_X Y:=^{(h)}(\D_XY)=\D_XY- \frac 1 2 \chib(X,Y)e_4 -\frac 1 2 \chi(X,Y) e_3,\quad X, Y\in \HH.
\end{align*}
Note that,
$$
\nab_X Y-\nab_Y X=[X, Y]-\frac 1 2\left(\atrchb\,e_4+\atrch\, e_3\right)\in(X, Y).
$$
Note that $\nab$ acts like a Levi-Civita connection i.e., for all $X,Y,Z\in\HH$,
$$
Z\ga (X,Y)=\ga(\nab_ZX, Y)+\ga(X,\nab_ZY).
$$
We can then define connection  and curvature coefficients   similar to   those  in the integrable case, as in \cite{Ch-Kl}.
\begin{df}\label{df:allquantities} Given a null pair $(e_3,e_4)$  and $e_a$ a basis of the horizontal space $\HH$ we define.
\begin{enumerate}
\item \textit{Ricci coefficients}
\begin{align*}
\chib_{ab}&=\g(\D_ae_3, e_b),\quad \,\,\chi_{ab}=\g(\D_ae_4, e_b),\quad\,\,\,\,
\xib_a=\frac 1 2 \g(\D_3 e_3 , e_a),\quad \,\,\,\xi_a=\frac 1 2 \g(\D_4 e_4, e_a),\\
\omb&=\frac 1 4 \g(\D_3e_3 , e_4),\quad\,\, \om=\frac 1 4 \g(\D_4 e_4, e_3),\quad 
\etab_a=\frac 1 2 \g(\D_4 e_3, e_a),\quad \quad \eta_a=\frac 1 2 \g(\D_3 e_4, e_a),\\
\ze_a&=\frac 1 2 \g(\D_{a}e_4,  e_3).
\end{align*}
\item \textit{Curvature coefficients} 
\begin{align*}
\a_{ab}&=\R_{a4b4},\qquad\quad\,\b_a=\frac 12 \R_{a434},\qquad\quad\bb_a=\frac 1 2 \R_{a334},  \qquad\quad\aa_{ab}=\R_{a3b3},\\
\rho&=\frac 1 4 \R_{3434},\qquad\;\,\rhod=\frac 1 4\dual \R_{3434},
\end{align*}
\item\textit{Mass aspect function and conjugate mass aspect function}
\begin{align*}
\mu:=- \div \ze -\rho+\frac 1 2 \chih \c\chibh ,\qquad \mub:=\div \ze -\rho+\frac 1 2\hch\c\hchb
\end{align*}
\item
We define the following complex notations according to \cite{GKS1}
\begin{align*}
A:=\a+i\dual\a, \quad B:=\b+i\dual\b, \quad P:=\rho+i\si,\quad \Bb:=\bb+i\dual\bb, \quad \Ab:=\aa+i\dual\aa,
\end{align*}    
 and   
\begin{align*}
X&=\chi+i\dual\chi,\quad \Xb=\chib+i\dual\chib, \quad H=\eta+i\dual\eta,\quad \Hb=\etab+i\dual \etab, \quad Z=\ze+i\dual\ze, \\ 
\Xi&=\xi+i\dual\xi,\quad\;\; \Xib=\xib+i\dual\xib.
\end{align*}   
In particular, note that
\begin{align*}
\tr X=\trch-i\atrch,\quad \Xh=\hch+i\dual\hch,\quad\tr\Xb =\trchb-i\atrchb,\quad \Xbh=\hchb+i\dual\hchb.
\end{align*}   
\end{enumerate}
\end{df}
\subsection{Hodge systems}\label{ssec7.2}
\begin{df}\label{tensorfields}
For tensor fields defined on a $2$--sphere $S$, we denote by $\sk_0:=\sk_0(S)$ the set of pairs of scalar functions, $\sk_1:=\sk_1(S)$ the set of $1$--forms and $\sk_2:=\sk_2(S)$ the set of symmetric traceless $2$--tensors.
\end{df}
\begin{df}\label{def7.2}
Given $\xi\in\sk_1$, we define its Hodge dual
\begin{equation*}
    {^*\xi}_a:=\in_{ab}\xi^b.
\end{equation*}
Clearly $^*\xi \in \sk_1$ and
\begin{equation*}
    ^*(^*\xi)=-\xi.
\end{equation*}
Given $U \in \sk_2$, we define its Hodge dual
\begin{equation*}
{^*U}_{ab}:=\in_{ac}{U^a}_b.
\end{equation*}
Observe that $^*U\in\sk_2$ and
\begin{equation*}
     ^*(^*U)=-U.
\end{equation*}
\end{df}
\begin{df}
    Given $\xi,\eta\in\sk_1$, we denote
\begin{align*}
    \xi\cdot\eta&:=\de^{ab}\xi_a\eta_b,\\
    \xi\wedge\eta&:=\in^{ab}\xi_a\eta_b,\\
    (\xi\hot\eta)_{ab}&:=\xi_a\eta_b+\xi_b\eta_a-\de_{ab}\xi\c\eta.
\end{align*}
Given $\xi\in \sk_1$, $U\in\sk_2$, we denote
\begin{align*}
    (\xi\c U)_a:=\de^{bc}\xi_b U_{ac}.
\end{align*}
Given $U,V\in \sk_2$, we denote 
\begin{align*}
    U\wedge V& :=\in^{ab}U_{ac}V_{cb},\\
    U\c V& :=\de^{ab}U_{ac}V_{cb}.
\end{align*}
\end{df}
\begin{df}
    For a given $\xi\in\sk_1$, we define the following differential operators:
    \begin{align*}
        \sdiv\xi&:= \de^{ab}\nab_a\xi_b,\\
        \curl\xi&:= \in^{ab}\nab_a\xi_b,\\
        (\nab\hot\xi)_{ab}&:=\nab_a\xi_b+\nab_b\xi_a-\de_{ab}(\sdiv\xi).
    \end{align*}
\end{df}
\begin{df}
    We define the following Hodge type operators:
    \begin{itemize}
        \item $d_1$ takes $\sk_1$ into $\sk_0$ and is given by:
        \begin{equation*}
            d_1\xi:=(\sdiv\xi,\curl \xi),
        \end{equation*}
        \item $d_2$ takes $\sk_2$ into $\sk_1$ and is given by:
        \begin{equation*}
            (d_2 U)_a:=\nab^{b} U_{ab}, 
        \end{equation*}
        \item $d_1^*$ takes $\sk_0$ into $\sk_1$ and is given by:
        \begin{align*}
            d_1^*(f,f_*)_{ a}:=-\nab_af+\in_{ab}\nab_b f_*,
        \end{align*}
        \item $d_2^*$ takes $\sk_1$ into $\sk_2$ and is given by:
        \begin{align*}
            d_2^*\xi:=-\frac{1}{2}\nab\hot\xi.
        \end{align*}
    \end{itemize}
\end{df}
\begin{prop}\label{ddddprop}
    We have the following identities:
\begin{align}
    \begin{split}\label{dddd}
        d_1^*d_1&=-\De_1+\mathbf{K},\qquad\qquad  d_1 d_1^*=-(\De_0,\De_0),\\
        d_2^*d_2&=-\frac{1}{2}\De_2+\mathbf{K},\qquad\quad\; d_2 d_2^*=-\frac{1}{2}(\De_1+\mathbf{K}).
    \end{split}
\end{align}
where $\mathbf{K}$ denotes the Gauss curvature on $S$. 
\end{prop}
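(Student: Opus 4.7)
The proof is a direct computation for each of the four identities, exploiting the special form of the Riemann curvature on a $2$--surface, namely
\[
R_{abcd}=\mathbf{K}\bigl(\gamma_{ac}\gamma_{bd}-\gamma_{ad}\gamma_{bc}\bigr),\qquad \mathrm{Ric}_{ab}=\mathbf{K}\gamma_{ab},
\]
together with the $2$--dimensional $\in$--identities $\in_{ab}\in^{cd}=\delta_a^c\delta_b^d-\delta_a^d\delta_b^c$ and $\in^{ab}\in_{bc}=-\delta^a_c$. In every case I would unfold the definition of the operators, write the composition as a double covariant derivative, and use the commutator $[\nab_a,\nab_b]$ on the appropriate tensor type.

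The easiest pair is $d_1 d_1^*$. For $(f,f_*)\in\sk_0$, $\sdiv(d_1^*(f,f_*))=-\De f+\in^{ab}\nab_a\nab_b f_*=-\De f$ since $\nab_a\nab_b$ is symmetric on scalars, and $\curl(d_1^*(f,f_*))=-\in^{ab}\nab_a\nab_b f-\de^c_a\nab_a\nab_c f_*=-\De f_*$ after using $\in^{ab}\in_{bc}=-\delta^a_c$. This gives $d_1d_1^*=-(\De_0,\De_0)$ immediately.

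For $d_1^*d_1$, I would compute for $\xi\in\sk_1$
\[
(d_1^*d_1\xi)_a=-\nab_a\nab^b\xi_b+\in_{ab}\in^{cd}\nab^b\nab_c\xi_d=-\nab_a\nab^b\xi_b+\nab^b\nab_a\xi_b-\De\xi_a,
\]
and then use $[\nab^b,\nab_a]\xi_b=\mathrm{Ric}_{ab}\xi^b=\mathbf{K}\xi_a$ to rewrite $\nab^b\nab_a\xi_b=\nab_a\nab^b\xi_b+\mathbf{K}\xi_a$, which yields $d_1^*d_1\xi=-\De_1\xi+\mathbf{K}\xi$ after the cancellation.

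The symmetric traceless identities $d_2^*d_2=-\tfrac12\De_2+\mathbf{K}$ and $d_2d_2^*=-\tfrac12(\De_1+\mathbf{K})$ are the real work. For $U\in\sk_2$, the definition gives
\[
(d_2^*d_2 U)_{ab}=-\frac12\bigl(\nab_a\nab^cU_{bc}+\nab_b\nab^cU_{ac}-\gamma_{ab}\nab^c\nab^dU_{cd}\bigr),
\]
and I would commute $\nab_a\nab^c=\nab^c\nab_a+[\nab_a,\nab^c]$. The commutator, applied to $U_{bc}$, produces two Riemann curvature terms; using $R_{abcd}=\mathbf{K}(\gamma_{ac}\gamma_{bd}-\gamma_{ad}\gamma_{bc})$ together with the symmetry and tracelessness of $U$, these collapse to a single term proportional to $\mathbf{K} U_{ab}$. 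The remaining piece $\nab^c\nab_a U_{bc}$, after symmetrizing in $a,b$ and subtracting the trace, rearranges (via a second commutation and the $\sk_2$ structure) into $-\tfrac12\De_2 U_{ab}$ plus lower-order terms that combine with the previous curvature contribution to give $\mathbf{K} U_{ab}$. The identity for $d_2d_2^*$ follows the same template: for $\xi\in\sk_1$, expanding $(d_2d_2^*\xi)_a=-\tfrac12\nab^b(\nab_a\xi_b+\nab_b\xi_a-\gamma_{ab}\sdiv\xi)$, using $[\nab^b,\nab_a]\xi_b=\mathbf{K}\xi_a$ and grouping gives $-\tfrac12\De_1\xi_a-\tfrac12\mathbf{K}\xi_a$.

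The only subtle step is the $d_2^*d_2$ calculation, where one must carefully track which commutator terms survive after imposing symmetry and tracelessness of the $\sk_2$ output. This is where I would spend the most care: everything else is a formal unwinding of the definitions combined with the $2$--dimensional curvature identity.
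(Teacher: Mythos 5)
Your computation is correct in approach and, for the three identities you carry out in full, correct in detail; note that the paper does not actually prove this proposition but simply cites identity (2.2.2) of \cite{Ch-Kl}, so your argument is precisely the derivation underlying that citation. For $d_1d_1^*$, $d_1^*d_1$ and $d_2d_2^*$ the only inputs are the two--dimensional identity $\in_{ab}\in^{cd}=\de_a^c\de_b^d-\de_a^d\de_b^c$ and the contracted commutator $[\nab^b,\nab_a]\xi_b=\sRic_{ab}\xi^b=\mathbf{K}\xi_a$, and your bookkeeping checks out. The one place where the argument is only sketched is $d_2^*d_2$, which you rightly flag as the delicate case. To close it, two separate facts are needed: first the commutator $[\nab_a,\nab^c]U_{bc}=-2\mathbf{K}U_{ab}$, valid for $U\in\sk_2$ in two dimensions once tracelessness is used; and second a way to convert the remaining term $\nab^c\nab_aU_{bc}$ into $\De U_{ab}$ plus terms expressible through $d_2U$. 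The efficient route for the latter is the observation that the part of $\nab_aU_{cb}$ antisymmetric in $(a,c)$ must be proportional to $\in_{ac}$, which gives
\begin{equation*}
\nab_aU_{cb}-\nab_cU_{ab}=\in_{ac}\,(d_2\dual U)_b,
\end{equation*}
after which $\nab^c\nab_aU_{bc}=\De U_{ab}+\in_{ac}\nab^c(d_2\dual U)_b$, and the leftover pieces recombine with the curvature contributions and the trace subtraction to produce exactly $+\mathbf{K}U_{ab}$. The signs in this last step are easy to get wrong, so it is worth writing out explicitly; alternatively one can simply cite \cite{Ch-Kl} as the authors do.
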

\begin{proof}
See for example (2.2.2) in \cite{Ch-Kl}.
\end{proof}
\begin{df}\label{dfcomplexD}
We define derivatives of complex quantities as follows
\begin{itemize}
\item For two scalar functions $u$ and $v$, we define
$$
\DD(u+iv):=(\nab+i\dual\nab)(u+iv).
$$
\item For a 1-form $f$, we define
\begin{align*}
    \DD\c(f+i\dual f):=(\nab+i\dual\nab)\c(f+i\dual f),\\
    \DD\hot(f+i\dual f):=(\nabla+i\dual\nab)\hot(f+i\dual f).
\end{align*}
\item For a symmetric traceless 2-form $u$, we define
$$
\DD\c(U+i\dual U):=(\nab+i\dual\nab)\c(U+i\dual U).
$$
\end{itemize}
\end{df}
\begin{prop}\label{ellipticLp}
The following statements hold for all $p\in(1,+\infty)$:
\begin{enumerate}
\item Let $\phi\in\sk_0$ be a solution of $\De \phi=f$. Then, we have
\begin{align*}
    \|\nab^2 \phi\|_{L^p}+r^{-1}\|\nab\phi\|_{L^p}+r^{-2}\|\phi-\overline{\phi}\|_{L^p}&\les \|f\|_{L^p}.
\end{align*}
\item Let $\phi\in\sk_0$ be a solution of $\left(\De+\frac{2}{r^2}\right)\phi=f$. Then, we have
\begin{align*}
    \|\nab^2 \phi\|_{L^p}+r^{-1}\|\nab\phi\|_{L^p}+r^{-2}\|\phi\|_{L^p}&\les \|f\|_{L^p}+r^{-2}|(\phi)_{\ell=1}|.
\end{align*}
\item Let $\xi\in\sk_1$ be a solution of $d_1\xi=(f,f_*)$. Then we have
\begin{align*}
    \|\nab\xi\|_{L^p}+r^{-1}\|\xi\|_{L^p}\les \|(f,f_*)\|_{L^p}.
\end{align*}
\item Let $U\in\sk_2$ be a solution of $d_2 U=f$. Then we have
\begin{align*}
\|\nab U\|_{L^p}+r^{-1}\|U\|_{L^p}\les\|f\|_{L^p}.
\end{align*}
\end{enumerate}
\end{prop}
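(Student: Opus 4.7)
My plan is to derive all four estimates from standard $L^p$-elliptic theory on a 2-sphere, combined with the Hodge-type identities already recorded in Proposition \ref{ddddprop}, and to carry the $r$-scaling by non-dimensionalizing (i.e.\ rescaling the metric so that the sphere has unit area radius, applying known constants, then restoring $r$). The underlying sphere is (at worst) an $\ep$-almost round sphere in the sense of \eqref{almostroundintro}, so the Gauss curvature $K$ obeys $\bigl|K - r^{-2}\bigr|\lesssim \ep\, r^{-2}$; throughout, constants in the inequalities are allowed to depend on this roundness.

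For part (1), note that $\phi-\ov\phi$ has zero mean, and by Hölder and the Poincaré inequality (which is uniform in $r$ after rescaling) the $L^p$-bound of $\phi-\ov\phi$ reduces to the $L^p$-bound of $\nabla \phi$, while the $L^p$-bound of $\nabla\phi$ reduces to that of $\nabla^2\phi$. The bound on $\nabla^2 \phi$ in terms of $\|f\|_{L^p}$ is the standard Calderón–Zygmund estimate for the Laplacian on a compact Riemannian surface. For part (2), I decompose $\phi = \phi_{\ell=0}+\phi_{\ell=1}+\phi_{\ell\ge 2}$ with respect to the approximate $\ell=1$ basis of Definition~\ref{jpdefintro}. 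The equation $(\Delta+2/r^2)\phi=f$ forces $\phi_{\ell=0}$ to be controlled by $f$ (since $2/r^2\ne 0$), $\phi_{\ell=1}$ is essentially in the kernel up to $O(\ep)$-errors, and on the $\ell\ge 2$ space the operator has uniformly bounded inverse because the relevant eigenvalues are bounded away from zero (the second nontrivial eigenvalue is $\simeq -4/r^2$, so $\Delta+2/r^2$ has eigenvalue $\simeq -2/r^2$ there). The first two terms are absorbed by adding the $|(\phi)_{\ell=1}|$ correction on the right, while the $\ell\ge 2$ piece inherits the elliptic estimate from part (1).

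For parts (3)–(4), I use the composition identities in \eqref{dddd}: namely $d_1^* d_1 = -\Delta_1 + K$ and $d_2^* d_2 = -\frac{1}{2}\Delta_2 + K$. Pairing the equation $d_1\xi=(f,f_*)$ with $d_1\xi$ and integrating by parts gives the Bochner-type $L^2$ identity
\begin{equation*}
\int_S \bigl(|\nabla \xi|^2 + K|\xi|^2\bigr) = \int_S |d_1\xi|^2,
\end{equation*}
which, since $K\simeq 1/r^2>0$, controls $\|\nabla\xi\|_{L^2}$ and $r^{-1}\|\xi\|_{L^2}$ by $\|(f,f_*)\|_{L^2}$; an identical Bochner identity handles $d_2$, with coefficient $1/2$. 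To upgrade from $L^2$ to $L^p$ I apply standard elliptic $L^p$-regularity (Calderón–Zygmund) for the associated second-order operators $d_1^*d_1$ and $d_2^*d_2$, which are positive self-adjoint and elliptic with smooth coefficients on the compact sphere; this yields $\|\nabla\xi\|_{L^p}+r^{-1}\|\xi\|_{L^p}\lesssim \|d_1\xi\|_{L^p}$ and the analogous statement for $d_2$.

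The only genuine subtlety—and the step most deserving of care—is maintaining $r$-independence of the constants and showing that the $\ep$-almost round geometry is close enough to the standard sphere for the $L^p$-bounds of $\Delta_1+K$ and $\Delta_2+2K$ to hold uniformly; this is standard once one non-dimensionalizes by $r$ and treats the deviation $K-r^{-2}=O(\ep r^{-2})$ perturbatively. Part (2) requires the additional bookkeeping of the approximate kernel because $\Delta +2/r^2$ is not exactly invertible on an almost round sphere, which is precisely why the $r^{-2}|(\phi)_{\ell=1}|$ term appears on the right-hand side.
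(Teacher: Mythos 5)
The paper offers no argument for Proposition \ref{ellipticLp}: it is quoted verbatim from Corollary 2.3.1.1 of \cite{Ch-Kl}. Your sketch reconstructs essentially the standard proof from that source — the identities of Proposition \ref{ddddprop}, positivity of the Gauss curvature on an almost round sphere, Calder\'on--Zygmund theory, and recovery of the $r$-weights by rescaling — so in substance you are on the same route as the cited reference, and the overall strategy is sound.

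Two steps need repair. The minor one is arithmetic: in part (2) the $\ell=2$ eigenvalue of $\De$ on the round sphere of radius $r$ is $-6/r^2$, so $\De+2/r^2$ has eigenvalue $-4/r^2$ on that eigenspace, not $-2/r^2$; the conclusion that the operator is uniformly invertible on $\ell\ge 2$ is unaffected. The more substantive one is the $L^p$ upgrade in parts (3)--(4). As written, you pass to the second-order equation $d_1^*d_1\xi=d_1^*(f,f_*)$ and invoke elliptic $L^p$-regularity for $d_1^*d_1$; but that estimate bounds $\|\nab^2\xi\|_{L^p}$ by $\|d_1^*(f,f_*)\|_{L^p}$, which costs a derivative of the data, whereas the claimed inequality is first order in $\xi$ with only $\|(f,f_*)\|_{L^p}$ on the right. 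The correct device — and the one actually used in \cite{Ch-Kl} — is to treat $d_1$ and $d_2$ directly as first-order elliptic (div--curl type) systems, or equivalently to use that $\nab\circ(d_1^*d_1)^{-1}\circ d_1^*$ is an order-zero Calder\'on--Zygmund operator bounded on $L^p$ for $1<p<\infty$. The fix is standard, but the step as literally stated does not deliver the inequality you claim. It would also be worth recording explicitly that $d_1$ and $d_2$ have trivial kernels on a topological $2$-sphere (no harmonic $1$-forms, no nontrivial divergence-free symmetric traceless $2$-tensors in genus zero), which is what makes your Bochner identities coercive without any mode subtraction in parts (3)--(4).
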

\begin{proof}
See Corollary 2.3.1.1 in \cite{Ch-Kl}.
\end{proof}
\begin{df}\label{def:dk-definition}
We define the following weighted differential operator
$$\dk=(\nab_3,r\nab_4,\dkb),$$
with $\dkb$ defined as follows:
\begin{align*}
\dkb(f,f_*)&:=rd_1^*(f,f_*),\qquad \forall (f,f_*)\in\sk_0,\\
\dkb\xi&:=rd_1\xi,\qquad\qquad\qquad\;\forall\xi\in\sk_1,\\
\dkb U&:=rd_2U,\qquad\qquad\quad\;\;\;\forall U\in\sk_2.
\end{align*}
Moreover, for any quantity $X$, we denote
\begin{align}\label{df(k)}
\dkk X:=\left(X,\,\dkb X,\,\dkb^2X,\ldots,\,\dkb^kX\right),\qquad k\in\mathbb{N}.
\end{align}
\end{df}
We now introduce the definition of the approximate basis of $\ell=1$ modes on almost round spheres, which plays an essential role throughout this paper.
\begin{df}\label{jpdef}
    On an $\ep$-almost round sphere $(S, g^S)$, in the sense of
\begin{equation}\label{almostround}
    \left\| K^S - \frac{1}{(r^S)^2} \right\|_{L^\infty(S)} \leq \frac{\ep}{(r^S)^2}.
\end{equation}
\begin{enumerate}
    \item We define an $\ep$--approximate basis of $\ell=1$ modes on $S$ to be a triplet of functions $\Jp$ on $S$ verifying\footnote{The properties  \eqref{def:Jpsphericalharmonics} of the scalar functions $\Jp$ are motivated by the fact that the $\ell=1$ spherical harmonics on the standard sphere $\SSS^2$, given by $J^{(0)}=\cos\th,\, J^{(+)}=\sin\th\cos\vphi,\, J^{(-)}=\sin\th\sin\vphi$, satisfy \eqref{def:Jpsphericalharmonics}  with $\ep=0$. Note also that on $\SSS^2$, there holds
\begin{align*}
\int_{\mathbb{S}^2}(\cos\th)^2=\int_{\mathbb{S}^2}(\sin\th\cos\vphi)^2=\int_{\mathbb{S}^2}(\sin\th\sin\vphi)^2=\frac{4\pi}{3},\qquad |\SSS^2|=4\pi.
\end{align*}}
\begin{align}
\begin{split}\label{def:Jpsphericalharmonics}
(r^2\De+2)\Jp&= O(\ep),\qquad p=0,+,-,\\
\frac{1}{|S|}\int_{S}\Jp J^{(q)}&=\frac{1}{3}\de_{pq}+O(\ep),\qquad p,q=0,+,-,\\
\frac{1}{|S|}\int_{S}\Jp&=O(\ep),\qquad p=0,+,-,
\end{split}
\end{align}
where $\ep>0$ is a sufficiently small constant.
    \item Let $(\Phi, u)$ be the unique, up to isometries of $\mathbb{S}^2$, uniformization\footnote{This was called effective uniformization in  \cite{KS:Kerr2}.} in Corollary 3.8. of \cite{KS:Kerr2}, that is, a unique diffeomorphism $\Phi: \mathbb{S}^2 \to S$ and a unique centered conformal factor $u$ s.t.
\[\Phi^\#(g^S) = (r^S)^2 e^{2u} \gamma_0.
\]
We define the \emph{canonical} choice of $\ep$--approximate basis of $\ell=1$ modes on $S$ by
\[
J^S := J^{\mathbb{S}^2} \circ \Phi^{-1},
\]
where $J^{\mathbb{S}^2}$ denotes the standard $\ell=1$ spherical harmonic on round sphere. 
\end{enumerate}
\end{df}
\begin{proposition}\label{jpprop}
For an $\ep$--approximate basis of $\ell=1$ modes $\Jp$ defined in Definition \ref{jpdef}, we have
\begin{align}\label{estjpepg}
d_2^*d_1^*\Jp=r^{-2}O(\ep).
\end{align}
\end{proposition}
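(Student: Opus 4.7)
The plan is to exploit an Obata-type identity: on a round sphere of radius $r$, the $\ell=1$ spherical harmonics satisfy $\nab^2 Y = -r^{-2}\ga\, Y$, which is exactly $d_2^* d_1^* Y = 0$, since
\begin{equation*}
(d_2^* d_1^* \Jp)_{ab} = \nab_a \nab_b \Jp - \tfrac{1}{2}\de_{ab}\De \Jp
\end{equation*}
is the traceless Hessian of $\Jp$ (using $d_1^*\Jp = -\nab \Jp$ together with the definition of $d_2^*$). Proposition \ref{jpprop} asserts that this vanishing persists up to $O(\ep)$ in the $\ep$-almost round setting.

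The quantitative argument is to apply $d_2$ to $U := d_2^* d_1^* \Jp$ and invoke the elliptic estimate of Proposition \ref{ellipticLp}(4). Combining $d_2 d_2^* = -\tfrac{1}{2}(\De_1 + \mathbf{K})$ from Proposition \ref{ddddprop} with the commutator identity $\De_1 \nab \phi = \nab \De \phi + \mathbf{K}\nab \phi$ on the 2-surface $S$ (which follows from the Ricci identity and $\sRic = \mathbf{K}\ga$ in dimension two, or equivalently from comparing $d_1^* d_1 d_1^* \Jp = \nab \De \Jp$ with $d_1^* d_1 = -\De_1 + \mathbf{K}$), one obtains
\begin{equation*}
d_2 U \,=\, \tfrac{1}{2}\nab \De \Jp + \mathbf{K}\nab \Jp \,=\, \tfrac{1}{2r^2}\nab\bigl[(r^2 \De + 2)\Jp\bigr] + \bigl(\mathbf{K} - r^{-2}\bigr)\nab \Jp,
\end{equation*}
where the second equality uses that $r = r^S$ is constant on $S$ and substitutes the near-eigenvalue relation $(r^2 \De + 2)\Jp = O(\ep)$ from Definition \ref{jpdef}.

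Each term on the right is then of size $r^{-3}O(\ep)$: the almost roundness assumption \eqref{almostround} gives $\mathbf{K} - r^{-2} = r^{-2}O(\ep)$, and the normalization in \eqref{def:Jpsphericalharmonics} together with elliptic regularity for $\De$ yields $|\Jp| = O(1)$ and hence $|\nab \Jp| = O(r^{-1})$. Consequently $d_2 U = r^{-3}O(\ep)$, and Proposition \ref{ellipticLp}(4) applied to $U \in \sk_2$ gives the desired bound $U = r^{-2}O(\ep)$, which is exactly \eqref{estjpepg}. The principal technical subtlety is controlling the gradient of the error $(r^2\De+2)\Jp$: this requires the $O(\ep)$ bound in Definition \ref{jpdef} to carry over to first derivatives, which is automatic in the GCM framework of \cite{KS:Kerr1,KS:Kerr2} where $\Jp$ is constructed smoothly from elliptic Hodge systems on $S$, so that differentiation of the error produces at worst the expected factor $r^{-1}$.
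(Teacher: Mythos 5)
Your argument is correct, and it is the natural route: the paper itself offers no proof of Proposition \ref{jpprop}, only a citation to Lemma 5.37 of \cite{KS:main}, and the chain you use --- identify $d_2^*d_1^*\Jp$ with the traceless Hessian, compute $d_2 d_2^* d_1^*\Jp=\tfrac12\nab\De\Jp+\mathbf{K}\nab\Jp$ via Proposition \ref{ddddprop} and the two-dimensional Ricci identity, rewrite this as $\tfrac{1}{2r^2}\nab[(r^2\De+2)\Jp]+(\mathbf{K}-r^{-2})\nab\Jp$, and close with the elliptic estimate of Proposition \ref{ellipticLp}(4) --- is essentially what the cited lemma does. Two small points deserve emphasis. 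First, the step you flag yourself is the only genuine hypothesis beyond the literal text of Definition \ref{jpdef}: you need $\nab\big[(r^2\De+2)\Jp\big]=r^{-1}O(\ep)$, i.e.\ the $O(\ep)$ in \eqref{def:Jpsphericalharmonics} must be stable under one application of $r\nab$. If one refuses to differentiate the error, the best available substitute is the direct $L^2$ integration by parts (Bochner) argument, which only yields $\|d_2^*d_1^*\Jp\|\lesssim r^{-2}O(\ep^{1/2})$ --- strictly weaker --- so the derivative control is genuinely needed for the stated $O(\ep)$ rate; in the framework of \cite{KS:main} the symbol $O(\ep)$ for the $\ell=1$ basis is indeed understood to include derivatives, so this is consistent. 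Second, Proposition \ref{ellipticLp}(4) gives $L^p$ control of $U$ and $\nab U$, so passing to the pointwise statement \eqref{estjpepg} requires one application of Sobolev embedding on $S$ (take $p>2$ and use $\|U\|_{L^\infty}\les r^{1-2/p}\|\nab U\|_{L^p}+r^{-2/p}\|U\|_{L^p}$); this is routine but worth stating. Note also that the estimate in Proposition \ref{ellipticLp}(4) has no kernel term because $d_2$ has trivial kernel on topological spheres, which is what makes the final step clean.
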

\begin{proof}
See Lemma 5.37 in \cite{KS:main}.
\end{proof}
\begin{df}\label{defl=1}
Given a scalar function $f$ defined on a sphere $S$, we define the $\ell=1$ modes of $f$ w.r.t. a choice of $\ep$--approximate basis of $\ell=1$ modes $\Jp$ by the triplet
$$
(f)_{\ell=1}:=\left\{\frac{1}{|S|}\int_S f \Jp,\quad p=0,+,-\right\}.
$$
\end{df}
\subsection{Outgoing PG structures}\label{section:outgoingPG}
We now recall the definition of \emph{outgoing  Principal Geodesic (PG) structure}, which was introduced and developed in Chapter 2 of \cite{KS:main}.
\begin{df}\label{dfoutgoingPG}
An outgoing PG structure consists of a null pair $(e_3, e_4)$ and the induced horizontal structure $\HH$, together with a scalar function r such that:
\begin{enumerate}
    \item $e_4$ is a null outgoing geodesic vectorfield, i.e. $\D_4e_4=0$,
    \item $r$ is an affine parameter, i.e. $e_4(r)=1$,
    \item the gradient of $r$, defined by $\grad r:=\g^{\a\b}\pr_\b r\pr_\a$, is perpendicular to $\HH$.
\end{enumerate}
\end{df}
\begin{lm}\label{PGconditionslm}
    Given an outgoing PG structure as above, we have
    \begin{align*}
        \xi=0,\qquad \xi=0,\qquad \etab+\ze=0.
    \end{align*}
\end{lm}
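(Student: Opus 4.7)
The plan is to read each identity directly off the defining conditions of an outgoing PG structure, using only the definitions of the Ricci coefficients in Section \ref{sect:horizontal-structures}. The first two identities (read as $\xi=0$ and $\omega=0$, the statement appearing to contain a small typo) are purely algebraic: by Definition \ref{df:allquantities} one has
\begin{equation*}
\xi_a = \tfrac{1}{2}\g(\D_4 e_4, e_a), \qquad \omega = \tfrac{1}{4}\g(\D_4 e_4, e_3),
\end{equation*}
and both vanish instantly once $\D_4 e_4 = 0$, which is condition (1) of Definition \ref{dfoutgoingPG}.

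The identity $\etab + \ze = 0$ is the only one requiring a brief computation, and the plan is to derive it from the commutator $[e_4, e_a]$ applied to the affine parameter $r$. First I would repackage the remaining two PG conditions into pointwise information on $r$: since $\g(\grad r, X) = X(r)$ for every vector $X$, the perpendicularity $\grad r \perp \HH$ is equivalent to $e_a(r) = 0$ for $a=1,2$, which combined with $e_4(r)=1$ of condition (2) forces
\begin{equation*}
[e_4, e_a](r) = e_4\bigl(e_a(r)\bigr) - e_a\bigl(e_4(r)\bigr) = 0.
\end{equation*}

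The second step is to expand $[e_4,e_a] = \D_4 e_a - \D_a e_4$ in the null frame and read off its $e_4$-coefficient. Using $\g(e_a,e_3)=\g(e_a,e_4)=0$ and differentiating, one extracts $\g(\D_4 e_a, e_3) = -\g(e_a, \D_4 e_3) = -2\etab_a$, which together with the already-established $\xi_a = 0$ gives $\D_4 e_a = \etab_a\,e_4 + (\text{horizontal})$; likewise $\g(\D_a e_4, e_3) = 2\ze_a$ yields $\D_a e_4 = -\ze_a\,e_4 + (\text{horizontal})$. Hence the $e_4$-component of $[e_4,e_a]$ is exactly $\etab_a + \ze_a$, while the $e_3$-component vanishes. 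Applying the commutator to $r$ then annihilates the horizontal part because $e_b(r) = 0$, so $[e_4,e_a](r) = (\etab_a + \ze_a)\cdot 1 = 0$, which gives $\etab + \ze = 0$. There is no serious obstacle in this argument; the only delicate point is keeping consistent sign conventions in the null-frame decomposition of $\D_4 e_a$ and $\D_a e_4$.
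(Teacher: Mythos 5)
Your proof is correct. The paper itself gives no argument here---it simply cites Lemma 2.17 of \cite{KS:main}---so your self-contained derivation is exactly the computation that reference carries out: the first two identities (and you are right that the duplicated ``$\xi=0$'' is a typo for $\om=0$, since both $\xi_a=\frac12\g(\D_4e_4,e_a)$ and $\om=\frac14\g(\D_4e_4,e_3)$ die on $\D_4e_4=0$) are immediate, and the third follows from evaluating $[e_4,e_a](r)=0$ against the frame decomposition $[e_4,e_a]=(\etab_a+\ze_a)e_4+(\text{horizontal})$, where the horizontal part is killed by $e_b(r)=0$ and the $e_3$-component vanishes by the geodesic condition. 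All signs and the identification $\g(\grad r,X)=X(r)$ check out.
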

\begin{proof}
    See Lemma 2.17 in \cite{KS:main}.
\end{proof}
\begin{rk}
    By abuse of language, we call an outgoing PG structure $\{S(u, r),(e_3, e_4, \HH)\}$ simply an outgoing PG $S(u,r)$--foliation.
\end{rk}
\subsection{\texorpdfstring{$\KSAF$}{} spacetime \texorpdfstring{$(\MM,\g)$}{}}\label{defKSAF}
As mentioned in Remark \ref{remark:final state1}, all our results, (except those in Section \ref{seckick}), apply to any $\KSAF$ spacetime $(\MM,\g)$, which verifies the same properties as the final spacetime $\Mext_\infty$ in Theorem \ref{MainThm-firstversion}. We state below the definition of a $\KSAF$ spacetime, which contains some of the main features of $\MM$. More information can be found in Chapter 2 of \cite{KS:main}.
\begin{df}\label{dfKSAF}
A spacetime $(\M,\g)$ is called a \emph{Klainerman-Szeftel Asymptotically Flat Future Complete} spacetime with final parameters $(a,m)$ $(\KSAF(a,m)$ spacetime$)$, if it satisfies the following properties:
\begin{itemize}
\item $\MM$ is equipped with an outgoing PG structure $\{ S(u,r),(e_3,e_4;\HH)\}$ in the sense of Defintion \ref{dfoutgoingPG}.
\item $\MM$ is endowed with a pair of constants $(a, m)$.\footnote{The constants $(a, m)$  are fixed on $S_*$ according to the results in \cite{KS:Kerr2}. On $\MM_\infty$ they are the precise $(a_f, m_f)$ in Theorem \ref{MainThm-firstversion}.}
\item $\MM$ is endowed with a scalar function $q:=r+ia\cos\th$, with $\th$ a scalar function satisfying 
$$e_4(\th)=0.$$
\item $\MM$ is endowed with complex horizontal 1-forms $\Jk$ and $\Jk_\pm$ satisfying
\begin{align}\label{dfJJk}
    \nab_4(q\Jk)=0,\qquad\qquad\nab_4(q\Jk_\pm)=0.
\end{align}
\item $\MM$ is endowed with a \emph{retarded time} function $u$ verifying $e_4(u)=0$.
\item The initial layer region is defined by:
\begin{align*}
    \LL_0:=\left\{p\in\M\big/\, 0\leq u(p)\leq 3\right\}.
\end{align*}
\item Specific decay rates in $u$ and $r$ hold for all linearized Ricci and curvature coefficients\footnote{For any quantity $W$, we define its linearized quantity by $\widecheck{W}:=W-W_{Kerr}$ where $W_{Kerr}$ denotes its corresponding Kerr value. In other word, $\widecheck{W}$ is defined by subtracting from the Ricci and curvature coefficients the corresponding values in Kerr, expressed relative to $(a,m,r,\cos\th,\Jk)$. We refer the reader to Section 4.1.1 of \cite{GKS} for precise definitions of all such quantities. The real Ricci coefficients $\Ga$ and curvature coefficients $R$ relative to a general horizontal structure are defined in Section 2.2 of \cite{GKS}.} Moreover precisely, the following decay estimates hold on $\M$:\footnote{Recall that the weighted differential operator $\dk$ is defined in Definition \ref{def:dk-definition}.}
\begin{align}\label{mainest}
\begin{split}
|\dk^{\leq\ks}(A,B)|&\les\frac{\ep_0}{r^{\frac{7}{2}+\dec}},\\
\left|\dk^{\leq\ks}\left(\trXc,\,\Xh,\,\Zc,\,\trXbc,r\Pc\right)\right|&\les\frac{\ep_0}{r^2u^{\frac{1}{2}+\dec}},\\
\left|\dk^{\leq\ks}(\Hc,\,\Xbh,\,\ombc,\,\Xib,\,r\Bb,\Ab)\right|&\les\frac{\ep_0}{ru^{1+\dec}},
\end{split}
\end{align}
where $\ep_0$ is a constant small enough which describe the size of perturbation while $\ks$ is a sufficient large integer\footnote{The precise decay estimates in \cite{KS:main} hold for $\ks$ derivatives. The top derivatives $\kl$ have weights in $r$ but no $u$--decay, see Section 3.3 in \cite{KS:main}.} and $\dec>0$ is a sufficiently small constant, which are used to describe the decay estimates in \cite{KS:main}.
\item The spacetime $(\MM,\g) $ is covered by three coordinate systems:
\begin{itemize}
\item[-] in the $(u,r,\th,\vphi)$ coordinates system, we have for $\frac{\pi}{4}<\th<\frac{3\pi}{4}$, 
\begin{align*}
\g=\g_{a,m}+\Big(du, dr, rd\th, r\sin\th d\vphi\Big)^2O\left(\frac{\ep_0}{u^{1+\dec}}\right),
\end{align*}
\item[-] in the $(u,r,x^1,x^2)$ coordinates system, with $x^1=J^{(+)}$ and $x^2=J^{(-)}$\footnote{Here, we denote $J^{(+)}:=\sin\th\cos\vphi$ and $J^{(-)}:=\sin\th\sin\vphi$.}, we have, for $0\leq\th<\frac{\pi}{3}$ and for   $\frac{2\pi}{3}<\th\leq\pi$,
\begin{align*}
\g &= \g_{a,m}+\Big(du, dr, rdx^1, rdx^2\Big)^2O\left(\frac{\ep_0}{u^{1+\dec}}\right).
\end{align*}
See Section 3.4.3 in \cite{KS:main} and Proposition \ref{Mext-metric} below for more details.
\end{itemize}
\end{itemize}
\end{df}
\begin{rk}
    In the sequel, a $\KSAF(a,m)$ spacetime $(\M,\g)$ is simply called a $\KSAF$ spacetime if we do not need to describe its parameters $(a,m)$.
\end{rk}
\subsection{Schematic notation \texorpdfstring{$\Gag$}{} and \texorpdfstring{$\Gab$}{}}\label{subsection:Ga_gGa_b}
\begin{df}\label{dfGagGab}
The Ricci coefficients and curvature components on $\MM$ can be divided into the following groups:
\begin{enumerate}
\item The set $\Gag$ with
 \begin{align*}
 \Gag&:=\Big\{rB, \quad rA,\quad r\Xi,\quad \om,\quad \Hbc,\quad\trXc,\quad \Xh,\quad \Zc,\quad r\Pc,\quad\trXbc\Big\}.
 \end{align*}
 \item The set $\Gab=\Ga_{b,1}\cup\Ga_{b,2}\cup\Ga_{b,3}\cup\Ga_{b,4}$ with
 \begin{align*}
 \Ga_{b,1}&:=\Big\{\Hc,\quad \Xbh,\quad \ombc,\quad \Xib,\quad r\Bb,\quad \Ab\Big\},\\
  \Ga_{b, 2}&:= \Big\{r^{-1}\widecheck{e_3(r)}, \quad \widecheck{\DD q},\quad \widecheck{\DD \ov{q}}, \quad \widecheck{\DD u},  \quad r^{-1}\widecheck{e_3(u)}\Big\},\\
  \Ga_{b,3}&:=\Big\{\widecheck{\DD(J^{(0)})}, \quad \widecheck{\DD(J^{(\pm)})}, \quad e_3(J^{(0)}),\quad\widecheck{e_3(J^{(\pm)})}\Big\},\\
\Ga_{b,4}&:=\left\{r\widecheck{\ov{\DD}\c\Jk}, \quad r\DD\hot\Jk, \quad r\widecheck{\nab_3\Jk},\quad r\widecheck{\ov{\DD}\c\Jk_\pm},\quad r\DD\hot\Jk_\pm, \quad r\widecheck{\nab_3\Jk_\pm} \right\}. 
 \end{align*}
\end{enumerate}
\end{df}
Using the schematic notation $\Gag$ and $\Gab$, the decay estimates in a $\KSAF$ spacetime can be summarized in the following proposition.
\begin{prop}\label{GagGabdecay}
Let $(\M,\g)$ be a $\KSAF$ spacetime endowed with an outgoing PG structure. Then, we have for $k\leq\ks$,
\begin{align*}
\sup_{\M} \Big(r^{\frac{7}{2}+\dec}|\dk^k(A,B)|+r^4u^{1+\dec}|\dk^{k-1}\nab_3(A,B)|\Big)&\les\ep_0,\\
\sup_{\M} \Big(ru^{1+\dec}|\dk^k\Ga_b|+r^2u^{\frac{1}{2}+\dec}|\dk^k\Ga_g|+r^2u^{1+\dec}|\dk^{k-1}\nab_3\Ga_g|\Big)&\les\ep_0.
\end{align*}
\end{prop}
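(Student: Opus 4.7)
The plan is to verify, term by term, that the decay estimates \eqref{mainest} together with the PG conditions of Lemma \ref{PGconditionslm} imply the schematic bounds claimed for $\Gag$ and $\Gab$. For the $\Gag$ estimates, Lemma \ref{PGconditionslm} combined with $\D_4 e_4 = 0$ forces $\xi = \om = 0$ and $\etab = -\ze$, so the entries $r\Xi$ and $\om$ vanish identically and $\Hbc = -\Zc$. The bounds on $\trXc, \Xh, \Zc, \trXbc, r\Pc$ are then exactly the second line of \eqref{mainest}; for $rA$ and $rB$, the first line of \eqref{mainest} together with the relation $u \les r$ valid on $\M$ (which follows from the coordinate structure in Definition \ref{dfKSAF}) converts $\ep_0 r^{-5/2-\dec}$ into $\ep_0 r^{-2}u^{-1/2-\dec}$.

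For the $\Gab$ estimates, the bound on $\Ga_{b,1}$ is exactly the third line of \eqref{mainest}. For $\Ga_{b,2}, \Ga_{b,3}, \Ga_{b,4}$, which involve linearized derivatives of the scalars $u, r, q, J^{(p)}$ and of the horizontal $1$-forms $\Jk, \Jk_\pm$, I would integrate the transport equations in the $e_4$ direction using $e_4(u) = 0$, $e_4(r) = 1$, $\nab_4(q\Jk) = 0$, and $\nab_4(q\Jk_\pm) = 0$ from Definition \ref{dfKSAF}. The source terms in these transport equations are controlled by the already-established $\Gag$ and $\Ga_{b,1}$ bounds, and integration from finite-$r$ data preserves the rate $\ep_0 r^{-1}u^{-1-\dec}$.

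The two improved estimates $|\dk^{k-1}\nab_3(A, B)| \les \ep_0 r^{-4}u^{-1-\dec}$ and $|\dk^{k-1}\nab_3 \Gag| \les \ep_0 r^{-2}u^{-1-\dec}$ constitute the main technical point, since both require strictly more $u$-decay than \eqref{mainest} provides directly. I would obtain them from the null Bianchi and structure equations in the $e_3$ direction, which schematically express $\nab_3 A$ in the form $\tr\Xb \c A + \DD \hot B + \Ga_b \c A$, and $\nab_3 \Gag$ in the form $\tr\Xb \c \Gag + \nab \Gag + \Ga_b \c \Gag + \Ga_b$. In each right-hand side, the $\Ga_b$ factor contributes the needed $u^{-1-\dec}$ decay via the third line of \eqref{mainest}, while the remaining terms are handled by combining the decay of $\Gag$ and the Hodge derivative of $B$ with the Kerr value $\tr\Xb \sim -2/r$. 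Commuting with $\dk^{k-1}$ preserves this structure via the commutator identities recalled in Appendix \ref{section:appendix}, and this is the step that requires more than the bare estimates of \eqref{mainest}.
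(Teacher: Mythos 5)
The paper does not actually prove this proposition: it is imported verbatim as Proposition 6.49 of \cite{KS:main}, so a self-contained derivation has to reproduce part of that work. Your attempt rests on the premise that $u\les r$ holds on all of $\M$, and this is false: $\M=\Mext_\infty$ extends down to the timelike boundary $\TT$, where $r$ remains bounded while $u\to\infty$ (this is precisely why the paper must single out the far region $\far=\M\cap\{r\geq\frac{\de_*}{\ep_0}u^{1+\dec}\}$ in Section \ref{secLG} as a proper subregion), and the proposition takes $\sup_\M$, not $\sup_{\far}$. Consequently your conversion of $|r(A,B)|\les\ep_0 r^{-5/2-\dec}$ into $\ep_0 r^{-2}u^{-1/2-\dec}$ fails wherever $u\gg r$; the $u$-decay of $A$ and $B$ asserted through the inclusion $r(A,B)\in\Gag$ is genuinely additional information from \cite{KS:main}, whose estimates carry a $\min(r,u)$-type structure that is not recorded in the summary \eqref{mainest}.

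The same issue defeats the step you correctly identify as the main technical point. Feeding \eqref{mainest} into the $e_3$-Bianchi equation for $A$ produces, among others, $\frac{1}{2}\DD\hot B=O(\ep_0 r^{-9/2-\dec})$ and $3\ov{P}\Xh=O(\ep_0 r^{-5}u^{-1/2-\dec})$, and neither is bounded by $\ep_0 r^{-4}u^{-1-\dec}$ unless $u$ is controlled by a power of $r$, which fails near $\TT$. Likewise, in the equation for $\nab_3\trXc$ the source $2\Pc$ is only known from \eqref{mainest} to be $O(\ep_0 r^{-3}u^{-1/2-\dec})$, which does not imply the claimed $O(\ep_0 r^{-2}u^{-1-\dec})$ on all of $\M$. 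So the improved $\nab_3$ bounds cannot be closed from \eqref{mainest} plus the structure equations alone; they require the sharper interpolated decay estimates of \cite{KS:main}. Your remaining observations ($\xi=\om=0$, $\Hbc=-\Zc$, the identification of $\Ga_{b,1}$ with the third line of \eqref{mainest}, and the $e_4$-transport of $\Ga_{b,2},\Ga_{b,3},\Ga_{b,4}$) are sound in outline, though the transport argument additionally needs initialization with the stated $u$-decay on $\Si_*$, which again comes from the GCM construction in \cite{KS:main} rather than from Definition \ref{dfKSAF}.
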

\begin{proof}
    See Proposition 6.49 in \cite{KS:main}.
\end{proof}
\begin{prop}\label{Mext-metric}
    The spacetime $\KSAF$ spacetime $(\M,\g)$ is covered by three coordinate charts $\{(u,r,\th,\vphi):\frac{\pi}{4}<\th<\frac{3\pi}{4}\}$, $\{(u,r,x^1,x^2): 0\leq\th<\frac{\pi}{3}\}$ and $\{(u,r,x^1,x^2):\frac{2\pi}{3}<\th\leq\pi\}$ where $x^1=\sin\th\cos\vphi$ and $x^2=\sin\th\sin\vphi$. More precisely, we have:
    \begin{itemize}
    \item in the $(u,r,\th,\vphi)$ coordinates system for $\frac{\pi}{4}<\th<\frac{3\pi}{4}$, $\g$ is given by:
    \begin{align*}
    \g=&-2(du+dr)du+\left(1+\frac{2mr}{|q|^2}\right)\Big(du-a(\sin\th)^2d\vphi\Big)^2+|q|^2\Big[ (d\th)^2+(\sin\th)^2 (d\vphi)^2\Big]\\
    &+\Big(du,dr,rd\th,r\sin\th d\vphi\Big)^2(\sin\th)^{-2}(r\Ga_{b,2}\cup r\Ga_{b,3}\cup r\Ga_{b,4});
    \end{align*}
    \item in the $(u,r,x^1,x^2)$ coordinates system for $0\leq\th<\frac{\pi}{3}$ or $\frac{2\pi}{3}<\th\leq\pi$, $\g$ is given by
    \begin{align*}
    \g&=-2(du+dr)du+\left(1+\frac{2mr}{|q|^2}\right)\Big(du-a(-x^2dx^1+x^1 dx^2)\Big)^2 \\
    &+\frac{|q|^2}{1-(x^1)^2-(x^2)^2}\Bigg[\Big(1-(x^2)^2\Big)(dx^1)^2+2x^1x^2dx^1dx^2+ \Big(1-(x^1)^2\Big)(dx^2)^2\Bigg]\\
    &+\Big(du,dr,rdx^1,rdx^2\Big)^2(\cos\th)^{-2}(r\Ga_{b,2}\cup r\Ga_{b,3}\cup r\Ga_{b,4});
    \end{align*}
    \item in the $(u,r,\th,\vphi)$ coordinates system for $\frac{\pi}{4}<\th<\frac{3\pi}{4}$, $\g^{-1}$ is given by:\footnote{Here and the line below, $\g_{a,m}$ denotes the Kerr metric with parameter $(a,m)$.}
\begin{align*}
    \g^{-1}=\g^{-1}_{a,m}+\begin{pmatrix}
    r^{-2} & 1 & r^{-2}& r^{-2} \\
    1 & 1 & r^{-1} & r^{-1} \\
    r^{-2} & r^{-1} & r^{-2} & r^{-2}\\
    r^{-2}& r^{-1} & r^{-2} & r^{-2}
\end{pmatrix}(\sin\th)^{-4}(r\Ga_{b,2}\cup r\Ga_{b,3}\cup r\Ga_{b,4});
\end{align*}
\item in the $(u,r,x^1,x^2)$ coordinates system for $0\leq\th<\frac{\pi}{3}$ or $\frac{2\pi}{3}<\th\leq\pi$, $\g^{-1}$ is given by:
    \begin{align*}
    \g^{-1}=\g^{-1}_{a,m}+\begin{pmatrix}
    r^{-2} & 1 & r^{-2}& r^{-2} \\
    1 & 1 & r^{-1} & r^{-1} \\
    r^{-2} & r^{-1} & r^{-2} & r^{-2}\\
    r^{-2}& r^{-1} & r^{-2} & r^{-2}
\end{pmatrix}(r\Ga_{b,2}\cup r\Ga_{b,3}\cup r\Ga_{b,4}).
    \end{align*}
\end{itemize}
\end{prop}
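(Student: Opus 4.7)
The plan is to convert the null-frame representation of $\g$ into coordinate form using the outgoing PG structure, identify the leading piece as the Kerr metric $\g_{a,m}$, and then read off the perturbation directly from the decay estimates controlling the quantities in $\Gab$.

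For the setup, the outgoing PG structure (Definition \ref{dfoutgoingPG}) forces $e_4(u) = e_4(\th) = 0$ and $e_4(r) = 1$, so in either chart $e_4 = \pr_r$. The remaining frame vectors $e_3, e_1, e_2$ are then determined by their action on the coordinate functions: their coefficients in the coordinate basis $(\pr_u, \pr_r, \pr_{x^1}, \pr_{x^2})$ are precisely $\DD u, \DD r, \DD \th, \DD \vphi$ (or $\DD x^i$) together with $e_3(u), e_3(r), e_3(\th)$. By Definition \ref{dfGagGab}, the linearized versions $\widecheck{\DD u}, \widecheck{\DD q}, \widecheck{\DD(J^{(\pm)})}, \widecheck{e_3(r)}, \widecheck{e_3(J^{(\pm)})}$ all belong to $\Ga_{b,2}\cup\Ga_{b,3}$, while the deviations of the induced horizontal metric from its Kerr value involve $\Ga_{b,4}$. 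Substituting these expansions into
\begin{equation*}
\g^{-1} \;=\; -\tfrac{1}{2}\bigl(e_3 \otimes e_4 + e_4 \otimes e_3\bigr) + \de^{ab}\, e_a \otimes e_b,
\end{equation*}
and dualizing produces the coordinate form of $\g$ directly.

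Evaluating this same procedure in the reference Kerr spacetime, where the outgoing PG frame is explicit, recovers precisely the stated form of $\g_{a,m}$ in either chart; the transition from $(\th,\vphi)$ to $(x^1,x^2)$ near the poles simply unfolds the standard stereographic regularization of the sphere. The error $\g - \g_{a,m}$ is then linear in the $\widecheck{}$ quantities from $\Ga_{b,2}\cup\Ga_{b,3}\cup\Ga_{b,4}$, contracted with pairs of coordinate one-forms. The weight factor $(\sin\th)^{-2}$ in the $(u,r,\th,\vphi)$ chart, and $(\cos\th)^{-2}$ in the stereographic charts, arises when one rewrites a horizontal tensor, naturally paired against an orthonormal angular frame, in terms of the degenerate coordinate basis $d\vphi$ near the axis, or $d\th$ near the equator. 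For the inverse metric I would expand $\g^{-1} = \g_{a,m}^{-1} - \g_{a,m}^{-1}(\g - \g_{a,m})\g_{a,m}^{-1} + \cdots$; the matrix of $r$-weights in the statement is read off from the corresponding weights of the entries of $\g_{a,m}^{-1}$, combined with the smallness of the perturbation supplied by Proposition \ref{GagGabdecay}.

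The main technical obstacle is the bookkeeping of the angular factors $(\sin\th)^{-2}$ and $(\cos\th)^{-2}$, together with the differing $r$-weights of the individual components of $\g^{-1}$. These factors encode how horizontal tensors paired against an orthonormal angular frame translate into coefficients in the coordinate bases $d\th, d\vphi, dx^i$, and they differ between the two charts precisely because the stereographic chart is engineered to remove the polar degeneracy of $d\vphi$. Once this translation is performed consistently, the rest of the argument is a routine application of the PG identities in Lemma \ref{PGconditionslm} and the schematic decay estimates of Proposition \ref{GagGabdecay}, with matrix inversion controlled by the smallness of $\g - \g_{a,m}$.
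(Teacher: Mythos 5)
The paper does not actually prove this proposition: its ``proof'' is a one-line citation to Proposition 4.1 and Corollary 4.5 of \cite{KS:main}, so there is no internal argument to compare against. Your reconstruction is essentially the argument carried out there: write $\g^{-1}=-\frac{1}{2}(e_3\otimes e_4+e_4\otimes e_3)+\de^{ab}e_a\otimes e_b$, evaluate the frame vectors on the coordinate functions so that the components of $\g^{-1}$ are expressed through $e_3(r)$, $e_3(u)$, $\DD u$, $\DD q$, $\DD J^{(\pm)}$, etc., linearize around the Kerr values to identify the error with $r\Ga_{b,2}\cup r\Ga_{b,3}\cup r\Ga_{b,4}$, and invert using the smallness from Proposition \ref{GagGabdecay}. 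Two small points worth tightening: the identification $e_4=\pr_r$ uses not only $e_4(u)=e_4(\th)=0$ and $e_4(r)=1$ but also that $\vphi$ (respectively $x^1,x^2$) is constructed to be transported along $e_4$, which is part of the coordinate construction in \cite{KS:main} rather than of Definition \ref{dfoutgoingPG}; and the singular weights $(\sin\th)^{-2}$, $(\sin\th)^{-4}$, $(\cos\th)^{-2}$ are exactly why each chart is restricted away from the poles or the equator, so that these factors stay bounded and the Neumann-series inversion you invoke remains legitimate.
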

\begin{proof}
See Proposition 4.1 and Corollary 4.5 in \cite{KS:main}.
\end{proof}
\subsection{Commutation formulae}
We have the following commutation formulae in the outgoing PG structure.
\begin{lemma}\label{lemma:comm-gen}
Let $U_{A}= U_{a_1\ldots a_k} $ be a general $k$-horizontal tensorfield. Then, the following commutation identities hold:
\begin{align*}
[\nab_3,\nab_b] U_A&=-\chib_{bc}\nab_c U_{A}+(\eta_b-\ze_b)\nab_3 U_A+\sum_{i=1}^k \big( \chib_{a_ib}\eta_c-\chib_{bc} \eta_{a_i} \big) U_{a_1\ldots }\,^ c\,_{\ldots a_k} \\
&+\sum_{i=1}^k\Big(\chi_{a_ic}\xib_c-\chi_{bc}\,\xib_{a_i}-\in_{a_ic}\dual\bb_b  \Big) U_{a_1\ldots}\,^ c\,_{\ldots a_k}+\xib_b\nab_4 U_{A},\\
[\nab_4,\nab_b]U_A&=-\chi_{bc}\nab_c U_{A}+\sum_{i=1}^k\big(\chi_{a_ib}\etab_c-\chi_{bc} \etab_{a_i} \big) U_{a_1\ldots c\ldots a_k}+\in_{a_ic}\dual\b_b\, U_{a_1\ldots}\,^ c\,_{\ldots a_k},\\
[\nab_4,\nab_3]U_A&=-2(\ze_b+\eta_b)\nab_b U_A-2\omb\nab_4U_A-2\sum_{i=1}^k\big(\eta_{a_i}\ze_b-\ze_{a_i}\eta_b+\in_{a_i b}\dual\rho)U_{a_1\ldots}\,^b\,_{\ldots a_k}.
\end{align*}
\end{lemma}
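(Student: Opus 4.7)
The plan is to derive all three identities from the ambient Leibniz/commutation relation
$$\D_X \D_Y - \D_Y \D_X = \D_{[X,Y]} + \R(X,Y)$$
applied to the horizontal tensor $U_A$, and then project onto the horizontal distribution $\HH$. The key ingredients are the null structure equations expressing $\D_\mu e_\nu$ in terms of the Ricci coefficients of Definition \ref{df:allquantities}, together with the definition $\nab_X U_{a_1\ldots a_k} = (\D_X U)(e_{a_1},\ldots, e_{a_k})$ valid because $U$ is horizontal.

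First I would tabulate the connection of the null frame in terms of Ricci coefficients, namely
\begin{align*}
\D_a e_4 &= -\ze_a e_4 + \chi_a{}^b e_b, \qquad \D_a e_3 = \ze_a e_3 + \chib_a{}^b e_b,\\
\D_3 e_4 &= 2\omb e_4 + 2\eta^a e_a, \qquad \D_4 e_3 = 2\om e_3 + 2\etab^a e_a,\\
\D_3 e_3 &= -2\omb e_3 + 2\xib^a e_a, \qquad \D_4 e_4 = -2\om e_4 + 2\xi^a e_a,\\
\D_a e_b &= \nab_a e_b + \tfrac12 \chi_{ab} e_3 + \tfrac12 \chib_{ab} e_4.
\end{align*}
Antisymmetrizing in the frame indices yields the brackets $[e_3,e_b]$, $[e_4,e_b]$ and $[e_4,e_3]$; it is at this step that the non-horizontal components of the bracket enter, producing the $\xib_b\nab_4 U_A$ term in the first identity and the $2\omb\nab_4 U_A$ and $2(\ze_b+\eta_b)\nab_b U_A$ terms in the third.

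Next I would expand $(\nab_X\nab_Y - \nab_Y\nab_X)U_A$ by two applications of the Leibniz rule, using that for $X\in\{\nab_3,\nab_4,\nab_c\}$,
$$(\nab_X U)_{a_1\ldots a_k} = X(U_{a_1\ldots a_k}) - \sum_{i=1}^k U_{a_1\ldots c \ldots a_k}\, \g\bigl(\nab_X e_{a_i}, e^c\bigr).$$
The quadratic Leibniz terms reassemble, modulo the ambient commutator, into the index-acting Ricci corrections of the form $(\chib_{a_i b}\eta_c - \chib_{bc}\eta_{a_i})$, $(\chi_{a_i c}\xib_c - \chi_{bc}\xib_{a_i})$ and $2(\eta_{a_i}\ze_b - \ze_{a_i}\eta_b)$ on each free index. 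The remaining ambient commutator $[\D_X,\D_Y]U - \D_{[X,Y]} U = \R(X,Y)U$ contributes the curvature traces $\in_{a_i c}\dual\bb_b$, $\in_{a_i c}\dual\b_b$ and $2\in_{a_i b}\dual\rho$ through the identifications $\R_{a4 b3} = \rho\de_{ab} + \dual\rho \in_{ab} + \ldots$, $\R_{a3b4}$, $\R_{a4bc}$, $\R_{a3bc}$ after expressing $R$ in terms of $\a,\b,\bb,\aa,\rho,\dual\rho$.

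The main obstacle is careful bookkeeping in the non-integrable setting: since $\chi,\chib$ fail to be symmetric (their antisymmetric parts are $\atrch, \atrchb$), the horizontal frame does not close under the bracket, and the transverse contributions of $[e_a,e_b]$, $[e_3,e_a]$, $[e_4,e_a]$, $[e_4,e_3]$ must be consistently propagated through every Leibniz application. In particular, one must verify that only the $\xib_b\nab_4 U_A$ transverse term survives in the $[\nab_3,\nab_b]$ identity (the $e_3$-transverse piece being absorbed into $\nab_3$ itself), and that no $\nab_4 U_A$ or $\nab_3 U_A$ residue remains in $[\nab_4,\nab_b]$ beyond what is shown. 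Once the calculation is organized by separating pure curvature contributions from index-acting Ricci contributions and from transverse-frame corrections, the three formulae follow by direct comparison. Because the derivation is purely local and frame-based, no use is made of the outgoing PG conditions of Lemma \ref{PGconditionslm}, so the identities hold for an arbitrary null horizontal structure $(e_3,e_4,\HH)$.
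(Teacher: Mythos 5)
The paper does not actually prove this lemma; it simply cites Lemma 2.2 of \cite{KS:main}, whose derivation is exactly the frame computation you outline: expand the ambient commutator through the connection table of the null frame, track the transverse components of the brackets $[e_3,e_b]$, $[e_4,e_b]$, $[e_4,e_3]$, and decompose the curvature into its null components. Your connection table is consistent with the paper's conventions ($\g(e_3,e_4)=-2$), and your identification of the origin of the $\xib_b\nab_4 U_A$, $2\omb\nab_4 U_A$, $2(\ze_b+\eta_b)\nab_b U_A$ and curvature-trace terms is correct, so the outline matches the standard argument behind the cited result.
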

\begin{proof}
See Lemma 2.2 in \cite{KS:main}.
\end{proof}
\begin{lemma}\label{lemma:commutation-complexM6}
The following commutation formulae are valid.
\begin{enumerate}
\item For a scalar complex function $F$, we have
\begin{equation}\label{eq:commutation-complexM6-scalar}
[\nab_4,\DD]F=-\frac 1 2 \tr X \DD F +r^{-1}\Gag\c\dkb F.
\end{equation}
\item For an anti-self dual horizontal 1-form $U$, we have 
\begin{align}
    [\nab_4,\DD\hot]U=-\frac 1 2\tr X(\DD\hot U-Z\hot U)+r^{-1}\Gag\c\dkb^{\leq 1}U,\label{eq:commutation-complexM6-1form} \\
    [\nab_4,\DDov\c]U=-\frac 1 2\ov{\tr X}\big(\DDov\c U+\ov{Z}\c U)+r^{-1}\Ga_g\c \dkb^{\leq 1}U.\label{eq:commutation-complexM6-1form:div}
\end{align}
\item For an anti-self dual symmetric traceless horizontal 2-form $U$,  we have 
\begin{equation}\label{eq:commutation-complexM6-2form:div}
[\nab_4,\DDov\c]U=-\frac 1 2\ov{\tr X}\big(\DDov\c U+2\ov{Z}\c U)+r^{-1}\Ga_g\c\dkb^{\leq 1}U.
\end{equation}
\end{enumerate}
Moreover, the $\Gag$ appearing in \eqref{eq:commutation-complexM6-scalar}--\eqref{eq:commutation-complexM6-2form:div} contain only $\Xh$, $\Zc$ and $B$.
\end{lemma}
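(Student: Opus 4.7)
The proof proceeds by specializing the real commutation formula of Lemma \ref{lemma:comm-gen} to each tensor class in the statement and then reassembling the real and imaginary parts into the complex operators $\DD$, $\DD\hot$, and $\DDov\c$. The organizing principle is that $\DD=\nab+i\dual\nab$ is, up to index contractions, just $\nab$ restricted to anti-self-dual horizontal tensors; so $[\nab_4,\DD]$ is determined by $[\nab_4,\nab_b]$ together with $[\nab_4,\dual\nab_b]=\in_{ba}[\nab_4,\nab_a]$, using $\nab_4\in_{ab}=0$ (which is built into Lemma \ref{lemma:comm-gen}).

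For the scalar identity \eqref{eq:commutation-complexM6-scalar}, I apply Lemma \ref{lemma:comm-gen} with $k=0$ to get $[\nab_4,\nab_b]F=-\chi_{bc}\nab_c F$, dualize in the free index, and take the complex combination. Splitting $\chi_{bc}=\hch_{bc}+\tfrac 12\de_{bc}\trch+\tfrac 12\in_{bc}\atrch$ and using $\tr X=\trch-i\atrch$, the trace contribution reassembles into the principal term $-\tfrac 12\tr X\,\DD F$, while the $\hch$-piece is absorbed into $r^{-1}\Gag\c\dkb F$ with $\Gag$ containing only $\Xh$.

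For the $1$-form identities \eqref{eq:commutation-complexM6-1form}--\eqref{eq:commutation-complexM6-1form:div}, Lemma \ref{lemma:comm-gen} with $k=1$ produces the extra algebraic terms $\chi_{ab}\etab_c-\chi_{bc}\etab_a$ together with a curvature contribution $\in_{ac}\dual\b_b\,U^c$. The outgoing PG identity $\etab=-\ze$ from Lemma \ref{PGconditionslm} rewrites the $\etab$-piece in terms of $\ze$, which in complex form produces exactly the $Z\hot U$ (respectively $\ov Z\c U$) correction to the principal symbol, assembling into the bracketed expressions $\DD\hot U-Z\hot U$ and $\DDov\c U+\ov Z\c U$. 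The curvature contribution $\dual\b$ combines with $\b$ to form $B$. For the $2$-tensor identity \eqref{eq:commutation-complexM6-2form:div}, the same procedure applied with $k=2$ gives two $\ze$-contributions, one from each tensor slot summed over, accounting for the factor of $2$ in $\DDov\c U+2\ov Z\c U$.

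The main obstacle is the sign and coefficient bookkeeping required to verify that the anti-self-duality of $U$, together with the complex traces $\tr X$ and the decomposition $X=\Xh+\tfrac 12(\de+i\in)\tr X$, produces the stated bracketed structures cleanly; one must check that all antisymmetric residues match and that no $\ov{\tr X}$ or $\tr X$ cross-terms remain outside the advertised principal symbol. Once this algebra is carried out, the final assertion that $\Gag$ contains only $\{\Xh,\Zc,B\}$ is immediate: the $\trch$ and $\atrch$ parts of $\chi$ have been absorbed into the $\tr X$ principal term leaving only $\Xh$; $\etab$ has been eliminated by $\etab=-\ze$, producing $\Zc$ after subtracting Kerr values; and the sole curvature contribution is $\dual\b$, which enters through $B$.
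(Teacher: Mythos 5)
Your derivation is correct and is essentially the intended argument: the paper itself offers no proof beyond citing Section 4.2 of \cite{GKS}, and the computation there is exactly what you describe --- specialize the real commutation identities of Lemma \ref{lemma:comm-gen} to each tensor rank, split $\chi$ into its $\tr X$ and $\Xh$ parts, use $\etab=-\ze$ from Lemma \ref{PGconditionslm} together with the anti-self-duality of $U$ to reassemble the complex principal symbol (with the factor $2$ for two-tensors coming from the two slots), and absorb the residual $\Xh$-, $\ze$- and $\dual\b$-contributions into $r^{-1}\Gag\c\dkb^{\leq 1}U$. The only step you leave implicit, but correctly identify, is that dualization commutes with $[\nab_4,\cdot]$, which holds since $\nab_4$ preserves the horizontal metric and hence $\in_{ab}$.
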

\begin{proof}
See Section 4.2 in \cite{GKS}.
\end{proof}
\begin{cor}\label{commsch}
We have the following schematic commutation formulae:
\begin{align*}
[\nab_4,q\DD]U&=Z\c U+\Gag\c\dko U,\\
[\nab_4,\qb\DDov]U&=\ov{Z}\c U+\Gag\c\dko U,\\
[\nab_4,\nab_3]U&=\eta\c\nab U-2\omb\nab_4U+O(r^{-3})\dko U+O(r^{-1})\Gag\c\dko U,
\end{align*}
where $\Gag$ appeared here contains only $\hch$, $\zec$, $\dual\rho$ and $\b$.
\end{cor}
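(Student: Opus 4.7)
The plan is to derive the three commutators from Lemma \ref{lemma:comm-gen} and Lemma \ref{lemma:commutation-complexM6}, exploiting two facts specific to the outgoing PG frame: the derivative identities $\nab_4 q = \nab_4 \qb = 1$ (since $e_4(r) = 1$ and $e_4(\th) = 0$ in a $\KSAF$ spacetime) and the exact Kerr values $(\tr X)_{Kerr} = 2/q$, $(\ov{\tr X})_{Kerr} = 2/\qb$ of the complex null expansion in the principal PG frame, which translate to the splitting $\tr X = 2/q + \trXc$ with $\trXc \in \Gag$.

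For the first two identities, apply the Leibniz rule
\[
[\nab_4, q\DD]U = (\nab_4 q)\DD U + q[\nab_4, \DD]U,
\]
interpreting $\DD$ schematically as one of $\DD$, $\DD\hot$, $\DDov\c$ according to the valence of $U$, and insert the corresponding case of \eqref{eq:commutation-complexM6-scalar}--\eqref{eq:commutation-complexM6-2form:div}. The result has the form $\DD U - \tfrac{1}{2}q\tr X\,(\DD U + c\,Z\cdot U) + \Gag \cdot \dko U$ with a constant $c \in \{0,\pm 1, 2\}$ depending on the case; using $q\tr X = 2 + q\trXc$ the $\DD U$ prefactor collapses to $-\tfrac{1}{2}q\trXc \in \Gag$, while the $Z\cdot U$ term survives with its schematic coefficient. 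The factor $qr^{-1}\sim 1$ absorbs the tail error into $\Gag\cdot\dko U$, yielding the claimed structure. The second identity is obtained verbatim with $q \to \qb$, $\DD \to \DDov$ and the complex-conjugate entries of Lemma \ref{lemma:commutation-complexM6}.

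For the third identity, specialize Lemma \ref{lemma:comm-gen} to $X = \nab_4$, $Y = \nab_3$, yielding
\[
[\nab_4, \nab_3]U_A = -2(\ze+\eta)\cdot\nab U_A - 2\omb\nab_4 U_A - 2\sum_i \big(\eta_{a_i}\ze_b - \ze_{a_i}\eta_b + \in_{a_i b}\dual\rho\big) U_{\cdots}{}^b{}_{\cdots}.
\]
The $-2\omb\nab_4 U$ and $-2\eta\cdot\nab U$ pieces directly match the first two terms of the target (constants absorbed by the schematic notation). Splitting $\ze = \zec + \ze_{Kerr}$ places the $\zec$ part in $r^{-1}\Gag\cdot\dko U$ (since $\nab U \sim r^{-1}\dkb U$) and the $\ze_{Kerr} = O(r^{-2})$ part in $O(r^{-3})\dko U$. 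The quadratic $\eta\otimes\ze$ zero-order contributions are likewise $O(r^{-3})\dko U$, and the $\in_{ab}\dual\rho$ term accounts for the inclusion of $\dual\rho$ in the stated $\Gag$-list. The main obstacle throughout is the exact algebraic cancellation in the first two formulas, where $\nab_4 q = 1$ and $(\tr X)_{Kerr} = 2/q$ must match precisely so that the stray $\DD U$ piece vanishes and only $Z\cdot U$ survives; the sign and coefficient $c$ of $Z\cdot U$ differ between the four cases of Lemma \ref{lemma:commutation-complexM6} and, although invisible in the schematic statement, must be verified case by case.
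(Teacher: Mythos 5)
Your proposal is correct and follows essentially the same route as the paper, whose proof consists of exactly the observation that the corollary ``follows directly from Lemmas \ref{lemma:comm-gen} and \ref{lemma:commutation-complexM6}''; you have simply supplied the details (the Leibniz rule with $e_4(q)=e_4(\qb)=1$, the splitting $q\tr X=2+q\trXc$, and the case-by-case tracking of the $Z\c U$ coefficient). The only cosmetic remark is that the leftover term $-\tfrac12\trXc\,(q\DD U)$ contributes $\trXc$ to the schematic $\Gag\c\dko U$, which is not on the paper's restricted list $\{\hch,\zec,\dual\rho,\b\}$ — a bookkeeping imprecision already present in the paper's own statement rather than a gap in your argument.
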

\begin{proof}
It follows directly from Lemmas \ref{lemma:comm-gen} and \ref{lemma:commutation-complexM6}.
\end{proof}
\subsection{Main equations for an outgoing PG structure}
We recall the null structure equations and the Bianchi equations for an outgoing PG structure, see Appendix \ref{appmain} for these equations in the general setting.
\subsubsection{Main equations in complex notations}
\begin{prop}\label{prop-nullstrandBianchi:complex:outgoing}
In an outgoing PG structure, the null structure equations are given by:
\begin{align*}
\nab_4\tr X +\frac{1}{2}(\tr X)^2&=-\frac{1}{2}\Xh\c\ov{\Xh},\\
\nab_4\Xh+\Re(\tr X)\Xh&=-A,\\
\nab_4\tr\Xb +\frac{1}{2}\tr X\tr\Xb&= -\DD\c\ov{Z}+Z\c\ov{Z}+2\ov{P}-\frac{1}{2}\Xh\c\ov{\Xbh},\\
\nab_4\widehat{\Xb}+\frac{1}{2}\tr X\,\widehat{\Xb}&=-\frac{1}{2}\DD\hot Z+\frac{1}{2}Z\hot Z-\frac{1}{2}\ov{\tr\Xb}\widehat{X},\\
\nab_3\tr\Xb+\frac{1}{2}(\tr\Xb)^2+2\omb\,\tr\Xb&=\DD\c\ov{\Xib}-\Xib\c\ov{Z}+\ov{\Xib}\c(H-2Z)-\frac{1}{2}\Xbh\c\ov{\Xbh},\\
\nab_3\Xbh+\Re(\tr\Xb)\Xbh+2\omb\,\Xbh&=\frac{1}{2}\DD\hot\Xib+\frac{1}{2}\Xib\hot(H-3Z)-\Ab,\\
\nab_3\tr X +\frac{1}{2}\tr\Xb\tr X-2\omb\tr X&=\DD\c\ov{H}+H\c\ov{H}+2P-\frac{1}{2}\Xbh\c\ov{\Xh},\\
\nab_3\Xh+\frac{1}{2}\tr\Xb\,\Xh-2\omb\widehat{X}&=\frac{1}{2}\DD\hot H+\frac{1}{2}H\hot H -\frac{1}{2}\ov{\tr X} \widehat{\Xb},\\
\nab_4Z +\tr X Z&=-\widehat{X}\c\ov{Z} -B,\\
\nab_4H +\frac{1}{2}\ov{\tr X}(H+Z) &=-\frac{1}{2}\Xh\c(\ov{H}+\ov{Z}) -B,\\
\nab_3Z +\frac{1}{2}\tr\Xb(Z+H)-2\omb(Z-H) &= -2\DD\omb -\frac{1}{2}\widehat{\Xb}\c(\ov{Z}+\ov{H})+\frac{1}{2}\tr X\Xib -\Bb+\frac{1}{2}\ov{\Xib}\c\Xh,\\
\nab_3Z +\nab_4\Xib &=-\frac{1}{2}\ov{\tr\Xb}(Z+H) -\frac{1}{2}\Xbh\c(\ov{Z}+\ov{H}) -\Bb,\\
\frac{1}{2}\ov{\DD}\c\Xh +\frac{1}{2}\Xh\c\ov{Z}&=\frac{1}{2}\DD\,\ov{\tr X}+\frac{1}{2}\ov{\tr X}Z-i\Im(\tr X)H -B,\\
\frac{1}{2}\ov{\DD}\c\Xbh -\frac{1}{2}\Xbh\c\ov{Z}&=\frac{1}{2}\DD\,\ov{\tr\Xb}-\frac{1}{2}\ov{\tr\Xb}Z-i\Im(\tr\Xb)(-Z+\Xib)+\Bb,\\
\nab_4\omb-(2\eta+\ze)\c\ze&=\rho.
\end{align*}
The Bianchi equations are given by:
\begin{align*}
\nab_3A-\frac{1}{2}\DD\hot B&=-\frac{1}{2}\tr\Xb A+4\omb A+\frac{1}{2}(Z+4H)\hot B -3\ov{P}\Xh,\\
\nab_4B-\frac{1}{2}\ov{\DD}\c A&=-2\ov{\tr X}B+\frac{1}{2}A\c\ov{Z},\\
\nab_3B-\DD\ov{P}&=-\tr\Xb B+2\omb B+\ov{\Bb}\c\Xh+3\ov{P}H +\frac{1}{2}A\c\ov{\Xib},\\
\nab_4P-\frac{1}{2}\DD\c\ov{B}&=-\frac{3}{2}\tr X P-\frac{1}{2}Z\c\ov{B}-\frac{1}{4}\Xbh\c\ov{A}, \\
\nab_3P +\frac{1}{2}\ov{\DD}\c\Bb &=-\frac{3}{2}\ov{\tr\Xb}P-\frac{1}{2}(\ov{2H-Z})\c\Bb +\Xib\c \ov{B} -\frac{1}{4}\ov{\Xh}\c\Ab, \\
\nab_4\Bb+\DD P &= -\tr X\Bb+\ov{B}\c \Xbh+3P Z,\\
\nab_3\Bb +\frac{1}{2}\ov{\DD}\c\Ab&=-2\ov{\tr\Xb}\,\Bb-2\omb\,\Bb -\frac{1}{2}\Ab\c (\ov{-2Z +H})-3P \,\Xib,\\
\nab_4\Ab +\frac{1}{2}\DD\hot\Bb&=-\frac{1}{2}\ov{\tr X} \Ab +\frac{5}{2}Z\hot \Bb -3P\Xbh.
\end{align*}
\end{prop}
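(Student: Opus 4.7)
The plan is to derive these equations by specializing the general null structure and Bianchi equations, recorded in Appendix \ref{appmain}, to the outgoing PG structure and then repackaging them in the complex notation of Definition \ref{df:allquantities}. By Lemma \ref{PGconditionslm}, a PG structure enjoys the simplifications $\xi=0$, $\omega=0$, and $\etab=-\ze$, so any term involving $\xi$, $\omega$, or the combination $\etab+\ze$ drops out of the general equations. What remains is to rewrite the real-valued equations for $(\trch,\atrch,\hch)$, $(\trchb,\atrchb,\hchb)$, $\zeta$, $\eta$, $\xib$, $\omb$ and the curvature components $(\alpha,\beta,\rho,{}^*\rho,\bb,\aa)$ in terms of the complex scalars and horizontal forms $\tr X, \tr\Xb, \Xh, \Xbh, H, \Hb, Z, \Xib, A, B, P, \Bb, \Ab$.

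For the null structure equations, I would start from the $\nab_4$ equations for $\chi$ and separately for $\atrch$ and $\hch$. Adding the $\atrch$ equation to $i$ times the appropriate trace equation and then combining scalar and anti-trace parts yields an equation for $\tr X=\trch-i\atrch$, while doing the same for $\hch\pm i{}^*\hch$ yields the equation for $\Xh$. The identities $Z\c\ov Z=|\ze|^2$, $Z\hot Z=\ze\hot\ze-{}^*\ze\hot{}^*\ze+2i\,\ze\hot{}^*\ze$, and $\DD\hot Z=(\nab+i{}^*\nab)\hot(\ze+i{}^*\ze)$ translate the terms $\div\ze$, $\curl\ze$, and the $\nab\hot\ze$ piece into their complex form at one stroke. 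An analogous procedure, together with the PG relations, gives the $\nab_3$ equations for $\tr\Xb$, $\Xbh$, $\tr X$ and $\Xh$, as well as the evolution of $Z$, $H$, and the transport for $\Xib$; the last three elliptic constraints on $\Xh$, $\Xbh$ and the Codazzi-type equation are obtained by taking the complex horizontal divergence of the real Codazzi identities and regrouping. Finally $\nab_4\omb$ is a direct rewriting using $\etab+\ze=0$.

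For the Bianchi equations, the key identities are $\nab+i{}^*\nab=\DD$ acting on anti-self-dual quantities, so that $\div\beta\pm i\curl\beta$ recombines as $\DD\c\ov B$ or $\DD\hot B$ depending on tensorial rank, while $\nab\hot\beta$ joins with its dual to form $\DD\hot B$. Each real Bianchi pair (for instance the Bianchi equation for $\nab_3\alpha$ and its dual $\nab_3{}^*\alpha$) then fuses into a single complex equation for $A$; one checks the algebra of the lower-order terms using $B=\beta+i{}^*\beta$, $P=\rho+i{}^*\rho$ and the PG simplifications to match coefficients. The equations for $\nab_4 B$, $\nab_3 B$, $\nab_4 P$, $\nab_3 P$, $\nab_4\Bb$, $\nab_3\Bb$ and $\nab_4\Ab$ are derived in exactly the same fashion.

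The main obstacle is not conceptual but bookkeeping: one must carefully track the many nonlinear quadratic corrections generated by the non-integrability of $\HH$ (i.e.\ by the nonzero $\atrch$, $\atrchb$), and verify that every real scalar/pseudoscalar pair assembles into a genuinely (anti-)self-dual complex object with the correct numerical coefficient. In practice I would proceed equation by equation in the order listed in the statement, check each one against the general real-form equations of Appendix \ref{appmain}, and use the complex commutation identities of Lemma \ref{lemma:commutation-complexM6} and the definitions in Definition \ref{dfcomplexD} to confirm that the right-hand sides are consistent. Since the analogous derivation has already been carried out in \cite{GKS,KS:main}, the result ultimately reduces to specializing those equations under the three PG conditions.
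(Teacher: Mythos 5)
Your proposal is correct and follows essentially the same route as the paper, which simply cites Proposition 2.19 of \cite{KS:main}, where exactly this specialization of the general (complexified) null structure and Bianchi equations under the PG conditions $\xi=0$, $\om=0$, $\etab+\ze=0$ is carried out. The only remark worth making is that the reference works directly with the complex form of the general equations from \cite{GKS} rather than complexifying the real equations of Appendix \ref{appmain} afterwards, but the two procedures are equivalent up to bookkeeping.
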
 
\begin{proof}
See Proposition 2.19 in \cite{KS:main}.
\end{proof}
\subsubsection{Linearized equations in complex notation}
The following proposition provides the linearized main equations in the $e_4$ direction.
\begin{prp}\label{mainequationsM4}
 In an outgoing PG structure, the linearized null structure equations in the $e_4$ direction are given by:
\begin{align*}
\nab_4(\trXc)+\frac{2}{q}\trXc &=\Gag\c\Gag,\\
\nab_4\Xh+\frac{2r}{|q|^2}\Xh &=-A+\Gag\c\Gag,\\
\nab_4\Zc + \frac{2}{q}\Zc&=-B+r^{-2}\Gag+\Ga_g\c\Ga_g,\\
\nab_4\Hc+\frac{1}{\ov{q}}\Hc &=-B+r^{-1}\Gag+\Ga_b\c\Ga_g,\\
\nab_4\trXbc+\frac{1}{q}\trXbc&=r^{-1}\Gag+\Ga_b\c\Ga_g,\\
\nab_4\Xbh +\frac{1}{q} \Xbh &=-\frac{1}{2}\DD\hot\Zc+r^{-1}\Gag+\Ga_b\c\Ga_g,\\
\nab_4\ombc&=\Re(\Pc)+r^{-2}\Gab+\Gab\c\Gag,\\
\nab_4\Xib+\frac{1}{q}\Xib&=O(r^{-1})\dko\ombc+r^{-2}\Gab+\Gab\c\left(\ombc,\Gag\right).
\end{align*}
The linearized Bianchi equations in the $e_4$ direction are given by:
\begin{align*}
\nab_4B +\frac{4}{\ov{q}}B&=\frac{1}{2}\ov{\DD}\c A+\frac{aq}{2|q|^2}\ov{\Jk}\c A+\Gag\c(B,A),\\
\nab_4\Pc+\frac{3}{q}\Pc&=\frac{1}{2}\DD\c \ov{B}-\frac{a\ov{q}}{2|q|^2}\Jk\c\ov{B}+O(r^{-3})\trXc+r^{-1}\Gag\c\Gag+\Gab\c A,\\
\nab_4\Bb+\frac{2}{q}\Bb&=-\DD\Pc +O(r^{-2})\Pc+O(r^{-3})\Zc+O(r^{-4})\widecheck{\DD\cos\th}+r^{-1}\Gab\c\Gag,\\
\nab_4\Ab+\frac{1}{\qb}\Ab&=O(r^{-1})\dkb^{\leq 1}\Bb+O(r^{-3})\Xbh+\Gag\c\Gab.
\end{align*}
\end{prp}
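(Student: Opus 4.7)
The plan is to obtain the linearized $e_4$-direction equations by linearizing the nonlinear null structure and Bianchi equations of Proposition \ref{prop-nullstrandBianchi:complex:outgoing} about the background Kerr values, expressed relative to the scalar data $(a,m,q,\cos\th,\Jk)$, and then carefully classifying the resulting quadratic remainders using the schematic sets $\Gag$ and $\Gab$ introduced in Definition \ref{dfGagGab}. For each quantity $W$ on the left-hand side I would decompose $W=W_{\mathrm{Kerr}}+\widecheck{W}$, substitute into the corresponding nonlinear equation, and use the fact that the Kerr value $W_{\mathrm{Kerr}}$ satisfies the same nonlinear equation identically (since Kerr is an exact solution endowed with an outgoing PG structure) to cancel the principal Kerr contribution. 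What remains is a transport equation for $\widecheck{W}$ whose linear coefficients come from evaluating the Kerr expressions $(\tr X)_{\mathrm{Kerr}}$, $(\tr\Xb)_{\mathrm{Kerr}}$, $Z_{\mathrm{Kerr}}$, $H_{\mathrm{Kerr}}$, $P_{\mathrm{Kerr}}$, etc., at the reference values. For instance, the coefficients $2/q$, $1/q$, $1/\ov{q}$, $4/\ov{q}$ arise directly from $(\tr X)_{\mathrm{Kerr}}=-2/q$ together with the constraint $e_4(q)=1$ inherent in the outgoing PG structure.

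Next I would organize the residual nonlinear terms according to the schematic rules of Definition \ref{dfGagGab} and Proposition \ref{GagGabdecay}. Anything quadratic in a pair of quantities listed in $\Gag$ is absorbed into $\Gag\c\Gag$; anything involving at least one factor from $\Gab$ is placed in $\Gab\c\Gag$ (or $\Gab\c(\ombc,\Gag)$ in the case of the $\xib$ equation, where $\ombc$ enters through Lemma \ref{PGconditionslm} together with the PG identity $\etab+\ze=0$). To arrive at the cleaned-up form of the $\nab_4 \Zc$ and $\nab_4 \Hc$ equations one uses that $\etab=-\ze$ and $\xi=0$ in outgoing PG, so that the original $\nab_4 Z$ and $\nab_4 H$ equations collapse: one obtains a Kerr-balanced transport equation for $\Zc$ with coefficient $2/q$ and for $\Hc$ with coefficient $1/\ov{q}$ after isolating the curvature source $B$. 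Handling of schematic errors of type $r^{-2}\Gag$ or $r^{-1}\Gag$ is done by Taylor-expanding the Kerr transport coefficients in the perturbation and absorbing the resulting terms.

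For the Bianchi block, I would use the fact that in Kerr the only non-vanishing curvature component is $P_{\mathrm{Kerr}}=-2m/q^3$, so $A$, $B$, $\Bb$, $\Ab$ are already linear in the perturbation. Substituting the splitting $P=P_{\mathrm{Kerr}}+\Pc$ into the nonlinear Bianchi equations and using the identities $e_4(q)=1$ and $\nab_4(q\Jk)=0$ (from Definition \ref{dfKSAF}) to evaluate $\nab_4 P_{\mathrm{Kerr}}$ produces the sharp weights $4/\ov q$, $3/q$, $2/q$, $1/\qb$. The additional terms $\frac{aq}{2|q|^2}\ov{\Jk}\c A$ and $-\frac{a\ov q}{2|q|^2}\Jk\c \ov B$ come from the identity $\ov{\DD}\c=r$-based $\div$ plus its Kerr part expressed through $\Jk$; one checks this by combining the definitions of $\DD$, $\DDov$ in Definition \ref{dfcomplexD} with the Kerr forms of $H_{\mathrm{Kerr}}, \Hb_{\mathrm{Kerr}}, Z_{\mathrm{Kerr}}$. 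The lower-order source $O(r^{-3})\trXc$ in the $\nab_4\Pc$ equation emerges from linearizing the Kerr coefficient $-\frac{3}{2}\tr X\, P_{\mathrm{Kerr}}$.

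The routine but tedious step will be the bookkeeping of which quadratic terms belong to $\Gag\c\Gag$ versus $\Gab\c\Gag$; the genuine obstacle is the $\nab_4 \Xib$ equation, which does not have a nonlinear counterpart in Proposition \ref{prop-nullstrandBianchi:complex:outgoing} listed with $\nab_4$ on the left. To derive it one must use the PG identity $\xi=0$ and propagate it forward: differentiate the identity $\etab+\ze=0$ in the $e_4$ direction using the commutation formula from Lemma \ref{lemma:comm-gen}, and combine with the $\nab_3 Z+\nab_4 \Xib$ equation from Proposition \ref{prop-nullstrandBianchi:complex:outgoing}. Alternatively (and this is the route I would take), use the null Codazzi-type equation together with the torsion formula $\nab_X Y-\nab_Y X-[X,Y]=-\tfrac12(\atrchb\,e_4+\atrch\,e_3)\in(X,Y)$ to convert $\nab_3\ombc$ into a $\nab_4 \Xib$ equation modulo terms proportional to $\ombc$ and $\Gag$; this accounts for the presence of $O(r^{-1})\dko\ombc$ on the right-hand side.
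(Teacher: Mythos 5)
The paper does not actually prove this proposition: it defers entirely to Lemma 6.15 of \cite{KS:main}, and your outline is essentially the derivation carried out there, namely linearizing each complex equation of Proposition \ref{prop-nullstrandBianchi:complex:outgoing} around the Kerr values expressed in $(a,m,q,\cos\th,\Jk)$ and sorting the quadratic remainders into $\Gag\c\Gag$, $\Gab\c\Gag$, etc. So the strategy is right. A few concrete points need correcting, though. First, $(\tr X)_{\mathrm{Kerr}}=\frac{2}{q}$, not $-\frac{2}{q}$: the Kerr cancellation $\nab_4(2/q)+\frac12(2/q)^2=0$ with $e_4(q)=1$ is what produces the coefficient $\frac{2}{q}$ in the $\trXc$ equation, and your sign would give the wrong linear weight. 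Second, the terms $\frac{aq}{2|q|^2}\ov{\Jk}\c A$ and $-\frac{a\ov{q}}{2|q|^2}\Jk\c\ov{B}$ do not come from decomposing the operator $\ov{\DD}\c$; they are simply the Kerr contributions of the quadratic terms $\frac12 A\c\ov{Z}$ and $-\frac12 Z\c\ov{B}$ in the nonlinear Bianchi equations, using $Z_{\mathrm{Kerr}}=\frac{a\ov q}{|q|^2}\Jk$ (this is exactly why the renormalization \eqref{dfren} later subtracts a $\Jk$--multiple of $A$). Third, for the $\nab_4\Xib$ equation the clean route is the first one you mention but then set aside: subtract the $\nab_3 Z$ equation from the $\nab_3 Z+\nab_4\Xib$ equation in Proposition \ref{prop-nullstrandBianchi:complex:outgoing}; the surviving $-2\DD\omb$ term is precisely the source of $O(r^{-1})\dko\ombc$, and the coefficient $\frac{1}{q}$ comes from the leftover $\frac12\tr X\,\Xib$. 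Your preferred alternative of converting a $\nab_3\ombc$ equation into a $\nab_4\Xib$ equation via Codazzi and the torsion formula is not a well-defined step as stated (the two quantities have different tensorial type and the Codazzi equations carry no $\nab_3$ or $\nab_4$ derivative of $\omb$), so you should drop it in favor of the direct subtraction.
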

\begin{proof}
See Lemma 6.15 in \cite{KS:main}.
\end{proof}
\subsubsection{Transport equations for metric components}\label{ssecmetrictransport}
\begin{proposition}\label{prop:e_4(xyz)}
The following equations hold true for the coordinates $(u,r,\th,\vphi)$ associated to an outgoing PG structure
\begin{align*}
e_4(e_3(r))&=-2\omb,\\
\nab_4(\DD u)+\frac{1}{2}\tr X(\DD u)&=-\frac{1}{2}\Xh\c\ov{\DD}u,\\
e_4(e_3(u))&=-\Re\Big((Z+H)\c\ov{\DD} u\Big),\\
\nab_4(\DD\cos\th)+\frac{1}{2}\tr X(\DD\cos\th)&=-\frac{1}{2}\Xh\c\ov{\DD}(\cos \th),\\
e_4(e_3(\cos\th))&=-\Re\Big((Z+H)\c\ov{\DD}(\cos\th)\Big).
\end{align*}
\end{proposition}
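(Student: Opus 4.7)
The proposition packages five identities of two flavors: two $\nab_4$-transport equations for the complex gradients $\DD u$ and $\DD\cos\th$ (equations 2 and 4), and three $e_4 e_3$ identities for the scalars $r,u,\cos\th$ (equations 1, 3, 5). My plan is to treat each flavor with a single general computation and then specialize.

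For the $e_4 e_3$ identities, the strategy is to use $[e_4,e_3]$ together with the PG conditions. Since $\D_4 e_4 = 0$ forces $\om=0$, and the PG condition gives $\xi=0$ together with $\etab+\ze=0$ (Lemma \ref{PGconditionslm}), the standard decomposition
\[
\D_4 e_3 = 2\om e_3 + 2\etab^a e_a,\qquad \D_3 e_4 = 2\omb e_4 + 2\eta^a e_a
\]
yields, in the PG frame,
\[
[e_4,e_3] \;=\; -2\omb\, e_4 + 2(\etab-\eta)^a e_a \;=\; -2\omb\, e_4 - 2(\ze+\eta)^a e_a.
\]
For (1), apply this to $r$ using $e_4(r)=1$ (so $e_3(e_4(r))=0$) and the PG condition $\grad r\perp\HH$, which gives $e_a(r)=0$; only the $-2\omb\, e_4(r)=-2\omb$ survives. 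For (3) and (5), apply the same commutator to $u$ and $\cos\th$, using the identities $e_4(u)=0$ and $e_4(\th)=0$, so the $-2\omb\, e_4(\cdot)$ piece vanishes and we are left with $-2(\ze+\eta)^a e_a(\cdot)$. A short calculation in complex notation, using $Z=\ze+i\dual\ze$, $H=\eta+i\dual\eta$, and the 2D identity $\in_{ab}\in_{ac}=\de_{bc}$, shows
\[
\Re\bigl((Z+H)\c\ov{\DD}u\bigr)=2(\ze+\eta)^a e_a(u),
\]
which delivers (3); the identical computation with $\cos\th$ delivers (5).

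For the $\nab_4$ transport equations (2) and (4), the input is the horizontal commutator from Lemma \ref{lemma:comm-gen} applied to a scalar $F$ with $e_4(F)=0$, namely
\[
\nab_4\nab_b F \;=\; -\chi_{bc}\nab_c F \;=\; -\hch_{bc}\nab_c F-\tfrac{1}{2}\trch\,\nab_b F-\tfrac{1}{2}\atrch\in_{bc}\nab_c F.
\]
Taking the complex combination $(\DD F)_b=\nab_b F+i\in_{bc}\nab_c F$ and reorganizing the four terms using $\tr X=\trch-i\atrch$ and $\Xh_{bc}=\hch_{bc}+i\dual\hch_{bc}$ produces exactly
\[
\nab_4(\DD F)+\tfrac{1}{2}\tr X\,\DD F=-\tfrac{1}{2}\Xh\c\ov{\DD}F.
\]
Applying this identity with $F=u$ and $F=\cos\th$ (both of which satisfy $e_4(F)=0$ by the definition of the $\KSAF$ structure) yields (2) and (4) respectively.

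The entire proof is thus a direct computation. The only mildly delicate point, which I regard as the main bookkeeping obstacle, is the passage from the real commutator $[\nab_4,\nab_b]F=-\chi_{bc}\nab_c F$ to the clean complex-notation form in (2) and (4); one must carefully match the symmetric and antisymmetric parts of $\chi$ with the real and imaginary parts of $\tr X$ and $\Xh$, using the identities relating Hodge duals of 1-forms and 2-tensors on a 2-surface. Once this matching is performed, no further analytic input is needed — the result follows purely from the algebraic definitions of the PG structure, Lemma \ref{PGconditionslm}, and the scalar commutator of Lemma \ref{lemma:comm-gen}.
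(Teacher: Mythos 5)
Your proposal is correct. The paper does not prove this proposition itself but defers to Proposition 6.10 of \cite{KS:main}; your computation is the standard self-contained derivation one would expect there: the three $e_4(e_3(\cdot))$ identities follow from $[e_4,e_3]=-2\omb\,e_4-2(\ze+\eta)^ae_a$ (using $\om=0$, $\etab=-\ze$ from the PG conditions) together with $e_4(r)=1$, $e_a(r)=0$, $e_4(u)=e_4(\cos\th)=0$, and the two transport equations follow from the scalar commutator $[\nab_4,\nab_b]F=-\chi_{bc}\nab_cF$ of Lemma \ref{lemma:comm-gen} after the (correct) repackaging of the trace, antitrace and traceless parts of $\chi$ into $\tr X$ and $\Xh\c\ov{\DD}F$. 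I verified the two bookkeeping points you flag — $\Re\big((Z+H)\c\ov{\DD}u\big)=2(\ze+\eta)\c\nab u$ and $\Xh\c\nab F=\tfrac12\Xh\c\ov{\DD}F$ via $\dual\Xh=-i\Xh$ — and both check out.
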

\begin{proof}
See Proposition 6.10 in \cite{KS:main}.
\end{proof}
\begin{prop}\label{prop:e_4J}
The following equations hold true for the coordinates $(u,r,\th,\vphi)$ associated to an outgoing PG structure
\begin{align*}
\nab_4(q\DD\Jp)&=Z\c\Jp+\Gag\c\dko\Jp,\\
\nab_4(\nab_3\Jp)&=\eta\c\nab\Jp+O(r^{-3})\dko\Jp+r^{-1}\Gag\c\dko\Jp,\\
\nab_4\left(|q|^2\DDb\c\Jk\right)&=\ov{Z}\c q\Jk+\Gag\c\dko(q\Jk),\\
\nab_4\left(|q|^2\DDb\c\Jk_\pm\right)&=\ov{Z}\c q\Jk_\pm+\Gag\c\dko(q\Jk_\pm),\\
\nab_4\left(|q|^2\DDb\hot\Jk\right)&=\ov{Z}\c q\Jk+\Gag\c\dko(q\Jk),\\
\nab_4\left(|q|^2\DDb\hot\Jk_\pm\right)&=\ov{Z}\c q\Jk_\pm+\Gag\c\dko(q\Jk_\pm),\\
\nab_4(\nab_3(q\Jk))&=H\c\nab(q\Jk)+O(r^{-3})\dkb^{\leq 1}(q\Jk)+r^{-1}\Gag\c\dko(q\Jk),\\
\nab_4(\nab_3(q\Jk_\pm))&=H\c\nab(q\Jk_\pm)+O(r^{-3})\dkb^{\leq 1}(q\Jk_\pm)+r^{-1}\Gag\c\dko(q\Jk_\pm),
\end{align*}
where $\Jp$ is defined in Definition \ref{jpdef} and $\Jk$, $\Jk_\pm$ are introduced in \eqref{dfJJk}.
\end{prop}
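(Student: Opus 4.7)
The plan is to derive each identity by applying the commutation formulae from Corollary \ref{commsch} together with two transport facts built into the KSAF structure: (i) the scalar basis functions $\Jp$ are extended from spheres by the convention $e_4(\Jp)=0$, so $\nab_4\Jp=0$, and (ii) $\nab_4(q\Jk)=0=\nab_4(q\Jk_\pm)$ by Definition \ref{dfKSAF}. Once these are in hand, the formulae become near-immediate consequences of the commutators, modulo bookkeeping of lower-order terms.

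For the two scalar identities involving $\Jp$, I would invoke the relevant commutators and use that derivatives of $\nab_4\Jp$ vanish. The identity $\nab_4(q\DD\Jp)=Z\c\Jp+\Gag\c\dko\Jp$ follows from $[\nab_4,q\DD]\Jp=Z\c\Jp+\Gag\c\dko\Jp$ since $q\DD(\nab_4\Jp)=0$. Similarly, writing $\nab_4(\nab_3\Jp)=[\nab_4,\nab_3]\Jp+\nab_3(\nab_4\Jp)$, both the term $\nab_3(\nab_4\Jp)$ and the $-2\omb\nab_4\Jp$ piece of the commutator vanish, leaving exactly the stated form.

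For the complex $1$-form identities, I would apply the second commutator in Corollary \ref{commsch}, namely $[\nab_4,\qb\DDov](q\Jk)=\ov{Z}\c(q\Jk)+\Gag\c\dko(q\Jk)$, using $\nab_4(q\Jk)=0$. To match the stated left-hand side $\nab_4(|q|^2\DDb\c\Jk)$, I would expand $\qb\DDov\c(q\Jk)=|q|^2\DDov\c\Jk+\qb(\DDov q)\c\Jk$, invoke the PG identity $\nab r=0$ to simplify $\DDov q=ia\,\DDov(\cos\th)$, and absorb the $\nab_4$-derivative of the correction term $ia\qb\DDov(\cos\th)\c\Jk$ into $\Gag\c\dko(q\Jk)$, using $e_4(\cos\th)=0$. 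The same argument applies to the $\DDov\hot$ version and to $\Jk_\pm$. For the last two identities, I would use the analogue of the third line of Corollary \ref{commsch} applied to the anti-self-dual $1$-forms $q\Jk$ and $q\Jk_\pm$, with $\eta\c\nab$ replaced by the complex combination $H\c\nab$, and use $\nab_4(q\Jk)=0$ to kill the remaining terms. The main obstacle is the bookkeeping required to check that the discrepancies generated by the complex structure (in particular between $|q|^2\DDov\c\Jk$ and $\qb\DDov\c(q\Jk)$) stay within the schematic $\Gag\c\dko(q\Jk)$ error form claimed; once that is verified, all eight identities are direct applications of the commutators in Corollary \ref{commsch}.
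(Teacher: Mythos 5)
Your proposal is correct and follows essentially the same route as the paper: the paper's proof consists precisely of citing $\nab_4(\Jp)=0$, $\nab_4(q\Jk)=0$, $\nab_4(q\Jk_\pm)=0$ together with the commutation formulae of Corollary \ref{commsch}. Your additional remarks on reconciling $|q|^2\DDov\c\Jk$ with $\qb\DDov\c(q\Jk)$ simply make explicit the bookkeeping the paper leaves implicit, and the discrepancy is indeed absorbed into the schematic error terms.
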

\begin{proof}
The proof follows directly from that
\begin{align*}
    \nab_4(\Jp)=0,\qquad\nab_4(q\Jk)=0,\qquad\nab_4(q\Jk_\pm)=0,
\end{align*}
and Corollary \ref{commsch}.
\end{proof}
\subsection{Null frame transformation}
Consider two null frames $(e_4,e_3,e_1,e_2)$ and $(e_4',e_3',e_1',e_2')$ on a $\KSAF$ spacetime $(\MM,\g)$ with $\HH=\{e_3, e_4\}^\perp$ and $\HH'=\{e_3',e_4'\}^\perp$  the corresponding horizontal structures. We also denote $(\Ga',R')$ and $(\Ga,R)$ the connection coefficients and curvature components relative to the two frames. We denote by $\nab, \nab \hot,\div,\curl,\nab_3,\nab_4$ the standard operators corresponding to $\HH$ and by $\nab',\nab' \hot',\div',\curl',\nab'_3,\nab'_4$ those corresponding to $\HH'$. The goal is to establish the transformation formulae between the Ricci and curvature coefficients of the two frames.
\begin{lm}\label{transformation}
A general transformation between two null frames $(e_4, e_3, e_1, e_2)$ and $(e_4', e_3', e_1', e_2')$ on $\MM$ can be written in the following form:
\begin{align}
\begin{split}\label{General-frametransformation}
e_4'&=\la\left(e_4+f^be_b+\frac 1 4|f|^2e_3\right),\\
e_a'&=\left(\de_a^b +\frac{1}{2}\fb_af^b\right)e_b+\frac 1 2\fb_a e_4 +\left(\frac 1 2 f_a +\frac{1}{8}|f|^2\fb_a\right)e_3,\\
e_3'&=\la^{-1}\left(\left(1+\frac{1}{2}f\c\fb+\frac{1}{16}|f|^2|\fb|^2\right) e_3 + \left(\fb^b+\frac 1 4 |\fb|^2f^b\right) e_b+\frac{1}{4}|\fb|^2 e_4\right),
\end{split}
\end{align}
where $(f,\fb,\la)$ are called the \emph{transition functions} of the frame transformation.
\end{lm}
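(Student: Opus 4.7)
The plan is to derive the formulae \eqref{General-frametransformation} by systematically imposing the six defining constraints of a null frame, namely $\g(e_4', e_4') = \g(e_3', e_3') = 0$, $\g(e_3', e_4') = -2$, $\g(e_4', e_a') = \g(e_3', e_a') = 0$, and $\g(e_a', e_b') = \de_{ab}$. The Lorentz group has six real parameters; five of these are encoded in $(\la, f, \fb)$, while the sixth is the residual $SO(2)$ rotation of $(e_1', e_2')$, which is absorbed into the choice of horizontal basis and corresponds, after the reduction in Step 3 below, to the symmetric gauge $M_a{}^b = \de_a^b + \frac{1}{2}\fb_a f^b$.

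First, I would determine $e_4'$. Writing the most general ansatz $e_4' = A e_4 + B e_3 + C^b e_b$ and using the null condition $\g(e_4', e_4') = 0$ together with $\g(e_3, e_4) = -2$ and $\g(e_a, e_b) = \de_{ab}$, one gets $-4AB + |C|^2 = 0$. Assuming $A \neq 0$, which is natural in settings where the two frames are close, set $\la := A$ and $f^b := \la^{-1} C^b$, so that $B = \frac{1}{4}\la|f|^2$; this produces the first line of \eqref{General-frametransformation}. Next, I would parametrize $e_3' = \al e_3 + \b e_4 + \ga^b e_b$ and use $\g(e_3', e_3') = 0$ to express $\al\b$ in terms of $\ga$, together with $\g(e_3', e_4') = -2$ to relate $\al$ to $\la$, $f\c\ga$, and $|\ga|^2$. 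Introducing $\fb$ via the substitution $\ga^b = \la^{-1}\bigl(\fb^b + \frac{1}{4}|\fb|^2 f^b\bigr)$ yields, after algebraic manipulation, the third line of \eqref{General-frametransformation}. Finally, to fix $e_a'$, I would write $e_a' = M_a{}^b e_b + N_a e_4 + P_a e_3$; the orthogonality conditions $\g(e_a', e_4') = 0$ and $\g(e_a', e_3') = 0$ determine $N_a = \frac{1}{2}\fb_a$ and $P_a = \frac{1}{2}f_a + \frac{1}{8}|f|^2 \fb_a$, and the orthonormality $\g(e_a', e_b') = \de_{ab}$, once the residual $SO(2)$ freedom is fixed, forces $M_a{}^b = \de_a^b + \frac{1}{2}\fb_a f^b$, which is the second line of \eqref{General-frametransformation}.

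The main obstacle is algebraic bookkeeping: one must carefully track the quartic terms $|f|^2|\fb|^2$ and the cross terms $f\c\fb$ in order to recover the precise scalar coefficient $1 + \frac{1}{2}f\c\fb + \frac{1}{16}|f|^2|\fb|^2$ multiplying $e_3$ in the formula for $e_3'$, as well as the cubic corrections appearing in the formulae for $e_a'$ and $e_3'$. There is no conceptual difficulty; the result is a direct verification of the standard parametrization of Lorentz transformations between two null frames on a Lorentzian manifold, applied pointwise on $\MM$ to the horizontal structures $\HH$ and $\HH'$.
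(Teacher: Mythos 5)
Your proposal is correct and follows essentially the same route as the paper, which simply delegates to Lemma 2.10 of \cite{KS:main}: there, as here, the formulae are obtained/verified by imposing the null-frame relations $\g(e_3',e_3')=\g(e_4',e_4')=0$, $\g(e_3',e_4')=-2$, $\g(e_A',e_a')=0$, $\g(e_a',e_b')=\de_{ab}$ on a general ansatz, with the residual $SO(2)$ rotation of $(e_1',e_2')$ absorbed by a fixed choice of the horizontal block. One cosmetic quibble: the matrix $M_a{}^b=\de_a^b+\frac{1}{2}\fb_a f^b$ is not symmetric, so calling this choice the ``symmetric gauge'' is a misnomer, but this does not affect the verification.
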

\begin{proof}
See Lemma 2.10 in \cite{KS:main}.
\end{proof}
\begin{rk}
As a consequence of the above lemma, $(f,\fb)$ can be regarded as horizontal vectors on both $\HH=\{e_3,e_4\}^\perp$ and $\HH'=\{e_3',e_4'\}^\perp$, see Definition 2.32 and Remark 2.33 in \cite{graf} for more explanations.
\end{rk}
The following propositions will be used frequently throughout this paper. 
\begin{prp}\label{Riccitransfer}
Under a null frame transformation of the type \eqref{General-frametransformation}, the Ricci coefficients transform as follows:
\begin{itemize}
\item The transformation formulae for $\xi$ and $\xib$ are given by 
\begin{align*}
\la^{-2}\xi'&=\xi+\frac{1}{2}\la^{-1}\nab'_4f+\frac{1}{4}\trch\, f+O(r^{-2})f+f\c\Gag+\lot,\\
\la^2\xib'&=\xib+\frac{1}{2}\la\nab_3'\fb+\frac{1}{4}\trchb\,\fb+O(r^{-2})\fb+\fb\c\Gab+\lot
\end{align*}
\item The transformation formulae for $\chi$ are given by 
\begin{align*}
\la^{-1}\trch'&=\trch+\sdiv'f+f\c\eta+O(r^{-2})f+(f,\fb)\c\Gag+\lot,\\
\la^{-1}\atrch'&=\atrch+\curl'f+f\wedge\eta+O(r^{-2})f+(f,\fb)\c\Gag+\lot,\\
\la^{-1}\hch'&=\hch+\nab'\hot f+f\hot\eta+O(r^{-2})f+(f,\fb)\c\Gag+\lot
\end{align*}
\item The transformation formulae for $\chib$ are given by 
\begin{align*}
\la\trchb'&=\trchb+\div'\fb+f\c\xib+O(r^{-2})\fb+\fb\c\Gag+\lot,\\
\la\atrchb'&=\atrchb+\curl'\fb+f\wedge\xib+O(r^{-2})\fb+\fb\c\Gag+\lot,\\
\la\hchb' &= \hchb +\nab'\hot\fb+f\hot\xib+O(r^{-2})\fb+\fb\c\Gag+\lot
\end{align*}
\item  The transformation formula for $\ze$ is given by 
\begin{align*}
\ze'&=\ze-\nab'(\log\la)-\frac{1}{4}\trchb f-\omb\, f+\frac{1}{4}\trch\,\fb-\frac{1}{2}\hchb\c f\\
&+O(r^{-2})(f,\fb)+r^{-1}(f,\fb)\c\dko(f,\fb)+(f,\fb)\c\Gag+\lot
\end{align*}
\item The transformation formulae for $\eta$ and $\etab$ are given by
\begin{align*}
\eta'&=\eta+\frac{1}{2}\la\nab_3'f+\frac{1}{4}\trch\,\fb-\omb\, f+O(r^{-2})\fb+\fb\c\Gag+\lot,\\
\etab'&=\etab+\frac{1}{2}\la^{-1}\nab'_4\fb+\frac{1}{4}\trchb\,f+\frac{1}{2}f\c\hchb+O(r^{-2})f+(f,\fb)\c\Gag+\lot
\end{align*}
\item The transformation formulae for $\om$ and $\omb$ are given by
\begin{align*}
\la^{-1}\om'&=\om-\frac{1}{2}\la^{-1}e'_4(\log\la)+(f,\fb)\c\Gag+\lot,\\
\la\omb'&=\omb+\frac{1}{2}\la e_3'(\log\la)-\frac{1}{2}\fb\c\eta +\frac{1}{2}f\c\xib-\frac{1}{4}\la \fb\c\nab_3'f+O(r^{-2})\fb+\fb\c\Gag+\lot,
\end{align*}
\end{itemize}
where $\lot$ denote expressions of the type
\begin{align*}
\lot=O\left((f,\fb)^2\right)\c\Ga.
\end{align*}
\end{prp}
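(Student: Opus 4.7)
The plan is to derive each formula by direct substitution of \eqref{General-frametransformation} into the definition of the corresponding primed Ricci coefficient from Definition \ref{df:allquantities}, followed by application of the Leibniz rule to $\D$ and a careful bookkeeping of terms by order in $(f,\fb)$. Throughout, any contribution quadratic or higher in $(f,\fb)$ paired with an unprimed Ricci coefficient is absorbed into $\lot$. The $O(r^{-2})f$ and $O(r^{-2})\fb$ pieces will be identified as the Kerr-background contributions arising from the non-trivial Kerr values of $\trchb,\hchb,\eta,\ldots$ (of size $O(r^{-1})$ or $O(r^{-2})$ by Proposition \ref{GagGabdecay}), and separated from the genuinely perturbative $f\c\Gag, \fb\c\Gag$ terms.

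To illustrate the mechanics I would begin with $\chi'_{ab}=\g(\D_{e_a'} e_4',e_b')$. Expanding all three primed frame vectors via \eqref{General-frametransformation} and applying Leibniz, the leading piece $\la\g(\D_a e_4,e_b)=\la\chi_{ab}$ emerges, while the horizontal-derivative of the summand $\la f^c e_c$ inside $e_4'$ supplies $\la(\nab_a f^c)\de_{cb}$. Converting $\nab_a$ to $\nab'_a$ via the relation $\nab_a=\nab'_a-\frac{1}{2}\fb_a\nab_4-\frac{1}{2}f_a\nab_3+\lot$ produces the $\nab'\hot f$, $\sdiv'f$ and $\curl'f$ terms of the traceless, trace and anti-trace parts. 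Cross terms of the form $\la f^c\g(\D_a e_c,e_b)$ and analogues supply the $f\c\eta$ and $(f,\fb)\c\Gag$ pieces. An identical argument in the $e_3$ direction yields the three $\chib'$ formulas.

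For $\xi',\xib',\eta',\etab'$ the distinguishing $\frac12\la^{-1}\nab'_4f$ and $\frac12\la\nab'_3\fb$ factors arise when $e_4'$ (respectively $e_3'$) differentiates the horizontal summand inside $e_4'$ (respectively $e_3'$). The replacement $\la^{-1}\nab'_4=\nab_4+f\c\nab+\frac14|f|^2\nab_3$, together with its $\nab_3'$ analogue, introduces only $\lot$-corrections. The $\trch f, \trchb\fb$ pieces arise by pairing the trace of $\chi$ or $\chib$ with the horizontal and $e_3, e_4$ summands of the primed frame vectors, while the $-\omb f$ and $\frac12 f\c\hchb$ contributions come from Ricci-coefficient factors produced when one differentiates through the full primed frame.

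I expect the main obstacle to be the bookkeeping for $\ze',\om',\omb'$, where the boost scalar $\la$ must itself be differentiated. The crucial identity
\begin{equation*}
\g(\D_{e_4'}(\la X),e_3')=e_4'(\la)\,\g(X,e_3')+\la\,\g(\D_{e_4'}X,e_3'),
\end{equation*}
applied with $X=e_4+f^b e_b+\frac14|f|^2 e_3$ so that $\la X=e_4'$ and $\g(X,e_3')=-2\la^{-1}$, produces precisely the $-\frac12\la^{-1}e_4'(\log\la)$ summand in the $\om'$ formula. Its $e_3$-analogue delivers the $\frac12\la\,e_3'(\log\la)$ summand of $\omb'$, and a horizontal version applied with $e_a'$ delivers the $-\nab'(\log\la)$ term in $\ze'$. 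The delicate point is to check that the remaining summands (those originating from $\xi, \xib, \hchb\c f, \hch\c\fb$, and from pulling $\nab_3$ derivatives of $f$ past the boost) combine into exactly the coefficients stated, in particular the $-\frac12\fb\c\eta+\frac12 f\c\xib-\frac14\la\fb\c\nab'_3 f$ contribution to $\omb'$. Once this is verified, the remaining formulas are read off from the same direct-substitution recipe.
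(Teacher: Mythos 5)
Your direct-substitution strategy is exactly how these formulae are established: the paper gives no computation of its own but defers to Proposition 3.3 of \cite{KS:Kerr1}, whose proof proceeds precisely by expanding the primed frame via \eqref{General-frametransformation} inside the definitions of Definition \ref{df:allquantities}, applying the Leibniz rule, and sorting terms by order in $(f,\fb)$, with the $\log\la$ contributions to $\ze'$, $\om'$, $\omb'$ arising from differentiating the boost factor just as you describe. Your plan is a correct outline of that same computation.
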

\begin{proof}
See Proposition 3.3 in \cite{KS:Kerr1}.
\end{proof}
\begin{prop}\label{Curvaturetransfer}
Under a transformation of type \eqref{General-frametransformation}, the curvature components transform as follows:
\begin{align*}
\la^{-2}\a'&=\a+\big(f\hot\b-\dual f\hot\dual\b\big)+O(f^2)(\rho,\rhod)+O(f^3)\bb+O(f^4)\aa,\\
\la^{-1}\b'&=\b+\frac 3 2\big(f\rho+\dual f\rhod\big)+\frac 1 2\a\c\fb+O(f^2)\bb+O(f^3)\aa,\\
\rho'&=\rho+\fb\c\b-f\c\bb +O\big((f,\fb)^2\big)(\a,\rho,\rhod,\aa),\\
\rhod'&=\rhod-\fb\c\dual\b-f\c\dual\bb +O\big((f,\fb)^2\big)(\a,\rho,\rhod,\aa),\\
\la\bb'&=\bb-\frac 3 2\big(\fb\rho+\dual\fb\rhod\big)-\frac 1 2\aa\c f+O(\fb^2)\b+O(\fb^3)\a,\\
\la^2\aa'&=\aa-\big(\fb\hot\bb-\dual \fb \hot \dual\bb\big)+O(\fb^2)(\rho,\rhod)+O(\fb^3)\b+O(\fb^4)\a.
\end{align*}
The mass aspect function $\mu$ transforms as follows:
\begin{align*}
\mu'&=\mu+(\De'+V)\ovla+\left(\omb+\frac{1}{4}\trchb\right)(\trch'-\trch)-\left(\om+\frac{1}{4}\trch\right)(\trchb'-\trchb)\\
&+r^{-2}\dko\left((f,\fb,\ovla)\c\dko(f,\fb,\ovla)+(f,\fb,\ovla)\c r\Gab\right),
\end{align*}
with
\begin{align*}
    V:=-\frac{1}{2}\trch\trchb-\trch\,\omb-\trchb\,\om,\qquad\quad\ovla:=\la-1.
\end{align*}
\end{prop}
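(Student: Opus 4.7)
The plan is to verify the curvature transformation formulas by direct substitution of \eqref{General-frametransformation} into the definitions of $\a,\b,\rho,\rhod,\bb,\aa$ in terms of the Riemann tensor, and to derive the mass aspect identity by combining those formulas with the Ricci transformation rules of Proposition \ref{Riccitransfer} together with the defining relation $\mu=-\div\ze-\rho+\frac{1}{2}\hch\c\hchb$ applied in both frames.

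For the curvature part, I would start from $\la^{-2}\a'_{ab}=\la^{-2}\R'_{a4b4}$ and expand using $e'_4=\la(e_4+f^c e_c+\frac{1}{4}|f|^2 e_3)$ together with the corresponding expression for $e'_a$. The multilinearity of $\R$ and its symmetries force the first-order correction in $f$ to take the traceless symmetric form $f\hot\b-\dual f\hot\dual\b$, while higher orders yield the $O(f^2)(\rho,\rhod)$, $O(f^3)\bb$, and $O(f^4)\aa$ contributions. The identity for $\b'$ is obtained from $\b'_a=\frac{1}{2}\R'_{a434}$ by the same procedure, the linear correction $\frac{3}{2}(f\rho+\dual f\rhod)$ coming from the two $e'_4$ factors together with the normalization, and the quadratic term $\frac{1}{2}\a\c\fb$ coming from the unique crossed $\fb$-contribution through $e'_a$. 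The identities for $\rho',\rhod'$ follow from $\rho'=\frac{1}{4}\R'_{3434}$ and its dual, by expanding both $e'_3$ and $e'_4$, while $\bb',\aa'$ are obtained by the obvious symmetry $3\leftrightarrow 4$, $f\leftrightarrow\fb$, $\la\leftrightarrow\la^{-1}$. This step is algebraically involved but conceptually routine.

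For the mass aspect function, the starting point is that Proposition \ref{Riccitransfer} gives $\ze'-\ze=-\nab'(\log\la)-\frac{1}{4}\trchb f-\omb f+\frac{1}{4}\trch\fb-\frac{1}{2}\hchb\c f+\lot$, and taking $\div'$ yields $-\De'(\log\la)$ together with divergences of the linear-in-$(f,\fb)$ terms. Using the transformation rules $\la^{-1}\trch'=\trch+\div'f+\lot$ and $\la\,\trchb'=\trchb+\div'\fb+\lot$, these divergences are traded for $(\trch'-\trch)$ and $(\trchb'-\trchb)$, each trade introducing a cross term proportional to $\ovla\trch$ or $\ovla\trchb$. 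Adding the curvature contribution $\rho'-\rho=\fb\c\b-f\c\bb+\lot$ and the hat contribution $\frac{1}{2}(\hch'\c\hchb'-\hch\c\hchb)=\frac{1}{2}\hch\c(\nab'\hot\fb)+\frac{1}{2}(\nab'\hot f)\c\hchb+\lot$, where the latter is reabsorbed into the $(\trch'-\trch)$ and $(\trchb'-\trchb)$ coefficients via the Codazzi equations, the entire linear-in-$(f,\fb)$ correction coalesces into exactly $(\omb+\frac{1}{4}\trchb)(\trch'-\trch)-(\om+\frac{1}{4}\trch)(\trchb'-\trchb)$. Expanding $\log\la=\ovla-\frac{1}{2}\ovla^2+\ldots$ produces the leading $+\De'\ovla$, while the potential $V=-\frac{1}{2}\trch\trchb-\trch\omb-\trchb\om$ assembles from the cross terms above together with the quadratic correction in $\log\la$.

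The main obstacle is the bookkeeping for the mass aspect identity: one must carefully identify which contributions assemble into the clean operator $(\De'+V)\ovla$, which fit precisely into the coefficients $(\omb+\frac{1}{4}\trchb)$ and $-(\om+\frac{1}{4}\trch)$, and which should be relegated to the nonlinear error term $r^{-2}\dko((f,\fb,\ovla)\c\dko(f,\fb,\ovla)+(f,\fb,\ovla)\c r\Gab)$. The cleanest way to accomplish this is to exploit the Codazzi equations in both frames to trade $\div'\hch'$ against $\nab'\trch'$ plus lower order connection and curvature pieces, and to reorganize using the $\nab_4$-transport equation for $\mu$, so that the only surviving combinations of structure coefficients are the natural ones $\omb+\frac{1}{4}\trchb$ and $\om+\frac{1}{4}\trch$ dictated by that transport equation.
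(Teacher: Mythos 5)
The paper gives no proof of this proposition at all---it simply cites Proposition 3.3 and Lemma 4.3 of \cite{KS:Kerr1}---and your outline reproduces the strategy of those cited proofs: direct multilinear expansion of the Riemann components in the primed frame for the curvature formulas, and combination of the transformation rules for $\ze$, $\rho$ and $\hch\c\hchb$ with the Codazzi equations for the mass aspect identity. One small point to watch: the coefficient $-\om$ in $-\left(\om+\frac{1}{4}\trch\right)(\trchb'-\trchb)$ does not appear explicitly in the $\ze'$ formula displayed in Proposition \ref{Riccitransfer}, but is hidden in its $O(r^{-2})(f,\fb)$ remainder as an $\om\,\fb$ contribution, so your bookkeeping must work from the exact transformation law rather than from the schematic version quoted in this paper.
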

\begin{proof}
    See Proposition 3.3 and Lemma 4.3 in \cite{KS:Kerr1}.
\end{proof}
\section{Asymptotic behavior near future null infinity}\label{seclimit}
In this section, we study the asymptotic behavior of the geometric quantities near the future null infinity. Throughout Section \ref{seclimit}, we focus on the outgoing PG structure of a $\KSAF$ spacetime $(\M,\g)$.
\subsection{Limits of geometric quantities}
\begin{lem}\label{lhopital}
Let $h$ be a horizontal tensor field defined in $(\M,\g)$ and let $p>1$ be a constant. Assume that we have
    \begin{align*}
        \lim_{C_u,r\to\infty}h=0,\qquad \lim_{C_u,r\to\infty}r^p\nab_4(h)=h_{\infty},
    \end{align*}
where $h_\infty$ is a function depending on $u$ and $\th^a$. Then, we have
    \begin{align*}
        \lim_{r\to\infty}r^{p-1}h=\frac{h_\infty}{1-p}.
    \end{align*}
\end{lem}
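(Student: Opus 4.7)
The strategy is to reduce to a one-dimensional integration along the null generators of the outgoing cones $C_u$ and then apply a standard L'H\^opital / improper-integral argument. Since $e_4$ is a null geodesic vectorfield with $e_4(r) = 1$ in the outgoing PG structure, the integral curves of $e_4$ are parametrized by $r$ on each fixed $C_u$. My plan is to trivialize the tensorial structure along these generators by propagating a horizontal frame $(e_1,e_2)$ by $\nabla_4$, so that on components $h_A$ the covariant derivative $\nabla_4 h$ acts simply as $\partial_r h_A$; this converts the hypothesis into a scalar statement to which L'H\^opital applies directly.

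First I would fix $u$ and the angular parameters $\theta^a$ and choose such a Fermi-propagated horizontal frame along the generator. Writing $h = h_A \Theta^A$ in this frame, the identity
\[
h_A(r) = h_A(r_2) - \int_r^{r_2} (\nabla_4 h)_A(r')\, dr'
\]
is valid, and letting $r_2 \to \infty$ and using the hypothesis $h \to 0$ along $C_u$ gives $h_A(r) = -\int_r^\infty (\nabla_4 h)_A(r')\, dr'$, provided the improper integral converges; this convergence is guaranteed because $r^p \nabla_4 h$ admits a finite limit and $p > 1$.

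Next, given $\varepsilon > 0$, fix $R$ large enough so that $|(r')^p (\nabla_4 h)_A(r') - (h_\infty)_A| < \varepsilon$ for all $r' \geq R$. For $r \geq R$, substituting $(\nabla_4 h)_A = (h_\infty)_A (r')^{-p} + O(\varepsilon (r')^{-p})$ into the integral and using
\[
\int_r^\infty (r')^{-p}\, dr' = \frac{r^{1-p}}{p-1},
\]
yields
\[
h_A(r) = -\frac{(h_\infty)_A}{p-1}\, r^{1-p} + O(\varepsilon)\, \frac{r^{1-p}}{p-1}.
\]
Multiplying by $r^{p-1}$ and letting $\varepsilon \to 0$ gives $r^{p-1} h_A(r) \to \frac{(h_\infty)_A}{1-p}$, which reassembles into the tensorial statement of the lemma.

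The only delicate point is the componentwise reduction: one must verify that propagating the horizontal frame by $\nabla_4$ is compatible with the outgoing PG structure and that the resulting components are the ones in which the two hypotheses and the target conclusion are intrinsically expressed. This is immediate because the statements of the lemma depend only on the norms of horizontal tensors along a single generator, so any choice of $\nabla_4$-parallel frame suffices; no analogue of the transport of the basis across different $u$ is needed. Aside from this bookkeeping, the argument is the standard improper-integral evaluation, and I do not expect any genuine obstacle.
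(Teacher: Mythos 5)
Your proposal is correct and is essentially the paper's argument in rigorous form: the paper simply invokes L'H\^opital's rule along the generators, writing $\lim r^{p-1}h=\lim \frac{\nab_4 h}{e_4(r^{1-p})}=\frac{h_\infty}{1-p}$ using $e_4(r)=1$, while you supply the standard justification of that step via the integral representation $h=-\int_r^\infty \nab_4 h$ and an $\varepsilon$-estimate. Your remark on reducing the tensorial statement to components in a $\nab_4$-parallel horizontal frame is a sound way to make the componentwise reduction precise, though the paper does not bother to spell it out.
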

\begin{proof}
    We have, consistent with  L'h\^opital's rule,
    \begin{align*}
    \lim_{C_u,r\to\infty}r^{p-1}h=\lim_{C_u,r\to\infty}\frac{h}{r^{1-p}}=\lim_{C_u,r\to\infty}\frac{\nab_4h}{e_4(r^{1-p})}=\lim_{C_u,r\to\infty}\frac{e_4(h)}{(1-p)(r^{-p})}=\frac{h_\infty}{1-p}.
    \end{align*}
    This concludes the proof of Lemma \ref{lhopital}.
\end{proof}
\begin{prp}\label{limitexist}
For each fixed $u$, the following limits exist on $\II^+$: 
\begin{align*}
(\Jscr,\Jscr_\pm)&:=\lim_{C_u,r\to\infty}r(\Jk,\Jk_\pm),\\
(\Xscr,\Xbscr)&:=\lim_{C_u,r\to\infty}r^2\left(\trchc,\trchb+\frac{2}{r}\right),\\
(\The,\Thb)&:=\lim_{C_u,r\to\infty}(r^2\hch,r\hchb),\\
\left(\aXscr,\aXbscr\right)&:=\lim_{C_u,r\to\infty}r^2\left(\atrch,\atrchb\right),\\
(\Zscr,\Hscr,\Ybscr)&:=\lim_{C_u,r\to\infty}r(r\ze,\eta,\xib),\\
(\Mscr,\Mbscr,\Pscr,\dual\Pscr)&:=\lim_{C_u,r\to\infty}r^3(\mu,\mub,\rho,\rhod),\\
\Bbscr&:=\lim_{C_u,r\to\infty}r^2\bb,\\
\Abscr&:=\lim_{C_u,r\to\infty}r\aa,\\
\Wbscrone&:=\lim_{C_u,r\to\infty}r\omb,\\
\Wbscr&:=\lim_{C_u,r\to\infty}r(r\omb-\Wbscrone).
\end{align*}
Moreover, we have the following identities:
\begin{align}
\begin{split}\label{limitidentities}
\Mscr&=-\divo\Zscr-\Pscr+\frac{1}{2}\The\c\Thb,\\
\Mbscr&=\divo\Zscr-\Pscr+\frac{1}{2}\The\c\Thb,\\
\Wbscrone&=0,\\
\Wbscr&=-\frac{\Pscr}{2}-\Hscr\c\Zscr,\\
\Xbscr&=-\Xscr+2\divo\Zscr-2\Pscr+\The\c\Thb,\\
\aXbscr&=-\aXscr+2\curlo\Zscr-2\dual\Pscr+\The\wedge\Thb,\\
\divo\The&=\frac{1}{2}\nabo\Xscr+\Zscr-\dual\nabo\,\aXscr-\aXscr\,\dual\Hscr,\\
\divo\Thb&=\Bbscr.
\end{split}
\end{align}
\end{prp}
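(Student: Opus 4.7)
The plan is to combine four ingredients: the pointwise decay estimates of Proposition \ref{GagGabdecay}, the $\nab_4$ transport equations of Propositions \ref{mainequationsM4} and \ref{prop:e_4J}, the non-$\nab_4$ constraint equations of Proposition \ref{prop-nullstrandBianchi:complex:outgoing}, and the L'Hopital-type Lemma \ref{lhopital}. The existence of each limit will be obtained by integrating, along outgoing null geodesics, a transport equation of the form $\nab_4(r^N W) = O(r^{-1-\dec})$; the identities in \eqref{limitidentities} will then follow by passing to the limit either in the algebraic definitions of $\mu, \mub$ or in the relevant null structure and Codazzi equations.

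For the existence of the limits I would proceed as follows. For $\Jscr$ and $\Jscr_\pm$ the exact conservation laws $\nab_4(q\Jk)=0$ and $\nab_4(q\Jk_\pm)=0$ from \eqref{dfJJk} give constants of motion along outgoing null geodesics, and $q=r+ia\cos\th$ yields the limits immediately. For each of $\trXc$, $\Xh$, $\Zc$, $\Hc$, $\trXbc$, $\Xbh$, $\Xib$ and the curvature components $A, B, \Pc, \Bb, \Ab$, Proposition \ref{mainequationsM4} supplies a transport equation of the form $\nab_4 W + \tfrac{k}{q}W = \text{source}$ with source controlled by Proposition \ref{GagGabdecay} as $O(r^{-3-\dec})$ or better; multiplying by $q^k$ converts this into $\nab_4(q^k W) = q^k\cdot(\text{source})$, which at the weights stipulated in the proposition is integrable in $r$. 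Taking real and imaginary parts of the resulting complex limits (e.g.\ $\lim r^2\trXc = \Xscr - i\aXscr$, $\lim r^3\Pc = \Pscr + i\dual\Pscr$) then separates the individual scalar limits. The only quantity requiring iterated use of Lemma \ref{lhopital} is $\omb$: from $\nab_4\omb = \rho + (2\eta+\ze)\c\ze = O(r^{-3})$, a first application with $p=2$ gives $\Wbscrone = \lim r\omb = 0$; a second application with $p=3$, using $\lim r^3\bigl(\rho + (2\eta+\ze)\c\ze\bigr) = \Pscr + 2\Hscr\c\Zscr$ (the $\ze\c\ze$ term vanishing because $r\ze\to 0$), yields $\Wbscr = -\Pscr/2 - \Hscr\c\Zscr$.

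For the identities in \eqref{limitidentities}, those for $\Mscr, \Mbscr$ follow directly by multiplying the definitions $\mu = -\div\ze - \rho + \tfrac{1}{2}\hch\c\hchb$ and $\mub = \div\ze - \rho + \tfrac{1}{2}\hch\c\hchb$ by $r^3$ and using that $r$ is constant on PG spheres, so $\lim r^3\div\ze = \lim r\,\div(r^2\ze) = \divo\Zscr$. Combining these with the Gauss-type relation $\mu + \mub = 2K + \tfrac{1}{2}\trch\trchb$ (plus its non-integrable corrections) and taking an $r^3$-weighted limit, the divergent $O(r)$ contributions coming from $2r^3K$ and $\tfrac{1}{2}r^3\trch\trchb$ cancel against the background Schwarzschild values $\trch = 2/r + \cdots$, $\trchb = -2/r + \cdots$, leaving a finite relation that reads $\Xbscr + \Xscr = 2\Mbscr$; substituting the expression for $\Mbscr$ then reproduces identity \eqref{limitidentities} for $\Xbscr$, and the imaginary part gives the one for $\aXbscr$. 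Finally, $\divo\The$ and $\divo\Thb = \Bbscr$ come from the $r^2$-weighted limits of the complex Codazzi equations for $\tfrac{1}{2}\ov\DD\c\Xh$ and $\tfrac{1}{2}\ov\DD\c\Xbh$ in Proposition \ref{prop-nullstrandBianchi:complex:outgoing}, again separating real and imaginary parts to isolate the trace, anti-trace, symmetric and antisymmetric information.

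The main obstacle is the non-integrability of the horizontal structure, which forces systematic work in the complex formalism and couples symmetric and antisymmetric parts at every step, so that $\atrch, \atrchb, \dual\Pscr$ appear alongside $\trch, \trchb, \Pscr$ in every identity. A secondary subtlety specific to $\Xbscr$ is that the limit is taken of $\trchb + 2/r$ rather than of the perturbative quantity $\trchbc$: one must verify via explicit Taylor expansion of the Kerr PG structure at null infinity that $\trchb_{Kerr} + 2/r = O(r^{-2})$ matches the subtraction, so that the Kerr background contribution combines with the perturbative piece to produce a finite limit consistent with the transport analysis above.
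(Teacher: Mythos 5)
Most of your proposal tracks the paper's argument: the limits are obtained exactly as you describe, by integrating the $\nab_4$ transport equations of Proposition \ref{mainequationsM4} in the weighted form $\nab_4(q^kW)=O(r^{-1-\dec})$ and invoking Proposition \ref{GagGabdecay}; the identities for $\Mscr,\Mbscr$ come from the definitions of $\mu,\mub$; $\Wbscrone=0$ and $\Wbscr=-\Pscr/2-\Hscr\c\Zscr$ come from $\nab_4\omb=\rho+(2\eta+\ze)\c\ze$ plus Lemma \ref{lhopital} exactly as you say; and $\divo\The$, $\divo\Thb=\Bbscr$ come from the $r^3$-weighted Codazzi equations. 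Your handling of $\omb$ and of the $\ze\c\ze$ term is correct.

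The genuine gap is in your derivation of the identities for $\Xbscr$ and $\aXbscr$. The relation $\mu+\mub=2K+\tfrac12\trch\trchb+\tfrac12\atrch\atrchb$ is not an independent equation: it is an algebraic rearrangement of the definition of the horizontal Gauss curvature ${}^{(h)}K=-\tfrac14\trch\trchb-\tfrac14\atrch\atrchb+\tfrac12\hch\c\hchb-\rho$. Extracting $\Xscr+\Xbscr=2\Mbscr$ from its $r^3$-weighted limit requires the \emph{subleading} coefficient $K_3$ in $K=r^{-2}+K_3r^{-3}+o(r^{-3})$, and the only source for $K_3$ in this framework is that same definition evaluated on the expansions of $\trch$, $\trchb$ --- i.e.\ precisely the quantity $\Xbscr$ you are trying to determine. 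The argument is circular unless you compute $K_3$ from an independent expansion of the induced metric, which you have not set up. Moreover, this relation is a single real scalar identity, so "taking the imaginary part" does not produce the $\aXbscr$ identity; for that you would need the $\curl\ze$ equation $\curl\ze=-\tfrac12\hch\wedge\hchb+\tfrac14(\trch\atrchb-\trchb\atrch)+\om\atrchb-\omb\atrch+\dual\rho$, which you do not invoke. The paper avoids both problems at once: it applies Lemma \ref{lhopital} with $p=2$ to $h=q\tr\Xb+2$, using the full (non-linearized) transport equation $\nab_4\tr\Xb+\tfrac1q\tr\Xb=-\tfrac12\trXc\tr\Xb-\DD\c\ov Z+Z\c\ov Z+2\ov P-\tfrac12\Xh\c\ov{\Xbh}$, so that $\Xbscr-i\aXbscr=-\lim r^2\nab_4(q\tr\Xb)$ delivers the existence of the limit and both identities simultaneously as real and imaginary parts of one genuinely complex equation; this also dispenses with the separate Kerr-background verification you flag in your last paragraph.
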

\begin{rk}
    Notice that we have from Proposition \ref{GagGabdecay} that $r\omb=O(\ep_0)$. However, we can prove that $\Wbscrone=0$, see \eqref{Wbscronevanish} below.
\end{rk}
\begin{proof}[Proof of Theorem \ref{limitexist}]
The existence of $\Jscr$ and $\Jscr_\pm$ follow directly from
\begin{align*}
    \nab_4(q\Jk)=0,\qquad \nab_4(q\Jk_\pm)=0,
\end{align*}
and the fact that $q=r+ia\cos\th$.\\ \\
Next, we have from Proposition \ref{mainequationsM4} and the definitions of $\Gag$ and $\Gab$:
\begin{align}
    \begin{split}\label{nab4eqimportant}
        \nab_4(q^2\trXc)&=r^2\Gag\c\Gag,\\
        \nab_4(|q|^2\Xh)&=O(r^2)A+r^2\Gag\c\Gag,\\
        \nab_4(q^2\Zc)&=O(r^2)B+O(1)\Gag+r^2\Gag\c\Gag,\\
        \nab_4(\qb\Hc)&=O(r)B+\Gag+r\Gag\c\Gab,\\
        \nab_4(q\Xhb)&=\dko\Gag+r\Gag\c\Gab,\\
        \nab_4(q^3P)&=O(r^2)\dko B+r^3\Gab\c(A,B),\\
        \nab_4(q^2\Bb)&=O(1)\dko\Gag+r\Gag\c\Gab,\\
        \nab_4(\qb\Ab)&=O(1)\dko\Bb+r^{-2}\Gab+r\Gag\c\Gab,
    \end{split}
\end{align}
where $\Gag$ and $\Gab$ are defined in Definition \ref{dfGagGab}. Thus, we deduce from Proposition \ref{GagGabdecay}
\begin{align*}
\nab_4\left(q^2\trXc,\,|q|^2\Xh,\,\qb H,\,q\Xhb,q^2\Bb,\qb\Ab\right)&=O\left(\frac{\ep_0}{r^2}\right),\\
\nab_4\left(q^3P,q^2Z\right)&=O\left(\frac{\ep_0}{r^{\frac{3}{2}+\dec}}\right).
\end{align*}
The integrability of the right hand sides implies that the limit quantities $\Xscr$, $\aXscr$, $\The$, $\Thb$, $\Zscr$, $\Pscr$, $\Qscr$, $\Hscr$, $\Bbscr$, $\Abscr$, $\Mscr$ and $\Mbscr$\footnote{The existences of $\Mscr$ and $\Mbscr$ follow from the definition of $\mu$ and $\mub$ and the existences of $\The$, $\Thb$, $\Pscr$ and $\Zscr$.} are well defined.\\ \\
Next, since $\left|r^2\trXbc\right|\les\ep_0$,
\begin{align*}
    \lim_{C_u,r\to\infty}q\tr\Xb=-2.
\end{align*}
We also have from Proposition \ref{prop-nullstrandBianchi:complex:outgoing}
\begin{align}\label{nab4trXb}
\nab_4\tr\Xb +\frac{1}{q}\tr\Xb=-\frac{1}{2}\trXc \tr\Xb-\DD\c\ov{Z}+Z\c\ov{Z}+2\ov{P}-\frac{1}{2}\Xh\c\ov{\Xbh},
\end{align}
which implies
\begin{align*}
    \nab_4(q\tr\Xb)=-\frac{q}{2}\trXc\tr\Xb-q\DD\c\ov{Z}+qZ\c\ov{Z}+2q\ov{P}-\frac{q}{2}\Xh\c\ov{\Xbh}.
\end{align*}
    Multiplying it by $r^2$ and taking $r\to\infty$, we obtain
    \begin{align}
    \begin{split}\label{r2nab4trXb}
        \lim_{C_u,r\to\infty}r^2\nab_4(q\tr\Xb)&=(\Xscr-i\aXscr)-(\nabo+i\nabo)\c(\Zscr-i\dual\Zscr)+2(\Pscr-i\dual\Pscr)\\
        &-\frac{1}{2}(\The+i\The)\c(\Thb-i\Thb).
    \end{split}
    \end{align}
    Applying Lemma \ref{lhopital} with $p=2$ and $h=q\tr\Xb+2$, we deduce the existence of the following limit:
    \begin{align*}
        \Xbscr-i\aXbscr:=\lim_{C_u,r\to\infty} r^2\left(\tr\Xb+\frac{2}{q}\right)=-\lim_{C_u,r\to\infty}r^2\nab_4(q\tr\Xb).
    \end{align*}
    Moreover, we have from \eqref{r2nab4trXb}
    \begin{align*}
        \Xbscr&=-\Xscr+2\divo\Zscr-2\Pscr+\The\c\Thb,\\
        \aXbscr&=-\aXscr+2\curlo\Zscr-2\dual\Pscr+\The\wedge\Thb.
    \end{align*}
    Next, we have from Proposition \ref{prop-nullstrandBianchi:complex:outgoing}
    \begin{align*}
        \nab_4\omb=\rho+2\eta\c\ze+\ze\c\ze.
    \end{align*}
    Thus, we obtain
    \begin{align}\label{r3nab4omb}
        \lim_{C_u,r\to\infty}r^3\nab_4\omb=\lim_{C_u,r\to\infty}r^3\rho+2r\eta\c r^2\ze+r^{-1}r^2\ze\c r^2\ze=\Pscr+2\Hscr\c\Zscr.
    \end{align}
    Hence, we have from Lemma \ref{lhopital}:
    \begin{align*}
        \Wbscr:=\lim_{C_u,r\to\infty}r^2\omb=-\frac{1}{2}\lim_{C_u,r\to\infty}r^3\nab_4\omb=-\frac{\Pscr}{2}-\Hscr\c\Zscr.
    \end{align*}
    Note that the existence of $\Wbscr$ immediately implies that
    \begin{align}\label{Wbscronevanish}
        \Wbscrone:=\lim_{C_u,r\to\infty}r\omb=0.
    \end{align}
    Finally, we recall from Proposition \ref{mainequationsM4}
    \begin{align*}
        \nab_4\Xib+\frac{1}{q}\Xib=r^{-1}\dko\omb+r^{-2}\Gab+\Gab\c(\ombc,\Gag).
    \end{align*}
    Hence, we obtain
    \begin{align*}
        r^2\nab_4(q\Xib)=O(\ep_0),
    \end{align*}
    which implies the existence of the following limit:
    \begin{align*}
        \Ybscr:=\lim_{C_u,r\to\infty}q\Xib.
    \end{align*}
It remains to prove the last two identities in \eqref{limitidentities}. To this end, we have from Proposition \ref{prop-nullstr}
\begin{align*}
\div\hch +\ze\c\hch &= \frac{1}{2}\nab\trch+\frac{1}{2}\trch\,\ze -\frac{1}{2}\dual\nab\atrch-\frac{1}{2}\atrch\dual\ze -\atrch\dual\eta-\b,\\
\div\hchb-\ze\c\hchb &= \frac{1}{2}\nab\trchb-\frac{1}{2}\trchb\,\ze -\frac{1}{2}\dual\nab\atrchb+\atrchb\dual\ze-\atrch\dual\xib +\bb.
\end{align*}
Taking $r\to\infty$ and combining with the existence of the limits, we infer
\begin{align*}
    \divo\The&=\frac{1}{2}\nabo\Xscr+\Zscr-\dual\nabo\,\aXscr-\aXscr\,\dual\Hscr,\\
    \divo\Thb&=\Bbscr.
\end{align*}
This concludes the proof of Proposition \ref{limitexist}.
\end{proof}
\subsection{Limits of pointwise physical quantities}
\begin{prp}\label{limitM}
    For each fixed $u$, the following limit exists:
    \begin{align*}
    \Mk:=\lim_{C_u,r\to\infty}\frac{r^3}{2}(\mu+\mub),
    \end{align*}
    Moreover, the following identity holds:
    \begin{align*}
        \Mk=\frac{\Mscr+\Mbscr}{2}=-\Pscr +\frac{1}{2}\The\c\Thb.
    \end{align*}
    We call $\Mk$ the \emph{Bondi mass aspect function}. 
\end{prp}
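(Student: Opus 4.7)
The plan is to observe that the combination $\mu+\mub$ has a particularly simple structure that eliminates the most delicate term. Recalling from Section \ref{subsection:horizstructures-intro} the definitions
\[
\mu=-\div\ze-\rho+\frac{1}{2}\hch\c\hchb,\qquad\mub=\div\ze-\rho+\frac{1}{2}\hch\c\hchb,
\]
the $\div\ze$ contributions cancel, so that
\[
\frac{r^3}{2}(\mu+\mub)=-r^{3}\rho+\frac{1}{2}(r^{2}\hch)\c(r\hchb).
\]
At this point the problem reduces to passing to the limit in each factor separately.

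The second step is to invoke Proposition \ref{limitexist}, which already establishes the existence of the limits $\Pscr=\lim_{C_u,r\to\infty}r^{3}\rho$ (more precisely the real part of $r^{3}P$), $\The=\lim_{C_u,r\to\infty}r^{2}\hch$ and $\Thb=\lim_{C_u,r\to\infty}r\hchb$. The first limit uses that the decay rate of $r^3\rho$ is controlled through the transport equation for $q^{3}P$ derived in Proposition \ref{mainequationsM4}, while the limits of $r^2\hch$ and $r\hchb$ follow from the transport equations for $|q|^2\Xh$ and $q\Xbh$ in the same proposition. Putting these limits together gives existence of $\Mk$ and the identity
\[
\Mk=-\Pscr+\frac{1}{2}\The\c\Thb.
\]

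Finally, the identification $\Mk=\tfrac{1}{2}(\Mscr+\Mbscr)$ is a direct consequence of the two identities
\[
\Mscr=-\divo\Zscr-\Pscr+\frac{1}{2}\The\c\Thb,\qquad\Mbscr=\divo\Zscr-\Pscr+\frac{1}{2}\The\c\Thb
\]
recorded in \eqref{limitidentities}: averaging them eliminates the $\divo\Zscr$ term and reproduces the value of $\Mk$ computed above. No transport equation needs to be solved at this stage, and there is no genuine obstacle; the only point worth checking carefully is that the remainder $r^3\Gag\c\Gag$-type errors in the pointwise estimates of Proposition \ref{GagGabdecay} decay fast enough (specifically $O(\ep_0 r^{-1-2\dec})$) to be discarded when passing to the limit. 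This is automatic in the $\KSAF$ setting, so the statement follows directly.
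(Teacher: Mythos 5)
Your argument is correct and follows essentially the same route as the paper: the proof there also reduces everything to Proposition \ref{limitexist} and the first two identities of \eqref{limitidentities}, whose average cancels the $\divo\Zscr$ terms exactly as you observe at the pointwise level with $\div\ze$. The closing remark about $r^3\Gag\c\Gag$ error terms is unnecessary (the limits of $r^3\rho$, $r^2\hch$, $r\hchb$ are already established in Proposition \ref{limitexist}, so no further decay needs to be checked), but it does not affect the validity of the proof.
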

\begin{rk}
Note that $\Mk$ is closely connected to the Bondi mass aspect function on the future null infinity in the Bondi-Sachs gauge, see Definition A.5 in \cite{CKWWY}. In Section \ref{ssecphy}, we use $\Mk$ relative to a LGCM foliation to define the null energy and linear momentum.
\end{rk}
\begin{proof}[Proof of Proposition \ref{limitM}]
    We have from Proposition \ref{limitexist}
    \begin{align*}
        \lim_{C_u,r\to\infty}\frac{r^3}{2}(\mu+\mub)=\frac{\Mscr+\Mbscr}{2}=-\Pscr+\frac{1}{2}\The\c\Thb.
    \end{align*}
    where we used the first two equations of \eqref{limitidentities} in the last step. This concludes the proof of Proposition \ref{limitM}.
\end{proof}
We now consider the limit of $r^4\b$ on the future null infinity, closely connected to the \emph{angular momentum aspect} on the future null infinity in the Bondi-Sachs gauge, see Definition A.5 in \cite{CKWWY}. We have from Proposition \ref{mainequationsM4}
\begin{align}\label{nab4B}
    \nab_4(\qb^4 B)=\frac{\qb^4}{2}\DDb\c A+\frac{a\qb^3}{2}\ov{\Jk}\c A+r^4\Gag\c(B,A).
\end{align}
Due to the lack of integrability of $\qb^4(\DDb\c A)$, the limit $\BB:=\lim_{C_u,r\to\infty}r^4\b$ may not exist. To overcome this  difficulty, we proceed as in the proof of Theorem M4 in \cite{KS:main} by considering instead the corresponding equation for  $r^5(d_1\b)_{\ell=1}$, which eliminates the first term on the R.H.S. of \eqref{nab4B}, since it is supported on the $\ell\geq 2$ modes. We recall the following lemma.
\begin{lem}\label{Proposition:Identity.ell=1Div B}
We introduce the following renormalized quantities:\footnote{The terms involving $\Jk$ in \eqref{dfren} are added in order to eliminate the second term on the R.H.S. of \eqref{nab4B}.}
\begin{align}
\begin{split}\label{dfren}
[B]_{ren}&:=B-\frac{3a}{2}\ov{\Pc}\Jk-\frac{a}{4}\ov{\Jk}\c A,\\
[\ov{\DD}\c]_{ren}&:=\ov{\DD}\c-\frac{a}{2}\ov{\Jk}\c\nab_4-\frac{a}{2}\ov{\Jk}\c\nab_3.
\end{split}
\end{align}
Then, the following identities hold true
\begin{align}\label{eq:Proposition.Identity.ell=1Div B1}
\begin{split}
&\;\;\;\,\,\,\nab_4\left(\int_{S(r,u)}\frac{rJ^{(0)}}{\Si}[\ov{\DD}\c]_{ren}\left(r^4[B]_{ren}\right)\right) \\
&= O(1)\dk^{\leq 1}\Xh+O(r)\dk^{\leq 2}B+O(r^2)\dk^{\leq 1}\nab_3B+O(r)\dk^{\leq 2}\Pc +O(1)\dk^{\leq 1}\trXc\\
&+O(r^{2})\dk^{\leq 1}\nab_3A+O(r)\dk^{\leq 2}A+r^4\dk^{\leq 1}\big(\Gag\c(B,A)\big)+r^4\dk^{\leq 1}\big(\Gab\c\nab_3 A\big)
\\
&+r^2\dk^{\leq 2}\big(\Gag\c\Gag\big)+O\left(\frac{r^2\ep}{u^{\frac{1}{2}+\dec}}\right)\dk^{\leq 1}\Big(A, B, r^{-1}\Pc\Big),
\end{split}
\end{align}  
and
\begin{align}\label{eq:Proposition.Identity.ell=1Div B2}
\begin{split}
&\;\;\;\,\,\,\nab_4 \left(\int_{S(r,u)}\frac{rJ^{(\pm)}}{\Si}[\ov{\DD}\c]_{ren} \left( r^4[B]_{ren} \right)\right)\mp\frac{a}{r^2}\int_{S(r,u)} \frac{rJ^{(\mp)}}{\Si}[\ov{\DD}\c]_{ren}\left(r^4[B]_{ren}\right)\\
&=O(1)\dk^{\leq 1}\Xh+O(r)\dk^{\leq 2}B+O(r^2)\dk^{\leq 1}\nab_3B+O(r)\dk^{\leq 2}\Pc\\
&+O(1)\dk^{\leq 1}\trXc
+O(r^{2})\dk^{\leq 1}\nab_3A+O(r)\dk^{\leq 2}A+r^4\dk^{\leq 1}\big(\Gag\c(B,A)\big)\\
&+r^4\dk^{\leq 1}\big(\Gab\c\nab_3 A\big)
+r^2\dk^{\leq 2}\big(\Gag\c\Gag\big)
+O\left(\frac{r^2\ep}{u^{\frac{1}{2}+\dec}}\right)\dk^{\leq 1}\Big(A, B, r^{-1}\Pc\Big),
\end{split}
\end{align}
where we denoted
\begin{align*}
    \Si^2:=(r^2+a^2)^2-a^2(\sin\th)^2\De,\qquad\quad\De:=r^2-2mr+a^2.
\end{align*}
\end{lem}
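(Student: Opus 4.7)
\medskip

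\noindent\textbf{Proof proposal.} The strategy is to follow the scheme used in the proof of Theorem M4 of \cite{KS:main}, where the renormalizations in \eqref{dfren} are precisely the ones engineered to cancel the non-integrable $r^{-1}$ behavior produced by the first two terms on the right-hand side of \eqref{nab4B}. The projection onto $\ell=1$ modes via the weighted integrals $\int_{S(r,u)} r J^{(p)}/\Sigma$ then kills the principal pure-divergence terms, leaving only the controlled right-hand sides in \eqref{eq:Proposition.Identity.ell=1Div B1}--\eqref{eq:Proposition.Identity.ell=1Div B2}.

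First, I would start from the Bianchi equation for $B$ in Proposition \ref{mainequationsM4} and multiply it by $\qb^4$, producing
\begin{equation*}
\nab_4(\qb^4 B) = \tfrac{1}{2}\qb^4\,\ov{\DD}\c A + \tfrac{a\qb^3}{2}\ov{\Jk}\c A + r^4\,\Gag\c(B,A).
\end{equation*}
Using the complex Bianchi equation for $\Pc$ together with $\nab_4(q\Jk)=0$ from \eqref{dfJJk}, I would check that subtracting $\tfrac{3a}{2}\ov{\Pc}\Jk + \tfrac{a}{4}\ov{\Jk}\c A$ inside the renormalization $[B]_{ren}$ converts the problematic $\tfrac{a\qb^3}{2}\ov{\Jk}\c A$ piece into lower-order terms of the schematic form $O(r^2)\dkb^{\leq 1}\Pc + O(r)\dkb^{\leq 1}\Xh + r^4\Gag\c(B,A)$, all of which are compatible with the right-hand sides in \eqref{eq:Proposition.Identity.ell=1Div B1}--\eqref{eq:Proposition.Identity.ell=1Div B2}.

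Second, I would apply $[\ov{\DD}\c]_{ren}$ to $r^4[B]_{ren}$ and commute it past $\nab_4$. The bulk commutator comes from \eqref{eq:commutation-complexM6-1form:div}, giving $[\nab_4,\ov{\DD}\c] = -\tfrac{1}{2}\ov{\tr X}(\ov{\DD}\c + \ov{Z}\c) + r^{-1}\Gag\c\dkb^{\leq 1}$; the extra $-\tfrac{a}{2}\ov{\Jk}\c\nab_4$ and $-\tfrac{a}{2}\ov{\Jk}\c\nab_3$ pieces in $[\ov{\DD}\c]_{ren}$ are treated using the commutation formulae of Corollary \ref{commsch} and the transport equations for $q\Jk$ in Proposition \ref{prop:e_4J}, producing only admissible error terms of the types on the right-hand side. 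The point is that after these commutations the principal part of the right-hand side becomes $\tfrac{1}{2}\ov{\DD}\c(\qb^4 \ov{\DD}\c A)$ plus corrections; the iterated operator $\ov{\DD}\c\ov{\DD}\c$ on the symmetric traceless $2$-tensor $A$ reduces, via Proposition \ref{ddddprop}, to a Laplacian plus a Gauss-curvature term, hence its projection onto the $\ell=1$ modes vanishes up to absorbable error.

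Third, I would integrate the resulting identity over $S(r,u)$ against the weight $rJ^{(p)}/\Sigma$. The critical point here is the specific factor $\Sigma^{-1}$: when $\nab_4$ is pulled out of the integral it differentiates the volume element, and the $\Sigma^{-1}$ weight is tailored to make the $O(1)$ Kerr contributions integrate to a clean expression. For $p=0$ this directly yields \eqref{eq:Proposition.Identity.ell=1Div B1}. For $p=\pm$, the Kerr background couples $J^{(+)}$ and $J^{(-)}$ through the rotational part, and a direct computation of the $\nab_4$-derivative of $rJ^{(\pm)}/\Sigma$ on the background gives exactly the off-diagonal coefficient $\mp \tfrac{a}{r^2}$ appearing on the left-hand side of \eqref{eq:Proposition.Identity.ell=1Div B2}; all non-Kerr corrections are of order $\ep/u^{\frac{1}{2}+\dec}$ and fit into the last line on the right-hand side.

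The main obstacle is the bookkeeping: tracking every commutator error, every non-integrable remainder of $\qb^4\ov{\DD}\c A$ and $\qb^3\ov{\Jk}\c A$, and every contribution from $\Sigma^{-1}$, and verifying that each such term is either absorbed into an $r^4\Gag\c(B,A)$ or $r^4\Gab\c\nab_3 A$ structure, or produces one of the allowed $\dkb^{\leq 2}$-type weighted terms, without any uncontrolled residue. In particular, one must be careful that the substitution $\qb\to r$ in weights, the decomposition $q\Jk$ vs.\ $\Jk$, and the terms involving $\nab_3(q\Jk)\in \Gab$ all combine consistently so that the $\mp a/r^2$ mode-mixing coefficient emerges without extra pollution.
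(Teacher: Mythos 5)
The paper does not actually prove this lemma: its ``proof'' is a citation to Proposition 6.43 of \cite{KS:main}, and your outline is a faithful reconstruction of the strategy of that cited proof and of the discussion immediately preceding the lemma in this paper (renormalize $B$ and $\ov{\DD}\c$ to kill the two bad terms of \eqref{nab4B}, commute, project on the $\ell=1$ modes). At the level of architecture your proposal is the intended argument.

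Two of your specific mechanisms are imprecise, however. First, the reason the principal term $\tfrac12\ov{\DD}\c(\qb^4\,\ov{\DD}\c A)$ disappears under the $\ell=1$ projection is not Proposition \ref{ddddprop}: the relevant fact is that a double divergence of a symmetric traceless $2$-tensor is supported on $\ell\ge 2$, which here is implemented by integrating by parts twice against $\Jp$ and invoking Proposition \ref{jpprop}, i.e.\ $d_2^*d_1^*\Jp=r^{-2}O(\ep)$; Proposition \ref{ddddprop} concerns the compositions $d_1^*d_1$, $d_2^*d_2$, etc., and does not by itself give the vanishing you assert. Second, your attribution of the off-diagonal coefficient $\mp a/r^2$ to ``the $\nab_4$-derivative of $rJ^{(\pm)}/\Sigma$ on the background'' cannot be right as stated: the paper's conventions impose $e_4(\Jp)=0$ (see the proof of Proposition \ref{prop:e_4J}), and $e_4(r/\Si)$ is a scalar that does not mix the $+$ and $-$ modes. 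The mixing has to enter through the $\nab_3$ ingredients of the construction --- the $-\tfrac a2\ov{\Jk}\c\nab_3$ piece of $[\ov{\DD}\c]_{ren}$ together with the nontrivial Kerr value of $e_3(J^{(\pm)})$ (note that $\Ga_{b,3}$ contains $\widecheck{e_3(J^{(\pm)})}$ but $e_3(J^{(0)})$ unlinearized, precisely because $e_3$ rotates $J^{(\pm)}$ into $J^{(\mp)}$ at order $a/r^2$ in Kerr). Neither point invalidates your plan, but both would need to be repaired before the sketch could be turned into a proof.
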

\begin{proof}
See Proposition 6.43 in \cite{KS:main}.
\end{proof}
\begin{prp}\label{limitB}
For each fixed $u$, the following limits exist on $\II^+$:
\begin{align*}
(\Bk,\Bkd):=\lim_{C_u,r\to\infty}r^5(\div\b,\curl\b)_{\ell=1}.
\end{align*}
\end{prp}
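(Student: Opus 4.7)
The plan is to deduce existence of $(\Bk, \Bkd)$ from the renormalized $e_4$-evolution equations in Lemma \ref{Proposition:Identity.ell=1Div B}. Define, for $p \in \{0,+,-\}$,
$$\Psi^{(p)}(u,r) := \int_{S(u,r)}\frac{rJ^{(p)}}{\Si}[\ov\DD\c]_{ren}\left(r^4 [B]_{ren}\right).$$
The renormalizations \eqref{dfren} are designed precisely to cancel the non-integrable sources $\tfrac12\qb^4\ov\DD\c A$ and $\tfrac{a\qb^3}{2}\ov\Jk\c A$ appearing in \eqref{nab4B}, which would otherwise obstruct any direct limit argument for $r^4 B$. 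After this cancellation, Lemma \ref{Proposition:Identity.ell=1Div B} expresses $\nab_4 \Psi^{(p)}$, plus an $O(r^{-2})$ coupling term for $p=\pm$, in terms of quantities that are integrable in $r$ along each outgoing cone $C_u$ at fixed $u$.

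First, I would verify integrability of the right-hand sides of \eqref{eq:Proposition.Identity.ell=1Div B1}--\eqref{eq:Proposition.Identity.ell=1Div B2}. Each term is controlled by Proposition \ref{GagGabdecay}: the linear curvature contributions $O(r)\dk^{\leq 2}(A,B)$, $O(r^2)\dk^{\leq 1}\nab_3(A,B)$, $O(r)\dk^{\leq 2}\Pc$, $O(1)\dk^{\leq 1}(\trXc,\Xh)$ and the quadratic contributions $r^2\dk^{\leq 2}(\Gag\c\Gag)$, $r^4\dk^{\leq 1}(\Gag\c(B,A))$, $r^4\dk^{\leq 1}(\Gab\c\nab_3 A)$ all decay at least as $O(\ep_0 r^{-3/2-\dec})$ for each fixed $u$, which is integrable at infinity. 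Consequently $\lim_{r\to\infty}\Psi^{(0)}(u,r)$ follows by direct integration of \eqref{eq:Proposition.Identity.ell=1Div B1}. For $p=\pm$, \eqref{eq:Proposition.Identity.ell=1Div B2} additionally features the coupling $\mp(a/r^2)\Psi^{(\mp)}$; since the crude bound $|B|=O(\ep_0 r^{-7/2-\dec})$ yields $|\Psi^{(\pm)}|=O(\ep_0)$ uniformly along $C_u$, this coupling is $O(r^{-2})$ and integrable, and the closed $2\times 2$ system admits limits at $r=\infty$ by direct integration.

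Second, I would identify these limits with the target quantities. Expanding \eqref{dfren}, one writes
$$[\ov\DD\c]_{ren}(r^4[B]_{ren}) = \ov\DD\c(r^4 B) + \EE^{(p)}(u,r),$$
where $\EE^{(p)}$ collects the correction terms $a\ov\Jk\c\nab_4(r^4 B)$, $a\ov\Jk\c\nab_3(r^4 B)$, $r^4\ov\DD\c(\ov\Pc\Jk)$ and $r^4\ov\DD\c(\ov\Jk\c A)$. Each of these, integrated against $rJ^{(p)}/\Si$, either vanishes in the limit (as with $r^4\ov\Jk\c A$, which is already $o(1)$ pointwise by Proposition \ref{GagGabdecay}) or admits a well-defined limit by Proposition \ref{limitexist}, using the existence of $\Jscr, \Pscr, \The, \Thb$ and the decay of $\nab_3 B, \nab_4 B$. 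Combining with $\Si\sim r^2$ and $J^{(p)}\to J^{(p)}_\infty$ on the limiting sphere, together with Definition \ref{dfcomplexD} to relate $\ov\DD\c(r^4 B)$ to $r^5(\sdiv\b,\curl\b)$ modulo lower-order corrections involving $\atrch, \atrchb, \hch$ (all with limits by Proposition \ref{limitexist}), the limit of $\Psi^{(p)}$ becomes a fixed invertible linear combination of $\lim_{r\to\infty} r^5(\sdiv\b)_{\ell=1,p}$ and $\lim_{r\to\infty} r^5(\curl\b)_{\ell=1,p}$. Inverting produces $(\Bk, \Bkd)$.

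The main obstacle is that a direct limit argument for $r^4 B$ fails because of the non-integrable source $\ov\DD\c A$ in \eqref{nab4B}, which is why the renormalization is essential. The renormalization succeeds only after projection onto $\ell=1$ modes: the $\ov\DD\c A$ contribution is supported on $\ell\geq 2$ and drops out against $J^{(p)}$ modulo perturbative errors coming from the failure of $\Jp$ to be exact spherical harmonics. These mode-projection errors are controlled by the $\ep$-almost round structure of the spheres $S(u,r)$ and Proposition \ref{jpprop}. Tracking them carefully, together with the auxiliary limits supplied by Proposition \ref{limitexist}, is the technical but routine core of the identification step.
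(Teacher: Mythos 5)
Your proposal is correct and follows essentially the same route as the paper: both start from the renormalized evolution equations of Lemma \ref{Proposition:Identity.ell=1Div B}, use Proposition \ref{GagGabdecay} to show the right-hand side is $O(r^{-3/2-\dec})$ and hence integrable, deduce the existence of the limit of $\left(r[\DDb\c]_{ren}(r^4[B]_{ren})\right)_{\ell=1}$, and then identify it with $r^5(\div\b,\curl\b)_{\ell=1}$ via Proposition \ref{limitexist} and the relation $\DDb\c B=2\div\b+2i\curl\b$. (Your only slight overstatement is the a priori bound $|\Psi^{(\pm)}|=O(\ep_0)$, which is really $O(\ep_0 r^{1/2-\dec})$; the coupling term $\frac{a}{r^2}\Psi^{(\mp)}$ remains integrable either way, so nothing breaks.)
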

\begin{proof}
We have from Lemma \ref{Proposition:Identity.ell=1Div B} that for $p=0,+,-$,
\begin{align*}
&\;\;\;\,\,\,\nab_4\left(\int_{S(u,r)}\frac{r\Jp}{\Si}[\ov{\DD}\c]_{ren}\left( r^4[B]_{ren} \right)\right)+O(r^{-2})\left(\int_{S(u,r)}\frac{r\Jp}{\Si}[\ov{\DD}\c]_{ren}\left( r^4[B]_{ren} \right)\right)\\
&=\dk^{\leq 2}\Gag+O(r^2)\dk^{\leq 1}\nab_3(A,B)+r^4\dk^{\leq 1}\big(\Gag\c(A,B)\big)+r^4\dk^{\leq 1}\big(\Gab\c\nab_3 A\big)
\\
&+r^2\dk^{\leq 2}\big(\Gag\c\Gag\big)+O\left(\frac{r^2\ep}{u^{\frac{1}{2}+\dec}}\right)\dk^{\leq 1}\Big(A, B, r^{-1}\Pc\Big).
\end{align*}
Combining with Proposition \ref{GagGabdecay}, we infer
\begin{align*}
    \nab_4\left(\int_{S(u,r)}\frac{r\Jp}{\Si}[\ov{\DD}\c]_{ren}\left(r^4[B]_{ren} \right)\right)=O\left(r^{-{\frac{3}{2}-\dec}}\right).
\end{align*}
Hence, we deduce the existence of the following limit:\footnote{Notice that $\Si\sim r^2$ as $r\to\infty$.}
\begin{align*}
    \lim_{C_u,r\to\infty}\left(r[\DDb\c]_{ren}\left(r^4[B]_{ren}\right)\right)_{\ell=1}.
\end{align*}
Combining with Proposition \ref{limitexist}, we infer the existence of the following limit:
\begin{align*}
    \lim_{C_u,r\to\infty}r^5(\DDb\c B)_{\ell=1}.
\end{align*}
Noticing that we have from Definition \ref{dfcomplexD}
\begin{align*}
    \DDb\c B=2\div\b+2i\curl\b,
\end{align*}
this concludes the proof of Proposition \ref{limitB}.
\end{proof}
\subsection{Taylor expansions near future null infinity}
\begin{df}\label{dfHolder}
    Let $U$ be a horizontal tensorfield defined on a $\KSAF$ spacetime $(\M,\g)$. We denote
    \begin{align*}
        U=\OO_k(r^{-m}),
    \end{align*}
    if it satisfies for all $0\leq l\leq k$
    \begin{align*}
        |\nab_4^l U|\les r^{-m-l}.
    \end{align*}
\end{df}
We have the following theorem, which describes the asymptotic behaviors of all geometric quantities near the future null infinity $\II^+$.
\begin{thm}\label{expansionexist}
Let $(\M,\g)$ be a $\KSAF$ spacetime endowed with an outgoing PG foliation. Then, we have the following Taylor expansions:
\begin{align}
\begin{split}\label{GagTaylor}
\trch&=\frac{2}{r}+\frac{\Xscr}{r^2}+\OO_1\left({r^{-\frac{5}{2}-\dec}}\right),\\
\atrch&=\frac{\aXscr}{r^2}+\OO_1\left({r^{-\frac{5}{2}-\dec}}\right),\\
\ze&=\frac{\Zscr}{r^2}+\OO_1\left({r^{-\frac{5}{2}-\dec}}\right),\\
\hch&=\frac{\The}{r^2}+\OO_1\left({r^{-\frac{5}{2}-\dec}}\right),\\
(\mu,\mub)&=\frac{(\Mscr,\Mbscr)}{r^3}+\OO_1\left({r^{-\frac{7}{2}-\dec}}\right),\\
(\rho,\dual\rho)&=\frac{(\Pscr,\dual\Pscr)}{r^3}+\OO_1\left({r^{-\frac{7}{2}-\dec}}\right),
\end{split}
\end{align}
and
\begin{align}
\begin{split}\label{GabTaylor}
\trchb&=-\frac{2}{r}+\frac{\Xbscr}{r^2}+\OO_1\left({r^{-\frac{5}{2}-\dec}}\right),\\
\atrchb&=\frac{\aXbscr}{r^2}+\OO_1\left({r^{-\frac{5}{2}-\dec}}\right),\\
\omb&=\frac{\Wbscr}{r^2}+\OO_1\left({r^{-\frac{5}{2}-\dec}}\right),\\
\hchb&=\frac{\Thb}{r}+\frac{-\The+\frac{1}{2}\Xscr\Thb+\frac{1}{2}\dual\Thb\,\aXscr+\nabo\hot\Zscr}{r^2}+\OO_1\left(r^{-\frac{5}{2}-\dec}\right),\\
\eta&=\frac{\Hscr}{r}+\frac{\frac{1}{2}\Xscr\Hscr+\The\c\Hscr+\Zscr-\frac{1}{2}\aXscr\,\dual\Hscr}{r^2}+\OO_1\left(r^{-\frac{5}{2}-\dec}\right),\\
\xib&=\frac{\Ybscr}{r}+\frac{\frac{1}{2}\Xscr\Ybscr-2\nabo\Wbscr-\aXbscr\,\dual\Hscr+\The\c\Ybscr+\frac{1}{2}\aXscr\,\dual\Ybscr}{r^2}+\OO_1\left(r^{-\frac{5}{2}-\dec}\right),\\
\bb&=\frac{\Bbscr}{r^2}+\frac{\nabo\Pscr-\dual\nabo\,\dual\Pscr+\Xscr\Bbscr+\aXscr\,\dual\Bbscr}{r^3}+\OO_1\left(r^{-\frac{7}{2}-\dec}\right),\\
\aa&=\frac{\Abscr}{r}+\frac{\nabo\hot\Bbscr+\frac{1}{2}(\Xscr\Abscr+\aXscr\dual\Abscr)}{r^2}+\frac{\Abscr_3}{r^3}+\OO_1\left(r^{-\frac{7}{2}-\dec}\right),
\end{split}
\end{align}
where
\begin{align*}
    \Abscr_3&:=\frac{1}{2}\nabo\hot\left(\nabo\Pscr-\dual\nabo\,\dual\Pscr+\Xscr\Bbscr+\aXscr\,\dual\Bbscr\right)-\frac{5}{2}\Zscr\hot\Bbscr+\frac{3}{2}(\Pscr\Thb+\dual\Pscr\dual\Thb)\\
    &+\frac{\Xscr}{4}\left(\nabo\hot\Bbscr+\frac{1}{2}(\Xscr\Abscr+\aXscr\dual\Abscr)\right)+\frac{\aXscr}{4}\dual\left(\nabo\hot\Bbscr+\frac{1}{2}(\Xscr\Abscr+\aXscr\dual\Abscr)\right).
\end{align*}
\end{thm}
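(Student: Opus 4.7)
My strategy is to derive each expansion by integrating the $\nab_4$ transport equations from Propositions~\ref{prop-nullstrandBianchi:complex:outgoing} and \ref{mainequationsM4} backwards from $\II^+$, using the existence of the asymptotic quantities established in Proposition~\ref{limitexist}, the decay bounds of Proposition~\ref{GagGabdecay}, and Lemma~\ref{lhopital}. Schematically, whenever a linearized quantity satisfies $\nab_4(r^n \widecheck{Q}) = \mathsf{Src}$ with $\mathsf{Src}$ integrable in $r$ and $\lim_{C_u, r \to \infty} r^n \widecheck{Q}$ provided by Proposition~\ref{limitexist}, the Newton--Leibniz identity $r^n \widecheck{Q}(r) = \lim r^n \widecheck{Q} - \int_r^\infty \nab_4(r'^n \widecheck{Q})\, dr'$ yields the desired expansion, and control on the $\OO_1$ error comes by differentiating the transport equation once. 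The conversions using $q = r + ia\cos\th$ and $|q|^2 = r^2 + a^2\cos^2\th$ translate linearized quantities $\widecheck{Q}$ into absolute ones with $O(r^{-3})$ corrections which are absorbed into the stated remainders.

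\textbf{First batch: the $\Gag$--type quantities in \eqref{GagTaylor}.} For $\trch$, $\atrch$, $\ze$, $\hch$, $\rho$, $\rhod$, $\mu$, $\mub$ I would read off from Proposition~\ref{mainequationsM4} equations of the form $\nab_4(q^n \widecheck{Q}) = O(r^{n-2})(A,B) + r^n \Gag \cdot \Gag + \ldots$ as already computed in \eqref{nab4eqimportant}. With $|(A,B)| \les \ep_0 r^{-7/2-\dec}$ and $|\Gag| \les \ep_0 r^{-2} u^{-1/2-\dec}$ from Proposition~\ref{GagGabdecay}, each source is integrable in $r$ at rate $r^{-3/2-\dec}$ or better, so integration from $r$ to infinity produces the $\OO_1(r^{-5/2-\dec})$ remainder. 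The expansions of $\mu, \mub, \rho, \rhod$ can then be read off using the algebraic relations in \eqref{limitidentities}.

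\textbf{Second batch: $\Gab$--type quantities with explicit sub-leading term.} The cases $\trchb, \omb$ follow the same template and produce the leading $\Xbscr/r^2$ and $\Wbscr/r^2$ via the argument already used for \eqref{r2nab4trXb}--\eqref{r3nab4omb}. The more delicate cases are $\hchb, \eta, \xib$, where an explicit $r^{-2}$ coefficient must be captured beyond the leading $r^{-1}$ term. For $\hchb$, I would integrate the equation $\nab_4 \Xbh + \frac{1}{2}\tr X \,\Xbh = -\frac{1}{2} \DD \hot Z + \frac{1}{2} Z \hot Z - \frac{1}{2}\ov{\tr \Xb}\Xh$ from Proposition~\ref{prop-nullstrandBianchi:complex:outgoing}, plug in the already-proven leading expansions of $\tr X, \tr \Xb, Z, \Xh$, and then recognise the coefficient via the identity $\divo \The = \frac{1}{2}\nabo \Xscr + \Zscr - \dual \nabo\, \aXscr - \aXscr \,\dual \Hscr$ of \eqref{limitidentities}. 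The analyses for $\eta$ and $\xib$ are analogous and critically use $\Wbscrone = 0$ (also from \eqref{limitidentities}) to ensure the $r^{-2}$ coefficient of $\xib$ is finite, since otherwise the term $-2\nabo \Wbscrone$ would blow up.

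\textbf{Main obstacle: the three-term expansion of $\aa$.} The most involved piece is the expansion of $\aa$, which requires a matching two-term expansion of $\bb$. For $\bb$ I would integrate the equation $\nab_4 \Bb + \frac{2}{q}\Bb = -\DD \Pc + O(r^{-2})\Pc + O(r^{-3}) \Zc + O(r^{-4})\widecheck{\DD \cos \th} + r^{-1}\Gab \c \Gag$ from Proposition~\ref{mainequationsM4}, substituting the expansions of $\rho, \rhod$ from the first batch and using Lemma~\ref{lemma:commutation-complexM6} to commute $\DD$ past the $\nab_4$ integration; this identifies the $r^{-3}$ coefficient as $\nabo \Pscr - \dual \nabo\, \dual \Pscr + \Xscr \Bbscr + \aXscr \,\dual \Bbscr$. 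For $\aa$ the transport equation $\nab_4 \Ab + \frac{1}{\qb}\Ab = O(r^{-1})\dkb^{\leq 1}\Bb + O(r^{-3})\Xbh + \Gag \c \Gab$ is then integrated twice: the first integration uses the leading behaviour of $\bb$ to recover the $r^{-2}$ coefficient $\nabo \hot \Bbscr + \frac{1}{2}(\Xscr \Abscr + \aXscr \,\dual \Abscr)$, and a second integration, after inserting the sub-leading expansions of $\bb$, $\rho$, $\rhod$, $\Xh$, $Z$ and carefully tracking $[\nab_4, \DD \hot]$ commutator errors from Lemma~\ref{lemma:commutation-complexM6}, extracts the precise form of $\Abscr_3$. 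The technical hurdle is the verification that no logarithmic obstruction arises at the order $r^{-3}$ for $\aa$; this requires exact cancellation between the Kerr background contributions coming from $q^{-1}$--expansion and the sub-leading source terms, which is ensured by the structural form of the Bianchi equations.
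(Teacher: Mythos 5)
Your proposal is correct and follows essentially the same route as the paper's Appendix~\ref{secTaylor}: integrate the $\nab_4$ transport equations backwards from $\II^+$ using the limits of Proposition~\ref{limitexist} and the decay of Proposition~\ref{GagGabdecay}, with each sub-leading coefficient ($\Thb_2$, $\Hscr_2$, $\Ybscr_2$, $\Bbscr_3$, $\Abscr_3$) fixed precisely so that the integrand vanishes at the next order (the paper packages the final step as Lemma~\ref{remainder}, and works with the real equations of Propositions~\ref{prop-nullstr} and~\ref{prop:bianchi} rather than the complex/linearized ones, which is only a notational difference). Your worry about a logarithmic obstruction for $\aa$ is moot — the $r^{-1}$ term in the source is killed by the very definition of $\Abscr_3$ rather than by any structural cancellation — and your appeal to the identity for $\divo\The$ in the $\hchb$ step is unnecessary, but neither point affects the validity of the argument.
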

\begin{proof}
    The expansions follow easily from the transport equations in the $e_4$ direction of these quantities and the existence of the limits in Proposition \ref{limitexist}, see Appendix \ref{secTaylor}.
\end{proof}
\section{Conformal quantities and regularity up to \texorpdfstring{$\II^+$}{}}\label{secscr}
In this section, we deduce the regularity of the conformal metric up to the boundary. In Sections \ref{ssecg}--\ref{sseckerrreg}, we work in a given $\KSAF$ spacetime, while in Section \ref{ssecgeneralreg} we consider the same problems in other contexts.
\subsection{Asymptotic behavior of metric components}\label{ssecg}
\begin{df}\label{dfGgGb}
We divide the metric components $G$ into the following two groups $G:=G_\nab\cup G_3$:
\begin{align}
\begin{split}\label{GgGb}
    G_\nab&:=\Big\{q\DD u,\;q\DD\cos\th,\;q\DD\Jp,\;|q|^2\DDb\c\Jk,\;q^2\DD\hot\Jk,\;|q|^2\DDb\c\Jk_\pm,\;q^2\DD\hot\Jk_\pm\Big\},\\
    G_3&:=\Big\{e_3(r),\;e_3(u),\;e_3(\cos\th),\;e_3(\Jp),\;\nab_3\Jk,\;\nab_3\Jk_\pm\Big\}.
\end{split}
\end{align}
\end{df}
\begin{rk}\label{Gcbeh}
Definition \ref{dfGgGb} separates the metric components $G$ into two groups according to their transport equations in the direction $e_4$. Denoting $\Gc:=G-G_{Kerr}$ the linearized metric components, we have from Definition \ref{dfGagGab} that
    \begin{align*}
        \Gc=r\Ga_{b,2}\cup r\Ga_{b,3}\cup r\Ga_{b,4}.
    \end{align*}
\end{rk}
\begin{prop}\label{metricexpansion}
We have the following Taylor expansion for metric components:
\begin{align}
    G=G^{[0]}+\frac{G^{[1]}}{r}+\eerr_G.\label{Ggexp}
\end{align}
where we denoted
\begin{align*}
    G^{[0]}:=\lim_{C_u,r\to\infty}G,\qquad G^{[1]}:=-\lim_{C_u,r\to\infty}r^2\nab_4G,
\end{align*}
and $\eerr_G$ is a remainder term satisfying
\begin{align}\label{eerrGcondition}
    \lim_{C_u,r\to\infty}r^2\nab_4(\eerr_G)=0,\qquad (r^2\nab_4)^2(\eerr_G)=\OO_0\left(r^{\frac{1}{2}-\dec}\right).
\end{align}
\end{prop}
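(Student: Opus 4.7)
The expansion follows by integrating the transport equations for each element of $G = G_\nab \cup G_3$ in the $e_4$ direction and reading off the leading behavior from the Taylor expansions in Theorem \ref{expansionexist}.

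First, for each $X \in G$ I would derive the transport equation $\nab_4 X = F_X$. For the $G_\nab$-type entries this amounts to rewriting Propositions \ref{prop:e_4(xyz)} and \ref{prop:e_4J} with a solitary $\nab_4 X$ on the left (peeling off the $q$- and $|q|^2$-weights using $e_4(q) = 1$), while the $G_3$-type entries are already in this form in Proposition \ref{prop:e_4(xyz)}. Substituting the Taylor expansions of Theorem \ref{expansionexist} into the $Z$, $\Xh$, $\Gag$ and $\Jk$-factors on the right-hand side produces $F_X = F_X^{[1]}/r^2 + \OO_1(r^{-5/2-\dec})$, where $F_X^{[1]}$ is an explicit linear/quadratic combination of the limits $(\Jscr, \Xscr, \The, \Zscr, \ldots)$ of Proposition \ref{limitexist}. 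Integrability of $F_X$ in $r$ yields the existence of $G^{[0]} := \lim_{C_u, r\to\infty} G$, after which Lemma \ref{lhopital} applied to $h := G - G^{[0]}$ with $p = 2$ gives the existence of $G^{[1]} := -\lim r^2 \nab_4 G$ together with the identification $G^{[1]} = -F_X^{[1]}$.

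The functions $G^{[0]}$ and $G^{[1]}$ depend only on $(u, \th, \vphi)$ on $\II^+$. Extending them trivially in $r$ into $\M$, and using that the outgoing PG structure gives $e_4 = \pr_r$ in the coordinate chart (from $g_{rr} = 0$ together with $e_4(u) = e_4(\th) = 0$, up to controlled corrections that do not disturb the asymptotic cancellation), we obtain $\nab_4 G^{[0]} = \nab_4 G^{[1]} = 0$ after the extension. The product rule then yields
\[
r^2 \nab_4 \eerr_G = r^2 F_X + G^{[1]},
\]
and the identification $G^{[1]} = -F_X^{[1]} = -\lim r^2 F_X$ gives $\lim_{C_u, r\to\infty} r^2 \nab_4 \eerr_G = 0$.

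For the bound on $(r^2 \nab_4)^2 \eerr_G$, iterating yields
\[
(r^2 \nab_4)^2 \eerr_G = 2 r^3 F_X + r^4 \nab_4 F_X.
\]
Inserting the two-term expansion of $F_X$ and using the $\OO_1$ regularity of the remainder (which controls $\nab_4$ of the error at order $r^{-7/2-\dec}$), the leading contributions $2 r F_X^{[1]}$ and $-2 r F_X^{[1]}$ cancel, leaving the stated bound $\OO_0(r^{1/2-\dec})$. The main obstacle is the comprehensive bookkeeping of the transport equations for the eight types of quantities in $G_\nab \cup G_3$, with the explicit identification of each $F_X^{[1]}$ in terms of the limits of Proposition \ref{limitexist}; once this is carried out, the remainder of the argument is a direct computation using the two asymptotic expansions of Theorem \ref{expansionexist}.
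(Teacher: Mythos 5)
Your proposal is correct and takes essentially the same route as the paper: both integrate the $e_4$-transport equations of Propositions \ref{prop:e_4(xyz)} and \ref{prop:e_4J} to get the limits $G^{[0]}$, $G^{[1]}$, and both obtain the bound on $(r^2\nab_4)^2\eerr_G$ by applying $r^2\nab_4$ twice and feeding in the expansions of Theorem \ref{expansionexist} (your cancellation of $\pm 2rF_X^{[1]}$ is the same computation as the paper's direct estimate of $(r^2\nab_4)^2 G$, since $(r^2\nab_4)^2$ annihilates $G^{[0]}+G^{[1]}/r$). The only cosmetic difference is your extra invocation of Lemma \ref{lhopital} to identify $G^{[1]}$ with the coefficient of $1/r$, which the paper does not need because the statement only asserts the two conditions \eqref{eerrGcondition} on the remainder.
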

\begin{proof}
We have from Remark \ref{Gcbeh} and Proposition \ref{GagGabdecay}
\begin{align*}
    |G_\nab|\les 1,\qquad\quad|G_3|\les 1.
\end{align*}
Moreover, we have from Propositions \ref{prop:e_4(xyz)} and \ref{prop:e_4J}
\begin{align}\label{nab4Gg}
    r^2\nab_4(G_\nab)=r^2Z\c(\Jp,q\Jk,q\Jk_\pm)+r^2\Gag\c G_\nab.
\end{align}
The boundedness of the R.H.S. of \eqref{nab4Gg} implies the existence of the following limit:
\begin{align*}
    G_\nab^{[0]}:=\lim_{C_u,r\to\infty}G_\nab.
\end{align*}
Moreover, plugging it into \eqref{nab4Gg}, we deduce the existence of the following limit:
\begin{align*}
    G_{\nab}^{[1]}=-\lim_{C_u,r\to\infty}r^2\nab_4(G_\nab).
\end{align*}
Next, differentiating \eqref{nab4Gg} by $r^2\nab_4$ and applying Theorem \ref{expansionexist}, we obtain
\begin{align*}
(r^2\nab_4)^2(G_\nab)&=r^2\nab_4(r^2Z)\c(\Jp,q\Jk,q\Jk_\pm)+r^2\nab_4(r^2\Gag)\c G_\nab+r^2\Gag\c r^2\nab_4(G_\nab)\\
&=\OO_0\left(r^{\frac{1}{2}-\dec}\right)\c(\Jp,q\Jk,q\Jk_\pm,G_\nab)+\OO_0(1)\\
&=\OO_0\left(r^{\frac{1}{2}-\dec}\right).
\end{align*}
Similarly, we have from Propositions \ref{prop:e_4(xyz)} and \ref{prop:e_4J}
\begin{equation}\label{nab4Gb}
    r^2\nab_4(G_3)=r^2\omb+r\eta\c G_\nab+\left(r\Gag+O(r^{-1})\right)\c (\Jp,q\Jk,q\Jk_\pm,G_\nab)=\OO_0(1).
\end{equation}
Hence, the following limits exist:
\begin{align*}
    G_3^{[0]}=\lim_{C_u,r\to\infty}G_3,\qquad G_3^{[1]}=-\lim_{C_u,r\to\infty}r^2\nab_4(G_3).
\end{align*}
Differentiating \eqref{nab4Gb} by $r^2\nab_4$ and applying Theorem \ref{expansionexist} and \eqref{nab4Gg}, we infer
\begin{align*}
    (r^2\nab_4)^2(G_3)&=r^2\nab_4(r^2\omb)+r^2\nab_4(r\eta)\c G_\nab+\left(r\eta+r\Gag+O(r^{-1})\right)\c r^2\nab_4(G_\nab)+\OO_0(1)\\
    &=\OO_0\left(r^{\frac{1}{2}-\dec}\right)+O(1)\c r^2\nab_4(G_\nab)+\OO_0(1)\\
    &=\OO_0\left(r^{\frac{1}{2}-\dec}\right).
\end{align*}
Hence, we deduce that for $G=G_\nab\cup G_3$, the following limits exist:
\begin{align*}
    G^{[0]}=\lim_{C_u,r\to\infty}G,\qquad G^{[1]}=-\lim_{C_u,r\to\infty}r^2\nab_4G.
\end{align*}
Moreover, we have
\begin{align*}
    (r^2\nab_4)^2(G)=(r^2\nab_4)^2\left(G-G^{[0]}-\frac{G^{[1]}}{r}\right)=\OO_0\left(r^{\frac{1}{2}-\dec}\right).
\end{align*}
Taking the error term as follows:
\begin{align*}
    \eerr_G:=G-G^{[0]}-\frac{G^{[1]}}{r},
\end{align*}
This concludes the proof of Proposition \ref{metricexpansion} with the error term $\eerr_G=G-G^{[0]}-\frac{G^{[1]}}{r}$.
\end{proof}
\subsection{Conformal compactification of \texorpdfstring{$\M$}{}}
\begin{df}\label{compactifiedspacetimes}
Let $(\M,\g)$ be a $\KSAF$ spacetime. We define the conformal metric $\gt$ on $\M$ as follows:
    \begin{align}\label{gtappears}
        \gt:=\vr^2\g,\qquad\quad\vr:=r^{-1}.
    \end{align}
    Here, $\vr$ is called the \emph{boundary defining function} and we define the \emph{future null infinity} as follows:
    \begin{align*}
    \II^+:=\{\vr=0\}.
    \end{align*}
    We then denote
    \begin{align*}
        \Mt:=\Mext\cup\II^+,
    \end{align*}
    and extend the domain of the metric $\gt$ by continuity to $\II^+$. The spacetime $(\Mt,\gt)$ is called the conformal compactification of $(\M,\g)$.
\end{df}
We recall from Proposition \ref{Mext-metric} that $\M$ is covered by three coordinates charts $\{(u,r,\th,\vphi):\frac{\pi}{4}<\th<\frac{3\pi}{4}\}$, $\{(u,r,x^1,x^2): 0\leq \th<\frac{\pi}{3}\}$, and $\{(u,r,x^1,x^2): \frac{2\pi}{3}<\th\leq\pi\}$. Thus, we have the following lemma that describes the conformal metric $\gt$ on $\M$.
\begin{lem}\label{conformalgt}
The inverse of the conformal metric $\gt$ on $\M$ has the following expressions:
\begin{itemize}
\item In the $(u,\vr,\th,\vphi)$ coordinates system for $\frac{\pi}{4}<\th<\frac{3\pi}{4}$, we have\footnote{Recall that $\Gc$ is defined in Remark \ref{Gcbeh}.}
    \begin{align*}
    \gt^{-1}=\gt^{-1}_{a,m}+\begin{pmatrix}
    1 & 1 & 1& 1 \\
    1 & \vr^2 & \vr & \vr \\
    1 & \vr & 1 & 1\\
    1 & \vr & 1 & 1
\end{pmatrix}(\sin\th)^{-4}\Gc.
\end{align*}
\item In the $(u,\vr,x^1,x^2)$ coordinates system for $0\leq \th<\frac{\pi}{3}$ or $\frac{2\pi}{3}<\th\leq \pi$, we have
\begin{align*}
    \gt^{-1}=\gt^{-1}_{a,m}+\begin{pmatrix}
    1 & 1 & 1& 1 \\
    1 & \vr^2 & \vr & \vr \\
    1 & \vr & 1 & 1\\
    1 & \vr & 1 & 1
\end{pmatrix}\Gc.
\end{align*}
\end{itemize}
\end{lem}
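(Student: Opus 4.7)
\medskip\noindent
\textbf{Proof proposal for Lemma \ref{conformalgt}.} The plan is to combine three ingredients: (i) the explicit expansion of $\g^{-1}$ from Proposition~\ref{Mext-metric}; (ii) the conformal rescaling $\gt^{-1} = \vr^{-2}\g^{-1} = r^2\g^{-1}$, viewed as a $(2,0)$-tensor; and (iii) the coordinate change $r \mapsto \vr = r^{-1}$, which induces $\pr_r = -\vr^2\,\pr_\vr$ and hence a specific rescaling of the components indexed by the radial direction. The key algebraic input is that on the correction term, $\Gc = r\Ga_{b,2}\cup r\Ga_{b,3}\cup r\Ga_{b,4}$ by Remark~\ref{Gcbeh}, so that the coefficient matrix of Proposition~\ref{Mext-metric} is expressed relative to a single schematic factor $\Gc$.

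First I would record the transformation rule for $(2,0)$-tensor components when one passes from the chart $(u,r,\th,\vphi)$ (or $(u,r,x^1,x^2)$) to $(u,\vr,\th,\vphi)$ (or $(u,\vr,x^1,x^2)$). Because $\pr_r = -\vr^2\,\pr_\vr$, the components transform as
\begin{align*}
T^{\a\b}_{(\vr)} = T^{\a\b}_{(r)} \text{ for } \a,\b\neq\vr,\qquad T^{\vr\a}_{(\vr)} = -\vr^2 T^{r\a}_{(r)} \text{ for } \a\neq \vr,\qquad T^{\vr\vr}_{(\vr)} = \vr^4 T^{rr}_{(r)}.
\end{align*}
Applying this to $\gt^{-1} = r^2\g^{-1}$, each entry of Proposition~\ref{Mext-metric} is multiplied by $r^2$ and then by $1$, $-\vr^2 = -r^{-2}$ or $\vr^4 = r^{-4}$ depending on how many $r$-indices appear. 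Entry by entry this yields, for the correction matrix, the scalings
$r^{-2}\mapsto 1$ for the block disjoint from $r$, $1\mapsto 1$ for the $ur$-entry (becoming $u\vr$), $1\mapsto \vr^2$ for the $rr$-entry (becoming $\vr\vr$), and $r^{-1}\mapsto \vr$ for the mixed $r\th$, $r\vphi$ entries. These precisely match the matrix of exponents asserted in the lemma, and the $(\sin\th)^{-4}$ (respectively the absence of any such factor in the north/south charts) persists verbatim since it is not affected by the radial rescaling.

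Second I would check that the background Kerr part of the conformal inverse, which is nothing but $\gt^{-1}_{a,m}$, comes out correctly under the same procedure. The easiest way is to apply the rule above to the Kerr inverse metric written in outgoing coordinates and verify that the result agrees with the definition $\gt_{a,m}:=\vr^2 \g_{a,m}$ read in the $\vr$-chart; this is a direct computation using the explicit form of $\g_{a,m}$ in Proposition~\ref{Mext-metric}. With this identification, the stated expressions follow by adding the two contributions.

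The main (mild) obstacle is purely bookkeeping: one must keep track of the interplay between the factor $r^2$ from conformal rescaling and the factors $-\vr^2$, $\vr^4$ from the $r\to\vr$ change of coordinates, and verify that the $\Gc$-coefficient (which has an extra $r$ compared with $\Ga_b$) absorbs the remaining powers uniformly. Once the table of exponents above is written down, the verification of each matrix entry is mechanical, and the only substantive point — absent from the statement but implicit in the proof — is that the schematic notation $\Gc$ is stable under this bookkeeping, i.e. the $O$-sense coefficients in front of $\Gc$ can be chosen to coincide with the matrix displayed in the lemma.
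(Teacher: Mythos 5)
Your proposal is correct and follows essentially the same route as the paper: apply the contravariant transformation rule under $r\mapsto\vr=r^{-1}$ (equivalently $\pr_\vr=-r^2\pr_r$) to the expansion of $\g^{-1}$ from Proposition \ref{Mext-metric}, then multiply by $r^2$ from the conformal rescaling and track the resulting powers of $\vr$ entry by entry. Your additional remark about verifying the Kerr background part is a point the paper leaves implicit, but the argument is the same.
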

\begin{proof}
    We have from Proposition \ref{Mext-metric} and Remark \ref{Gcbeh} that, in the $(u,r,\th,\vphi)$ coordinates system for $\frac{\pi}{4}<\th<\frac{3\pi}{4}$, $\g^{-1}$ is given by:
\begin{align*}
    \g^{-1}=\g^{-1}_{a,m}+\begin{pmatrix}
    r^{-2} & 1 & r^{-2}& r^{-2} \\
    1 & 1 & r^{-1} & r^{-1} \\
    r^{-2} & r^{-1} & r^{-2} & r^{-2}\\
    r^{-2}& r^{-1} & r^{-2} & r^{-2}
\end{pmatrix}(\sin\th)^{-4}\Gc.
\end{align*}
Noticing that $\pr_\vr=-r^2\pr_r$, we have
\begin{align*}
    \g^{rr}&=r^4\g^{\vr\vr},\qquad \g^{ur}=-r^2\g^{\vr u},\qquad \g^{r\th}=-r^2\g^{\vr\th},\qquad \g^{r\vphi}=-r^2\g^{\vr\vphi}.
\end{align*}
Thus, we obtain that in the $(u,\vr,\th,\vphi)$ coordinates system for $\frac{\pi}{4}<\th<\frac{3\pi}{4}$
\begin{align*}
    \g^{-1}=\g^{-1}_{a,m}+\begin{pmatrix}
    r^{-2} & r^{-2} & r^{-2}& r^{-2} \\
    r^{-2} & r^{-4} & r^{-3} & r^{-3} \\
    r^{-2} & r^{-3} & r^{-2} & r^{-2}\\
    r^{-2}& r^{-3} & r^{-2} & r^{-2}
\end{pmatrix}(\sin\th)^{-4}\Gc,
\end{align*}
which implies from \eqref{gtappears}
\begin{align*}
    \gt^{-1}=r^2\g^{-1}=\gt^{-1}_{a,m}+\begin{pmatrix}
    1 & 1 & 1& 1 \\
    1 & \vr^2 & \vr & \vr \\
    1 & \vr & 1 & 1\\
    1 & \vr & 1 & 1
\end{pmatrix}(\sin\th)^{-4}\Gc,
\end{align*}
where $\gt_{a,m}$ denotes the conformal Kerr metric.
Similarly, we also have in the $(u,\vr,x^1,x^2)$ coordinates system for $0\leq \th<\frac{\pi}{3}$ or $\frac{2\pi}{3}<\th\leq \pi$,
\begin{align*}
    \gt^{-1}=\gt^{-1}_{a,m}+\begin{pmatrix}
    1 & 1 & 1& 1 \\
    1 & \vr^2 & \vr & \vr \\
    1 & \vr & 1 & 1\\
    1 & \vr & 1 & 1
\end{pmatrix}\Gc.
\end{align*}
This concludes the proof of Lemma \ref{conformalgt}.
\end{proof}
\subsection{Regularity up to \texorpdfstring{$\II^+$}{} in perturbations of Kerr}\label{sseckerrreg}
The following fundamental lemma plays an essential role in describing the regularity up to the future null infinity.
\begin{lem}\label{Holder}
Let $(\M,\g)$ be a $\KSAF$ spacetime. Then, the linearized metric components $\Gc$ have the following regularity up to the boundary:
\begin{align*}
   \Gc\in C^{1,\frac{1}{2}+\dec}(\Mt).
\end{align*}
\end{lem}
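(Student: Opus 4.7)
The plan is to lift the Taylor expansion of Proposition \ref{metricexpansion} to Hölder regularity in the compactified chart by integrating the bound on the second $e_4$--derivative of the remainder.

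\emph{Step 1: Expansion for $\Gc$.} Apply Proposition \ref{metricexpansion} to $G$ and, separately, to its Kerr value $G_{a,m}$ (which is itself associated with an outgoing PG structure), and subtract. This yields
\begin{equation*}
\Gc = \Gc^{[0]}(u,\th,\vphi) + \vr\,\Gc^{[1]}(u,\th,\vphi) + \eerr_\Gc,
\end{equation*}
where $\vr = 1/r$ and the remainder $\eerr_\Gc$ inherits the bounds
\begin{equation*}
\lim_{C_u,r\to\infty} r^2\nab_4\,\eerr_\Gc = 0, \qquad \left|(r^2\nab_4)^2\eerr_\Gc\right| \les r^{1/2-\dec}.
\end{equation*}
The leading part $\Gc^{[0]}+\vr\,\Gc^{[1]}$ is affine in $\vr$ and hence smooth across $\II^+$, so the claim reduces to $\eerr_\Gc\in C^{1,1/2+\dec}(\Mt)$.

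\emph{Step 2: From frame to coordinate derivatives.} In each chart of Proposition \ref{Mext-metric}, the outgoing PG conditions $e_4(r)=1$ and $e_4(u)=0$, together with the explicit form of the Kerr principal null vector, give
\begin{equation*}
e_4 = -\vr^2\,\pr_\vr + e_4^A\,\pr_A, \qquad e_4^A = O(\vr^2),
\end{equation*}
where $A$ ranges over the angular coordinates; the $O(\vr^2)$ bound follows from the Kerr value $e_4^\vphi = O(r^{-2})$ plus an $\Gc$--correction of the same order. Hence $r^2\nab_4 = -\pr_\vr + O(1)\pr_A$. After commuting with one angular derivative controlled by Proposition \ref{GagGabdecay} at order $\dk^{\leq 2}$, the bounds of Step 1 become
\begin{equation*}
\left|\pr_\vr^2\eerr_\Gc\right| \les \vr^{-1/2+\dec}, \qquad \lim_{\vr\to 0}\pr_\vr\eerr_\Gc = 0.
\end{equation*}

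\emph{Step 3: Integration.} For $0<\vr_1<\vr_2$ small,
\begin{equation*}
\left|\pr_\vr\eerr_\Gc(\vr_2)-\pr_\vr\eerr_\Gc(\vr_1)\right| \leq \int_{\vr_1}^{\vr_2}\left|\pr_\vr^2\eerr_\Gc\right|\,d\vr \les \vr_2^{1/2+\dec}-\vr_1^{1/2+\dec} \les \left|\vr_2-\vr_1\right|^{1/2+\dec},
\end{equation*}
by subadditivity of $x\mapsto x^{1/2+\dec}$ on $[0,\infty)$. Sending $\vr_1\to 0$ also gives $\left|\pr_\vr\eerr_\Gc(\vr)\right|\les \vr^{1/2+\dec}$, showing that $\pr_\vr\eerr_\Gc$ extends continuously to $\II^+$ and is Hölder of exponent $1/2+\dec$ in the $\vr$--direction. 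The tangential derivatives $\pr_u\eerr_\Gc$ and $\pr_A\eerr_\Gc$ are treated in the same way: first commute the transport equations with $\nab_3$ and $\nab$ via Lemma \ref{lemma:comm-gen} and Corollary \ref{commsch}, then rerun Steps 1--2 using Proposition \ref{GagGabdecay} at order $k\leq 2$ to supply the analogous $r^{1/2-\dec}$ bounds for the commuted remainders, and finally integrate as above.

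The main obstacle is the bookkeeping in Step 2: one must check that replacing $(r^2\nab_4)^2\eerr_\Gc$ and its $\dkb$--commutators with the pure coordinate quantities $\pr_\vr^2\eerr_\Gc$ and its mixed analogues does not worsen the $r^{1/2-\dec}$ weight, despite the $O(1)\pr_A$ corrections in $r^2\nab_4$. This reduces to a schematic verification that each additional angular or $\nab_3$ derivative on the right-hand side of the expansion in Proposition \ref{metricexpansion} is absorbed by the $\dk^{\leq 2}$--decay of Proposition \ref{GagGabdecay}; once this propagation is in hand, Step 3 is an elementary integration.
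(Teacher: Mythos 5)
Your proposal is correct and follows essentially the same route as the paper: the expansion of Proposition \ref{metricexpansion} reduces the claim to the remainder term, whose $\vr$--derivative is shown to be H\"older of exponent $\frac12+\dec$ by integrating the $\vr^{-\frac12+\dec}$ bound on the second derivative and using subadditivity of $x\mapsto x^{\frac12+\dec}$. The only difference is one of explicitness: the paper simply treats $\eerr_G$ "as a function of $\vr$" and subtracts the smooth Kerr value at the end, whereas you spell out the conversion $r^2\nab_4=-\pr_\vr+O(1)\pr_A$ and the tangential/mixed derivatives — details the paper leaves implicit but which your bookkeeping handles correctly.
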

\begin{proof}
We have from Proposition \ref{metricexpansion}
\begin{align}\label{Ggbexpansion}
    G=G^{[0]}+G^{[1]}\vr+\eerr_G.
\end{align}
Considering $\eerr_G$ as a function of $\vr$, we have from \eqref{eerrGcondition}
\begin{align*}
    \lim_{\vr\to 0}\pr_\vr\eerr_G=0,\qquad\quad\pr_\vr^2\eerr_G=\OO_0(\vr^{-\frac{1}{2}+\dec}).
\end{align*}
Hence, we have from fundamental calculus that
\begin{align*}
    |\pr_\vr\eerr_G(\vr_1)-\pr_\vr\eerr_G(\vr_2)|\les \left|\int_{\vr_1}^{\vr_2}\pr_\vr^2\eerr_G(\vr')d\vr'\right|\les\left|\int_{\vr_1}^{\vr_2}{\vr'}^{-\frac{1}{2}+\dec}d\vr'\right|\les\left|\rho_2^{\frac{1}{2}+\dec}-\rho_1^{\frac{1}{2}+\dec}\right|.
\end{align*}
Notice that for any $\ga\in(0,1)$ and $x,y>0$, the following inequality holds:
\begin{align*}
    x^\ga+y^\ga\geq (x+y)^\ga.
\end{align*}
Thus, we obtain
\begin{align*}
    |\pr_\vr\eerr_G(\vr_1)-\pr_\vr\eerr_G(\vr_2)|\les |\vr_2-\vr_1|^{\frac{1}{2}+\dec}.
\end{align*}
Recalling the definition of H\"older space, we obtain
\begin{align*}
    \pr_\vr\eerr_G\in C^{0,\frac{1}{2}+\dec}(\Mt),
\end{align*}
which implies
\begin{align*}
    \eerr_G\in C^{1,\frac{1}{2}+\dec}(\Mt).
\end{align*}
Combining with \eqref{Ggbexpansion}, we deduce
\begin{align*}
    G\in C^{1,\frac{1}{2}+\dec}(\Mt).
\end{align*}
Recalling that the Kerr metric components $G_{Kerr}$ are smooth function of $\vr$, we infer that
\begin{align*}
    \Gc=G-G_{Kerr}\in C^{1,\frac{1}{2}+\dec}(\Mt).
\end{align*}
This concludes the proof of Lemma \ref{Holder}.
\end{proof}
We now clarify the regularity of $\gt$ in the manifold with boundary $\Mt$, which includes regularity in the interior region $\M$, along and toward the boundary $\II^+$.
\begin{thm}\label{regularityKerr}
The conformal metric $\gt$ defined on the compactified spacetime $\Mt$ satisfies:
    \begin{align*}
        \gt\in C^{\ks}(\M)\cap C^{1,\frac{1}{2}+\dec}(\Mt).
    \end{align*}
\end{thm}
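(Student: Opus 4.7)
The statement splits naturally into two claims --- interior $C^{\ks}$ regularity and H\"older regularity up to the boundary --- which I plan to prove by parallel arguments using material already established in the preceding sections.

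For the interior regularity, I would invoke directly the decay estimates \eqref{mainest} of a $\KSAF$ spacetime, which control $\dk^{\leq\ks}$ of the Ricci coefficients and curvature components. Via the $e_4$--transport equations for the metric components (Propositions \ref{prop:e_4(xyz)} and \ref{prop:e_4J}) together with Proposition \ref{GagGabdecay}, one obtains $\ks$ bounded derivatives of $G$, hence of $\g$. Since $\vr = 1/r$ is smooth on the open set $\M$, the product $\gt = \vr^2 \g$ inherits $C^{\ks}$ regularity on $\M$.

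For H\"older regularity up to the boundary, the core input is Lemma \ref{conformalgt}, which in each of the three coordinate charts writes $\gt^{-1}$ as the smooth Kerr conformal inverse $\gt^{-1}_{a,m}$ plus a matrix coefficient whose entries are polynomials in $\vr$ (with a smooth factor $(\sin\th)^{-4}$ in the mid--$\th$ chart and bounded smooth factors in the polar charts) multiplied against the linearized metric components $\Gc$. Applying Lemma \ref{Holder}, which provides $\Gc \in C^{1,\frac{1}{2}+\dec}(\Mt)$, and observing that $C^{1,\frac{1}{2}+\dec}$ is preserved under multiplication by smooth functions of the coordinates, I would conclude chart by chart that $\gt^{-1} \in C^{1,\frac{1}{2}+\dec}(\Mt)$, and hence globally. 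To pass from $\gt^{-1}$ to $\gt$, I would note that $\gt^{-1}_{a,m}$ remains a non-degenerate Lorentzian form up to and across $\{\vr=0\}$ by the classical Penrose compactification of Kerr, while the correction piece is uniformly of size $O(\ep_0)$ in $C^0(\Mt)$ by \eqref{mainest}. Consequently $\gt^{-1}$ is non-degenerate in a neighborhood of $\II^+$, and since matrix inversion is a smooth map on the open set of invertible matrices and $C^{1,\frac{1}{2}+\dec}$ is stable under smooth composition, it follows that $\gt \in C^{1,\frac{1}{2}+\dec}(\Mt)$.

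The main obstacle is already behind us in Lemma \ref{Holder}, whose proof used the $\OO_0(r^{\frac{1}{2}-\dec})$ bound on $(r^2\nab_4)^2 \eerr_G$ together with the fundamental theorem of calculus to convert integrated decay into H\"older regularity with exponent $\frac{1}{2}+\dec$. Once that H\"older estimate on $\Gc$ is available, what remains is essentially algebraic bookkeeping --- verifying chart by chart that the coefficient structure of Lemma \ref{conformalgt} preserves the H\"older exponent, and confirming uniform non-degeneracy of $\gt^{-1}$ near $\II^+$. No new analytic input is required beyond that already present in Sections \ref{ssecg} and \ref{sseckerrreg}.
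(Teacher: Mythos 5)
Your proposal is correct and follows essentially the same route as the paper: interior regularity from Proposition \ref{GagGabdecay}, and boundary regularity from Lemma \ref{conformalgt} combined with Lemma \ref{Holder}, chart by chart. The only (immaterial) difference is in the final inversion step, where the paper expands $\gt=\big(\gt_{a,m}^{-1}+\err(\gt^{-1})\big)^{-1}$ as a Neumann series and uses that $C^{1,\frac{1}{2}+\dec}(\Mt)$ is a Banach algebra together with the $O(\ep_0)$ smallness of the error, whereas you invoke non-degeneracy of $\gt^{-1}$ near $\II^+$ and the stability of $C^{1,\frac{1}{2}+\dec}$ under composition with the smooth matrix-inversion map; both are valid.
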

\begin{proof}
We have from Lemma \ref{conformalgt} that in the coordinate chart $(u,\vr,\th,\vphi)$:
\begin{align}\label{gt-1}
    \gt^{-1}=(\gt_{a,m})^{-1}+\begin{pmatrix}
    1 & 1 & 1& 1 \\
    1 & \vr^2 & \vr & \vr \\
    1 & \vr & 1 & 1\\
    1 & \vr & 1 & 1
\end{pmatrix}(\sin\th)^{-4}\Gc,
\end{align}
    where $\gt_{a,m}$ denotes the conformal Kerr metric, which is a smooth function of $\vr$. Combining with Lemma \ref{Holder}, we deduce in the coordinates chart $(u,\vr,\th,\vphi)$, for $\frac{\pi}{4}<\th<\frac{3\pi}{4}$,
    \begin{align}\label{sinGc}
    \begin{pmatrix}
    1 & 1 & 1& 1 \\
    1 & \vr^2 & \vr & \vr \\
    1 & \vr & 1 & 1\\
    1 & \vr & 1 & 1
\end{pmatrix}(\sin\th)^{-4}\Gc\in C^{1,\frac{1}{2}+\dec}(\Mt).
    \end{align}
    Similarly, we also have in the coordinate chart $(u,\vr,x^1,x^2)$, for $0\leq\th<\frac{\pi}{3}$ or $\frac{2\pi}{3}<\th\leq \pi$,
    \begin{align}\label{cosGc}
    \begin{pmatrix}
    1 & 1 & 1& 1 \\
    1 & \vr^2 & \vr & \vr \\
    1 & \vr & 1 & 1\\
    1 & \vr & 1 & 1
\end{pmatrix}\Gc\in C^{1,\frac{1}{2}+\dec}(\Mt).
    \end{align}
    Combining \eqref{gt-1}--\eqref{cosGc}, we obtain 
    \begin{align*}
        \gt^{-1}=\gt^{-1}_{a,m}+\err\left(\gt^{-1}\right),\qquad\quad\err\left(\gt^{-1}\right)\in C^{1,\frac{1}{2}+\dec}(\Mt).
    \end{align*}
    Hence, we have\footnote{As a consequence of Proposition \ref{GagGabdecay} and Remark \ref{Gcbeh}, the matrices in \eqref{sinGc} and \eqref{cosGc} have size $\ep_0$.}
    \begin{align*}
        \gt&=\left(\gt^{-1}_{a,m}+\err\left(\gt^{-1}\right)\right)^{-1}\\
        &=\Big(\gt_{a,m}^{-1}\c\left(I+\gt_{a,m}\c\err\left(\gt^{-1}\right)\right)\Big)^{-1}\\
        &=\left(I+\gt_{a,m}\c\err\left(\gt^{-1}\right)\right)^{-1}\c\gt_{a,m}\\
        &=\sum_{k=0}^\infty(-1)^k\left(\gt_{a,m}\c\err\left(\gt^{-1}\right)\right)^k\c \gt_{a,m}\\
        &\in C^{1,\frac{1}{2}+\dec}(\Mt),
    \end{align*}
    where we used the fact that $\gt_{a,m}$ is smooth and that $C^{1,\frac{1}{2}+\dec}(\Mt)$ forms a Banach algebra. Moreover, as an immediate consequence of Proposition \ref{GagGabdecay}, we have
    \begin{align*}
        \gt\in C^{\ks}(\MM).
    \end{align*}
    This concludes the proof of Theorem \ref{regularityKerr}.
\end{proof}
\subsection{Regularity up to \texorpdfstring{$\II^+$}{} for other spacetimes}\label{ssecgeneralreg}
We now study the regularity of the conformal metric up to the future null infinity $\II^+$ in different contexts of stability problems in general relativity. To this end, we introduce the following definition.
\begin{df}\label{def6.3}
Let $s\in(1,\infty)\setminus\{2\mathbb{Z}+1\}$ and $q\in\mathbb{N}$. A perturbation of a particular solution $(\MM_{(0)} ,\g_{(0)})$\footnote{For example: Minkowski spacetime, Schwarzschild black hole or Kerr black hole.} of Einstein vacuum equations is called $(s,q)$--asymptotic if, the conformal compactified spacetime $(\Mt,\gt)$ in Definition \ref{compactifiedspacetimes}
of the perturbed spacetime $(\M,\g)$ satisfies:
\begin{align}
\begin{split}\label{sqass}
    \left|\pr_\vr^{\leq\left\lfloor\frac{s-1}{2}\right\rfloor}\dk^{\leq q}(\gt-\gt_{(0)})_{\mu\nu}\right|&\les 1,\qquad\qquad\quad\,\,\mu,\nu=u,\vr,\th,\vphi,\\
    \left|\pr_\vr^{\left\lfloor\frac{s+1}{2}\right\rfloor}\dk^{\leq q}(\gt-\gt_{(0)})_{\mu\nu}\right|&\les r^{1-\left\{\frac{s-1}{2}\right\}},\qquad \mu,\nu=u,\vr,\th,\vphi,
\end{split}
\end{align}
where $\lfloor x\rfloor$ denotes the largest integer less than or equal to $x$ and $\{x\}:=x-\lfloor x\rfloor$.
\end{df}
\begin{rk}
The decay parameter $s$ and the regularity parameter $q$ introduced here are related to the $(s,q)$--asymptotically flat initial data introduced in \cite{Shen22,Shen23,Shen24}.
\end{rk}
The following theorem explains the regularity of the conformal metric $\gt$ up to the future null infinity in stability problems.
\begin{thm}\label{nullinfinity}
Let $s\in(1,\infty)\setminus\{2\mathbb{Z}+1\}$ and $q\in\mathbb{N}$ satisfying
\begin{align}\label{qsre}
    q>\frac{s-1}{2}.
\end{align}
Let $(\M,\g)$ be a general $(s,q)$--asymptotic perturbation of a particular solution $(\M_{(0)},\g_{(0)})$ of Einstein vacuum equations. Then, we have
\begin{align*}
\gt\in C^q(\MM)\cap C^{\left\lfloor\frac{s-1}{2}\right\rfloor,\left\{\frac{s-1}{2}\right\}}(\Mt),
\end{align*}
where $(\Mt,\gt)$ denotes the conformal compactification of $(\M,\g)$, defined as in Definition \ref{compactifiedspacetimes}.
\end{thm}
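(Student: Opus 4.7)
The plan is to adapt the two-scale strategy from Lemma \ref{Holder} and Theorem \ref{regularityKerr}, with the exponent $\frac{1}{2}+\dec$ replaced by the fractional part $\gamma:=\left\{\frac{s-1}{2}\right\}\in(0,1)$ (non-zero since $s\notin\{2\mathbb{Z}+1\}$) and the differentiability order $1$ replaced by $k:=\left\lfloor\frac{s-1}{2}\right\rfloor$. Because $s$ avoids odd integers, $k+1=\left\lfloor\frac{s+1}{2}\right\rfloor$, and $r^{1-\gamma}=\vr^{\gamma-1}$, so the hypothesis \eqref{sqass} reads
\[
\left|\pr_\vr^{\leq k}\dk^{\leq q}(\gt-\gt_{(0)})_{\mu\nu}\right|\les 1,\qquad \left|\pr_\vr^{k+1}\dk^{\leq q}(\gt-\gt_{(0)})_{\mu\nu}\right|\les \vr^{\gamma-1}.
\]

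For the boundary H\"older regularity, I would fix the tangential coordinates and apply the fundamental theorem of calculus in $\vr$ exactly as in Lemma \ref{Holder}. For $0<\vr_1<\vr_2$,
\[
\left|\pr_\vr^k(\gt-\gt_{(0)})(\vr_1)-\pr_\vr^k(\gt-\gt_{(0)})(\vr_2)\right|\les\int_{\vr_1}^{\vr_2}(\vr')^{\gamma-1}\,d\vr'=\gamma^{-1}\left(\vr_2^{\gamma}-\vr_1^{\gamma}\right)\les|\vr_2-\vr_1|^{\gamma},
\]
where the final step invokes the elementary inequality $x^{\gamma}+y^{\gamma}\geq(x+y)^{\gamma}$ valid for $\gamma\in(0,1)$. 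Cauchyness as $\vr\to 0$ continuously extends $\pr_\vr^k(\gt-\gt_{(0)})$ to $\II^+$ and delivers the $C^{0,\gamma}$ modulus of continuity in $\vr$; tangential H\"older continuity is then supplied by the $\dk^{\leq q}$ component of \eqref{sqass} (which bounds $\pr_\vr^k\dk^{\leq 1}$), giving $\gt-\gt_{(0)}\in C^{k,\gamma}(\Mt)$.

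For the interior regularity I would note that on any compact $K\Subset\M$ the function $r$ is bounded above, so the weighted frame operators packaged in $\dk=(\nab_3,r\nab_4,\dkb)$ are equivalent to ordinary coordinate derivatives in either of the charts $(u,r,\th,\vphi)$ or $(u,r,x^1,x^2)$, with uniformly bounded conversion coefficients on $K$. The condition $q>\frac{s-1}{2}$ guarantees that $|\dk^{\leq q}(\gt-\gt_{(0)})|\les 1$ on $K$ encodes $q$ bounded coordinate derivatives, so $\gt-\gt_{(0)}\in C^q(\M)$. Adding back the smooth reference metric $\gt_{(0)}$ (Minkowski, Schwarzschild, or Kerr in conformal form) yields the claimed $\gt\in C^{q}(\M)\cap C^{k,\gamma}(\Mt)$. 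Finally, passage from the inverse metric to the metric itself follows by the geometric-series/Banach algebra argument already used at the end of Theorem \ref{regularityKerr}, since $C^{k,\gamma}(\Mt)$ is a Banach algebra.

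The main obstacle I foresee is ensuring that the identification of the frame operator $\dk$ with coordinate differentiation near $\II^+$—played in the $\KSAF$ setting by Lemma \ref{conformalgt}—remains available in the generality of Definition \ref{def6.3}. This amounts to asking that the charts underlying~\eqref{sqass} are compatible with the PG coordinate structure of the reference solution $(\M_{(0)},\g_{(0)})$, so that the factors of $r$ built into $\dk$ interact with $\vr^2=r^{-2}$ in the expected way. Once this compatibility is recorded, every other step is a direct transcription of the $\KSAF$ argument with $(\frac{1}{2}+\dec,1)$ replaced by $(\gamma,k)$.
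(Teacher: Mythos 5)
Your proposal is correct and follows essentially the same route as the paper: rewrite the hypothesis \eqref{sqass} with $k=\left\lfloor\frac{s-1}{2}\right\rfloor$ and $\gamma=\left\{\frac{s-1}{2}\right\}$, run the fundamental-theorem-of-calculus/H\"older argument of Lemma \ref{Holder} on $\pr_\vr^{k}\dk^{\leq q}(\gt-\gt_{(0)})_{\mu\nu}$ to get $C^{k,\gamma}(\Mt)$, read off $C^q(\M)$ from the zeroth-order bound, and add back the smooth reference metric. The only superfluous step is the final inverse-metric/geometric-series passage: Definition \ref{def6.3} already imposes \eqref{sqass} on the covariant components $(\gt-\gt_{(0)})_{\mu\nu}$, so no inversion is needed here (unlike in Theorem \ref{regularityKerr}, where the starting data were estimates on $\gt^{-1}$).
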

\begin{proof}
We have from \eqref{sqass}
\begin{align*}
    \left|\pr_\vr^{\leq\left\lfloor\frac{s-1}{2}\right\rfloor}\dk^{\leq q}(\gt_{\mu\nu}-(\gt_{(0)})_{\mu\nu})\right|&\les 1,\\
    \left|\pr_\vr\left(\pr_\vr^{\left\lfloor\frac{s-1}{2}\right\rfloor}\dk^{\leq q}(\gt_{\mu\nu}-(\gt_{(0)})_{\mu\nu})\right)\right|&\les\rho^{\left\{\frac{s-1}{2}\right\}-1}.
\end{align*}
Proceeding as in Lemma \ref{Holder}, we deduce that
\begin{align*}
\pr_\vr^{\leq\left\lfloor\frac{s-1}{2}\right\rfloor}\dk^{\leq q}(\gt_{\mu\nu}-(\gt_{(0)})_{\mu\nu})\in C^{0,\left\{\frac{s-1}{2}\right\}}(\Mt).
\end{align*}
Combining with \eqref{qsre}, we obtain
\begin{align}\label{boundaryregularity}
    \gt_{\mu\nu}-(\gt_{(0)})_{\mu\nu}\in C^{\left\lfloor\frac{s-1}{2}\right\rfloor,\left\{\frac{s-1}{2}\right\}}(\Mt).
\end{align}
Moreover, we have from \eqref{sqass}
\begin{align*}
    |\dk^{\leq q}(\gt-\gt_{(0)})|\les 1,\quad \mbox{ in }\,\M,
\end{align*}
which implies
\begin{align}\label{interiorregularity}
    \gt-\gt_{(0)}\in C^q(\M).
\end{align}
Combining \eqref{boundaryregularity} and \eqref{interiorregularity}, we obtain
\begin{align}\label{gtcreg}
    \gt-\gt_{(0)}\in C^q(\M)\cap  C^{\left\lfloor\frac{s-1}{2}\right\rfloor,\left\{\frac{s-1}{2}\right\}}(\Mt).
\end{align}
Since $\g_{(0)}$ is a particular solution of the Einstein vacuum equations, such as Minkowski spacetime, Schwarzschild black hole or Kerr black hole, we have
\begin{align*}
    \gt_{(0)}\in C^\infty(\Mt).
\end{align*}
Combining with \eqref{gtcreg}, this concludes the proof of Theorem \ref{nullinfinity}.
\end{proof}
\begin{rk}\label{alltheregularity}
Various choices of the decay rate parameter $s$ have been made in the literature on stability problems. The table below (Table \ref{tab:decay-rates-reg}) summarizes these choices along with the corresponding regularity of the conformal metric $\gt$:

\begin{threeparttable}[ht]
    \centering
    \renewcommand{\arraystretch}{1.5}
    \caption{Decay rates $s$ in various stability results and the corresponding regularity of $\gt$.}
    \label{tab:decay-rates-reg} 
    \begin{tabular}{|c|c|c|}
        \hline
        Decay Rate $s$ & Stability Results in Vacuum Case & Regularity of $\gt$ \\
        \hline
        $1-2\de$\tnote{*} & \cite{Shen24} for exterior stability of Minkowski\tnote{**} & Discontinuous \\
        \hline
        $1+2\de$ & \cite{Shen23} for global stability of Minkowski & $C^{0,\de}(\Mt)$ \\
        \hline
        $2$ & \cite{Bieri} for global stability of Minkowski & $C^{0,\frac{1}{2}}(\Mt)$ \\
        \hline
        $3+2\de$ & \cite{lr1,lr2,HV,Hintz} for global or exterior stability of Minkowski & $C^{1,\de}(\Mt)$ \\
        \hline
        $3+2\de$ & \cite{Shen22,ShenKerr} for exterior stability of Minkowski and Kerr& $C^{1,\de}(\Mt)$ \\
        \hline
        $4$ & \cite{Ch-Kl,Kl-Ni,graf} for global or exterior stability of Minkowski & $C^{1,\frac{1}{2}}(\Mt)$ \\
        \hline
        $4+2\de$ & \cite{KS} for stability of Schwarzschild black holes & $C^{1,\frac{1}{2}+\de}(\Mt)$ \\
        \hline
        $4+2\de$ & \cite{KS:Kerr1,KS:Kerr2,KS:main,GKS,Shen} for stability of slowly rotating Kerr black holes & $C^{1,\frac{1}{2}+\de}(\Mt)$ \\
        \hline
        $6+2\de$ & \cite{DHRT} for stability of Schwarzschild black holes & $C^{2,\frac{1}{2}+\de}(\Mt)$ \\
        \hline
        $7+2\de$ & \cite{knpeeling,Caciotta} for exterior stability of Minkowski and Kerr & $C^{3,\de}(\Mt)$ \\
        \hline
    \end{tabular}
    \begin{tablenotes}
        \footnotesize
        \item[*] Here and the lines below, $\de$ denotes a fixed constant satisfying $0<\de\ll 1$.
        \item[**] In \cite{Shen24}, the decay rates of the initial data correspond to the case $s=1$, but the final decay estimates correspond to the case $s=1-2\de$.
    \end{tablenotes}
\end{threeparttable}
\end{rk}
\begin{rk}
The work of Christodoulou \cite{Ch02} strongly suggests that for most important physical applications, the conformal metric is not more regular than $C^{1,\ga}$ for $\ga\in(0,1)$.
\end{rk}
\section{Existence of Limiting GCM (LGCM) foliation}\label{secLG}
Throughout this section, we focus on the \emph{far region}:
\begin{align}\label{dffar}
    \far:=\M\cap\left\{r\geq\frac{\de_*}{\ep_0}u^{1+\dec}\right\},
\end{align}
where $\de_*>0$ is a small constant fixed in \cite{KS:main} and $0<\ep_0\ll 1$ measures the size of perturbation of initial data.\footnote{See Section 3.4.1 of \cite{KS:main} for more explanations about the choice of $\de_*$ and $\ep_0$.}
\subsection{Definition of Limiting GCM (LGCM) foliation}
\begin{df}\label{limitinggeodesic}
Let $(\M,\g)$ be a $\KSAF(a,m)$ spacetime endowed with an outgoing PG structure $\{S(u, r), (e_3,e_4,e_1,e_2)\}$. A sphere foliation $S(u', r')$ in $\far$, associated with a null frame $(e'_3,e'_4,e'_1,e'_2)$, is called a \emph{Limiting GCM (LGCM)} foliation (associated to the background PG structure $\{S(u, r), (e_3,e_4,e_1,e_2)\}$), if the following conditions are verified:
\begin{enumerate}
\item The transition functions $(f,\fb,\ovla)$ from  $(e_3,e_4,e_1,e_2)$ to $(e_3',e_4',e_1',e_2')$ verify the following estimate in $\far$:
\begin{align}\label{usedasbootstrap}
    \|(f,\fb,\ovla)\|_{\hk_{4}(S')}\les 1,
\end{align}
where for any sphere $S$, $\hk_k(S)$ denotes the $k$--th order Sobolev space on $S$:
\begin{align*}
    \|f\|_{\hk_k(S)}:=\sum_{i=0}^k\left\|(\dkb^S)^if\right\|_{L^2(S)},\qquad\quad \dkb^S:=r\nab^S.
\end{align*}
\item The spheres $S'(u',r')$ are the level sets of two scalar functions $u'$ and $r'$, which satisfy the following estimates in $\far$:
\begin{align}
\begin{split}\label{urconditions}
    e'_a(u')&=O(r^{-2}),\qquad e'_3(u')=2+O(r^{-1}),\qquad\, e'_4(u')=O(r^{-2}),\\
    e'_a(r')&=O(r^{-2}),\qquad\, e'_3(r')=-1+O(\ep_0),\qquad e'_4(r')=1+O(r^{-1}),\\
    |r'-r|&\les 1.
\end{split}
\end{align}
\item Letting $r'\to\infty$, the spheres $S'(u',r')$ induce a $S'(u')$--foliation on $\II^+$. Moreover, the spheres $(S'(u'),r^{-2}\g|_{S'(u')})$ are round spheres.
\item There exists a $r^{-1}$--approximate basis of $\ell=1$ modes $\Jpp$ on every $S'(u',r')$ in the sense of Definition \ref{jpdef}. Moreover, we have
\begin{align}\label{nab3Jpp}
    e'_4(\Jpp)=O(r^{-2}),\qquad\quad e_3'(\Jpp)=O(r^{-1}).
\end{align}
\item Related to the null frame $(e'_3,e'_4,e'_1,e'_2)$, the scalar functions $u'$ and $r'$, and the $r^{-1}$--approximate $\ell=1$ basis $\Jpp$ on $S'(u',r')$, all the limiting quantities defined in Propositions \ref{limitexist} and \ref{limitB} exist.
\item The following limiting intrinsic GCM conditions hold at $i^+$:\footnote{Here and the lines below, the modes are taken w.r.t. the $r^{-1}$--approximate $\ell=1$ basis $\Jpp$.}
\begin{align}
\begin{split}\label{GCMS2limit}
\lim_{u'\to\infty}\left(\Xscr',\Xbscr'\right)&=(0,4m),\qquad \lim_{u'\to\infty}(\Mscr')_{\ell\geq 2}=0,\qquad \lim_{u'\to\infty}\Bk'=0.
\end{split}
\end{align}
\item The basis of $\ell=1$ modes $\Jpp$ is calibrated at $i^+$ as follows:
\begin{equation}\label{angularmomentumcalibration}
    \lim_{u'\to\infty}(\Bkd')_{\ell=1,0}=2am,\qquad\quad\lim_{u'\to\infty}(\Bkd')_{\ell=1,\pm}=0.
\end{equation}
\item The following limiting integrability conditions hold on $\II^+$:
\begin{align}\label{integrablelimitsindef}
    \aXscr'=0,\qquad\quad\aXbscr'=0.
\end{align}
\item The following limiting incoming geodesic conditions hold on $\II^+$:
    \begin{equation}\label{Incominggeodesic}
    \Hscr'=0,\qquad\quad\Ybscr'=0,\qquad\quad\Wbscrone'=0.
    \end{equation}
\end{enumerate}
\end{df}
\begin{rk}
In \cite{KS:Kerr2}, a sphere $\S$ is called an \emph{intrinsic GCM sphere} if it satisfies:
\begin{align}\label{def:GCMC2}
\begin{split}
\trch^\S&=\frac{2}{r^\S},\qquad\qquad\;\;\trchb^\S=-\frac{2\Up^\S}{r^\S},\qquad\quad\;\,(\mu^\S)_{\ell\geq 2}=0,\\
\atrch^\S&=0,\qquad\qquad\atrchb^\S=0,\qquad\quad\;\,\left(\div^\S \b^\S\right)_{\ell=1}=0,
\end{split}
\end{align}
where $r^\S$ and $m^\S$ denote respectively the area radius and Hawking mass of $\S$ and 
$$
\Up^\S:=1-\frac{2m^\S}{r^\S}.
$$
Moreover, the modes in \eqref{def:GCMC2} are defined w.r.t. the canonical choice of $\ep$--approximate basis of $\ell=1$ modes $\JpS$. The conditions in \eqref{GCMS2limit} and \eqref{integrablelimitsindef} are the limiting version of \eqref{def:GCMC2}.
\end{rk}
\begin{rk}
We recall that an incoming geodesic foliation satisfies:
    \begin{align}\label{IG}
        \eta=\ze,\qquad\quad \xib=0,\qquad\quad \omb=0.
    \end{align}
However, the existence of $\Zscr$ implies that
$$
\lim_{C_u,r\to\infty}r\eta=\lim_{C_u,r\to\infty}r\ze=0.
$$
Thus, the condition \eqref{Incominggeodesic} is the limiting version of \eqref{IG}.
\end{rk}
The following theorem implies the existence of the Limiting GCM (LGCM) foliation in a given $\KSAF$ spacetime $(\M,\g)$.
\begin{thm}\label{LGCMconstruction}
Let $(\M,\g)$ be a $\KSAF(a, m)$ spacetime endowed with a background outgoing PG $S(u,r)$--foliation. Then, there exists a LGCM foliation, denoted by $\{S(u',r'), (e'_3, e'_4, e'_1, e'_2)\}$, associated to the outgoing PG $S(u,r)$--foliation in the sense of Definition \ref{limitinggeodesic}.
\end{thm}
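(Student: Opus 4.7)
The plan is to construct the LGCM foliation as a limit of incoming geodesic foliations induced by a sequence of intrinsic GCM spheres $\S_n$ pushed progressively toward future null infinity. The approach proceeds in three stages: construction of the GCM sequence, uniform estimates along the incoming cones, and passage to the limit.

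For each sufficiently large $n$, I would apply the existence and uniqueness result for intrinsic GCM spheres from \cite{KS:Kerr1,KS:Kerr2} to deform a background sphere $S(u_n, r_n)$ of the outgoing PG foliation into an intrinsic GCM sphere $\S_n$ satisfying \eqref{def:GCMC2}, together with the calibration $(\Bkd^{\S_n})_{\ell=1,0} = 2am + O(\ep_0 r_n^{-1})$ and $(\Bkd^{\S_n})_{\ell=1,\pm} = O(\ep_0 r_n^{-1})$ with respect to the canonical $\ell=1$ basis $J^{(p,\S_n)}$. This produces transition functions $(f_n, \fb_n, \ovla_n)$ from the background frame to the adapted frame of $\S_n$, controlled uniformly in $n$ in the sense of \eqref{usedasbootstrap}. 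From each $\S_n$, I would propagate the adapted frame $(e_3^{\S_n}, e_4^{\S_n}, e_1^{\S_n}, e_2^{\S_n})$ inward along the $e_3^{\S_n}$ direction according to the incoming geodesic conditions $\eta' = 0$, $\xib' = 0$, $\omb' = 0$ of \eqref{Geodesiccondtions}, producing an incoming null hypersurface $\Cb_n$ carrying an incoming geodesic sphere foliation parameterized by the affine parameter $r'$.

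Next, I would derive $n$-uniform estimates for $(f_n, \fb_n, \ovla_n)$ on $\Cb_n \cap \far$ by combining (a) the elliptic system at $\S_n$ for $(f_n, \fb_n, \ovla_n)$ produced by the intrinsic GCM conditions via Propositions \ref{Riccitransfer}--\ref{Curvaturetransfer}; (b) the transport equations for $(f_n, \fb_n, \ovla_n)$ along $e_3$ derived from Proposition \ref{Riccitransfer} once the incoming geodesic conditions are imposed on $\Cb_n$; and (c) the background $\KSAF$ decay from Proposition \ref{GagGabdecay}. Passing to the limit $n \to \infty$ and invoking the asymptotic results of Section \ref{seclimit}---in particular Propositions \ref{limitexist}, \ref{limitM} and \ref{limitB} applied in the primed frame---yields the limiting frame $(e'_3, e'_4, e'_1, e'_2)$ together with scalar functions $u', r'$ in a neighborhood of $\II^+$. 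Varying the initial label $u_n$ over a range of values parametrizes the full $S(u', r')$ foliation throughout $\far$, while the $r^{-1}$-approximate basis $\Jpp$ on each leaf is obtained by transporting the canonical $J^{(p,\S_n)}$ to the limit.

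The verification of the nine defining properties then goes as follows: the transition-function bound and the properties of $u', r'$ in conditions 1 and 2 follow from the $n$-uniform control together with the affine parametrization of the incoming cones; the induced round foliation on $\II^+$ and the existence of $\Jpp$ in conditions 3 and 4 follow from the asymptotic flatness, which forces $r^{-2}\g|_{S'(u')}$ to become round as $r \to \infty$, so that the canonical basis of Definition \ref{jpdef} is well defined in the limit; condition 5 is a direct consequence of Section \ref{seclimit} in the primed frame; condition 6 follows by taking the limit of \eqref{def:GCMC2} along $\S_n$ (using $\Up^{\S_n} \to 1$ and the Hawking masses converging to $m$); condition 7 is preserved under the limit by the calibration choice at $\S_n$; finally, conditions 8 and 9 follow from the incoming geodesic conditions propagated up to $\II^+$ together with the $r$-weighted existence of limits in Proposition \ref{limitexist}. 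The main obstacle lies in the coupled elliptic-hyperbolic nature of the problem: the GCM conditions at $\S_n$ set boundary data for the hyperbolic transport along $\Cb_n$, and the $n$-uniform estimates must simultaneously be strong enough to yield convergence of $(f_n, \fb_n, \ovla_n)$ in a topology preserving the $\ell=1$ identities \eqref{GCMS2limit} and \eqref{angularmomentumcalibration}, and delicate enough to guarantee that the limiting spheres $S'(u')$ admit a well-defined canonical $\ell=1$ basis compatible with the background $\Jp$.
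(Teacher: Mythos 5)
Your proposal follows essentially the same route as the paper: deform background spheres $S(u_n,r_n)$ with $r_n\sim \ep_0^{-1}u_n^{1+\dec}$ into intrinsic GCM spheres via \cite{KS:Kerr1,KS:Kerr2}, emanate incoming geodesic cones $\Cb_n$, obtain $n$--uniform bounds on $(f_n,\fb_n,\ovla_n,U_n,R_n,J^{(p,\S_n)})$ from the GCM data at the top sphere plus the $\nab_3$--transport forced by the geodesic conditions, and pass to the limit to read off the nine LGCM properties. The only step you leave implicit is how the limit is actually extracted — the paper makes this precise via an Arzel\`a--Ascoli/diagonal argument on the transition and deformation functions viewed on $[1,u_n]\times\SSS^2$, and then uses the limit functions to deform \emph{every} background sphere of $\far$ (rather than varying $u_n$ to sweep out leaves), together with a final reparametrization of $u'$ to enforce $e_3'(u')=2+O(r^{-1})$; these are routine given your uniform estimates.
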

\begin{rk}
The LGCM foliation $\{S(u',r'), (e'_3, e'_4, e'_1, e'_2)\}$ constructed in Theorem \ref{LGCMconstruction} is not strictly speaking an outgoing PG foliation. In fact, related to the null frame $(e'_3,e'_4,e'_1,e'_2)$, the quantities $\xi'$, $\om'$ and $\ze'+\etab'$ do not necessarily vanish, but have a higher order in $r^{-1}$.
\end{rk}
The rest of Section \ref{secLG} focuses on the proof of Theorem \ref{LGCMconstruction}. To this end, we first review the main results of intrinsic GCM spheres in Section \ref{sec:GCMpapersreview}. In Section \ref{ssecmostdifficult}, we construct an auxiliary sequence of incoming null cones $\{\Cb_n\}_{n\in\NNN}$, initiating from the intrinsic GCM spheres $\S_{*,n}$. In Section \ref{ssecendproof}, we prove Theorem \ref{LGCMconstruction} by showing, in an appropriate sense, that the sequence $\{\Cb_n\}_{n\in\NNN}$ converges to $\II^+$.
\subsection{Intrinsic GCM spheres}\label{sec:GCMpapersreview}
We review the following main results in \cite{KS:Kerr1,KS:Kerr2} that will be used in this section.
\begin{df}\label{definition:Deformations}
We say that $\S$ is a deformation of $S:=S(u,r)$ if there exist smooth scalar functions $U, R$ defined on $S$ and a map $\Psi:S\to\S$ verifying, on either coordinate chart $(\th^1,\th^2)$ of $S$,
\begin{align}\label{URfistappear}
\Psi(u,s,\th^1,\th^2)=\left(u+U(\th^1,\th^2),r+R(\th^1,\th^2),\th^1,\th^2\right),
\end{align}
where $U$ and $R$ are two scalar functions defined on $S$.
\end{df}
\begin{df}\label{definition:framechange.toadaptedframes}
Given a deformation $\Psi:S\to \S$ we say that a new frame $(e_3', e_4',  e_1', e_2')$ on $\S$, obtained from the standard frame $(e_3, e_4, e_1, e_2)$ by the general frame transformation \eqref{General-frametransformation} through the transition functions $(f,\fb,\la)$, is $\S$--adapted if the horizontal vectorfields $e'_1, e'_2$ are tangent to $\S$.
\end{df}
\begin{prp}\label{KS5.14}
Consider a deformation map $\Psi:S\to\S$ generated by the functions $U$, $R$ in the sense of \eqref{URfistappear}. Let $(e'_3,e'_4,e'_1,e'_2)$ be a $\S$--adapted frame on $\S$. Then, we have
    \begin{align*}
        \pr_{\th^a}U&=f+O(\ep_0)f+O(f,\fb)^2,\\
        \pr_{\th^a}R&=\frac{1}{2}(-\Up f+\fb)+O(\ep_0)f+O(f,\fb)^2,
    \end{align*}
    where $(f,\fb)$ are defined in Definition \ref{definition:framechange.toadaptedframes}.
\end{prp}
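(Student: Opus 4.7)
\medskip
\noindent\emph{Proof proposal.} The plan is to exploit two descriptions of $T_\S$ in parallel. On the one hand, since $\Psi(u,r,\th^1,\th^2)=(u+U,r+R,\th^1,\th^2)$, the tangent space to $\S$ is spanned by
\[
\Psi_*(\pr_{\th^a})=\pr_{\th^a}+(\pr_{\th^a}U)\pr_u+(\pr_{\th^a}R)\pr_r,\qquad a=1,2.
\]
On the other hand, by Definition \ref{definition:framechange.toadaptedframes} the horizontal vectorfields $e'_1,e'_2$ of the $\S$-adapted frame are tangent to $\S$, so there must exist a $2\times 2$ matrix $M=(M_a{}^b)$ such that $e'_a=M_a{}^b\,\Psi_*(\pr_{\th^b})$. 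Matching the $\pr_u$, $\pr_r$ and $\pr_{\th^b}$ coefficients of the two expressions for $e'_a$ will yield the desired formulae for $\pr_{\th^a}U$ and $\pr_{\th^a}R$.

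For the expansion of $e'_a$ I would use the general transformation \eqref{General-frametransformation}, which gives
\[
e'_a=e_a+\tfrac12\fb_a\,e_4+\tfrac12 f_a\,e_3+O\!\big((f,\fb)^2\big)\cdot(e_3,e_4,e_b).
\]
Next I would express the background null frame in the PG coordinate basis. Since the background is an outgoing PG $S(u,r)$-foliation on a $\KSAF$ spacetime, the conditions $e_4(u)=0$, $e_4(r)=1$, $e_4(\th)=0$ (and $e_4(\vphi)=0$ up to a $\Ga_{b,3}$-type error) yield $e_4=\pr_r+O(\ep_0)\pr_{\th^c}$; the null pair condition together with $e_3(u)=2+O(\ep_0)$ and the identity $e_3(r)=-\Up+O(\ep_0)$ (read off from Kerr plus the $\Ga_{b,2}$-size perturbation in Definition \ref{dfGagGab} and Proposition \ref{GagGabdecay}) gives
\[
e_3=2\pr_u-\Up\,\pr_r+(\text{tangential})+O(\ep_0)\cdot\pr,
\]
while $e_a=r^{-1}\pr_{\th^a}$ up to Kerr corrections plus an $O(\ep_0)$ perturbation in $\Gc$. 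Plugging these into the formula for $e'_a$ separates the expansion into a $\pr_u$-component $f_a+O(\ep_0)f$, a $\pr_r$-component $\tfrac12(\fb_a-\Up f_a)+O(\ep_0)(f,\fb)$, and a tangential piece $\tfrac1r\de_a{}^b\pr_{\th^b}$ modulo $O(\ep_0)(f,\fb)$.

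Comparing with $M_a{}^b\Psi_*(\pr_{\th^b})$, the tangential components immediately identify $M_a{}^b=\tfrac1r\de_a{}^b+O(\ep_0)+O((f,\fb))$, which is invertible for $\ep_0$ small. Then the $\pr_u$ and $\pr_r$ matches read
\[
M_a{}^b\,\pr_{\th^b}U=f_a+O(\ep_0)f+O\!\big((f,\fb)^2\big),\qquad M_a{}^b\,\pr_{\th^b}R=\tfrac12(-\Up f_a+\fb_a)+O(\ep_0)(f,\fb)+O\!\big((f,\fb)^2\big).
\]
Multiplying by $M^{-1}$, absorbing the factor $r$ into the coordinate derivative (so the formulas are pointwise relations at the level of $1$-forms on $S$), and collecting terms yields the claimed identities.

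The main obstacle is bookkeeping the error terms. Concretely, I need to confirm that every deviation of $(e_3,e_4,e_a)$ from its Kerr value contributes either $O(\ep_0)\cdot(f,\fb)$ or $O((f,\fb)^2)\cdot\Ga$, i.e. fits inside the stated remainders. This follows from the KSAF decay estimates in Proposition \ref{GagGabdecay} together with the bootstrap bound \eqref{usedasbootstrap} on $(f,\fb,\ovla)$, but it requires carefully tracing the $\Ga_b$-type metric corrections $\Gc=r\Ga_{b,2}\cup r\Ga_{b,3}\cup r\Ga_{b,4}$ of Remark \ref{Gcbeh} through the identification of $e_3$, $e_4$ and $e_a$ with $\pr_u,\pr_r,\pr_{\th^a}$. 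Once that is done, inverting $M$ contributes only lower-order corrections and the two stated formulae follow.
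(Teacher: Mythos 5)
Your strategy --- describing $T\S$ both as the span of $\Psi_*(\pr_{\th^b})=\pr_{\th^b}+(\pr_{\th^b}U)\pr_u+(\pr_{\th^b}R)\pr_r$ and as the span of the adapted $e'_a$, expanding $e'_a$ through \eqref{General-frametransformation}, writing the background frame in coordinates, and matching components --- is the right one; the paper offers no proof beyond the citation to Proposition 5.14 of \cite{KS:Kerr1}, and that reference argues in exactly this way. Your treatment of the normalization (reading the identities as equalities of $1$-forms on $S$, so that the factor $r$ relating $e_a$ to $\pr_{\th^a}$ appears consistently on both sides) is acceptable, though you should state explicitly that $M_a{}^b=e'_a(\th^b)$ and hence $\pr_{\th^b}U=(M^{-1})_b{}^a\,e'_a(u)$ and $\pr_{\th^b}R=(M^{-1})_b{}^a\,e'_a(r)$.

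The gap is in the error bookkeeping for the $U$-equation. You propose to verify that ``every deviation of $(e_3,e_4,e_a)$ from its Kerr value'' fits into $O(\ep_0)f+O((f,\fb)^2)$, but the problematic term is not a deviation: it is the Kerr value itself. Since $e'_a(u)=e_a(u)+\tfrac12 f_a e_3(u)+\dots$, the clean identity $\pr_{\th^a}U=f+\dots$ requires $e_a(u)=0$. For the outgoing PG structure the horizontal space $\HH$ is not tangent to the spheres $S(u,r)$, and the Kerr value of $\DD u$ is nonzero, of size $O(|a|/r)$, i.e.\ comparable to $f$ itself; moreover the linearized part $\widecheck{\DD u}\in\Ga_{b,2}$ also enters additively rather than multiplied by $(f,\fb)$. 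Neither contribution has the stated form $O(\ep_0)f+O((f,\fb)^2)$. (The $R$-equation is safe: $e_a(r)=0$ is precisely the third condition of Definition \ref{dfoutgoingPG}, and $e_3(r)=-\Up$ up to terms that multiply $f_a$.) In \cite{KS:Kerr1} the proposition is proved for a background \emph{outgoing geodesic foliation}, where $e_1,e_2$ are tangent to the leaves and hence $e_a(u)=e_a(s)=0$ identically, which is what makes the remainder close up. To complete your argument you must either first pass to such an adapted background frame, or carry the extra term coming from $e_a(u)$ explicitly and explain how it is absorbed; as written, your proof would silently drop a contribution of size $O(|a|)$ in $\pr_{\th^a}U$.
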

\begin{proof}
    See Proposition 5.14 in \cite{KS:Kerr1}.
\end{proof}
In the following, we state a simple version of Theorem 7.3 in \cite{KS:Kerr2}, which is the main result of that paper, concerning the construction of intrinsic GCM spheres. 
\begin{thm}[Construction of intrinsic GCM spheres \cite{KS:Kerr2}]\label{GCMS2}
Let $(\M,\g)$ be a $\KSAF(a,m)$ spacetime. Let $S:=S(u,r)$ denotes a leaf of the background outgoing PG foliation satisfying
\begin{align}\label{dominantc}
    r=\frac{\de_*}{\ep_0}u^{1+\dec}.
\end{align}
Then, there exist a unique deformation $\Psi:S\to\S$ which satisfies the following properties:
\begin{enumerate}
\item The following intrinsic GCM conditions hold on $\S$:\footnote{Recall that $r^{\S}$ and $m^{\S}$ are respectively the area radius and Hawking mass of ${\S}$.}
\begin{align}\label{GCMS2conditi}
\begin{split}
\trch^\S=\frac{2}{r^\S},\qquad \trchb^\S+\frac{2\Up^\S}{r^\S}=0,\qquad \big(\mu^\S\big)_{\ell\geq 2}=0, \qquad
\left(\div^\S \b^\S\right)_{\ell=1}=0,
\end{split}
\end{align}
where all the modes are defined w.r.t. $\JpS$, which is the canonical choice of $\ep_0 r^{-1}$--approximate basis of $\ell=1$ modes for $\S$ in the sense of Definition \ref{jpdef}.
\item The $\ep_0 r^{-1}$--approximate $\ell=1$ basis $\JpS$ is calibrated as follows:
\begin{align}\label{curlbS=0}
    \left(\curl^\S\b^\S\right)_{\ell=1,0}=\frac{2a^\S m^\S}{(r^\S)^5},\qquad\quad \left(\curl^\S\b^\S\right)_{\ell=1,\pm}=0,
\end{align}
where $a^\S$ is a constant on $\S$ which represents the angular momentum on $\S$, $m^\S$ is the Hawking mass of $\S$ and $r^\S$ is the area radius of $\S$. We also have the following estimates:
\begin{align}\label{aaSmmSrrS}
    |a^\S-a|\les\frac{\ep_0}{u^{\frac{1}{2}+\dec}},\qquad |m^\S-m|\les\frac{\ep_0}{u^{\frac{1}{2}+\dec}},\qquad |r^\S-r|\les\frac{\ep_0}{u^{\frac{1}{2}+\dec}}.
\end{align}
\item Denoting $(f,\fb,\la)$ the transition functions from the outgoing PG frame $(e_3,e_4,e_1,e_2)$ to the $\S$--adapted null frame $(e_3^\S,e_4^\S,e_1^\S,e_2^\S)$, we have
\begin{equation}
    \|(f,\fb,\ovla)\|_{\hk_{\ks+3}(\S)}\les 1.
\end{equation}
\item Let $U$ and $R$ be the scalar functions on $S$ defined by \eqref{URfistappear}. Then, we have
\begin{align*}
    \|(U,R)\|_{\hk_{\ks+4}(S)}\les r.
\end{align*}
\end{enumerate}
We call such $\S$ the intrinsic GCM sphere, which is a deformation of the background sphere $S=S(u,r)$ with $u$ and $r$ satisfying \eqref{dominantc}.
\end{thm}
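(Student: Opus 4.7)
The plan is to realize $\S$ as the image of a deformation $\Psi:S\to\S$ described by a pair of scalar functions $(U,R)$ on $S$ in the sense of \eqref{URfistappear}, and to convert the six conditions \eqref{GCMS2conditi}--\eqref{curlbS=0} into a nonlinear coupled system for the transition functions $(f,\fb,\la)$ together with the three numerical parameters $(r^\S,m^\S,a^\S)$. By Proposition \ref{KS5.14}, the deformation data $(U,R)$ are related to $(f,\fb)$ by $\pr_{\th^a}U=f+\mathrm{errors}$ and $\pr_{\th^a}R=\tfrac{1}{2}(-\Up f+\fb)+\mathrm{errors}$, so once $(f,\fb)$ are known, $(U,R)$ are recovered by integration on $S$, up to a choice of base point. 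The true unknown is therefore the triplet $(f,\fb,\la)$ together with the scalars $(r^\S,m^\S,a^\S)$, living in an appropriate $\hk_k(\S)$-Sobolev ball of size $O(1)$.

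Using the frame-transformation laws of Propositions \ref{Riccitransfer} and \ref{Curvaturetransfer}, I rewrite the intrinsic GCM conditions as an elliptic system on $\S$. The conditions $\trch^\S=2/r^\S$ and $\atrch^\S=0$ yield a first-order Hodge-type system $d_1^\S f=(\mathrm{data}_1,\mathrm{data}_2)$; the conditions $\trchb^\S=-2\Up^\S/r^\S$ and $\atrchb^\S=0$ yield $d_1^\S\fb=(\mathrm{data}_3,\mathrm{data}_4)$; and the mass-aspect condition $(\mu^\S)_{\ell\geq 2}=0$, combined with the transformation of $\mu$ in Proposition \ref{Curvaturetransfer}, becomes a second-order Laplace-type equation $(\De^\S+V)\ovla=F_\mu$ projected onto the $\ell\geq 2$ subspace. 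Schematically the sources involve $\Gag$, $\Gab$ plus quadratic terms in $(f,\fb,\ovla)$, which by Proposition \ref{GagGabdecay} together with the dominant condition \eqref{dominantc} are of size $\ep_0$ in $\hk_{\ks+2}(\S)$. The finite-dimensional kernels of $d_1^\S$ and $\De^\S+V$ (supported on $\ell\leq 1$ modes) are absorbed by the gauge freedom: the scalars $(r^\S,m^\S)$ are fixed by the area and Hawking-mass formulas, the calibration \eqref{curlbS=0} pins down $a^\S$ and the rotational $\ell=1$ freedom, and the requirement that $(U,R)$ vanish at a reference point pins down the translational $\ell=1,0$ modes of $(f,\fb)$.

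With the kernels handled, Proposition \ref{ellipticLp} (and its higher-order Sobolev analogue on almost-round spheres) provides a right-inverse with estimate $\|(f,\fb,\ovla)\|_{\hk_{\ks+3}(\S)}\lesssim\ep_0$. A contraction-mapping argument in a ball of radius $\sim 1$ in this norm then produces a unique solution, from which the estimates \eqref{aaSmmSrrS} follow by reading off the $\ell=0$ modes of the equations and comparing to the background values on $S$. The statement (4) on $(U,R)$ then follows by integrating the $(f,\fb)$-equations of Proposition \ref{KS5.14} on $S$, picking up the factor of $r$.

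The main obstacle, and the feature that distinguishes this theorem from its extrinsic counterpart in \cite{KS:Kerr1}, is that the conditions $(\mu^\S)_{\ell\geq 2}=0$, $(\div^\S\b^\S)_{\ell=1}=0$, and \eqref{curlbS=0} are phrased with respect to the \emph{canonical} $\ell=1$ basis $J^{(\S,p)}$ of the new sphere $\S$, which itself depends on the unknown deformation through the effective uniformization $(\Phi,u)$ of Corollary 3.8 of \cite{KS:Kerr2}. This forces an outer fixed-point step: given a candidate deformation producing a candidate sphere $\S_k$, one computes $J^{(\S_k,p)}$ by uniformizing its induced metric, solves the GCM system with respect to that basis to obtain $\S_{k+1}$, and iterates. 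Convergence of this outer iteration requires quantitative Lipschitz dependence of the uniformization map on the sphere metric in the same Sobolev norm, together with a commutator estimate showing that an $O(\ep_0)$ change of basis shifts the projections $(\muc)_{\ell\geq 2}$ and $(\div^\S\b^\S)_{\ell=1}$ by $O(\ep_0)$ times the unknowns, so that the outer map is also a contraction. Uniqueness of the whole construction then follows by subtracting two solutions, observing that the resulting linear elliptic system, taken together with the base-point normalization of $(U,R)$ and the calibration \eqref{curlbS=0}, has trivial kernel.
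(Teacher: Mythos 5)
The paper does not actually prove this theorem: its proof consists of the single line ``See Theorem 7.3 and Corollary 7.7 in \cite{KS:Kerr2} and Step 7 of Theorem M7 in \cite{KS:main}'', so your proposal can only be measured against the cited construction. At the level of architecture you reproduce that construction faithfully: converting the GCM conditions into a coupled Hodge/elliptic system for the transition functions $(f,\fb,\ovla)$ via the transformation formulae of Propositions \ref{Riccitransfer}--\ref{Curvaturetransfer}, recovering $(U,R)$ from Proposition \ref{KS5.14}, controlling the sources through Proposition \ref{GagGabdecay} and the dominance condition \eqref{dominantc}, and solving by iteration is exactly the method of \cite{KS:Kerr1}; and you correctly isolate the genuinely \emph{intrinsic} difficulty that distinguishes \cite{KS:Kerr2}, namely that the modes in \eqref{GCMS2conditi}--\eqref{curlbS=0} are taken with respect to the canonical basis $\JpS$ of the unknown sphere, defined through the effective uniformization, which forces an outer fixed point requiring Lipschitz dependence of the uniformization on the induced metric. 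That identification is the right one and is the heart of the cited paper.

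Two points in your bookkeeping are nevertheless wrong. First, the $\ell=1$ kernel of the elliptic system is not absorbed by the base-point normalization of $(U,R)$: in \cite{KS:Kerr1} the GCM spheres form a finite-dimensional family parametrized by the $\ell=1$ modes $\La=(\div f)_{\ell=1}$, $\Lab=(\div\fb)_{\ell=1}$, and it is precisely the intrinsic conditions $\left(\div^\S\b^\S\right)_{\ell=1}=0$ together with the calibration \eqref{curlbS=0} that fix these parameters (as well as $a^\S$ and the rotational ambiguity of $\JpS$); the vanishing of $(U,R)$ at a point is only the normalization already built into Definition \ref{definition:Deformations}. Your uniqueness argument tacitly needs this: triviality of the kernel after subtracting two solutions comes from the $\ell=1$ GCM conditions, not from the base point. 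Second, your quantitative claim that the contraction yields $\|(f,\fb,\ovla)\|_{\hk_{\ks+3}(\S)}\les\ep_0$ is inconsistent both with your own ``ball of radius $\sim 1$'' and with the statement being proved, which asserts only $\|(f,\fb,\ovla)\|_{\hk_{\ks+3}(\S)}\les 1$ and the large bound $\|(U,R)\|_{\hk_{\ks+4}(S)}\les r$ (i.e.\ a pointwise $O(1)$ deformation). These $O(1)$ magnitudes are not an artifact: the limits $F,\Fb,\La,R^\#$ of $rf_n$, $r\fb_n$, $r\ovla_n$, $R_n$ in \eqref{limitingestimates} enter the limiting GCM relations \eqref{GCMfirstcon} at leading, non-perturbative order, and the large displacement is exactly the recoil phenomenon of Section \ref{seckick} and Remark \ref{anormalfb}. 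So an $\ep_0$-ball contraction cannot be the scheme on the far sphere \eqref{dominantc}; the fixed point must be run in the $O(1)$ ball, with smallness supplied by the structure of the system (the decay of $\Gag,\Gab$ and the $\ell\geq 2$ projections via Proposition \ref{ellipticLp}), the $\ell\leq 1$ part being handled as finite-dimensional parameters as above.
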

\begin{proof}
    See Theorem 7.3 and Corollary 7.7 in \cite{KS:Kerr2} and Step 7 of Theorem M7 in \cite{KS:main}.
\end{proof}
\begin{cor}\label{GCMS2cor}
Let $(\M,\g)$ be a $\KSAF(a,m)$ spacetime and let $S:=S(u,r)$ be a leaf of the background outgoing PG foliation that satisfies \eqref{dominantc}. Then, we have the following estimates on the intrinsic GCM sphere $\S$, which is a deformation of the background sphere $S$:
    \begin{align}
        \begin{split}
            \left\|\trch^\S-\frac{2}{r}\right\|_{\hk_{\ks+2}(\S)}&\les\frac{\ep_0}{ru^{\frac{1}{2}+\dec}},\qquad\quad\left\|\trchb^\S+\frac{2}{r}\right\|_{\hk_{\ks+2}(\S)}\les\frac{\ep_0}{ru^{\frac{1}{2}+\dec}},\\
            (\mu^\S)_{\ell\geq 2}&=0,\qquad\qquad\qquad\qquad\qquad\;\left(\div^\S\b^\S\right)_{\ell=1}=0,\\
            \left|\left(\curl^\S\b^\S\right)_{\ell=1,0}-\frac{2am}{r^5}\right|&\les\frac{\ep_0}{r^5u^{\frac{1}{2}+\dec}},\qquad\qquad\;\;\; \left|\left(\curl^\S\b^\S\right)_{\ell=1,\pm}\right|\les\ep_0.
        \end{split}
    \end{align}
Moreover, we have the following identities for $\JpS$ on $\S$:
\begin{align}
\begin{split}\label{canonicalonGCMS2}
    \De^\S\JpS+\frac{2}{(r^\S)^2}\JpS&=O\left(\frac{\ep_0}{r^3}\right),\\
    \frac{1}{|\S|}\int_\S \JpS J^{(q,\S)}&=\frac{4\pi}{3}\de_{pq}+O\left(\frac{\ep_0}{r}\right),\\
    \frac{1}{|\S|}\int_\S \JpS &=0.
\end{split}
\end{align}
\end{cor}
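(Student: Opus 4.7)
The plan is to derive Corollary \ref{GCMS2cor} as an immediate consequence of Theorem \ref{GCMS2}: it is a repackaging of the exact intrinsic GCM conditions \eqref{GCMS2conditi} and the calibration \eqref{curlbS=0}, rewritten in terms of the background values $(a,m,r)$ via the parameter estimates \eqref{aaSmmSrrS}. No new PDE or geometric analysis is required; the argument is essentially algebraic bookkeeping.

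For the null expansion bounds, I start from $\trch^\S = 2/r^\S$ and $\trchb^\S = -2\Up^\S/r^\S$, which hold exactly on $\S$ by \eqref{GCMS2conditi}. Since the differences $\trch^\S - 2/r$ and $\trchb^\S + 2/r$ are algebraic combinations of the constants $r, r^\S, m^\S$, they are themselves constant on $\S$; every application of $\dkb^\S = r\nab^\S$ annihilates them, so the $\hk_{\ks+2}(\S)$ norm collapses to the $L^2(\S)$ norm. Combining the volume factor $|\S|^{1/2}\sim r$ with $|r^\S-r|\les \ep_0 u^{-(1/2+\dec)}$ from \eqref{aaSmmSrrS} then delivers the stated estimates. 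The identities $(\mu^\S)_{\ell\geq 2}=0$ and $(\div^\S\b^\S)_{\ell=1}=0$ are quoted verbatim from \eqref{GCMS2conditi}.

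For $\curl^\S\b^\S$, the $(\ell=1,\pm)$ modes vanish identically by \eqref{curlbS=0}, so the $\les \ep_0$ bound is trivial. For the zero mode, \eqref{curlbS=0} together with \eqref{aaSmmSrrS} yields
\[
(\curl^\S\b^\S)_{\ell=1,0} - \frac{2am}{r^5} \;=\; \frac{2(a^\S m^\S - a m)}{(r^\S)^5} + 2am\left(\frac{1}{(r^\S)^5} - \frac{1}{r^5}\right) \;=\; O\!\left(\frac{\ep_0}{r^5 u^{1/2+\dec}}\right).
\]

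Finally, the three properties \eqref{canonicalonGCMS2} of the canonical basis $\JpS$ are precisely the defining relations of Definition \ref{jpdef} applied on the $(\ep_0/r)$-almost round sphere $\S$: the Gauss curvature bound $|K^\S - (r^\S)^{-2}|\les \ep_0 r^{-3}$ follows from the curvature transformation formula combined with the estimate $\|(f,\fb,\ovla)\|_{\hk_{\ks+3}(\S)}\les 1$ supplied by Theorem \ref{GCMS2} and a Sobolev embedding, while the vanishing mean in the third line is the centering condition built into the uniformization map of Corollary 3.8 in \cite{KS:Kerr2}. The only genuine obstacle is careful tracking of normalization conventions between \eqref{def:Jpsphericalharmonics} and \eqref{canonicalonGCMS2} — beyond that, the entire proof is elementary.
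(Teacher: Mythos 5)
Your proof takes exactly the paper's route: the paper's own proof of this corollary is the single line ``follows directly from \eqref{GCMS2conditi}--\eqref{aaSmmSrrS}'', and your write-up is a faithful unpacking of that bookkeeping (constancy of the differences on $\S$, the volume factor, the parameter estimates for $(a^\S,m^\S,r^\S)$, and the centering property of the canonical $\ell=1$ basis). One caveat: since the GCM condition fixes $\trchb^\S=-2\Up^\S/r^\S$, the quantity $\trchb^\S+\frac{2}{r}$ contains the term $4m^\S/(r^\S)^2$ in addition to the piece controlled by $|r^\S-r|$, and that term is not ``delivered'' by \eqref{aaSmmSrrS}; this is an imprecision in the corollary's statement itself (compare \eqref{GCMquantitiesdecayonSin}, where the comparison is with $2\Up/r$), which your argument inherits rather than introduces.
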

\begin{proof}
    The proof follows directly from \eqref{GCMS2conditi}--\eqref{aaSmmSrrS}.
\end{proof}
\subsection{Construction of Limiting GCM (LGCM) foliation}\label{ssecmostdifficult}
\subsubsection{Construction of a sequence of incoming null cones }\label{ssecCbn}
In this section, we construct a sequence of incoming null cones $\Cb_n$ in $\far$, endowed with geodesic foliations.
\begin{df}\label{constructionCbndf}
Let $(\M,\g)$ be a $\KSAF$ spacetime endowed with an outgoing PG $S(u,r)$--foliation. Let $u_n\to\infty$ be a sequence, and define\footnote{Recall that $\de_*$ and $\ep_0$ are introduced in \eqref{dffar}.}
\begin{align}\label{sequencedominant}
    r_n:=\frac{\de_*}{\ep_0}u_n^{1+\dec}.
\end{align}
We proceed as follows to construct a sequence of incoming null cones $\Cb_n$:
\begin{enumerate}
    \item For each fixed $n\in\NNN$, we denote $S_{*,n}:=S(u_n,r_n)$. Applying Theorem \ref{GCMS2} and \eqref{sequencedominant}, we construct from $S_{*,n}$, a unique intrinsic GCM sphere $\S_{*,n}$, with an adapted null frame $\left(e_3^{\S_{*,n}},e_4^{\S_{*,n}},e_1^{\S_{*,n}},e_2^{\S_{*,n}}\right)$ and a canonical choice of $\ep_0 r^{-1}$--approximate basis of $\ell=1$ modes $J^{(p,\S_{*,n})}$.
    \item For each fixed $n\in\NNN$, emanating from $\S_{*,n}$, we construct a unique incoming null cone $\Cb_n$ contained in $\far$ and extended to the initial layer region $\LL_0$, endowed with an incoming geodesic foliation:
\begin{align*}
    \Cb_n=\bigcup_{s}\,\widetilde{\S}_n(s),
\end{align*}
where $s$ is an affine parameter of the incoming geodesic foliation. See Figure \ref{fig:Cb_n} for a description of $\Cb_n$, $\S_{*,n}$ and $\widetilde{\S}_n(s)$.
\begin{figure}
    \centering
\begin{tikzpicture}[scale=1.3]
\draw[red, thin] (-3.05,-2.02375) -- (-1.05,1.92625) ;
\draw[red, thin] (0.9,2.2225) -- (2.78,-1.5);
\node[red, right] at (2.1,0) {$\Cb_n$};
\node[red, right] at (1.55,1.1) {$\widetilde{\S}_n(s)$};
\foreach \y in { 1.99,   0.5,   -1.5} {
    \draw[black] (0,\y) ellipse ({0.5*(4-\y)} and {0.15*(4-\y)});
}
\node[left] at (-1.1,2) {$S_{*,n}=S(u_n,r_n)$};
\node[left] at (-1.8,0.6) {$S(u_s,r_s)$};
\draw[red, line width=1.5pt, rotate around={10:(-0.5*2+1,1.3*2+1)}] (-0.35,2.06) ellipse ({0.49*(4-2)} and {0.18*(4-2)}) node[right] {$\S_{*,n}$};
\draw[red, rotate around={10:(-0.5*0.5+1,1.3*0.5+1)}] (-0.31,0.75) ellipse ({0.49*(4-0.5)} and {0.18*(4-0.5)});
\draw[red, rotate around={10:(0.73,-0.95)}] (-0.33,-1.45) ellipse (2.859 and 0.72);
\fill[black] (0,1.67) circle (2pt) node[below right] {$p_{*,n}$};
\fill[red] (0,-0.05) circle (2pt) node[below right] {$p_s$};
\fill[black] (0,-2.33) circle (2pt) node[below right]{\scriptsize{South Pole}};
\end{tikzpicture}
    \caption{\small Construction of the incoming null cone $\Cb_n$. $\S_{*,n}$ is the intrinsic GCM sphere deformed from a background sphere $S_{*,n}=S(u_n,r_n)$ and $\Cb_n$ is the incoming null cone emanating from $\S_{*,n}$. $\widetilde{\S}_n(s)$ is a leaf of the geodesic foliation of $\Cb_n$ and it coincides with a background sphere $S(u_s,r_s)$ at south pole $p_s$. The spheres $\widetilde{\S}_n(s)$ on $\Cb_n$ will be relabeled by the background function $u$ in \eqref{defus}.}
    \label{fig:Cb_n}
\end{figure}
\item For any fixed $n\in\NNN$ and $s$, we denote $p_s$ the South Pole of $\widetilde{\S}_n(s)$\footnote{The South Pole of $\widetilde{\S}_n(s)$ can be defined as follows. We first define the South Pole of $\S_{*,n}$ by taking the common point of $\S_{*,n}$ and $S(u_n,r_n)$, denoted by $p_{*,n}$. We then denote $\ga_n$ the geodesic emanating from $p_{*,n}$ in the direction of $e_3^{\S_{*,n}}$ along $\Cb_n$. Then, we define $p_s:=\ga_n\cap\widetilde{\S}_n(s)$.}. Then, there is a unique background sphere $S(u_s,r_s)$, such that $p_s\in S(u_s,r_s)$. We define a function $\Phi_n$ as follows:
\begin{align*}
    r_s:=\Phi_n(u_s).
\end{align*}
Notice that we have by definition and \eqref{sequencedominant}
\begin{align*}
    \Phi_n(u_n)=r_n=\frac{\de_*}{\ep_0}u_n^{1+\dec}.
\end{align*}
\item Let $\Psi_n$ be the deformation map from $S(u_s,\Phi_n(u_s))$ to $\widetilde{\S}_n(s)$. Then, $\Psi_n$ is defined by two scalar functions $U_n$ and $R_n$ as follows: 
\begin{align}
\begin{split}\label{PsinUnRnrelated}
\Psi_n:S(u_s,\Phi_n(u_s))&\to\widetilde{\S}_n(s),\\(u_s,\Phi_n(u_s),\th^1,\th^2)&\to\left(u_s+U_n(u_s,\th^1,\th^2),\Phi_n(u_s)+R_n(u_s,\th^1,\th^2),\th^1,\th^2\right).
\end{split}
\end{align}
Since $\widetilde{\S}_n(s)=\Psi_n(S(u_s,\Phi_n(u_s))$, we have
\begin{align*}
    \Cb_n=\bigcup_{1\leq u_s\leq u_n}\widetilde{\S}_n(s)=\bigcup_{1\leq u_s\leq u_n}\Psi_n(S(u_s,\Phi_n(u_s))).
\end{align*}
\item In the sequel, we use the parameter $u_s$ instead of $s$ to label all the leaves of the geodesic foliation. More precisely, we define
\begin{align}\label{defus}
    \S_n(u_s):=\Psi_n(S(u_s,\Phi_n(u_s)))=\widetilde{\S}_n(s).
\end{align}
Hence, we have\footnote{Notice that we have by construction that $\S_n(u_n)=\S_{*,n}$ and that $\S_n(1)$ is contained in the initial layer region $\LL_0$, see Figure \ref{fig:triangle}. We also remark that the background function $u$ may not be constant on $\S_n(u)$.}
\begin{align*}
    \Cb_n=\bigcup_{1\leq u\leq u_n}\S_n(u)=\bigcup_{1\leq u\leq u_n}\Psi_n(u,\Phi_n(u)).
\end{align*}
\end{enumerate}
\end{df}
See Figure \ref{fig:triangle} for a geometric illustration of the sequence of incoming null cones $\Cb_n$ constructed in Definition \ref{constructionCbndf}.
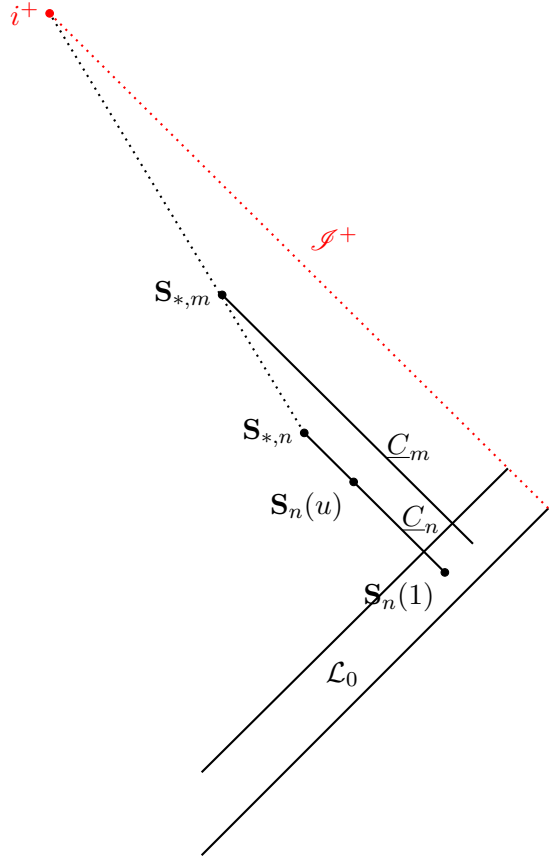
\begin{figure}[H]
    \centering
    \begin{tikzpicture}[scale=1]
    \coordinate (A) at (0, -0.3); 
    \coordinate (B) at (-2, 10.8623); 
    \coordinate (C) at (4.5623,4.3); 
    \draw[thick] (C) -- (A);
    \filldraw[red] (B) circle[radius=0.05] node[left] {$i^+$};
    \draw[red, dotted, thick] (B) -- (C) node[midway, above right] {$\II^+$};
    \coordinate (D) at (0, 0.8); 
    \coordinate (E) at (4.03, 4.83); 
    \draw[thick] (D) -- (E);
    \node[below right] at (1.5, 2.4) {$\LL_0$};
    \coordinate (F) at (1.35, 5.3);  
    \coordinate (G) at (3.2, 3.45); 
    \filldraw(G) circle[radius=0.05] node[below left]  {$\mathbf \S_{n}(1)$};
    \draw[thick] (F) -- (G);
    \filldraw (F) circle[radius=0.05] node[left] {$\mathbf S_{*,n}$};
    \coordinate (H) at (2, 4.65);
    \filldraw (H) circle[radius=0.05] node[below left] {$\S_n(u)$};
    \node[below right] at (2.5, 4.46) {$\Cb_n$};
    \draw[dotted, thick] (F) -- (B);
    \coordinate (I) at (0.27,7.13);
    \coordinate (J) at (3.57,3.83);
    \filldraw (I) circle[radius=0.05] node[left] {$\mathbf S_{*,m}$};
    \draw[thick] (I) -- (J);
    \node[above right] at (2.3, 4.8) {$\Cb_m$};
\end{tikzpicture}
\caption{\small Illustration of $\Cb_n$ and $\Cb_m$ with $n<m$. For any fixed $n\in\NNN$, $\Cb_n$ is foliated by the spheres $\S_n(u)$ for $1\leq u\leq u_n$. The sequence of the intrinsic spheres $\S_{*,n}$ converges to the timelike infinity $i^+$ while the sequence of the incoming null cones $\Cb_n$ approach the future null infinity $\II^+$ when $n\to\infty$.}
\label{fig:triangle}
\end{figure}
\subsubsection{Estimates for  the sequence of incoming null cones}
The goal of this section is to prove Proposition \ref{limittranstion} stated below, in which we obtain the estimates on $\Cb_n$.
\begin{prop}\label{limittranstion}
Let $(\M,\g)$ be a $\KSAF$ spacetime endowed with an outgoing PG $S(u,r)$--foliation. Let $\Cb_n$ be the sequence of incoming null cones constructed in Definition \ref{constructionCbndf}. For any fixed $n\in\NNN$ and $u\in[1,u_n]$, we introduce the following shorthand notation:
\begin{align}\label{shorthandS}
    \S:=\S_n(u),\qquad S:=S(u,\Phi_n(u)).
\end{align}
Then, the following properties hold on $\Cb_n$:
\begin{enumerate}
\item  Let $(f_n,\fb_n,\la_n)$ be the transition functions from the background null frame $(e_3,e_4,e_1,e_2)$ to the adapted null frame $(e_3^\S,e_4^\S,e_1^\S,e_2^\S)$ on $\S$.\footnote{For each fixed $n\in\NNN$ and $s\in[1,u_n]$, $f_n$ and $\fb_n$ are $1$--forms defined on $\S_n(u)\subseteq\Cb_n$. On the other hand, $\la_n$ is a scalar function defined in $\Cb_n$.} Then, we have
\begin{align}
\begin{split}\label{estfnfbnovlan}
\|(f_n,\fb_n,\ovla_n)\|_{\hk_{\ks+3}(\S)}\les 1,\qquad \|\nab_3^\S(f_n,\fb_n,\ovla_n)\|_{\hk_{\ks+2}(\S)}\les\frac{\ep_0}{u^{1+\dec}}.
\end{split}
\end{align}
\item Let $U_n$ and $R_n$ be the scalar functions defined in \eqref{PsinUnRnrelated}. Then, we have
\begin{align}
\begin{split}\label{estUnRn}
\left\|(U_n,R_n)\right\|_{\hk_{\ks+4}(S)}\les r,\qquad \left\|\nab_3(U_n,R_n)\right\|_{\hk_{\ks+3}(S)}\les \frac{\ep_0 r}{u^{1+\dec}}.
\end{split}
\end{align}
\item There exists an $\ep_0 r^{-1}$--approximate basis of $\ell=1$ modes $\JpS$ on spheres $\S$ along $\Cb_n$ satisfying
    \begin{align}\label{nab3JpS}
        e_3^\S(\JpS)=0\quad\mbox{ on }\,\Cb_n,\qquad\JpS=J^{(p,\S_{*,n})}\quad \mbox{ on }\,\S_{*,n},
    \end{align}
where $J^{(p,\S_{*,n})}$ denotes the canonical choice of $\ep_0r^{-1}$--approximate basis of $\ell=1$ modes on $\S_{*,n}$.
\item With respect to the $\ell=1$ basis $\JpS$, the following estimates hold on $\Cb_n$:
\begin{align}
    \begin{split}\label{GCMquantitiesdecayonSin}
    \left|(\div^\S\b^\S)_{\ell=1}\right|&\les\frac{\ep_0}{r^5u^{\dec}},\qquad\quad\;\;\left\|(\mu^\S)_{\ell\geq 2}\right\|_{\hk_{\ks+1}(\S_n)}\les\frac{\ep_0}{r^3u^{\dec}},\\
    \left\|\trch^\S-\frac{2}{r}\right\|_{\hk_{\ks+2}(\S_n)}&\les\frac{\ep_0}{r^2u^{\dec}},\qquad\left\|\trchb^\S+\frac{2\Up}{r}\right\|_{\hk_{\ks+2}(\S_n)}\les\frac{\ep_0}{r^2u^{\dec}},\\
    \left|(\curl^\S\b^\S)_{\ell=1,0}-\frac{2am}{r^5}\right|&\les\frac{\ep_0}{r^5u^{\dec}},\qquad\qquad\quad\;\;\,\left|(\curl^\S\b^\S)_{\ell=1,\pm}\right|\les\frac{\ep_0}{r^5u^{\dec}}.
    \end{split}
\end{align}
\end{enumerate}
\end{prop}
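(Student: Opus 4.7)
All four conclusions are obtained by propagating information backwards along $\Cb_n$ from the capstone intrinsic GCM sphere $\S_{*,n}$, where Theorem \ref{GCMS2} and Corollary \ref{GCMS2cor} supply the necessary initial data: $(f_n, \fb_n, \ovla_n)$ bounded in $\hk_{\ks+3}(\S_{*,n})$, $(U_n, R_n)$ controlled in $\hk_{\ks+4}$, the GCM conditions and calibration \eqref{curlbS=0} holding exactly, and the canonical $\ep_0 r^{-1}$--approximate basis $J^{(p,\S_{*,n})}$ at hand. The plan is to combine the incoming geodesic structure on $\Cb_n$ ($\xib^\S = \eta^\S = \omb^\S = 0$) with the outgoing PG structure of the background ($\xib = \xi = 0$, $\etab + \ze = 0$) to produce a closed system of transport equations in the $\nab_3^\S$ direction for every quantity of interest, and then integrate from $u=u_n$ down to $u\in[1,u_n]$ using the decay rates of Proposition \ref{GagGabdecay}.

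\textbf{Transport for $(f_n,\fb_n,\ovla_n)$ and $(U_n,R_n)$.} Enforcing $\xib^\S = \eta^\S = \omb^\S = 0$ in the formulas of Proposition \ref{Riccitransfer} produces a closed system schematically of the form
\begin{align*}
\tfrac{1}{2}\la_n \nab_3^\S \fb_n + \tfrac{1}{4}\trchb\,\fb_n &= O(r^{-2})\fb_n + \fb_n\cdot\Gab + \lot,\\
\tfrac{1}{2}\la_n \nab_3^\S f_n + \tfrac{1}{4}\trch\,\fb_n &= -\eta + O(r^{-2})\fb_n + \lot,\\
\tfrac{1}{2}\la_n\, e_3^\S(\log\la_n) &= -\omb + \lot.
\end{align*}
The linear parts have $O(r^{-1})$ coefficients while the right-hand sides inherit $|\eta|,|\omb|,|\Gab|\les\ep_0/(ru^{1+\dec})$ from Proposition \ref{GagGabdecay}. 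Commuting with $\dkb^\S$ up to order $\ks+3$ and running a Gronwall argument backwards from $u=u_n$ yields \eqref{estfnfbnovlan}$_1$, while direct substitution into the system produces the improved $\nab_3^\S$ bound \eqref{estfnfbnovlan}$_2$. For \eqref{estUnRn}, Proposition \ref{KS5.14} provides the Hodge-type identities $\pr_{\th^a}U_n = f_n + \lot$ and $\pr_{\th^a}R_n = \tfrac{1}{2}(-\Up f_n + \fb_n) + \lot$ on the background sphere $S(u,\Phi_n(u))$; inverting them with the normalization $U_n(p_s)=R_n(p_s)=0$ at the south pole transfers \eqref{estfnfbnovlan} into \eqref{estUnRn}, and differentiating this elliptic system in $\nab_3$ supplies the $\nab_3$-bound.

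\textbf{The $\ell=1$ basis and the GCM quantities.} We extend $\JpS$ along $\Cb_n$ by solving $e_3^\S(\JpS)=0$ with initial data $J^{(p,\S_{*,n})}$ on $\S_{*,n}$. The approximate-basis identities of Definition \ref{jpdef} transfer to every leaf of $\Cb_n$ because the commutator $[e_3^\S, r^2\De^\S+2]\JpS$ and the $e_3^\S$--variation of the area element on $\S$ are all controlled by Ricci coefficients of size $\ep_0/(ru^{1+\dec})$. For \eqref{GCMquantitiesdecayonSin}, we rewrite the null structure and Bianchi equations of Proposition \ref{prop-nullstrandBianchi:complex:outgoing} in the $\S$-frame through Propositions \ref{Riccitransfer}--\ref{Curvaturetransfer} and the transport system above, extracting $\nab_3^\S$--transport equations for the linearized scalars $\trch^\S-2/r^\S$, $\trchb^\S+2\Up^\S/r^\S$, $(\mu^\S)_{\ell\geq 2}$, $(\div^\S\b^\S)_{\ell=1}$ and $(\curl^\S\b^\S)_{\ell=1,p} - \tfrac{2am}{r^5}\de_{p,0}$, each with initial data on $\S_{*,n}$ of size $\les\ep_0/(ru^{1/2+\dec})$ supplied by Corollary \ref{GCMS2cor}, and integrate.

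\textbf{Principal difficulty.} The delicate point is closing these last integrals so that one recovers the uniform $\ep_0/(r^2 u^\dec)$ decay claimed in \eqref{GCMquantitiesdecayonSin}, rather than the weaker bound produced by naively substituting the $\Ga$--decay into the quadratic nonlinearities. This requires exploiting the dominance condition $r\gtrsim u^{1+\dec}/\ep_0$ valid on $\Cb_n\subset\far$ to trade $r$--decay from Proposition \ref{GagGabdecay} against $u$--decay, while simultaneously commuting $\dkb^\S$--derivatives and the $\ell=1$ projection through the transport equations. The $\ell=1$ projection is the most subtle, since the commutator $[e_3^\S,(\,\cdot\,)_{\ell=1}]$ receives a genuine contribution from the transport structure of $\JpS$ constructed above; the combined error must be shown to be $O(\ep_0/(r^5 u^\dec))$, matching precisely the claimed bound on $(\curl^\S\b^\S)_{\ell=1,p}$.
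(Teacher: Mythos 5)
Your proposal follows essentially the same route as the paper's proof: it controls $(f_n,\fb_n,\ovla_n)$ by combining the incoming geodesic conditions on $\Cb_n$ with the frame transformation formulae of Proposition \ref{Riccitransfer} and the initial control on $\S_{*,n}$ from Theorem \ref{GCMS2}, transfers this to $(U_n,R_n)$ via Proposition \ref{KS5.14}, propagates $\JpS$ by $e_3^\S(\JpS)=0$, and integrates the $\nab_3^\S$--equations for the linearized GCM quantities from their values on $\S_{*,n}$, using that $\Cb_n\subset\far$ to absorb the nonlinear terms. The only cosmetic differences are that the paper imposes $\eta^\S=\ze^\S$ rather than $\eta^\S=0$ (immaterial since $\ze=O(r^{-2})$ on $\far$) and obtains the $\nab_3^\S$ bound first and then integrates, rather than running Gronwall and substituting back.
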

\begin{proof}
As an immediate consequence of Theorem \ref{GCMS2}, the following estimates hold respectively on $\S_{*,n}$ and $S_{*,n}=S(u_n,r_n)$:
\begin{align}\label{fnandsoonS*n}
    \|(f_n,\fb_n,\ovla_n)\|_{\hk_{\ks+3}(\S_{*,n})}\les 1,\qquad\quad\|(U_n,R_n)\|_{\hk_{\ks+4}(S_{*,n})}\les r.
\end{align}
We have from Proposition \ref{Riccitransfer} that on $\Cb_n$:\footnote{Here and below, we use the shorthand notation introduced in \eqref{shorthandS}.}
\begin{align*}
\la^2\xib^\S&=\xib+\frac{1}{2}\la_n\nab_3^\S\fb_n+O(r^{-2}),\\
\eta^\S-\ze^\S&=\eta-\ze+\frac{1}{2}\la_n\nab_3^\S f_n+O(r^{-2}),\\
\la\omb^\S&=\omb+\frac{1}{2}\la_n\nab_3^\S\ovla_n+O(r^{-2}).
\end{align*}
Applying the incoming geodesic conditions $\xib^\S=0, \omb^\S=0, \eta^\S=\ze^\S$ and the decay properties of $\xi$, $\omb$, $\eta$ and $\ze$ in Proposition \ref{GagGabdecay}, we obtain\footnote{We also apply the fact that $\Cb_n$ is contained in $\far$.}
\begin{align}\label{nab3fnfbnovlan}
    \|\nab_3^\S(f_n,\fb_n,\ovla_n)\|_{\hk_{\ks+2}(\S)}\les\frac{\ep_0}{u^{1+\dec}}.
\end{align}
Combining with \eqref{fnandsoonS*n}, we infer
\begin{align}\label{fnfbnovlancontrolled}
    \sup_{1\leq u\leq u_n}\|(f_n,\fb_n,\ovla_n)\|_{\hk_{\ks+2}(\S)}\les 1,
\end{align}
which implies \eqref{estfnfbnovlan}. Combining with Proposition \ref{KS5.14}, we easily deduce
\begin{align}
\begin{split}\label{UnRncontrolled}
    \sup_{1\leq u\leq u_n}\|(U_n,R_n)\|_{\hk_{\ks+3}(S)}&\les r,\\
    \sup_{1\leq u\leq u_n}\left\|\nab_3(U_n,R_n)\right\|_{\hk_{\ks+2}(S)}&\les \frac{\ep_0r}{u^{1+\dec}}.
\end{split}
\end{align}
This concludes the proof of \eqref{estUnRn}.\\ \\
We then define $\JpS$ on $\Cb_n$ by the following transport equation:
\begin{align*}
    e_3^\S(J^{(p,\S)})=0,\quad\mbox{ on }\,\Cb_n,\qquad J^{(p,\S)}=J^{(p,\S_{*,n})}\quad \mbox{ on }\,\S_{*,n}.
\end{align*}
Integrating it along $\Cb_n$ and combining with \eqref{canonicalonGCMS2}, we obtain that the following identities for $\JpS$ hold on $\Cb_n$:
\begin{align*}
    \De^\S\JpS+\frac{2}{r^2}\JpS&=O\left(\frac{\ep_0}{r^3}\right),\\
    \frac{1}{|\S|}\int_\S \JpS J^{(q,\S)}&=\frac{4\pi}{3}\de_{pq}+O\left(\frac{\ep_0}{r}\right),\\
    \frac{1}{|\S|}\int_\S \JpS &=O\left(\frac{\ep_0}{r}\right),
\end{align*}
which implies that $\JpS$ is an $\ep_0 r^{-1}$--approximate basis of $\ell=1$ modes on $\Cb_n$, this concludes \eqref{nab3JpS}. Next, we have from Proposition \ref{prop-nullstrandBianchi:complex:outgoing}\footnote{The bounds on the R.H.S. follow easily from Propositions \ref{Riccitransfer}, \ref{Curvaturetransfer}, \eqref{fnfbnovlancontrolled}, \eqref{UnRncontrolled} and the fact that $\Cb_n$ is contained in $\far$.}
\begin{align*}
\nab_3^\S\left(\trch^\S-\frac{2}{r},\trchb^\S+\frac{2\Up}{r}\right)&=O\left(\frac{\ep_0}{r^2u^{1+\dec}}\right),\\
\nab_3^\S(\div^\S\b^\S,\curl^\S\b^\S)&=O\left(\frac{\ep_0}{r^5u^{1+\dec}}\right),\\
\nab_3^\S(\mu^\S)&=O\left(\frac{\ep_0}{r^3u^{1+\dec}}\right).
\end{align*}
Integrating them along $\Cb_n$ and applying Theorem \ref{GCMS2} and Lemma \ref{lemma:comm-gen}, we easily obtain \eqref{GCMquantitiesdecayonSin}. This concludes the proof of Proposition \ref{limittranstion}.
\end{proof}
\subsubsection{Limiting process}\label{sec:limitingprocess}
In this section, we study the limits of the quantities defined on the sequence of incoming null cones $\Cb_n$, constructed in Definition \ref{constructionCbndf}. To this end, we first recall the following version of Arzel\`a-Ascoli Lemma.
\begin{lem}[Arzel\`a-Ascoli]\label{AA}
Let $K$ be a compact metric space and let $\phi_n\in C(K)$ be a sequence of functions defined on $K$. Assume that
\begin{enumerate}
    \item $\phi_n$ is uniformly bounded, that is
    \begin{align*}
        \sup_{x\in K}|\phi_n(x)|\leq M,\qquad\forall\, n\in\NNN.
    \end{align*}
    where $M$ is a constant independent of $n$.
    \item $\phi_n$ is equicontinuous, that is
    \begin{align*}
        \forall\ep>0,\;\exists \de>0\;\mbox{ such that for all }\;d(x_1,x_2)<\de \Rightarrow |\phi_n(x_1)-\phi_n(x_2)|<\ep,\quad\forall\, n\in\NNN.
    \end{align*}
\end{enumerate}
    Then, there exists a subsequence of $\phi_n$, which converges uniformly to a function $\phi\in C(K)$.
\end{lem}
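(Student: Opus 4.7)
The plan is to adapt the classical diagonal extraction argument, which breaks naturally into three stages: (i) produce a countable dense subset of $K$ on which to extract a subsequence; (ii) use uniform boundedness plus a Cantor-type diagonal procedure to pass to a subsequence that converges pointwise on this dense set; (iii) upgrade pointwise convergence on the dense set to uniform convergence on all of $K$, using equicontinuity together with the total boundedness of $K$.

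First I would use that a compact metric space is separable, so I can fix a countable dense subset $D=\{x_k\}_{k\in\NNN}\subset K$. By uniform boundedness the real sequence $(\phi_n(x_1))_n$ is bounded in $[-M,M]$, hence has a convergent subsequence $(\phi_{\sigma_1(n)})_n$; applying the same argument at $x_2$ to this subsequence gives $(\phi_{\sigma_2(n)})_n$ converging at $x_1, x_2$, and iterating produces nested subsequences $(\phi_{\sigma_k(n)})_n$ converging at $x_1,\dots,x_k$. The diagonal choice $\psi_n:=\phi_{\sigma_n(n)}$ is then a subsequence of $(\phi_n)$ that converges at every point of $D$.

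The key step is to promote this to uniform convergence on $K$. Given $\ep>0$, equicontinuity furnishes $\de>0$ such that $d(x,y)<\de$ implies $|\phi_n(x)-\phi_n(y)|<\ep/3$ for \emph{every} $n$. By total boundedness of $K$, finitely many $\de$-balls cover $K$, and by density of $D$ I can take their centers to be $y_1,\dots,y_N\in D$. Since $(\psi_n(y_i))_n$ is Cauchy for each $i$, there exists $N_0$ with $|\psi_n(y_i)-\psi_m(y_i)|<\ep/3$ whenever $m,n\geq N_0$ and $1\leq i\leq N$. For any $x\in K$, pick $y_i$ with $d(x,y_i)<\de$; then
\begin{equation*}
|\psi_n(x)-\psi_m(x)|\leq|\psi_n(x)-\psi_n(y_i)|+|\psi_n(y_i)-\psi_m(y_i)|+|\psi_m(y_i)-\psi_m(x)|<\ep,
\end{equation*}
so $(\psi_n)$ is uniformly Cauchy on $K$ and hence converges uniformly to some $\phi$. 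Continuity of $\phi$ follows either from the uniform limit theorem, or directly by passing the equicontinuity inequality to the limit.

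There is no genuine obstacle: Lemma \ref{AA} is a classical statement inserted as a ready-to-use tool, and the only care needed is in bookkeeping the diagonal extraction. The substantive work in the surrounding Section \ref{sec:limitingprocess} lies not in proving this lemma but in \emph{verifying its hypotheses}, namely the uniform boundedness and equicontinuity of the geometric quantities defined on the spheres $\S_n(u)\subset\Cb_n$, which is where the estimates of Proposition \ref{limittranstion} (together with the decay provided by Proposition \ref{GagGabdecay}) will play the decisive role.
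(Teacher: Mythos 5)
Your proof is correct and is the standard diagonal-extraction argument for Arzel\`a--Ascoli. The paper does not prove Lemma \ref{AA} at all --- it simply cites Theorem 4.25 of \cite{Brezis} --- and your argument is essentially the one found there, so nothing further is needed; your closing remark correctly identifies that the real work in Section \ref{sec:limitingprocess} is in verifying the hypotheses via Proposition \ref{limittranstion}.
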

\begin{proof}
    See for example Theorem 4.25 in \cite{Brezis}.
\end{proof}
\begin{prop}\label{AAapply}
Let $(\M,\g)$ be a $\KSAF$ spacetime endowed with an outgoing PG $S(u,r)$--foliation and let $\Cb_n$ be the sequence of incoming null cones constructed in Definition \ref{constructionCbndf}. We define the following sequence:
\begin{align}\label{dfQQn}
    \QQ_n:=\left(U_n,R_n,rf_n,r\fb_n,r\ovla_n,J^{(p,\S_n(u))}\right),
\end{align}
where all quantities are defined in Proposition \ref{limittranstion}. Then, there exists a subsequence of $\QQ_n$, still denoted by $\QQ_n$, s.t. the following limit exists:\footnote{Notice that the convergence of $\QQ_n$ is uniform on $[1,A]\times\SSS^2$ for any fixed $A\geq 1$.}
\begin{align}\label{eq:limitsFFbLaJp}
\lim_{n\to\infty}\QQ_n=\left(U_\infty,R_\infty,F,\Fb,\La,\Jpp\right),
\end{align}
which can be viewed as functions defined on $\II^+$.
\end{prop}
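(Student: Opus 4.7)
The proof will follow the standard pattern of extracting a convergent subsequence via the Arzel\`a--Ascoli theorem (Lemma \ref{AA}), applied on exhausting compact slabs of a common parameter space, followed by a diagonal extraction. The first step is to identify a single metric space on which all members of the sequence are defined. To this end, I would parametrize each incoming null cone $\Cb_n$ by $(u,\th^1,\th^2) \in [1,u_n]\times\SSS^2$, where $u$ labels the leaf $\S_n(u)=\Psi_n(S(u,\Phi_n(u)))$ introduced in Definition \ref{constructionCbndf} and $(\th^1,\th^2)$ are the angular coordinates inherited from the background PG foliation through the deformation $\Psi_n$. Each component of $\QQ_n$ then becomes a function on $[1,u_n]\times\SSS^2$, and for any fixed $A\geq 1$ all but finitely many $\QQ_n$ are defined on the compact slab $K_A:=[1,A]\times\SSS^2$.

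Next I would verify the hypotheses of Lemma \ref{AA} on each $K_A$. Uniform boundedness follows directly from the Sobolev estimates of Proposition \ref{limittranstion}: applying two-dimensional Sobolev embedding on the spheres $\S_n(u)$ (of area $\sim r^2$), the $\hk_{\ks+3}(\S)$ bound on $(f_n,\fb_n,\ovla_n)$ translates into a pointwise bound on $(rf_n,r\fb_n,r\ovla_n)$ that is uniform in $n$ and in $u\in[1,A]$, while the $\hk_{\ks+4}(S)$ bound on $(U_n,R_n)$ gives $|U_n|+|R_n|\les r\les C(A)$ on $K_A$. The boundedness of $J^{(p,\S_n(u))}$ follows from \eqref{nab3JpS}, since it is obtained from the canonical choice $J^{(p,\S_{*,n})}$ on $\S_{*,n}$ by the transport $e_3^\S J^{(p,\S)}=0$ along $\Cb_n$ and the canonical initial data are uniformly controlled by Corollary \ref{GCMS2cor}. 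Equicontinuity in the angular variables follows from the bounds on one higher number of tangential derivatives combined with Sobolev embedding on $\SSS^2$. Equicontinuity in the $u$-variable follows from the $\nab_3^\S$ estimates of Proposition \ref{limittranstion}, once $\nab_3^\S$ is converted into $\pr_u$ along $\Cb_n$: since the incoming affine parameter and the background retarded time $u$ are related via the function $\Phi_n$, the chain rule together with the bounds on the background Ricci coefficients yields $|\pr_u\QQ_n|\les \ep_0/u^{1+\dec}$, a bound uniform in $n$ and integrable in $u$ on $[1,A]$.

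With both hypotheses verified, Lemma \ref{AA} produces, for each fixed $A$, a subsequence of $\QQ_n$ converging uniformly on $K_A$ to a continuous limit. A standard Cantor diagonal argument over $A=1,2,3,\ldots$ then yields a single subsequence, which I continue to denote $\QQ_n$, that converges uniformly on every $K_A$; equivalently, $\QQ_n$ converges pointwise on all of $[1,\infty)\times\SSS^2$ (uniformly on compact subsets) to a limit sextuple $(U_\infty,R_\infty,F,\Fb,\La,\Jpp)$. To interpret the limit as a function on $\II^+$, I would use that for each fixed $u$ the leaf $\S_n(u)\subset\Cb_n$ lies at background radius $\Phi_n(u)\to\infty$ as $n\to\infty$, so that $\S_n(u)$ approaches a sphere on $\II^+$; the parameter $u$ together with the angular coordinates $(\th^1,\th^2)$ therefore provide natural coordinates on $\II^+$, and the limit becomes a sextuple of functions defined on $\II^+$.

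The main technical obstacle is the careful handling of the change of parameter between the incoming affine parameter along $\Cb_n$ and the background retarded time $u$ used in the common parametrization. The bounds in Proposition \ref{limittranstion} are stated in terms of $\nab_3^\S$, but equicontinuity in $u$ demands $\pr_u$-estimates in the common space $[1,\infty)\times\SSS^2$. Verifying that the Jacobian of this change of parameter is uniformly bounded above and below in $n$, so that the $\ep_0/u^{1+\dec}$ decay of $\nab_3^\S\QQ_n$ transfers to a uniform equicontinuity modulus in $u$, requires careful use of the smallness of the transition functions $(f_n,\fb_n,\ovla_n)$ and of the structural estimates on $e_3(u)$ and $e_3(r)$ encoded in Definition \ref{dfKSAF}.
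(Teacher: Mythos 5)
Your proposal is correct and follows essentially the same route as the paper: view the components of $\QQ_n$ as functions on $[1,u_n]\times\SSS^2$, obtain uniform boundedness and equicontinuity from the estimates \eqref{estfnfbnovlan}--\eqref{nab3JpS} of Proposition \ref{limittranstion}, apply Arzel\`a--Ascoli (Lemma \ref{AA}) on an exhausting family of compact slabs, and conclude by a diagonal extraction. One small slip: on $K_A$ the radius $r=\Phi_n(u)$ is \emph{not} bounded by a constant $C(A)$ (it tends to infinity with $n$ for fixed $u$), so the uniform bound on $(U_n,R_n)$ should come from the Sobolev embedding on spheres of area $\sim r^2$, which converts $\|(U_n,R_n)\|_{\hk_{\ks+4}(S)}\les r$ into the pointwise bound $|U_n|+|R_n|\les 1$, exactly as for $(rf_n,r\fb_n,r\ovla_n)$.
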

\begin{rk}
In order to describe the limit of the sequence of incoming null cones $\Cb_n\to\II^+$, we prove in Proposition \ref{AAapply}, the convergence of the quantities in \eqref{dfQQn}. Using their limits, we construct in Step 1 of Section \ref{ssecendproof}, a new $S'$--foliation of $\far$. This allows us to avoid the convergence of a sequence of submanifolds in $\far$.
\end{rk}
\begin{proof}[Proof of Proposition \ref{AAapply}] We consider the components of $\QQ_n$ as functions on $[1,u_n]\times\SSS^2$. In view of \eqref{estfnfbnovlan}--\eqref{nab3JpS}, the sequence $\{\QQ_n\}$ is uniformly bounded and equicontinuous on $[1,u_n]\times\SSS^2$. Let $m_1\geq 4$, then $\QQ_n$ is well defined in $[1,m_1]\times\SSS^2$ for $n$ large enough. Applying Lemma \ref{AA} to functions $\QQ_n$ on the compact set $[1,m_1]\times\SSS^2$, we deduce that there exists a subsequence, denoted by $\{\QQ^{(m_1)}_n\}$, which converges uniformly on $[1,m_1]\times\SSS^2$. Next, denoting $m_2=2m_1$, the sequence $\{\QQ^{(m_1)}_n\}$ is well defined on $[1,m_2]\times\SSS^2$ for $n$ large enough. We have from Lemma \ref{AA} that there exists a subsequence, denoted by $\{\QQ^{(m_2)}_n\}$, which converges uniformly on $[1,m_2]\times\SSS^2$. Proceeding as above, for any $m_k=2^{k-1}m_1$, we deduce the existence of a subsequence of $\QQ_n$, denoted by $\{\QQ^{(m_k)}_n\}$, which converges uniformly on $[1,m_k]\times\SSS^2$. Using the standard diagonal argument, we deduce that the subsequence $\{\QQ^{(m_n)}_n\}$ converges to a well defined group of functions on $\RRR\times\SSS^2$. We then denote
\begin{align*}
\left(U_\infty,R_\infty,F,\Fb,\La,\Jpp\right):=\lim_{n\to\infty}\QQ^{(m_n)}_n=\lim_{n\to\infty}\left(U_n,R_n,rf_n,r\fb_n,r\ovla_n,J^{(p,\S_n(u))}\right).
\end{align*}
This concludes the proof of Proposition \ref{AAapply}.
\end{proof}
\subsection{Proof of Theorem \ref{LGCMconstruction}}\label{ssecendproof}
We are now ready to prove Theorem \ref{LGCMconstruction}. The proof is divided into 6 steps as follows:
\begin{itemize}
    \item In Step 1, we construct $S'$--foliation using the functions $(U_\infty,R_\infty,F,\Fb,\La,\Jpp)$ on $\II^+$, which are obtained in Proposition \ref{AAapply}.
    \item In Step 2, we use the null transformation formulae in Propositions \ref{Riccitransfer} and \ref{Curvaturetransfer} to show the existence of the limiting quantities in the new $S'$--foliation, constructed in Step 1.
    \item In Step 3, we apply the null transformation formulae on $\Cb_n$. Taking $n\to\infty$ and applying Proposition \ref{AAapply}, we derive several identities of $(U_\infty,R_\infty,F,\Fb,\La,\Jpp)$ in relation to the limiting quantities obtained in Propositions \ref{limitexist} and \ref{limitB}.
    \item In Step 4, we use the identities for $(U_\infty,R_\infty,F,\Fb,\La,\Jpp)$ derived in Step 3 to prove that the LGCM conditions \eqref{GCMS2limit}--\eqref{Incominggeodesic} hold in the $S'$--foliation.
    \item In Step 5, we derive the essential properties of the new parameters $u'$ and $r'$ introduced in Step 1 through the relation \eqref{ovbest}. Moreover, $u'$ will be reparameterized in \eqref{eq:reparam}.
    \item In Step 6, we show that $\Jpp$ is a $r^{-1}$--approximate basis of $\ell=1$ modes on the spheres $S'(u',r')$ in the sense of Definition \ref{jpdef}.
    \item Combining the above results, we deduce that all the desired properties of $S'$--foliation stated in Definition \ref{limitinggeodesic} are verified. This concludes the proof  of the fact that the $S'$--foliation is a LGCM foliation of $\far$.
\end{itemize}
{\bf Step 1.} Construction of $S'$--foliation.\\ \\
Let $(U_\infty,R_\infty,F,\Fb,\La,\Jpp)$ be the functions on $\II^+$ obtained in Proposition \ref{AAapply}, we define a $S'$--foliation of $\far$ as follows:
\begin{enumerate}
\item For any fixed background sphere $S(u_0,r_0)$, we define a new sphere $S'$ by the following deformation map $\Psi$:
\begin{align}
\begin{split}\label{S'constructed}
\Psi:S(u_0,r_0)&\rightarrow S',\\
\left(u_0,r_0,\th^1,\th^2\right)&\rightarrow\left(u_0+U_\infty(u_0,\th^1,\th^2),r_0+R_\infty(u_0,\th^1,\th^2),\th^1,\th^2\right),
\end{split}
\end{align}
where $(\th^1,\th^2)$ can be any of the three coordinates charts. We define two scalar functions $u'$ and $r'$ on $\MM$ such that they are constants on each such $S'$ 
\begin{align}\label{u'r'df}
    u'\big|_{S'}=u_0,\qquad\quad r'\big|_{S'}=r_0.
\end{align}
Notice that we have from \eqref{u'r'df}
\begin{align*}
    u'(p)=u_0,\qquad\quad r'(p)=r_0, 
\end{align*}
with
\begin{align*}
    p=(u_0+U_\infty(u_0,\th^1,\th^2),r_0+R_\infty(u_0,\th^1,\th^2),\th^1,\th^2)\in\Psi(S(u_0,r_0)).
\end{align*}
Hence, we have, at any $p\in \Psi(S(u_0,r_0))$,
\begin{align*}
    u'(p)&=u_0=u(p)-U_\infty(u_0,\th^1,\th^2)=u(p)-U_\infty(\Psi^{-1}(p)),\\
    r'(p)&=r_0=r(p)-R_\infty(u_0,\th^1,\th^2)=r(p)-R_\infty(\Psi^{-1}(p)).
\end{align*}
We define the following scalar functions on $S'$:
\begin{align}\label{dfURdiese}
    U^\#:=U_\infty\circ\Psi^{-1},\qquad R^\#:=R_\infty\circ \Psi^{-1}.
\end{align}
Viewing $\Psi$ as globally defined on $\far$, and similarly $U^\#$ and $R^\#$ as globally defined, we have in the spacetime $\M$:  
\begin{align}\label{ovbest}  
    u' = u - U^\#,\qquad r' = r - R^\#.  
\end{align}
By construction, $S'$ is given by the level sets of $u'$ and $r'$. In the sequel, we denote $S'=S'(u',r')$.\footnote{The spheres $S'(u',r')$ generate a foliation in $\M$. Indeed, let $p=(u_p,r_p,\th^1_p,\th^2_p)\in\M$ where the coordinates are taken w.r.t. the background outgoing PG foliation. We introduce the following map $\Psi^{(p)}(u,r)=(u+U_\infty(u,\th^1_p,\th^2_p),r+R_\infty(u,\th^1_p,\th^2_p))$. Then, the determinant of the Jacobian matrix of $\Psi^{(p)}$ is given by $J=1+\pr_uU_\infty>0$. By the inverse mapping theorem, there exists a unique $(u',r')$ such that $\Psi^{(p)}(u',r')=(u_p,r_p)$, which shows that $p\in S'(u',r')=\Psi^{(p)}(S(u,r))$.} Moreover, $u'$ will be reparametrize by \eqref{eq:reparam} in Step 5.
\item We introduce the following transition functions:
\begin{align}\label{ffblaintrodu}
    f=\frac{F}{r},\qquad\quad \fb=\frac{\Fb}{r},\qquad\quad\la=1+\frac{\La}{r}.
\end{align}
Starting  with the background frame $(e_3,e_4,e_1,e_2)$, we use the transformation formula \eqref{General-frametransformation}, with  the  $(f,\fb,\la)$ as above, to define the new frame \footnote{The new frame $(e'_3,e'_4,e'_1,e'_2)$ may not be $S'$--adapted, i.e. $e'_1$ and $e'_2$ may not tangent to $S'$. $(e'_3,e'_4,e'_1,e'_2)$.}
\item Considering $\Jpp$ as a function of $(u,\th^1,\th^2)$, we extend it on any $S'(u',r')$, which is the deformation of a background sphere $S(u,r)$, as follows:
\begin{align}\label{Jppdf}
\Jpp(u+U_\infty(u,\th^1,\th^2),r+R_\infty(u,\th^1,\th^2),\th^1,\th^2)=\Jpp(u,\th^1,\th^2).
\end{align}
The properties of $\Jpp$ will be deduced in Step 6.
\item The background function $u$ (of the background outgoing PG foliation) induces a $u$--foliation on $\II^+$. We denote by $S(u)$ the corresponding spheres on $\II^+$. The  new $S'(u',r')$--foliation of $\far$ induces a $S'(u')$--foliation on $\II^+$ as follows:
\begin{align*}
\Psi:S(u)&\rightarrow S'(u'),\\
\left(u,\th^1,\th^2\right)&\rightarrow\left(u+U_\infty(u,\th^1,\th^2),\th^1,\th^2\right).
\end{align*}
\end{enumerate}
\noindent{\bf Step 2.} Null transformation formulae and limiting quantities.\\ \\
As an immediate consequence of \eqref{estfnfbnovlan}, \eqref{estUnRn} and \eqref{eq:limitsFFbLaJp}, we have
\begin{align}\label{limitingestimates}
    \|(F,\Fb,\La)\|_{\hk_{\ks+1}(S')}\les r,\qquad \|(U_\infty,R_\infty)\|_{\hk_{\ks+2}(S)}\les r.
\end{align}
Combining with \eqref{ovbest} and \eqref{ffblaintrodu}, we deduce
\begin{align}\label{assumptionusedhere}
    (f,\fb,\ovla)=O(r^{-1}), \qquad\quad r'=r-R^\#,
\end{align}
where $R^\#$ was defined in \eqref{dfURdiese}. Under the assumptions in \eqref{assumptionusedhere}, the transformation formulae in Proposition \ref{Riccitransfer} can be written as follows:
\begin{align}
\begin{split}\label{riccit}
r^2\left(\trch'-\frac{2}{r'}\right)&=r^2\left(\trch-\frac{2}{r}\right)-2R^\#+r\div'(rf)+rf\c r\eta+O(r^{-1}),\\
r^2\,\atrch'&=r^2\,\atrch+r\curl'(rf)+rf\wedge r\eta+O(r^{-1}),\\
r^2\left(\trchb'+\frac{2}{r'}\right)&=r^2\left(\trchb+\frac{2}{r}\right)+2R^\#+r\div'(r\fb)+rf\c r\xib+O(r^{-1}),\\
r^2\,\atrchb'&=r^2\,\atrchb+r\curl'(r\fb)+rf\wedge r\xib+O(r^{-1}),\\
r^2\ze'&=r^2\ze-r\nab'(r\ovla)+\frac{1}{2}(rf+r\fb)+O(r^{-1}),\\
r\eta'&=r\eta+\frac{1}{2}\nab'_3(rf)+O(r^{-1}),\\
r\xib'&=r\xib+\frac{1}{2}\nab'_3(r\fb)+O(r^{-1}),\\
r\omb'&=r\omb+\frac{1}{2}\nab'_3(r\ovla)+O(r^{-1}).
\end{split}
\end{align}
The transformation formulae in Proposition \ref{Curvaturetransfer} can be written as follows:
\begin{align}
\begin{split}\label{curvaturet}
r^3\mu'&=r^3\mu+(r^2\De'+2)(r\ovla)+\fl[\mu](rf,r\fb,r\ovla)+O(r^{-1}),\\
r^5\div'\b'&=r^5\div\b+\frac{3r^3\rho}{2}r\div'(rf)+\fl[\div\b](rf,r\fb,r\ovla)+O(r^{-1}),\\
r^5\curl'\b'&=r^5\curl\b+\frac{3r^3\rho}{2}r\curl'(rf)+\fl[\curl\b](rf,r\fb,r\ovla)+O(r^{-1}),\\
r^3\rho'&=r^3\rho+\fl[\rho](rf,r\fb,r\ovla)+O(r^{-1}),\\
r^3\rhod'&=r^3\rhod+\fl[\rhod](rf,r\fb,r\ovla)+O(r^{-1}),\\
r^2\bb'&=r^2\bb+\fl[\bb](rf,r\fb,r\ovla)+O(r^{-1}),\\
r\aa'&=r\aa+\fl[\aa](rf,r\fb,r\ovla)+O(r^{-1}),
\end{split}
\end{align}
where $\fl[...](rf,r\fb,r\ovla)$ denote error terms of the following type:
\begin{align*}
    \dko\left((rf,r\fb,r\ovla)^k\c\left(r\Gab,r^2\Gag\right)\right),\qquad 1\leq k\leq 3.
\end{align*}
Combining with Proposition \ref{limitexist} and \eqref{ffblaintrodu}, we deduce that all the limiting quantities defined in Proposition \ref{limitexist} related to $(e'_3,e'_4,e'_1,e'_2)$ also exist.\\ \\
We now focus on the existence of $(\Bk',\Bkd')$ defined as in Proposition \ref{limitB}. For any scalar function $h$ on a sphere $S$ with a $\ell=1$ basis $\Jp$, we introduce the following shorthand notations:
\begin{align*}
(h)_{\ell=0,S}:=\frac{1}{|S|}\int_S h,\qquad\quad (h)_{\ell=1,S,\Jp}:=\frac{1}{|S|}\int_S h\Jp,\qquad p=0,+,-.
\end{align*}
We also denote
\begin{align*}
    (h)_{\ell\geq 1,S}:=h-(h)_{\ell=0,S},\qquad (h)_{\ell\geq 2,S,\Jp}:=h-(h)_{\ell=0,S}-(h)_{\ell=1,S,\Jp}.
\end{align*}
Then, we have from \eqref{Jppdf}
\begin{align}
\begin{split}\label{newBkexists}
    &\quad\;|S'|\,(r^5d_1\b)_{\ell=1,S',\Jpp}\\
    &=\int_{\SSS^2}\left(r^5d_1\b\sqrt{\det\g}\right)(u+U_\infty,r+R_\infty,\th^1,\th^2)\Jpp(u,\th^1,\th^2) d\th^1 d\th^2\\
    &=\int_{\SSS^2}\left(r^5d_1\b\sqrt{\det\g}\right)(u,r,\th^1,\th^2)\Jpp(u,\th^1,\th^2) d\th^1 d\th^2\\
    &+\int_{\SSS^2}\int_0^1\pr_u\left(r^5d_1\b\sqrt{\det\g}\right)(u+tU_\infty,r+tR_\infty,\th^1,\th^2)dt\,(U_\infty\,\Jpp)(u,\th^1,\th^2)d\th^1d\th^2\\
    &+\int_{\SSS^2}\int_0^1\pr_r\left(r^5d_1\b\sqrt{\det\g}\right)(u+tU_\infty,r+tR_\infty,\th^1,\th^2)dt\,(R_\infty\,\Jpp)(u,\th^1,\th^2)d\th^1d\th^2.
\end{split}
\end{align}
Note that the existence of the following limits follows directly from Propositions \ref{prop-nullstrandBianchi:complex:outgoing} and \ref{limitexist}:
\begin{align*}
\lim_{C_u,r\to\infty}\left(\pr_u\left(r^5d_1\b\sqrt{\det\g}\right),\pr_r\left(r^5d_1\b\sqrt{\det\g}\right)\right).
\end{align*}
Moreover, we have from Proposition \ref{limitB} that the following limit exists:
\begin{align*}
\lim_{C_u,r\to\infty}\frac{1}{|S'|}\int_{\SSS^2}\left(r^5d_1\b\sqrt{\det\g}\right)(u,r,\th^1,\th^2)\Jpp(u,\th^1,\th^2) d\th^1 d\th^2.
\end{align*}
Hence, we infer the existence of the following limit:
\begin{align*}
    \lim_{C_u,r\to\infty}(r^5d_1\b)_{\ell=1,S',\Jpp}.
\end{align*}
Combining with \eqref{curvaturet}, \eqref{ffblaintrodu} and Proposition \ref{limitexist}, we deduce the existence of the following limit:
\begin{align*}
    (\Bk',\Bkd'):=\lim_{C_{u'},r'\to\infty}{r'}^5\left(\div'\b',\curl'\b'\right)_{\ell=1,S',\Jpp}.
\end{align*}
We recall from Proposition 2.1.43 in \cite{GKS} that the horizontal Gauss curvature is defined by
\begin{align*}
    ^{(h)}K:=-\frac{1}{4}\trch\,\trchb-\frac{1}{4}\atrch\,\atrchb+\frac{1}{2}\hch\c\hchb-\rho.
\end{align*}
As an immediate consequence of Proposition \ref{GagGabdecay}, we have
\begin{align*}
    {}^{(h)}K-\frac{1}{r^2}=O\left(\frac{\ep_0}{r^3}\right).
\end{align*}
Combining with \eqref{riccit} and \eqref{curvaturet}, we easily obtain that on $S'$\footnote{Here, we denoted $r'_{S'}$ the area radius of $S'$, which satisfies $|r'_{S'}-r'|\les 1$.}
\begin{align*}
    r^2\,{}^{(h)}K'-1=O\left(\frac{1}{{r}}\right).
\end{align*}
Letting $r'\to\infty$, we deduce that for any fixed $u'$, $(S'(u'),r^{-2}\g|_{S'(u')})$ is a round sphere on $\II^+$.\\ \\
{\bf Step 3.} Identities for $(U_\infty, R_\infty, F,\Fb,\La,\Jpp)$.\\ \\
Let $\Cb_n$ be the incoming null cone constructed in Proposition \ref{limittranstion}, together with the transition functions $(f_n,\fb_n,\la_n)$. Then, we have from \eqref{riccit}\footnote{Here, $(f_n,\fb_n,\ovla_n)$ satisfies the condition \eqref{estfnfbnovlan}.} that, the following hold on $\Cb_n$:\footnote{Throughout this step, we denote $\S:=\S_n(u)$, the leaves of the incoming geodesic foliation on $\Cb_n$ as in \eqref{shorthandS}, in order to simplify the notations.} 
\begin{align*}
    r^2\,\atrch^\S&=r^2\,\atrch+r\curl^\S(rf_n)+rf_n\wedge r\eta+O(r^{-1}),\\
    r^2\,\atrchb^\S&=r^2\,\atrchb+r\curl^\S(r\fb_n)+rf_n\wedge r\xib+O(r^{-1}),\\
    r(\eta^\S-\ze^\S)&=r(\eta-\ze)+\frac{1}{2}\nab^\S_3(rf_n)+O(r^{-1}),\\
    r\xib^\S&=r\xib+\frac{1}{2}\nab^\S_3(r\fb_n)+O(r^{-1}),\\
    r\omb^\S&=r\omb+\frac{1}{2}\nab^\S_3(r\ovla_n)+O(r^{-1}).
\end{align*}
Applying the incoming geodesic conditions and Proposition \ref{GagGabdecay}, we deduce
\begin{align*}
    r^2\,\atrch+r\curl^\S(rf_n)+rf_n\wedge r\eta&=O(r^{-1}),\\
    r^2\,\atrchb+r\curl^\S(r\fb_n)+rf_n\wedge r\xib&=O(r^{-1}),\\
    r\eta+\frac{1}{2}\nab_3^\S(rf_n)&=O(r^{-1}),\\
    r\xib+\frac{1}{2}\nab_3^\S(r\fb_n)&=O(r^{-1}),\\
    r\omb+\frac{1}{2}\nab_3^\S(r\ovla_n)&=O(r^{-1}).
\end{align*}
For any fixed $u$, taking $n\to\infty$ and applying Propositions \ref{limitexist} and \ref{limittranstion}, we infer\footnote{Recall that $\curlo$ and $\nabo_3$ is defined in \eqref{dfnabo}.}
\begin{align}
\begin{split}\label{Gconditions}
    \aXscr+\curlo'(F)+F\wedge\Hscr&=0,\\
    \aXbscr+\curlo'(\Fb)+F\wedge\Ybscr&=0,\\
    \Hscr+\frac{1}{2}\nabo_3'F&=0,\\
    \Ybscr+\frac{1}{2}\nabo_3'\Fb&=0,\\
    \nabo_3'\La&=0.
\end{split}
\end{align}
Next, we have from \eqref{riccit} and \eqref{curvaturet} that on $\Cb_n$:
\begin{align*}
    \left(r^5\div^\S\b^\S\right)_{\ell=1,\S,\JpS}&=(r^5\div\b)_{\ell=1,\S,\JpS}+\frac{3}{2}\left(r^4\rho\div^\S(rf_n)\right)_{\ell=1,\S,\JpS}\\
    &+\fl[\div\b](rf_n,r\fb_n,r\ovla_n)+O(r^{-1}),\\
    \left(r^5\curl^\S\b^\S\right)_{\ell=1,\S,\JpS}&=(r^5\curl\b)_{\ell=1,\S,\JpS}+\frac{3}{2}\left(r^4\rho\curl^\S(rf_n)\right)_{\ell=1,\S,\JpS}\\
    &+\fl[\curl\b](rf_n,r\fb_n,r\ovla_n)+O(r^{-1}),\\
    (r^3\mu^\S)_{\ell\geq 2,\S,\JpS}&=(r^3\mu)_{\ell\geq 2,\S,\JpS}+\left((r^2\De^\S+2)(r\ovla_n)\right)_{\ell\geq 2,\S,\JpS}\\
    &+\fl[\mu](rf_n,r\fb_n,r\ovla_n)+O(r^{-1}),\\
    r^2\left(\trch^\S-\frac{2}{\rS}\right)&=r^2\left(\trch-\frac{2}{r}\right)+r\div^\S(rf_n)-2R^\#_n+rf_n\c r\eta+O(r^{-1}),\\
    r^2\left(\trchb^\S+\frac{2}{\rS}\right)&=r^2\left(\trchb+\frac{2}{r}\right)+r\div^\S(r\fb_n)+2R^\#_n+rf_n\c r\xib+O(r^{-1}),
\end{align*}
where we denoted
\begin{align*}
    R^\#_n:=R_n\circ\Psi_n^{-1}.
\end{align*}
Combining with \eqref{GCMquantitiesdecayonSin}, we have on $\Cb_n$:
\begin{align*}
    (r^5\div\b)_{\ell=1,\S,\JpS}+\frac{3}{2}(r^4\rho\div^\S(rf_n))_{\ell=1,\S,\JpS}+\fl[\div\b](rf_n,r\fb_n,r\ovla_n)&=O\left(\frac{\ep_0}{u^{\dec}}\right),\\
    (r^5\curl\b)_{\ell=1,\S,\JpS}-(2am,0,0)&\\
    +\frac{3}{2}(r^4\rho\curl^\S(rf_n))_{\ell=1,\S,\JpS}+\fl[\curl\b](rf_n,r\fb_n,r\ovla_n)&=O\left(\frac{\ep_0}{u^{\dec}}\right),\\
    (r^3\mu)_{\ell\geq 2,\S,\JpS}+\left((r^2\De^\S+2)(r\ovla_n)\right)_{\ell\geq 2,\S,\JpS}+\fl[\mu](rf_n,r\fb_n,r\ovla_n)&=O\left(\frac{\ep_0}{u^{\dec}}\right),\\
    r^2\left(\trch-\frac{2}{r}\right)+r\div^\S(rf_n)-2R^\#_n+rf_n\c r\eta&=O\left(\frac{\ep_0}{u^{\dec}}\right),\\
    r^2\left(\trchb+\frac{2}{r}\right)-4m+r\div^\S(r\fb_n)+2R^\#_n+rf_n\c r\xib&=O\left(\frac{\ep_0}{u^{\dec}}\right).
\end{align*}
For any fixed $u$, taking $n\to\infty$ and applying Propositions \ref{limitexist} and \ref{limittranstion}, we infer\footnote{Proceeding as in \eqref{newBkexists}, we can easily deduce the existence of $\lim_{C_u,r\to\infty}(r^5d_1\b)_{\ell=1,S',\Jpp}$.}
\begin{align}
\begin{split}\label{GCMconditions}
    \lim_{C_u,r\to\infty}(r^5\div\b)_{\ell=1,S',\Jpp}+\frac{3}{2}(\Pscr\divo' F)_{\ell=1,S',\Jpp}+\fl[\div\b](F,\Fb,\La)&=O\left(\frac{\ep_0}{u^{\dec}}\right),\\
    \lim_{C_u,r\to\infty}(r^5\curl\b)_{\ell=1,S',\Jpp}-(2am,0,0)\\
    +\frac{3}{2}(\Pscr\curlo' F)_{\ell=1,S',\Jpp}+\fl[\curl\b](F,\Fb,\La)&=O\left(\frac{\ep_0}{u^{\dec}}\right),\\
    (\Mscr)_{\ell\geq 2,S',\Jpp}+((\Deo'+2)\La)_{\ell\geq 2,S',\Jpp}+\fl[\mu](F,F,\La)&=O\left(\frac{\ep_0}{u^{\dec}}\right),\\
    \Xscr+\divo'F-2R^\#+F\c\Hscr&=O\left(\frac{\ep_0}{u^{\dec}}\right),\\
    \Xbscr-4m+\divo'\Fb+2R^\#+F\c\Ybscr&=O\left(\frac{\ep_0}{u^{\dec}}\right),
\end{split}
\end{align}
where $\left(\divo',\curlo',\Deo',\Jpp\right)$ are defined w.r.t. $S'(u')$.\\ \\
{\bf Step 4.} Limiting GCM (LGCM) conditions.\\ \\
We have from \eqref{ffblaintrodu} and \eqref{riccit}
\begin{align*}
r^2\left(\trch'-\frac{2}{r'}\right)&=r^2\left(\trch-\frac{2}{r}\right)+r\div'F-2R^\#+F\c r\eta+O(r^{-1}),\\
r^2\,\atrch'&=r^2\,\atrch+r\curl'F+F\wedge r\eta+O(r^{-1}),\\
r^2\left(\trchb'+\frac{2}{r'}\right)&=r^2\left(\trchb+\frac{2}{r}\right)+r\div'\Fb+2R^\#+F\c r\xib+O(r^{-1}),\\
r^2\,\atrchb'&=r^2\,\atrchb+r\curl'\Fb+F\wedge r\xib+O(r^{-1}),\\
r\eta'&=r\eta+\frac{1}{2}\nab'_3F+O(r^{-1}),\\
r\xib'&=r\xib+\frac{1}{2}\nab'_3\Fb+O(r^{-1}),\\
r\omb'&=r\omb+\frac{1}{2}\nab'_3\La+O(r^{-1}).
\end{align*}
Taking $r\to\infty$ and combining with \eqref{Gconditions} and \eqref{GCMconditions}, we deduce
\begin{align}
\begin{split}\label{GCMfirstcon}
    \Xscr'&=\Xscr+\divo'F-2R^\#+F\c\Hscr=O\left(\frac{\ep_0}{u^{\dec}}\right),\\
    \aXscr'&=\aXscr+\curlo'F+F\wedge\Hscr=0,\\
    \Xbscr'&=\Xbscr+\divo'\Fb+2R^\#+F\c\Ybscr=4m+O\left(\frac{\ep_0}{u^{\dec}}\right),\\
    \aXbscr'&=\aXbscr+\curlo'\Fb+F\wedge\Ybscr=0,\\
    \Hscr'&=\Hscr+\frac{1}{2}\nabo'_3F=0,\\
    \Ybscr'&=\Ybscr+\frac{1}{2}\nabo'_3\Fb=0,\\
    \Wbscrone'&=\Wbscrone+\frac{1}{2}\nabo'_3\La=0.
    \end{split}
\end{align}
Next, we have from \eqref{ffblaintrodu} and \eqref{curvaturet}
\begin{align*}
r^3\mu'&=r^3\mu+(r^2\De'+2)\La+\fl[\mu](F,\Fb,\La)+O(r^{-1}),\\
r^5\div'\b'&=r^5\div\b+\frac{3r^3\rho}{2}r\div'F+\fl[\div\b](F,\Fb,\La)+O(r^{-1}),\\
r^5\curl'\b'&=r^5\curl\b+\frac{3r^3\rho}{2}r\curl'F+\fl[\curl\b](F,\Fb,\La)+O(r^{-1}),
\end{align*}
which implies from Proposition \ref{limitexist} and \eqref{GCMconditions}
\begin{align}
\begin{split}\label{GCMsecondcon}
    (\Mscr')_{\ell\geq 2,S',\Jpp}&=(\Mscr)_{\ell\geq 2,S',\Jpp}+\left((\Deo'+2)\La\right)_{\ell\geq 2,S',\Jpp}+\fl[\mu](F,\Fb,\La)\\
    &=O\left(\frac{\ep_0}{u^{\dec}}\right),\\
    \Bk'&=\lim_{C_u,r\to\infty}(r^5\div\b)_{\ell=1,S',\Jpp}+\frac{3\Pscr}{2}(\divo'F)_{\ell=1,S',\Jpp}+\fl[\div\b](F,\Fb,\La)\\
    &=O\left(\frac{\ep_0}{u^{\dec}}\right),\\
    \Bkd'&=\lim_{C_u,r\to\infty}(r^5\curl\b)_{\ell=1,S',\Jpp}+\frac{3\Pscr}{2}(\curlo'F)_{\ell=1,S',\Jpp}+\fl[\curl\b](F,\Fb,\La)\\
    &=(2am,0,0)+O\left(\frac{\ep_0}{u^{\dec}}\right).
\end{split}
\end{align}
Combining \eqref{GCMfirstcon} and \eqref{GCMsecondcon}, we deduce that the conditions \eqref{GCMS2limit}--\eqref{Incominggeodesic} hold.\\ \\
{\bf Step 5.} Properties of $u'$ and $r'$.\\ \\
We denote
\begin{align*}
    U^\#_n:=U_n\circ\Psi_n^{-1},\qquad R_n^\#:=R_n\circ \Psi_n^{-1}.
\end{align*}
Since the null frame $e_a^{\S_n}$ is adapted to $\S_n$, we have on $\Cb_n$
\begin{align*}
    re_a^{\S_n}(u-U^\#_n,r-R^\#_n)=0, \qquad a=1,2.
\end{align*}
Combining with \eqref{General-frametransformation}, we infer
\begin{align*}
    \left(re_a+\frac{1}{2}(rf_n)_a e_3\right)(u-U^\#_n, r-R^\#_n)=O(r^{-1}), \qquad a=1,2.
\end{align*}
Combining once again with \eqref{General-frametransformation} and \eqref{ffblaintrodu}, we obtain
\begin{align*}
    re'_a(u',r')&=\left(re_a+\frac{1}{2}F_a e_3\right)(u-U^\#,r-R^\#)+O(r^{-1})\\
    &=\left(re_a+\frac{1}{2}(rf_n)_a e_3\right)\left(u-U^\#,r-R^\#\right)-\frac{1}{2}(F-rf_n)_ae_3\left(u-U^\#,r-R^\#\right)+O(r^{-1})\\
    &=O(r^{-1}).
\end{align*}
We also have in $\far$
\begin{align*}
    e'_3(r')&=e_3(r-R^\#)=-1+O(\ep_0),\\
    e'_4(r')&=\la\left(e_4+f_ae_a+\frac{1}{4}|f|^2e_3\right)(r)=1+O(r^{-1}),\\
    e'_4(u')&=\la\left(e_4+f_ae_a+\frac{1}{4}|f|^2e_3\right)(u-U^\#)=O(r^{-2}).
\end{align*}
Moreover, we have from Proposition \ref{limittranstion}
\begin{align*}
    |r'-r|=|R^\#|=\lim_{n\to\infty}|R_n^\#|\les 1.
\end{align*}
Next, we have from Lemma \ref{lemma:comm-gen}\footnote{Recall that $\eta$ has only $r^{-1}$--decay in the background outgoing PG foliation, but we have $\eta'=O(r^{-2})$ as a consequence of our choice of the LGCM foliation.}
\begin{align*}
re'_a(e'_3(u'))=e'_3(re'_a(u'))+[re'_a,e'_3](u')=-r\eta_a'e_3'(u')+O(r^{-1})=O(r^{-1}).
\end{align*}
Hence, we deduce that on $\II^+$
\begin{align*}
    \nabo'(e'_3(u'))=0,
\end{align*}
which implies
\begin{align*}
e'_3(u')=z'+O(r^{-1}),
\end{align*}
where $z'$ is a function only depends on $u'$ satisfying
\begin{align*}
    z'=2+O(\ep_0).
\end{align*}
We now introduce a new parameter\footnote{The reparametrization \eqref{eq:reparam} is necessary to ensure $e'_3(u')=2+O(r^{-1})$, as stated in \eqref{urconditions}. Moreover, the condition $e'_3(u')=2$ on $\II^+$ is used to deduce the physical laws in Theorem \ref{thm-BHdynamic}.}
\begin{align}\label{eq:reparam}
    u'':=h(u'),
\end{align}
where $h$ is a one variable function satisfying
\begin{align}\label{dhdu'}
    \frac{dh}{du'}=\frac{2}{z'}=1+O(\ep_0).
\end{align}
Then, we have
\begin{align*}
    e'_3(u'')=\frac{dh}{du'}\left(e'_3(u')\right)=2+O(r^{-1}).
\end{align*}
Taking $u''$ as the new parameter, still denoted by $u'$, and combining with \eqref{dhdu'}, we deduce that all the desired estimates in \eqref{urconditions} hold.\footnote{Note that we have from \eqref{eq:reparam} that the level sets of $u''$ and $u'$ coincide with each other. We also have from \eqref{dhdu'} that $u'\sim u''$.}\\ \\
{\bf Step 6.} Properties of $\Jpp$.\\ \\
Differentiating \eqref{Jppdf} by $e_4$, we obtain
\begin{align*}
e_4(\Jpp)=0\quad \mbox{ on }\, S'(u',r').
\end{align*}
Combining with \eqref{General-frametransformation} and \eqref{ffblaintrodu}, we infer
\begin{align}\label{e4Jpp}
    e_4'(\Jpp)=\la\left(e_4+f_ae_a+\frac{1}{4}|f|^2e_3\right)\Jpp=O(r^{-2}).
\end{align}
Next, recalling that $\Jpp$ is obtained in Proposition \ref{AAapply} as the limit of $J^{(p,\S_n(u))}$, we have from \eqref{nab3JpS} that on $\Cb_n$
\begin{align*}
    e_3^{\S_n(u)}(J^{(p,\S_n(u))})=0,
\end{align*}
which implies from \eqref{General-frametransformation}
\begin{align*}
    e_3(J^{(p,\S_n(u))})=O(r^{-1}).
\end{align*}
Taking $n\to\infty$ and applying Proposition \ref{AAapply}, we easily obtain on $\II^+$
\begin{align*}
    e_3'(\Jpp)=0.
\end{align*}
Applying Lemma \ref{lemma:comm-gen}, we have
\begin{align*}
    e'_4(e'_3(\Jpp))=O(r^{-2}).
\end{align*}
Integrating it in the direction of $e'_4$, we deduce
\begin{align}\label{e3Jpp}
    e'_3(\Jpp)=O(r^{-1}).
\end{align}
Combining \eqref{e4Jpp} and \eqref{e3Jpp}, we obtain \eqref{nab3Jpp}. Recalling from Proposition \ref{limittranstion} that $J^{(p,\S_n(u))}$ is a $\ep_0 r^{-1}$--approximate basis of $\ell=1$ modes, we have\footnote{Similarly as before, we use the shorthand notation $\S=\S_n(u)$ introduced in \eqref{shorthandS}.}
\begin{align*}
    (r^2\De^\S+2)\JpS&=O\left(\frac{\ep_0}{r}\right),\\
    \frac{1}{|\S|}\int_\S \JpS J^{(q,\S)}&=\frac{4\pi}{3}\de_{pq}+O\left(\frac{\ep_0}{r}\right),\\
    \frac{1}{|\S|}\int_\S \JpS &=O\left(\frac{\ep_0}{r}\right).
\end{align*}
Taking $n\to\infty$ and recalling that the spheres $S'(u')$ on $\II^+$ are round spheres, we infer
\begin{align}
\begin{split}\label{Jppround}
    (\Deo'+2)\Jpp&=0,\\
    \int_{\SSS^2}\Jpp{J'}^{(q)}d\si_{\SSS^2}&=\frac{4\pi}{3}\de_{pq},\\
    \int_{\SSS^2}\Jpp d\si_{\SSS^2}&=0.
\end{split}
\end{align}
Next, we have from \eqref{e4Jpp} and Lemma \ref{lemma:comm-gen}
\begin{align*}
    e_4'\left((r^2\De'+2)\Jpp\right)&=O(r^{-2}),\\
    e_4'\left(\frac{1}{|S'|}\int_{S'}\Jpp{J'}^{(q)}\right)&=O(r^{-2}),\\
    e_4'\left(\frac{1}{|S'|}\int_{S'}\Jpp\right)&=O(r^{-2}).
\end{align*}
Combining with \eqref{Jppround}, we deduce that $\Jpp$
\begin{align*}
(r^2\De'+2)\Jpp&=O(r^{-1}),\\
\frac{1}{|S'|}\int_{S'}\Jpp{J'}^{(q)}&=\frac{4\pi}{3}\de_{pq}+O(r^{-1}),\\
\frac{1}{|S'|}\int_{S'}\Jpp&=O(r^{-1}),
\end{align*}
which implies that $\Jpp$ is a $r^{-1}$--approximate basis of $\ell=1$ modes on the spheres $S'(u',r')$ in the sense of Definition \ref{jpdef}.\\ \\
Combining the above results, we deduce that $\{S'(u',r'), (e_3',e_4',e_1',e_2')\}$ satisfies all the desired properties in Definition \ref{limitinggeodesic}. Hence, $\{S'(u',r'), (e_3',e_4',e_1',e_2')\}$ is a LGCM foliation as desired. This concludes the proof of Theorem \ref{LGCMconstruction}.
\section{Absence of supertranslation ambiguity and spatial translation freedom}\label{secsuper}
The purpose of this section is to prove the following theorem, which implies the uniqueness of the LGCM foliation introduced in Definition \ref{limitinggeodesic}.
\begin{thm}\label{translation}
Let $(\M,\g)$ be a $\KSAF(a,m)$ spacetime. Assume that there are two LGCM foliations $S'(u',r')$ and $S''(u'',r'')$ associated to the background outgoing PG $S(u,r)$--foliation of $\M$. We denote $(f,\fb,\la)$ the transition functions from the null frame $(e''_3,e''_4,e''_1,e''_2)$ of the $S''$--foliation to the null frame $(e_3',e_4',e_1',e_2')$ of the $S'$--foliation. Then, we have
    \begin{align*}
        f=o(r^{-1}),\qquad \fb=o(r^{-1}),\qquad \la=1+o(r^{-1}).
    \end{align*}
Moreover, the spheres of $S'$--foliation and $S''$--foliation coincide on $\II^+$.
\end{thm}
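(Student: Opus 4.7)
The plan is to derive an elliptic system at the timelike infinity $i^+$ for the bounded limits
\[
    F := \lim_{C_u, r\to\infty}(rf), \quad
    \Fb := \lim_{C_u, r\to\infty}(r\fb), \quad
    \La := \lim_{C_u, r\to\infty}(r\ovla), \quad
    R^\# := \lim_{C_u, r\to\infty}(r' - r''),
\]
to show that all four vanish at $i^+$, and then to propagate this vanishing along the generators of $\II^+$ using transport equations produced by the shared limiting incoming geodesic conditions \eqref{Incominggeodesic}. These limits are well defined because each LGCM differs from the background outgoing PG foliation by transition functions of pointwise size $O(r^{-1})$ (compare \eqref{ffblaintrodu} and \eqref{limitingestimates}), so the transition $(f,\fb,\ovla)$ between two LGCMs is a fortiori $O(r^{-1})$.

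First I would apply the frame-transformation formulae of Propositions \ref{Riccitransfer} and \ref{Curvaturetransfer} to the change of frame from $(e_3'',e_4'',e_1'',e_2'')$ to $(e_3',e_4',e_1',e_2')$, multiply by the powers of $r$ dictated by the Taylor expansions \eqref{GagTaylor}--\eqref{GabTaylor}, and take $r\to\infty$. Since both foliations satisfy the LGCM conditions \eqref{eq:Def-LGCM-gen}, taking differences produces, at each $u'$ near $i^+$, a system of the schematic form
\begin{align*}
    \divo' F - 2 R^\# &= (\Xscr' - \Xscr'') + \lot, \\
    \curlo' F &= (\aXscr' - \aXscr'') + \lot, \\
    \divo' \Fb + 2 R^\# &= (\Xbscr' - \Xbscr'') + \lot, \\
    \curlo' \Fb &= (\aXbscr' - \aXbscr'') + \lot, \\
    (\Deo' + 2)\La \big|_{\ell\geq 2} &= \bigl((\Mscr')_{\ell\geq 2} - (\Mscr'')_{\ell\geq 2}\bigr) + \lot,
\end{align*}
with the $\ell=1$ modes of $F$ constrained by the differences $\Bk' - \Bk''$ and $\Bkd' - \Bkd''$. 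At $u' = \infty$ all right-hand sides vanish by \eqref{GCMS2limit} and \eqref{angularmomentumcalibration}, while the centering property of the canonical bases $\Jpp$ and ${J''}^{(p)}$ pins down their $\ell = 0$ modes. The elliptic estimates of Proposition \ref{ellipticLp} on the round limiting sphere then force $(F,\Fb,\La,R^\#) = 0$ at $i^+$.

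The second stage promotes this vanishing at $i^+$ to all of $\II^+$. Since both LGCMs also satisfy $\Hscr = \Ybscr = \Wbscrone = 0$ from \eqref{Incominggeodesic}, applying the transformation formulae for $\eta$, $\xib$ and $\omb$, multiplying by $r$, and letting $r\to\infty$ yields a closed linear transport system for $(F,\Fb,\La,R^\#)$ in the direction $\nabo_3'$. Using $e_3'(u') = 2$ on $\II^+$ from \eqref{urconditions}, this is a linear first-order ODE in $u'$ along each generator, integrated backwards from $u'=\infty$; by a Gronwall estimate, the vanishing at $i^+$ forces $F \equiv \Fb \equiv \La \equiv 0$ and $R^\# \equiv 0$ on $\II^+$. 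This is precisely \eqref{ffbovla=0first}, and $R^\# = 0$ together with $F = 0$ implies the two LGCMs induce the same spheres on $\II^+$.

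The principal obstacle is the elliptic step at $i^+$: the div/curl system for $F$ and $\Fb$ has a four-dimensional kernel consisting of the $\ell=1$ modes, so one must use the independent angular-momentum calibration \eqref{angularmomentumcalibration}, together with the uniqueness of the effective uniformization defining the canonical bases $\Jpp$ and ${J''}^{(p)}$ (Corollary 3.8 of \cite{KS:Kerr2}), to rule out any residual $\ell=1$ rotation between the two foliations. A secondary difficulty is that the transport system on $\II^+$ has $O(1)$ coupling coefficients rather than decaying ones, so integrating backward from $u'=\infty$ requires a careful weighted Gronwall argument to confirm that the zero initial data at $i^+$ truly propagates all along $\II^+$.
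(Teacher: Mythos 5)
Your proposal follows essentially the same two-step strategy as the paper's proof: an elliptic system at $i^+$ derived from the shared limiting GCM conditions (with the $\ell=1$ modes pinned down by the $\div\b$, $\curl\b$ calibrations and the canonical $\ell=1$ bases), followed by a transport argument along $\II^+$ coming from the shared limiting incoming geodesic conditions. The only differences are minor matters of execution: in the paper the right-hand sides at $i^+$ are not exactly zero but of the form $O(\ep_0)\dkb^{\leq 1}(rf,r\fb,r\ovla,r'-r'')$ and are removed by absorption for $\ep_0$ small, and the transport system is in fact exactly $\nabo_3(F,\Fb,\La)=0$, so no weighted Gronwall argument is required.
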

\begin{rk}\label{nosuperspatialrk}
    Theorem \ref{translation} implies that the conditions of the LGCM foliation completely fix the foliation on $\II^+$. Combining with \eqref{riccit} and \eqref{curvaturet}, all the limiting quantities, hence the physical quantities, in $S'$--foliation and $S''$--foliation coincide. This avoids supertranslation ambiguity and spatial translation freedom.
\end{rk}
\begin{proof}[Proof of Theorem \ref{translation}]
As an immediate consequence of \eqref{usedasbootstrap} and \eqref{urconditions}, we have
\begin{align}\label{bootrf}
    \|(f,\fb,\ovla)\|_{\hk_4(S')}\les 1,\qquad\quad |r'-r''|\les 1.
\end{align}
The proof is then divided into 2 steps as follows.\\ \\
{\bf Step 1.} We first prove the uniqueness at $i^+$. To this end, we deduce an elliptic system for $(f,\fb,\ovla,r'-r'')$ from the LGCM conditions \eqref{GCMS2limit}--\eqref{integrablelimitsindef} of $S'$--foliation and $S''$--foliation, Using this elliptic system, we can prove that $(rf,r\fb,r\ovla,r'-r'')$ vanishes at $i^+$.\\ \\
We have from Propositions \ref{Riccitransfer} and \ref{Curvaturetransfer}
\begin{align*}
    r^2\left(\trch'-\frac{2}{r'}\right)&=r^2\left(\trch''-\frac{2}{r''}\right)+r^2\div'f+2(r'-r'')+2r\ovla+O(r^{-1}),\\
    r^2\,\atrch'&=r^2\,\atrch''+r^2\curl'f+O(r^{-1}),\\
    r^2\left(\trchb'+\frac{2}{r'}\right)&=r^2\left(\trchb''+\frac{2}{r''}\right)+r^2\div'\fb+2(r''-r')+2r\ovla+O(r^{-1}),\\
    r^2\,\atrchb'&=r^2\,\atrchb''+r^2\curl'\fb+O(r^{-1}).
\end{align*}
Taking $r\to\infty$, we deduce
\begin{align*}
    \Xscr'&=\Xscr''+\lim_{C_u,r\to\infty}\left(2(r'-r'')+r\div'(rf)+2r\ovla\right),\\
    \aXscr'&=\aXscr''+\lim_{C_u,r\to\infty}\left(r\curl'(rf)\right),\\
    \Xbscr'&=\Xbscr''+\lim_{C_u,r\to\infty}\left(2(r''-r')+r\div'(r\fb)+2r\ovla\right),\\
    \aXbscr'&=\aXbscr''+\lim_{C_u,r\to\infty}\left(r\curl'(r\fb)\right).
\end{align*}
Combining with \eqref{GCMS2limit} and \eqref{integrablelimitsindef}, we infer
\begin{align}
\lim_{u\to\infty}\lim_{C_u,r\to\infty}\left(2(r'-r'')+r\div'(rf)+2r\ovla\right)&=0,\\
\lim_{C_u,r\to\infty}\left(r\curl'(rf)\right)&=0,\label{curlf=0}\\
\lim_{u\to\infty}\lim_{C_u,r\to\infty}\left(2(r'-r'')+r\div'(r\fb)+2r\ovla\right)&=0,\\
\lim_{C_u,r\to\infty}\left(r\curl'(r\fb)\right)&=0.\label{curlfb=0}
\end{align}
We also have from Proposition \ref{curvaturet}
\begin{align}
    r^3\mu'&=r^3\mu''+(r^2\De'+2)(r\ovla)-\frac{r^2}{2}\left(\div'f+\div'\fb\right)+\err[\ep_0]+O(r^{-1}),\label{muchange}\\
    r^5\div'\b'&=r^5\div''\b''+\frac{3}{2}r^5\rho\div'f+\err[\ep_0]+O(r^{-1}),\label{divbchange}\\
    r^5\curl'\b'&=r^5\curl''\b''+\frac{3}{2}r^5\rho\curl'f+\err[\ep_0]+O(r^{-1}),\label{curlbchange}
\end{align}
where we denoted\footnote{In the end of this step, one can show that $\dko(rf,r\fb,r\ovla,r'-r'')$ has size $\err[\ep_0]$ at $i^+$. Combining with the smallness of $\ep_0$, we will deduce the vanishing nature of $\dko(rf,r\fb,r\ovla,r'-r'')$ at $i^+$, see \eqref{i+vanishing} below.}
\begin{align}\label{errep0df}
    \err[\ep_0]:=O(\ep_0)\dko(rf,r\fb,r\ovla,r'-r'').
\end{align}
Projecting \eqref{curlbchange} to the $r^{-1}$--approximate $\ell=1$ basis $\Jpp$ on the spheres $S'$ and taking $r\to\infty$, we deduce
\begin{align*}
\Bkd'=\lim_{C_u,r\to\infty}\left(r^5\curl''\b''\right)_{\ell=1,S',\Jpp}+\err[\ep_0],
\end{align*}
where we used \eqref{curlf=0}. Note that we have by definition
\begin{align*}
    \Bkd''=\lim_{C_u,r\to\infty}\left(r^5\curl''\b''\right)_{\ell=1,S'',{J''}^{(p)}}.
\end{align*}
Combining the above two identities and taking $u\to\infty$, we easily deduce from \eqref{angularmomentumcalibration}
\begin{align}\label{Jpok}
    {J''}^{(p)}-\Jpp=\err[\ep_0].
\end{align}
Next, proceeding as in \eqref{newBkexists}, we have
\begin{align*}
    (r^5\div''\b'')_{\ell=1,S',\Jpp}-(r^5\div''\b'')_{\ell=1,S'',{J^{''}}^{(p)}}=\err[\ep_0].
\end{align*}
Combining with \eqref{divbchange}, we obtain
\begin{align*}
    \Bk'=\Bk''+\frac{3}{2}\Pscr\lim_{C_u,r\to\infty}r\div'(rf)+\err[\ep_0].
\end{align*}
Similarly, we have
\begin{align*}
    (\Mscr')_{\ell\geq 2,S',\Jpp}&=(\Mscr'')_{\ell\geq 2,S'',{J''}^{(p)}}+\lim_{C_u,r\to\infty}(r^2\De'+2)(r\ovla)_{\ell\geq 2,S',\Jp}\\
    &-\lim_{C_u,r\to\infty}\frac{r^2}{2}\left(\div'f+\div'\fb\right)_{\ell\geq 2,S',\Jp}+\err[\ep_0].
\end{align*}
Combining the above estimates, we infer that
\begin{align}
\lim_{u\to\infty}\lim_{C_u,r\to\infty}\left(2(r'-r'')+r\div'(rf)+2r\ovla\right)&=0,\label{trch'''}\\
\lim_{C_u,r\to\infty}\left(r\curl'(rf)\right)&=0,\label{atrch'''}\\
\lim_{u\to\infty}\lim_{C_u,r\to\infty}\left(2(r''-r')+r\div'(r\fb)+2r\ovla\right)&=0,\label{trchb'''}\\
\lim_{C_u,r\to\infty}\left(r\curl'(r\fb)\right)&=0,\label{atrchb'''}\\
\lim_{u\to\infty}\frac{3}{2}\Pscr\lim_{C_u,r\to\infty}r\div'(rf)_{\ell=1,S'}&=\err[\ep_0],\label{divb'''}\\
\lim_{u\to\infty}\lim_{C_u,r\to\infty}\left((r^2\De'+2)(r\ovla)_{\ell\geq 2,S'}-\frac{r^2}{2}\left(\div'f+\div'\fb\right)_{\ell\geq 2,S'}\right)&=\err[\ep_0].\label{mu'''}
\end{align}
We also have from \eqref{General-frametransformation} and \eqref{urconditions}
\begin{align*}
    e'_a(r''-r')&=e''_a(r'')+\frac{1}{2}f_ae_3''(r'')+\frac{1}{2}\fb_ae''_4(r'')+O(r^{-2})\\
    &=-\frac{1}{2}f_a+\frac{1}{2}\fb_a+\err[\ep_0]+O(r^{-2}),
\end{align*}
which implies
\begin{align}\label{Deovb}
    \De'(r''-r')=-\frac{1}{2}\div'(f-\fb)+\err[\ep_0]+O(r^{-3}).
\end{align}
Combining the \eqref{trch'''} and \eqref{atrch'''}, we infer
\begin{align}
\lim_{u\to\infty}\lim_{C_u,r\to\infty}\left(r\div'(rf+r\fb)+2r\ovla\right)&=0,\label{important1}\\
\lim_{u\to\infty}\lim_{C_u,r\to\infty}\left(-2\De'(r''-r')-4(r''-r')+2r\ovla\right)&=0.\label{important}
\end{align}
Injecting them into \eqref{mu'''}, we obtain
\begin{align}\label{ellgeq2ok}
    \lim_{u\to\infty}\lim_{C_u,r\to\infty}\left(r\div'(rf+r\fb),\;r\ovla,\;r'-r''\right)_{\ell\geq 2}=\err[\ep_0].
\end{align}
Next, we have from \eqref{divb'''} and the fact that $\Pscr\ne 0$
\begin{align}\label{divfl=1}
    \lim_{u\to\infty}\lim_{C_u,r\to\infty}r\div'(rf)_{\ell=1}=\err[\ep_0].
\end{align}
Projecting \eqref{important} onto its $\ell=1$ modes, we deduce
\begin{align*}
    \lim_{u\to\infty}\lim_{C_u,r\to\infty}(r\ovla)_{\ell=1}=\err[\ep_0].
\end{align*}
Next, projecting \eqref{important1} onto its $\ell=0$ modes, we infer
\begin{align}\label{estovlaok}
     \lim_{u\to\infty}\lim_{C_u,r\to\infty}(r\ovla)_{\ell=0}=\err[\ep_0].
\end{align}
Hence, we infer
\begin{align*}
    \lim_{u\to\infty}\lim_{C_u,r\to\infty}r\ovla=\err[\ep_0].
\end{align*}
Injecting it into \eqref{trch'''}, we obtain
\begin{align*}
    \lim_{u\to\infty}\lim_{C_u,r\to\infty}\left(2(r'-r'')+r\div'(rf)\right)=\err[\ep_0].
\end{align*}
Projecting it onto its $\ell\geq 2$ modes and applying \eqref{ellgeq2ok}, we deduce
\begin{align*}
    \lim_{u\to\infty}\lim_{C_u,r\to\infty}\left(r\div'(rf)\right)_{\ell\geq 2}=\err[\ep_0].
\end{align*}
Combining with \eqref{divfl=1}, we infer
\begin{align}\label{fok}
    \lim_{u\to\infty}\lim_{C_u,r\to\infty}r\div'(rf)=\err[\ep_0].
\end{align}
Next, we have from \eqref{estovlaok} and \eqref{important1}
\begin{align*}
    \lim_{u\to\infty}\lim_{C_u,r\to\infty}\left(r\div'(rf+r\fb)\right)&=\err[\ep_0],
\end{align*}
combining with \eqref{fok}, we deduce
\begin{align}\label{fbok}
    \lim_{u\to\infty}\lim_{C_u,r\to\infty}r\div'(r\fb)=\err[\ep_0].
\end{align}
Finally, we have from \eqref{estovlaok}, \eqref{fok} and \eqref{trch'''} 
\begin{align*}
    \lim_{u\to\infty}\lim_{C_u,r\to\infty}(r'-r'')&=\err[\ep_0].
\end{align*}
Combining the above identities and applying Proposition \ref{ellipticLp}, we obtain that the following identity hold at $i^+$:
\begin{align}\label{i+vanishing}
    \lim_{u\to\infty}\lim_{C_u,r\to\infty}\dko\left(rf,r\fb,r\ovla,r'-r'',\Jpp-{J''}^{(p)}\right)=\err[\ep_0].
\end{align}
Combining with \eqref{errep0df}, we deduce for $\ep_0$ small enough
\begin{align}\label{i+vanishing0}
    \lim_{u\to\infty}\lim_{C_u,r\to\infty}\dko\left(rf,r\fb,r\ovla,r'-r'',\Jpp-{J''}^{(p)}\right)=0.
\end{align}
{\bf Step 2.} We now prove the uniqueness on $\II^+$. To this end, we deduce a transport system for $(f,\fb,\ovla,r'-r'')$ from the limiting geodesic conditions \eqref{Incominggeodesic} of $S'$--foliation and $S''$--foliation, Combining this transport system and \eqref{i+vanishing0} obtained in Step 1, we can prove that $(rf,r\fb,r\ovla,r'-r'')$ vanishes on $\II^+$.\\ \\
We have from \eqref{riccit}
\begin{align*}
    r\eta'&=r\eta''+\frac{1}{2}\nab_3'(rf)+O(r^{-1}),\\
    r\xib'&=r\xib''+\frac{1}{2}\nab_3'(r\fb)+O(r^{-1}),\\
    r\omb'&=r\omb''+\frac{1}{2}\nab_3'(r\ovla)+O(r^{-1}).
\end{align*}
Taking $r\to\infty$, we obtain
\begin{align*}
    \Hscr'&=\Hscr''+\frac{1}{2}\lim_{C_u,r\to\infty}\nab'_3(rf),\\
    \Ybscr'&=\Ybscr''+\frac{1}{2}\lim_{C_u,r\to\infty}\nab'_3(r\fb),\\
    \Wbscrone'&=\Wbscrone''+\frac{1}{2}\lim_{C_u,r\to\infty}\nab'_3(r\ovla).
\end{align*}
Applying \eqref{Incominggeodesic} for both $S'$--foliation and $S''$--foliation, we deduce
\begin{align*}
    \lim_{C_u,r\to\infty}\nab'_3(rf,r\fb,r\ovla)=0.
\end{align*}
We also have
\begin{align*}
    e_3(r''-r')=\frac{r}{2}\left(\ov{\trchb''}-\ov{\trchb'}\right)+O(r^{-1})=O(r^{-1}),
\end{align*}
Combining with \eqref{i+vanishing0}, we infer
\begin{align}\label{allok}
    \lim_{C_u,r\to\infty}(rf,r\fb,r\ovla,r'-r'')=0.
\end{align}
Finally, we have from \eqref{General-frametransformation}, \eqref{urconditions} and \eqref{allok}
\begin{align*}
    e'_a(u'-u'')&=-e'_a(u'')+O(r^{-2})=-\left(e''_a+\frac{1}{2}f_ae''_3+\frac{1}{2}\fb_ae''_4\right)(u'')=o(r^{-1}),\\
    e'_3(u'-u'')&=2-e''_3(u'')+O(r^{-1})=O(r^{-1}),\\
    e'_4(u'-u'')&=-\la\left(e''_4+f_ae''_a+\frac{1}{4}|f|^2e''_3\right)u''+O(r^{-2})=O(r^{-2}).
\end{align*}
We deduce thtat
\begin{align*}
    \lim_{C_{u'},r\to\infty}(u'-u'')=C,
\end{align*}
where $C$ is a constant. This implies that the spheres of $S'$--foliation and $S''$--foliation coincide on $\II^+$. This concludes the proof of Theorem \ref{translation}.
\end{proof}
\section{Physical quantities in the LGCM foliation}\label{ssecphy}
In this section, we deduce the physical laws in the LGCM foliation constructed in Theorem \ref{LGCMconstruction}. Throughout Section \ref{ssecphy}, all quantities are defined w.r.t. a given LGCM foliation near $\II^+$.
\begin{df}\label{dfquantities}
Given a $S(u)$--foliation on $\II^+$ and a standard $\ell=1$ basis $J^{(p)}$ on the standard round spheres $\SSS^2$ on $\II^+$, induced from a LGCM foliation introduced in Definition \ref{limitinggeodesic}, we define the null energy $\EE$, linear momentum $\PP$, angular momentum $\JJ$ and center of mass $\CC$ of $u$-section of $\II^+$ as follows:
    \begin{align}
    \begin{split}\label{definitionsEPJC}
        \EE(u)&:=(\Mk)_{\ell=0,S(u)}=\int_{\mathbb S^2}\Mk(u,\cdot)d\si_{\SSS^2},\\
        \PP^{0,\pm}(u)&:=(\Mk)_{\ell=1,S(u),\Jp}=\int_{\mathbb S^2}\Mk(u,\cdot)\, J^{(0,\pm)}(\cdot)d\si_{\SSS^2},\\
        \JJ^{(0,\pm)}(u)&:=\Bk(u)=\int_{\mathbb S^2}\curlo \BB (u,\cdot)\, J^{(0,\pm)}(\cdot)d\si_{\SSS^2},\\
        \CC^{0,\pm}(u)&:=\Bkd(u)=\int_{\mathbb S^2}\divo \BB (u,\cdot)\, J^{(0,\pm)}(\cdot) d\si_{\SSS^2},
    \end{split}
    \end{align}
    where $\BB$ is the formal limit $\lim_{C_u,r\to\infty}r^4\b$, see Remark \ref{JJCCexist} below.
\end{df}
\begin{rk}\label{JJCCexist}
Notice that $\BB$ does not necessarily exist in a $\KSAF$ spacetime. However, by the Bianchi equation of $\nab_4\b$, we can show that $\JJ^{(0,\pm)}(u)$ and $\CC^{(0,\pm)}(u)$ exist, see Proposition \ref{limitB}.
\end{rk}
\begin{rk}
We now describe the classical physical quantities defined in the Bondi-Sachs gauge. Take $(u,x^a)$, where $a=1,2$ and $x^1=\th, x^2=\phi$, to be the limiting Bondi-Sachs coordinates of null infinity $\II^+=(\mathbb R\times \mathbb S^2,0\oplus \si_{\SSS^2})$ in the Bondi-Sachs gauge, and let\footnote{These quantities are related to the quantities defined in Definition \ref{df:allquantities} as follows, see Appendix A of \cite{CKWWY}:
\begin{align*}
    \Mk=\lim_{C_u,r\to\infty}\frac{r^3}{2}(\mu+\mub),\qquad \BB=\lim_{C_u,r\to\infty}r^4\b,\qquad \Theta=\lim_{C_u,r\to\infty}r^2\hch.
\end{align*}
Note that $\BB$ may not exist in $\KSAF$ spacetime due to the lack of decay of $\b$.}
$$
\Mk=\Mk(u,x^a),\qquad  \BB=\BB(u,x^a),\qquad \Theta=\Theta(u,x^a),
$$
to be the Bondi mass aspect function, angular momentum aspect $1$-form and the shear tensor of $u$--constant sections along $\II^+$. Then the Bondi energy $\EE_{Bondi}$, Bondi linear momentum $\PP_{Bondi}^{0,\pm}$, Dray-Strubel angular momentum $\JJ^{(0,\pm)}_{DS}$ \cite{DS} and Flanagan-Nichols center of mass $\CC^{0,\pm}_{FN}$ \cite{FN} of $u$-constant section on $\II^+$ are defined by
    \begin{align*}
        \EE_{Bondi}(u)&:=\int_{\SSS^2} \Mk(u,\cdot) d\si_{\SSS^2}, \\
        \PP^{0,\pm}_{Bondi}(u)&:=\int_{\mathbb S^2} \Mk(u,\cdot)\, J^{(0,\pm)}(\cdot) d\si_{\SSS^2},\\
        \JJ^{(0,\pm)}_{DS}(u)&:=\int_{\SSS^2}\curlo\left(\BB -\frac{1}{4} \Theta\c \divo\Theta\right)(u,\cdot)\, J^{(0,\pm)}(\cdot)d\si_{\SSS^2},\\
        \CC^{0,\pm}_{FN}(u)&:=\int_{\SSS^2}\divo \left(\BB-\frac{1}{2} u\nabo \Mk-\frac{1}{4}\Theta\c \divo\Theta-\frac{1}{16}\nabo|\The|^2\right)(u,\cdot)\, J^{(0,\pm)}(\cdot)d\si_{\SSS^2}.
    \end{align*}
where $J^{(0,\pm)}$ is a standard basis of $\ell=1$ modes on the standard round sphere $(\mathbb S^2,\si_{\SSS^2})$. We can see that our definitions of $\EE,\PP$ are the same as $\EE_{Bondi},\PP_{Bondi}$ and the definitions of $\JJ,\CC$ are the leading order terms of $\JJ_{DS},\CC_{FN}$. 
\end{rk}
\begin{rk}
The angular momentum $\JJ_{DS}$, first proposed by Dray and Streubel \cite{DS}, was not introduced with the intention of resolving the supertranslation ambiguity. Instead, they define an angular momentum quantity conjugate to the BMS symmetry, corresponding to a fixed rotational Killing field on the spheres of constant $u$. The center of mass $\CC_{FN}$ is a quantity conjugate to the BMS symmetry, corresponding to a boost \cite{FN}. General BMS charges conjugate to any BMS symmetry can be defined using the Hamiltonian framework, see \cite{WaldZoupas}. 
\end{rk}
\begin{rk}
The Chen-Wang-Yau's supertranslation-invariant definitions of angular momentum and center of mass for a $u$-section of $\II^+$ can be regarded as modifications of the classical definitions $\JJ^{(0,\pm)}_{DS}$ and $\CC^{0,\pm}_{FN}$, using the closed and co-closed potential functions of the shear tensor on the corresponding $u$-section. See \cite{CKWWY} for more details on Chen-Wang-Yau's angular momentum and center of mass.
\end{rk}
\begin{lem}\label{evolutiononscri}
In the LGCM foliation, the following equations hold on $\II^+$:
\begin{align*}
    \nab_3\Pscr&=-\divo\Bbscr-\frac{1}{2}\The\c\Abscr,\\
    \nab_3\The&=-\Thb,\\
    \nab_3\Thb&=-\Abscr,\\
    \nab_3(\Bk,\Bkd)&=\ddo_1\left(\ddo^*_1(-\Pscr,\dual\Pscr)+2\Bbscr\c \The\right)_{\ell=1},\\
    \divo\Thb&=\Bbscr,\\
    \divo\The&=\frac{1}{2}\nabo\Xscr+\Zscr,\\
    \curlo\Zscr&=\dual\Pscr-\frac{1}{2}\The\wedge\Thb.
\end{align*}
\end{lem}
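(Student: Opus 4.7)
The plan is to derive each identity by multiplying the corresponding null structure or Bianchi equation in the $e_3$ direction by the appropriate power of $r$ and passing to the limit $r \to \infty$, using Theorem \ref{expansionexist} for the asymptotic expansions and Proposition \ref{limitexist} for the existence of the limiting quantities. The LGCM conditions $\aXscr = \aXbscr = 0$, $\Hscr = \Ybscr = 0$, and $\Wbscrone = 0$ will be essential to kill several subleading terms that would otherwise survive.

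For the three transport identities involving $\Pscr$, $\The$, $\Thb$, I start from the complex $e_3$-equations in Proposition \ref{prop-nullstrandBianchi:complex:outgoing}, namely
$\nab_3 P + \tfrac{1}{2}\ov{\DD}\c\Bb = -\tfrac{3}{2}\ov{\tr\Xb}P - \tfrac{1}{4}\ov{\Xh}\c\Ab + \cdots$
together with the equations for $\nab_3 \Xh$ and $\nab_3 \Xbh$. Taking real parts, multiplying by $r^3$, $r^2$, and $r$ respectively, and substituting the Taylor expansions from Theorem \ref{expansionexist}, one checks that the dominant $r^{-1}$ factors combine with the principal curvature/connection components to produce the stated limits. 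In particular, the shear correction $-\tfrac{1}{2}\The\c\Abscr$ in the $\nabo_3\Pscr$ equation arises from the complex-to-real identity $\Re(\ov{\Xh}\c\Ab) = 2\,\hch\c\aa$, while all $\DD\hot H$- and $\DD\hot\Xib$-type contributions in the equations for $\nab_3\Xh$ and $\nab_3\Xbh$ drop out in the limit thanks to $\Hscr = \Ybscr = 0$, leaving only the clean identities $\nabo_3\The = -\Thb$ and $\nabo_3\Thb = -\Abscr$.

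The transport identity for $(\Bk, \Bkd)$ is the subtlest step, since $r^4\b$ itself has no pointwise limit. Following the strategy of Proposition \ref{limitB} and Lemma \ref{Proposition:Identity.ell=1Div B}, I work instead at the level of $[\ov{\DD}\c]_{ren}\bigl(r^4[B]_{ren}\bigr)$ projected onto $\ell = 1$ modes w.r.t.\ $\Jpp$. Starting from the Bianchi equation $\nab_3 B - \DD P = -\tr\Xb B + 2\omb B + \ov{\Bb}\c\Xh + 3P H + \tfrac{1}{2}A\c\ov{\Xib}$, I apply $\ov{\DD}\c$, commute with $\nab_3$, project onto $\ell = 1$ modes, and take $r \to \infty$. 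The $\DD P$ contribution produces $\ddo_1 \ddo^*_1(-\Pscr, \dual\Pscr)$ via the identity $d_1 d_1^* = -(\Delta,\Delta)$ from Proposition \ref{ddddprop}, while the $\ov{\Bb}\c\Xh$ contribution produces $2\Bbscr\c\The$ in the limit (using $\Re(\ov{\Bb}\c\Xh) = 2\bb\c\hch$ and the expansions $r^2\bb\to\Bbscr$, $r^2\hch\to\The$). The remaining $H$-, $\Xib$-, $\omb$- and renormalization corrections either vanish by the LGCM conditions or combine into these principal terms.

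The three elliptic identities on $\II^+$ follow essentially for free. The identity $\divo\Thb = \Bbscr$ is already the last statement of Proposition \ref{limitexist}, and $\divo\The = \tfrac{1}{2}\nabo\Xscr + \Zscr - \dual\nabo\,\aXscr - \aXscr\,\dual\Hscr$ from the same proposition reduces to the stated form on LGCM since $\aXscr = 0$ and $\Hscr = 0$. For $\curlo\Zscr = \dual\Pscr - \tfrac{1}{2}\The\wedge\Thb$, I use the Gauss--Codazzi-type constraint expressing $\curl\ze$ in terms of $\dual\rho$ and $\tfrac{1}{2}\hch\wedge\hchb$ (modulo $\atrch, \atrchb$ corrections, which vanish on LGCM), multiply by $r^3$, and pass to the limit. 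The main obstacle will be the bookkeeping in the $(\Bk,\Bkd)$ identity: controlling the interplay between the renormalization $[B]_{ren}$, the commutator $[\ov{\DD}\c,\nab_3]$, and the reorganization of the nonlinear shear corrections into the compact expression $\ddo_1\bigl(\ddo^*_1(-\Pscr,\dual\Pscr) + 2\Bbscr\c\The\bigr)_{\ell=1}$.
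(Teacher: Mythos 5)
Your proposal is correct and follows essentially the same route as the paper, whose proof is simply to combine the null structure and Bianchi equations of Proposition \ref{prop-nullstrandBianchi:complex:outgoing} with the limits of Propositions \ref{limitexist} and \ref{limitB} and the LGCM conditions (notably $\Hscr=0$, $\Ybscr=0$, $\aXscr=\aXbscr=0$). The only remark worth making is that the last identity can be read off even more directly from the relation $\aXbscr=-\aXscr+2\curlo\Zscr-2\dual\Pscr+\The\wedge\Thb$ in \eqref{limitidentities} with $\aXscr=\aXbscr=0$, rather than re-deriving it from the $\curl\ze$ constraint.
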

\begin{proof}
The proof follows by combining Propositions \ref{prop-nullstrandBianchi:complex:outgoing}, \ref{limitexist} and \ref{limitB}, together with the fact that $\Hscr=0$ within the LGCM foliation, as defined in Definition \ref{limitinggeodesic}.
\end{proof}
In the following theorem, we derive the laws of black hole dynamics on $\II^+$.
\begin{thm}\label{thm-BHdynamic}
In the LGCM foliation, the null energy, linear momentum, angular momentum and center of mass of a perturbed Kerr black hole satisfy the following evolution equations:
    \begin{align}
        \pr_u\EE&=-\frac{1}{4}\left(|\Thb|^2\right)_{\ell=0},\label{eq-evole-energy}\\
        \pr_u\PP&=-\frac{1}{4}\left(|\Thb|^2\right)_{\ell=1},\label{eq-evole-momentum}\\
        \pr_u\CC&=\PP+\left(\frac{1}{2}\The\c\Thb+\divo\left(\divo\Thb\c\The\right) \right)_{\ell=1},\label{eq-evole-com}\\
        \pr_u\JJ&=\left(\frac{1}{2}\The\wedge\Thb+\curlo\left(\divo\Thb\c\The\right) \right)_{\ell=1}.\label{eq-evole-angularmomentum}
    \end{align}
\end{thm}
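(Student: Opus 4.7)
The plan is to exploit the LGCM condition $e'_3(u')=2+O(r^{-1})$, which implies $\pr_u=\tfrac{1}{2}\nabo_3$ on $\II^+$, and to compute the $\nabo_3$-derivative of each defining expression using the limiting null structure and Bianchi equations collected in Lemma \ref{evolutiononscri}, together with the Hodge calculus on the round sphere $\SSS^2$ from Proposition \ref{ddddprop}.

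\textbf{Energy and linear momentum.} Starting from $\Mk=-\Pscr+\tfrac{1}{2}\The\c\Thb$ (Proposition \ref{limitM}) I compute
\[
\nabo_3\Mk=-\nabo_3\Pscr+\tfrac{1}{2}(\nabo_3\The)\c\Thb+\tfrac{1}{2}\The\c(\nabo_3\Thb)=\divo\Bbscr-\tfrac{1}{2}|\Thb|^2,
\]
where the two $\pm\tfrac{1}{2}\The\c\Abscr$ cross terms cancel by Lemma \ref{evolutiononscri}. Projecting onto $\ell=0$ kills $\divo\Bbscr$ by Stokes' theorem on $\SSS^2$ and yields \eqref{eq-evole-energy}. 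For \eqref{eq-evole-momentum} I observe that the news $\Thb$, as a symmetric traceless 2-tensor on the round sphere, is supported on the $SO(3)$-weights $\ell\geq 2$; thus $\Bbscr=\divo\Thb$ and $\divo\Bbscr=\divo\divo\Thb$ both stay in $\ell\geq 2$, making $(\divo\Bbscr)_{\ell=1}=0$.

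\textbf{Center of mass and angular momentum.} The starting point is $\nabo_3(\Bk,\Bkd)=d_1\bigl(d_1^*(-\Pscr,\dual\Pscr)+2\Bbscr\c\The\bigr)_{\ell=1}$ from Lemma \ref{evolutiononscri}. Using $d_1 d_1^*=-(\Deo,\Deo)$ (Proposition \ref{ddddprop}) and the round-sphere eigenvalue $\Deo J^{(p)}=-2 J^{(p)}$, the first piece collapses on the $\ell=1$ mode to $(2\Pscr,-2\dual\Pscr)_{\ell=1}$. Dividing by $2$ and substituting $\Bbscr=\divo\Thb$ produces preliminary forms for $\pr_u\Bk$ and $\pr_u\Bkd$ built out of $(\Pscr)_{\ell=1}$, $(\dual\Pscr)_{\ell=1}$ and the tangential products $\divo(\divo\Thb\c\The)$, $\curlo(\divo\Thb\c\The)$. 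Substituting $\Pscr=-\Mk+\tfrac{1}{2}\The\c\Thb$ converts $-(\Pscr)_{\ell=1}$ into a $\PP$-piece plus a shear-news correction, while the identity $\dual\Pscr=\curlo\Zscr+\tfrac{1}{2}\The\wedge\Thb$ of Lemma \ref{evolutiononscri}, combined with $\Zscr=\divo\The-\tfrac{1}{2}\nabo\Xscr$, gives $\curlo\Zscr=\curlo\divo\The$, which again lies in $\ell\geq 2$ by the weight argument, so $(\curlo\Zscr)_{\ell=1}=0$ and $(\dual\Pscr)_{\ell=1}=(\tfrac{1}{2}\The\wedge\Thb)_{\ell=1}$. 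Assembling these pieces delivers \eqref{eq-evole-com} and \eqref{eq-evole-angularmomentum}.

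\textbf{Main obstacle.} The principal difficulty is establishing the sharp form of the $\nabo_3$-evolution for $(\Bk,\Bkd)$ in Lemma \ref{evolutiononscri}. This requires differentiating the renormalized Bianchi expression of Proposition \ref{Proposition:Identity.ell=1Div B} along $e_3$, carefully commuting with the $\ell=1$ projection against the LGCM basis $\Jpp$ (for which $\nabo_3\Jpp=0$ on $\II^+$ is built into the LGCM construction), tracking the schematic error terms at the order $r^{-5}$ required, and invoking the LGCM conditions \eqref{GCMS2limit}--\eqref{Incominggeodesic} to identify precisely which nonlinear corrections survive in the $r\to\infty$ limit. Once the clean limiting equations are in hand the remaining derivation is algebraic, reducing to Hodge calculus on $\SSS^2$ combined with the $\ell$-mode decomposition described above.
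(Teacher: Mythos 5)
Your proposal follows essentially the same route as the paper: express $\Mk=-\Pscr+\tfrac12\The\c\Thb$ and differentiate with Lemma \ref{evolutiononscri} for $\EE,\PP$ (with the $\The\c\Abscr$ cancellation and the observation that $\divo\divo\Thb$ has no $\ell\le 1$ content), then apply $d_1d_1^*=-(\Deo,\Deo)$ and the $\ell=1$ eigenvalue to $\nabo_3(\Bk,\Bkd)$, substitute $(-\Pscr,\dual\Pscr)=(\Mk,\curlo\Zscr)+\tfrac12(\The\c\Thb,\The\wedge\Thb)$ with $(\curlo\Zscr)_{\ell=1}=0$, and convert $\nabo_3$ to $2\pr_u$. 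The only blemish is a sign slip in the intermediate claim that the Hodge piece collapses to $(2\Pscr,-2\dual\Pscr)_{\ell=1}$ — it should be $(-2\Pscr,2\dual\Pscr)_{\ell=1}$ — but your subsequent assembly uses the correct signs, so the argument is sound.
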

\begin{rk}\label{rkBHdynamic}
    The linear part of the evolutions obeys the rule of classical mechanics for an isolated system, while the nonlinear part of the evolutions are described by the shears $\The$ and news $\Thb$ that reflect the nontrivial nature of gravitational radiation. Note that \eqref{eq-evole-angularmomentum} was established by An-He-Shen as an angular momentum memory effect, see Theorem 10 in \cite{AHS}.
\end{rk}
\begin{proof}[Proof of Theorem \ref{thm-BHdynamic}] The proof is divided into 2 steps.\\ \\
{\bf Step 1.} Recall that the Bondi mass aspect is defined by:
\begin{align}\label{dfBondimassrecall}
    \Mk=-\Pscr+\frac{1}{2}\The\c\Thb.
\end{align}
We have from Lemma \ref{evolutiononscri}
\begin{align*}
    \nabo_3\Mk&=\divo\Bbscr+\frac{1}{2}\The\c\Abscr-\frac{1}{2}\Thb\c\Thb-\frac{1}{2}\The\c\Abscr\\
    &=-\frac{1}{2}|\Thb|^2+\divo\divo\Thb.
\end{align*}
Then, we have from \eqref{definitionsEPJC}
\begin{align}
\begin{split}\label{nabo3EEPP}
    \nabo_3\EE&=\int_{\SSS^2}(\nab_3\Mk)d\si_{\SSS^2}=-\frac{1}{2}\int_{\SSS^2}|\Thb|^2d\si_{\SSS^2}=-\frac{1}{2}\left(|\Thb|^2\right)_{\ell=0},\\
    \nabo_3\PP&=\int_{\SSS^2}(\nab_3\Mk)\Jp d\si_{\SSS^2}=-\frac{1}{2}\int_{\SSS^2}|\Thb|^2\Jp d\si_{\SSS^2}=-\frac{1}{2}\left(|\Thb|^2\right)_{\ell=1}.
\end{split}
\end{align}
{\bf Step 2.} We recall from \eqref{dfBondimassrecall} and Lemma \ref{evolutiononscri}
\begin{align*}
     (-\Pscr,\dual\Pscr)=(\Mk,\curlo\Zscr)+\frac{1}{2}(\The\c\Thb,\The\wedge\Thb).
\end{align*}
We compute from Lemma \ref{evolutiononscri}
\begin{align*}
    \nabo_3(\Bk,\Bkd)&=-\Deo(-\Pscr,\dual\Pscr)_{\ell=1}+\doo_1\left(2\Bbscr\c\The\right)_{\ell=1}\\
    &=2(-\Pscr,\dual\Pscr)_{\ell=1}+2\doo_1\left(\divo\Thb\c\The\right)_{\ell=1}\\
    &=\left(2(\Mk,\curlo\Zscr)+(\The\c\Thb,\The\wedge\Thb)+2\doo_1\left(\divo\Thb\c \The\right)\right)_{\ell=1}\\
    &=2\left(\PP,(\curlo\Zscr)_{\ell=1}\right)+(\The\c\Thb,\The\wedge\Thb)_{\ell=1}+2\doo_1\left(\divo\Thb\c\The\right)_{\ell=1},
\end{align*}
where we used the fact that $\ddo_1\ddo^{*}_1=-(\Deo_0,\Deo_0)$, which follows from Proposition \ref{ddddprop}. We notice that by Lemma \ref{evolutiononscri}
\begin{align*}
    \left(\curlo\Zscr\right)_{\ell=1}=\curlo\left(\divo\The-\frac{1}{2}\nabo\Xscr\right)_{\ell=1}=\curlo\left(\divo\The\right)_{\ell=1}=0.
\end{align*}
Therefore, we deduce the following evolution equation of center of mass $\CC$ and angular momentum $\JJ$ along the LGCM foliation,
\begin{align}
\begin{split}\label{nabo3CCJJ}
\nabo_3\CC&= \nabo_3(\Bk)=2\PP+\left(\The\c\Thb+2\divo\left(\divo\Thb\c\The\right)\right)_{\ell=1},\\
\nabo_3\JJ&=\nab_3(\Bkd)=\left(\The\wedge\Thb+2\curlo\left(\divo\Thb\c\The\right)\right)_{\ell=1}.
\end{split}
\end{align}
Recalling that we have in LGCM foliation
\begin{align*}
    e_3(u)=2,\qquad e_3(\Jp)=0\qquad\mbox{ on }\;\II^+,
\end{align*}
which implies
\begin{align*}
    e_3=2\pr_u\quad \mbox{ on }\,\II^+.
\end{align*}
Injecting it into \eqref{nabo3EEPP} and \eqref{nabo3CCJJ}, this concludes the proof of Theorem \ref{thm-BHdynamic}.
\end{proof}
\begin{rk}
\label{rm:distorted}
We can also derive evolution equations for the physical quantities relative to the background outgoing PG foliation. Indeed, if we set as in Definition \ref{dfquantities}
\begin{align*}
\left(\gcm\EE,\gcm\PP,\gcm\CC,\gcm\JJ\right):=\left(\gcm\Mk_{\ell=0},\gcm\Mk_{\ell=1},\gcm\Bk,\gcm\Bkd\right),
\end{align*}
the same computations as above lead to distorted evolution equations. We have the following evolution equation for $\gcm\EE$:
\begin{align*}
    \pr_u\left(\gcm\EE\right)&=\left(-\frac{1}{4}\left|\gcm\Thb\right|^2+\gcm\Hscr\c\left(\gcm\divo\gcm\Thb\right)\right)_{\ell=0}\\
    &+\frac{1}{4}\left(\left(\gcm\nabo\hot\gcm\Hscr\right)\c\gcm\Thb+\left(\gcm\Hscr\hot\gcm\Hscr\right)\c\gcm\Thb\right)_{\ell=0}.
\end{align*}
The distortions are due to the fact that $\gcm\Hscr$ is nonzero. This illustrates the importance of the disappearance of $\Hscr$ in LGCM foliation.
\end{rk}
\section{Displacement of the center of mass in Kerr stability}\label{seckick}
\begin{df}\label{fixu'}
Let $(\M,\g)$ be a $\KSAF$ spacetime endowed with an outgoing PG $S(u,r)$--foliation and a LGCM foliation $S'(u',r')$ constructed in Theorem \ref{LGCMconstruction}. Let $S'_1$ be a leaf of the LGCM foliation on $\II^+$, which intersects the initial layer region $\LL_0$. We then calibrate $u'$ by assuming\footnote{Notice that the definition of $u'$ in Theorem \ref{LGCMconstruction} is unique up to a constant.}
\begin{align*}
    u'=1\qquad\mbox{ on }\;S'_1.
\end{align*}
\end{df}
\subsection{Total flux of physical quantities}
\begin{proposition}\label{totalflux}
Let $(\M,\g)$ be a $\KSAF$ spacetime endowed with an outgoing PG $S(u,r)$--foliation and we denote $S'(u',r')$ the leaves of the LGCM foliation, constructed in Theorem \ref{LGCMconstruction}. Then, the total flux of energy $\EE'(u')$, linear momentum $\PP'(u')$, angular momentum $\JJ'(u')$ and center of mass $\CC'(u')$ verify the following estimates:
\begin{align*}
    \de\EE'&:=\EE'(+\infty)-\EE'(1)=O(\ep_0),\\
    \de\PP'&:=\PP'(+\infty)-\PP'(1)=O(\ep_0),\\
    \de\JJ'&:=\JJ'(+\infty)-\JJ'(1)=O(\ep_0),\\
    \de\CC'&:=\CC'(+\infty)-\CC'(1)=O(\ep_0).
\end{align*}
\end{proposition}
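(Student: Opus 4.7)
The plan is to integrate the four evolution equations of Theorem \ref{thm-BHdynamic} over $u' \in [1,\infty)$ and exploit the LGCM calibration conditions at $i^+$ from items 6--7 of Definition \ref{limitinggeodesic}. First I would translate the pointwise decay rates of the linearized connection and curvature components supplied by Proposition \ref{GagGabdecay} through the frame transformation formulae of Proposition \ref{Riccitransfer} into estimates in the LGCM frame. Since $|r\hchbc| \les \ep_0 u^{-1-\dec}$ and $|r^2\Xh| \les \ep_0 u^{-1/2-\dec}$, and since the transition functions $(f,\fb,\ovla)$ from the background outgoing PG frame to the LGCM frame satisfy $(f,\fb,\ovla)=O(r^{-1})$ by the construction in Section \ref{ssecendproof}, the relation $u' \sim u$ yields
$$
|\The'(u')| \les \frac{\ep_0}{(u')^{1/2+\dec}}, \qquad |\Thb'(u')| \les \frac{\ep_0}{(u')^{1+\dec}},
$$
uniformly on each leaf $S'(u')$, together with analogous bounds on $\nabo'\The'$ and $\nabo'\Thb'$.

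For $\de\EE'$ and $\de\PP'$ the evolution equations \eqref{eq-evole-energy}--\eqref{eq-evole-momentum} give directly
$$
|\de\EE'| + |\de\PP'| \les \int_1^\infty \|\Thb'\|_{L^\infty(S'(u'))}^2\,du' \les \ep_0^2 \les \ep_0.
$$
For $\de\JJ'$, the right-hand side of \eqref{eq-evole-angularmomentum} is quadratic in $(\The',\Thb')$ and their first angular derivatives, and is bounded pointwise by $\ep_0^2 (u')^{-3/2-2\dec}$, which is integrable and yields $|\de\JJ'| \les \ep_0^2$.

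The main difficulty lies with $\de\CC'$, because the evolution \eqref{eq-evole-com} contains the linear term $\PP'$, and the a priori bound $|\PP'| \les \ep_0 (u')^{-1/2-\dec}$ inherited from the $\KSAF$ decay is not integrable on $[1,\infty)$. To overcome this I would invoke the LGCM calibration $\CC'(\infty) = \Bk'(\infty) = 0$ from Definition \ref{limitinggeodesic}. Integrating \eqref{eq-evole-com} from $u'=1$ to $u'=\infty$, the nonlinear terms contribute only $O(\ep_0^2)$ by Step 1, so $\int_1^\infty \PP'(u')\,du'$ must be finite. Combined with the fact that $\PP'(\infty)$ exists as a genuine limit (since $\pr_{u'}\PP'$ is absolutely integrable by \eqref{eq-evole-momentum}), this forces $\PP'(\infty)=0$. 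Integrating \eqref{eq-evole-momentum} backward from infinity then gives the improved decay $|\PP'(u')| \les \ep_0^2 (u')^{-1-2\dec}$, which is integrable on $[1,\infty)$ and produces $|\de\CC'| \les \ep_0^2 \les \ep_0$.

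The main obstacle is precisely this bootstrap between the rigidity of the LGCM foliation at $i^+$ and the enhanced decay of the linear momentum $\PP'$: without the calibration $\Bk'(\infty)=0$ one cannot rule out a nontrivial $\PP'(\infty)$ that would produce a divergent contribution to $\de\CC'$. Once $\PP'(\infty)=0$ is extracted from the LGCM condition, all four fluxes reduce to absolutely convergent integrals of quadratic news--shear expressions, controlled by the $\KSAF$ decay rates of Proposition \ref{GagGabdecay}.
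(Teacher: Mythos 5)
Your proposal is correct and follows essentially the same route as the paper: integrate the evolution laws of Theorem \ref{thm-BHdynamic} using the $\KSAF$ decay of the news and shear for $\de\EE',\de\PP',\de\JJ'$, then use the LGCM calibration $\CC'(+\infty)=\Bk'(+\infty)=0$ to force $\PP'(+\infty)=0$, integrate \eqref{eq-evole-momentum} backward from infinity to upgrade $\PP'$ to integrable decay, and conclude for $\de\CC'$. Your justification of $\PP'(+\infty)=0$ via convergence of $\int_1^\infty\PP'\,du'$ is a slightly more careful phrasing of the paper's remark that $\CC'(+\infty)=0$ forces $\pr_{u'}\CC'(+\infty)=0$, but the argument is the same.
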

\begin{proof}
    We have from Proposition \ref{GagGabdecay} and Theorem \ref{thm-BHdynamic}
    \begin{equation}\label{physquant-evolve-asymp}
        \begin{split}
        \pr_{u'}\EE'&=O\left(\frac{\ep_0}{{u'}^{2+2\dec
        }}\right),\\
        \pr_{u'}\PP'&=O\left(\frac{\ep_0}{{u'}^{2+2\dec
        }}\right),\\
        \pr_{u'}\JJ'&=O\left(\frac{\ep_0}{{u'}^{\frac{3}{2}+2\dec
        }}\right),\\
        \pr_{u'}\CC'&=2\PP+O\left(\frac{\ep_0}{{u'}^{\frac{3}{2}+2\dec
        }}\right).
        \end{split}
    \end{equation}
    Integrating the first three equations of \eqref{physquant-evolve-asymp}, we deduce
    \begin{align}\label{othermemory}
        \de\EE',\de\PP',\de\JJ'=O(\ep_0).
    \end{align}
    Next, we have from \eqref{GCMS2limit}
    \begin{align*}
        \CC'(+\infty)=\lim_{u'\to\infty}\CC'(u')=\lim_{u' \to\infty}\Bk'(u')=0.
    \end{align*}
    Thus, we deduce from the last equation of \eqref{physquant-evolve-asymp} that\footnote{Notice that $C'(+\infty)=0$ implies that $\pr_{u'}\CC'(+\infty)=0$.}
    \begin{align*}
        \PP'(+\infty)=\lim_{u'\to\infty}\PP'(u')=0.
    \end{align*}
    We then integrate \eqref{physquant-evolve-asymp} backwardly to deduce
    \begin{align*}
        \PP'(u')=\PP'(+\infty)-\int^{\infty}_{u'}O\left(\frac{\ep_0}{{u''}^{2+2\dec
        }}\right)du''=O\left(\frac{\ep_0}{{u'}^{1+2\dec
        }}\right).
    \end{align*}
    Integrating the last equation of \eqref{physquant-evolve-asymp} once again, we obtain
    \begin{align*}
        \de\CC'=\int^{+\infty}_{1}\left(O\left(\frac{\ep_0}{{u'}^{1+2\dec
        }}\right)+O\left(\frac{\ep_0}{{u'}^{\frac{3}{2}+2\dec
        }}\right)\right)du'=O(\ep_0).
    \end{align*}
    Combining with \eqref{othermemory}, this concludes the proof of Proposition \ref{totalflux}.
\end{proof}
\begin{rk}
The LGCM foliations along future null infinity enable the definition of well-behaved physical quantities and provide a coherent framework for measuring the total flux of conserved charges throughout the evolution. Moreover, it selects a future inertial reference frame that encapsulates the stabilizing mechanism of the Einstein vacuum evolution. Consequently, although the final center of mass may deviate significantly from that of the reference Kerr spacetime, (see Section \ref{section:recoil} below), the total shift in the center of mass during the Einstein vacuum evolution remains small, reflecting the underlying Kerr stability result.
\end{rk}
\subsection{Gravitational wave recoil}\label{section:recoil}
\begin{df}\label{dfS1rS'1r'}
Let $(\M,\g)$ be the final spacetime derived in the proof of Theorem \ref{MainThm-firstversion},\footnote{Recall that the final spacetime $(\M,\g)$ derived Theorem \ref{MainThm-firstversion} is a particular example of $\KSAF$ spacetime. Thus, all the previous results hold for $(\M,\g)$.} endowed with an outgoing PG $S(u,r)$--foliation and a LGCM foliation $S'(u',r')$ constructed in Theorem \ref{LGCMconstruction}. Moreover, we assume that $u'$ is fixed as in Definition \ref{fixu'}. We then define\footnote{Notice that we have by definition $S'_1(\infty)=S'_1$.}
\begin{align*}
    S'_1(r'):=S'(1,r').
\end{align*}
Moreover, we assume that the initial layer region $\LL_0$ is endowed with a foliation $S^{(0)}(u^{(0)},r^{(0)})$ and a null frame $\left(e_3^{(0)},e_4^{(0)},e_1^{(0)},e_2^{(0)}\right)$. We denote by $S^{(0)}_1$ a sphere of the initial layer foliation that has a common point with $S'_1(r')$. 
\end{df}
The following theorem compares the difference of the center of mass at $S'_1(r')$ and $S^{(0)}_1$.
\begin{thm}\label{kickthm}
Let $(\M,\g)$ be the final spacetime derived in the proof of Theorem \ref{MainThm-firstversion} and let $S'_1(r')$ and $S^{(0)}_1$ be the spheres introduced in Definition \ref{dfS1rS'1r'}. As in Definition \ref{dfquantities}, the center of mass on the spheres $S_1'(r')$ and $S^{(0)}_1$ are defined respectively as follows:
\begin{align}\label{dfCCCC'}
    \CC'_1:=r^5(\div'\b')_{\ell=1,S'_1(r')},\qquad \CC^{(0)}_1:=r^5(\div\b)_{\ell=1,S^{(0)}_1}.
\end{align}
Then, we have for $r\gg \ep_0^{-3}$
\begin{align}\label{kickformula}
    \CC'_1-\CC_1^{(0)}=O\left(\ep_0r^{\frac{1}{2}-\dec}\right).
\end{align}
\end{thm}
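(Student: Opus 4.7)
The plan is to bound $\CC'_1$ and $\CC^{(0)}_1$ separately and combine them by the triangle inequality, exploiting a fundamental asymmetry between the two. The quantity $\CC'_1$, being attached to the LGCM foliation, can be shown to be only $O(\ep_0)$ thanks to the LGCM condition $\lim_{u'\to\infty}\Bk'=0$ of \eqref{GCMS2limit} combined with the flux estimate of Proposition \ref{totalflux}, while $\CC^{(0)}_1$ only inherits the raw pointwise decay of $\b$ from \eqref{mainest}, which yields merely $|\CC^{(0)}_1|\les\ep_0 r^{1/2-\dec}$; the difference $\CC'_1-\CC^{(0)}_1$ is therefore dominated by the larger of the two.

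To bound $\CC'_1$, I would pass through the $\II^+$ quantity $\Bk'(1):=\lim_{r'\to\infty}r^5(\div'\b')_{\ell=1,S'_1(r')}$ introduced in Proposition \ref{limitB}, applied in the LGCM frame. First, the LGCM condition \eqref{GCMS2limit} gives $\lim_{u'\to\infty}\Bk'(u')=0$, and integrating backward the evolution law of Theorem \ref{thm-BHdynamic} exactly as in the proof of Proposition \ref{totalflux} produces $\Bk'(1)=O(\ep_0)$. Second, the quantitative convergence $\bigl|r^5(\div'\b')_{\ell=1,S'_1(r')}-\Bk'(1)\bigr|\les\ep_0 r^{-1/2-\dec}$ would follow by integrating, in $r'$ along $C_{u'=1}$, the renormalized $\nab_4'$ transport identity obtained from Lemma \ref{Proposition:Identity.ell=1Div B} together with the transformation formulae of Propositions \ref{Riccitransfer}--\ref{Curvaturetransfer}. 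Combining these two inputs yields $\CC'_1=O(\ep_0)$.

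The bound on $\CC^{(0)}_1$ is then immediate: taking one $\dkb$--derivative of the estimate $|B|\les\ep_0 r^{-7/2-\dec}$ in \eqref{mainest} gives $|\div\b|\les\ep_0 r^{-9/2-\dec}$, which together with $|S^{(0)}_1|\les r^2$ produces $|\CC^{(0)}_1|\les\ep_0 r^{1/2-\dec}$; the triangle inequality then delivers \eqref{kickformula}. The main obstacle lies in the convergence-rate step of the previous paragraph: one must carefully transport the renormalization structure of Lemma \ref{Proposition:Identity.ell=1Div B} through the frame change from the background outgoing PG frame to the LGCM frame (whose transition functions satisfy $(f,\fb,\ovla)=O(r^{-1})$ by Theorem \ref{LGCMconstruction}) and through the change of $\ell=1$ basis from $\Jp$ to $\Jpp$, verifying that all error terms produced by these substitutions remain, after $\ell=1$ projection, integrable of size $O(\ep_0 r^{-3/2-\dec})$ along $C_{u'=1}$; the smallness hypothesis $r\gg\ep_0^{-3}$ enters precisely to absorb the cubic nonlinearities in $(f,\fb,\ovla)$ produced by these transformations, while all the remaining ingredients are supplied by Sections \ref{seclimit}--\ref{secLG}.
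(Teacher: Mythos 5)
Your overall strategy coincides with the paper's: bound $\CC'_1$ by $O(\ep_0)$ via the LGCM condition $\lim_{u'\to\infty}\Bk'=0$ and the backward integration of the evolution law (this is exactly Proposition \ref{totalflux}), bound $\CC^{(0)}_1$ by $O(\ep_0 r^{\frac{1}{2}-\dec})$, and conclude by the triangle inequality. Your extra care about the rate of convergence of $r^5(\div'\b')_{\ell=1,S'_1(r')}$ to the limit $\Bk'(1)$ is a legitimate refinement of a step the paper treats implicitly.

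There is, however, a concrete gap in your bound on $\CC^{(0)}_1$. You derive $|\div\b|\les\ep_0 r^{-\frac{9}{2}-\dec}$ by differentiating the estimate $|B|\les\ep_0 r^{-\frac{7}{2}-\dec}$ of \eqref{mainest}, but that estimate holds for $\b$ relative to the outgoing PG frame of the $\KSAF$ spacetime (relative to which the Kerr value of $B$ vanishes), whereas $\CC^{(0)}_1$ is defined on $S^{(0)}_1$ relative to the \emph{initial layer} foliation and frame $(e_3^{(0)},e_4^{(0)},e_1^{(0)},e_2^{(0)})$. The correct input is the initial data assumption of \cite{KS:main}, namely $\div^{(0)}\b^{(0)}-\div_{Kerr}\b_{Kerr}=O(\ep_0 r^{-\frac{9}{2}-\dec})$ together with $\div_{Kerr}\b_{Kerr}=O(r^{-6})$; the Kerr reference value does \emph{not} vanish in the initial layer frame and contributes $O(r^{-1})$ to $\CC^{(0)}_1$. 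This term is entirely absent from your argument, and it is precisely what the hypothesis $r\gg\ep_0^{-3}$ is needed for: it guarantees $r^{-1}\les\ep_0 r^{\frac{1}{2}-\dec}$ so that the Kerr contribution can be absorbed into the stated error. Your attribution of $r\gg\ep_0^{-3}$ to "cubic nonlinearities in $(f,\fb,\ovla)$" is therefore off the mark; those nonlinearities are already $O(\ep_0)$ and harmless. Without accounting for the Kerr reference value of $\div^{(0)}\b^{(0)}$, the claimed bound $|\CC^{(0)}_1|\les\ep_0 r^{\frac{1}{2}-\dec}$ is unjustified (and false for $r$ of moderate size), even though the final numerology happens to coincide with the paper's.
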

\begin{proof}
    We first have from Proposition \ref{totalflux}
    \begin{align}\label{CCest}
        \CC'(+\infty)=0,\qquad\quad\CC'_1=O(\ep_0).
    \end{align}
     We also have from the initial assumption of \cite{KS:main}
    \begin{align}\label{divb0divbK}
        \div^{(0)}\b^{(0)}-\div_{Kerr}\b_{Kerr}=O\left(\frac{\ep_0}{{r}^{\frac{9}{2}+\dec}}\right),
    \end{align}
    with $\div_{Kerr}\b_{Kerr}$ denotes the Kerr value. We recall from Lemma 2.52 in \cite{KS:main} that
    \begin{align*}
        \div_{Kerr}\b_{Kerr}=O(r^{-6}).
    \end{align*}
    Combining with \eqref{dfCCCC'}, we obtain for $r\gg\ep_0^{-3}$
    \begin{align*}
    \big|\CC^{(0)}_1\big|\les r^{-1}+r^5\frac{\ep_0}{r^{\frac{9}{2}+\dec}}\les \ep_0r^{\frac{1}{2}-\dec}.
    \end{align*}
    Combining with \eqref{CCest}, we obtain that \eqref{kickformula} hold for $r\gg \ep_0^{-3}$. This concludes the proof of Theorem \ref{kickthm}.
\end{proof}
\begin{rk}\label{anormalfb}
Theorem \ref{kickthm} reflects the fact that the LGCM foliation $S'(u',r')$ captures the center of mass frame of the final Kerr solution, while the initial data layer foliation $S^{(0)}(u^{(0)},r^{(0)})$ captures the center of mass frame of the initial Kerr solution. The behavior of $\CC'_1-\CC_1^{(0)}$ is consistent with the presence of a Lorentz boost between these two centers of mass frames. More precisely, denoting $(f,\fb,\la)$ the transition functions from the initial layer frame $\left(e_3^{(0)},e_4^{(0)},e_1^{(0)},e_2^{(0)}\right)$ to the LG frame $(e'_3,e'_4,e'_1,e'_2)$, we have from Step 13 of Theorem M0 in \cite{KS:main}
\begin{align*}
    |f|\les \frac{\ep_0}{r},\qquad\quad|(\fb,\ovla)|\les \ep_0.
\end{align*}
The anomalous behavior of $\fb$ and $\ovla$ reflects that there is a large displacement between the two centers of mass, which is related to the so-called black hole kick in physical literature, see for example \cite{Fitchett,HHS,VV}.
\end{rk}
\begin{rk}\label{explainpicture}
In the proof of Theorem \ref{kickthm}, \eqref{divb0divbK} implies that the difference of the center of mass of the initial data and that of the reference Kerr initial data has size $O(\ep_0r^{\frac{1}{2}-\dec})$. However, \eqref{totalflux} shows that the difference in the center of mass of the initial data and that of the final data only has size $O(\ep_0)$. See Figure \ref{figurekick} for a schematic illustration of the three centers of mass $\CC_{Kerr}$, $\CC^{(0)}_1$ and $\CC'_1$.
\begin{figure}[H]
\centering
    \begin{tikzpicture}[scale=1.3]
    \node at (4.9, 4.8) {Future inertial frame};
    \node at (-2.5, -0.1) {$\CC_{Kerr}$};
    \node at (-2.1, -0.9) {Reference Kerr initial data};
    \node at (0.85, 0) {$O\left(\ep_0 r^{\frac{1}{2}-\dec}\right)$};
    \node at (4.8, -0.9) {Perturbed initial data};
    \node at (4.1, -0.1) {$\CC^{(0)}_1$};
    \node at (5.05, -0.45) {$\CC'_1$};
    \filldraw (-2.5, -0.5) circle (2pt);
    \filldraw (4.5, -0.5) circle (2pt);
    \draw[->] (-2.4, -0.5) -- (4.4, -0.5);
    \draw[dashed] (4.5, -0.5) -- (4.5, 4.5);
    \draw[dashed] (4.8, -0.5) -- (4.8, 4.5);
    \draw[red, ->, domain=-0.5:4.2, samples=200] plot ({4.6 + 0.2*cos(3*\x r)}, {\x});
    \node[red] at (4.7, 2.2) {evolution};
    \filldraw (4.8, 4.35) circle (2pt);
    \node at (5.45, 4.25) {$\CC'(+\infty)$};
    \draw[->] (3.8, 3.2) -- (4.65, 4.35);
    \node at (3.6, 3) {$O(\ep_0)$};
    \draw [decorate,decoration={brace,amplitude=5pt,mirror},xshift=-0.2cm] (4.7, 4.5) -- (5, 4.5) node[midway,below,yshift=-5pt] {};
\end{tikzpicture}
\caption{\small Comparison of three centers of mass $\CC_{Kerr}$, $\CC^{(0)}_1$ and $\CC'_1$. The center of mass of the perturbed initial data, denoted by $\CC^{(0)}_1$, is far away from that of the reference Kerr initial data, denoted by $\CC_{Kerr}$. However, $\CC^{(0)}_1$ is close to the center of mass of the final state, denoted by $\CC'_1$.}
\label{figurekick}
\end{figure}
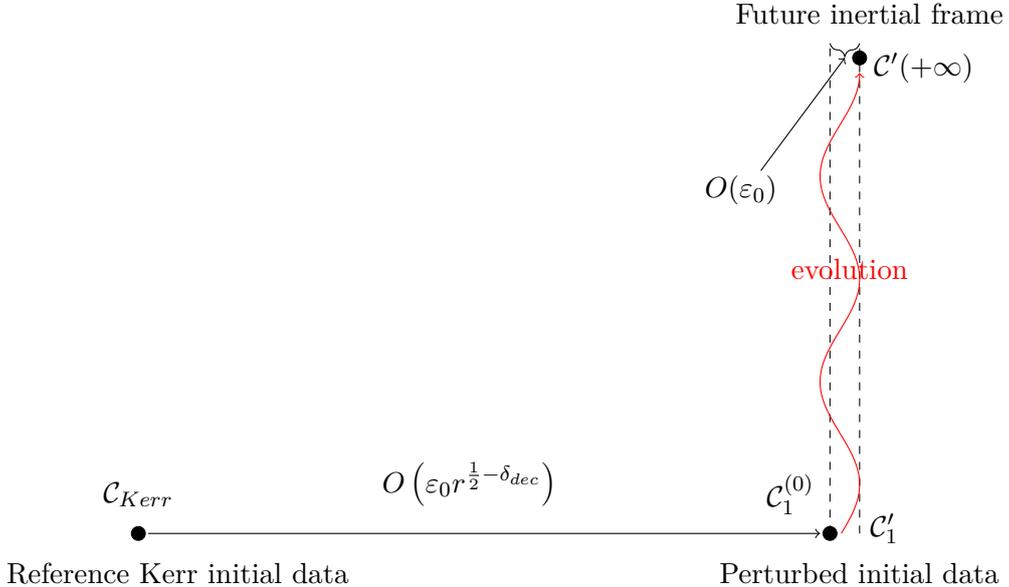
\end{rk}
\begin{thm}\label{kickthmgeneral}
Let $(\M,\g)$ be a perturbation of a particular solution of the Einstein vacuum equation,\footnote{Here, $(\M,\g)$ can be the final spacetime derived in various results of stability in the literature.} endowed with a sphere foliation and a null frame $(e_3,e_4,e_1,e_2)$. Let $\Gag$ and $\Gab$ be defined as in Definition \ref{dfGagGab}. Let $s>3$, $q\in\NNN$\footnote{The decay parameter $s$ and the regularity parameter $q$ are consistent to the $(s,q)$--asymptotic flat initial data introduced in \cite{Shen22,Shen23,Shen24}.} and $0<\ep_0\ll 1$ and we assume that the following decay estimates hold:
\begin{align*}
    \sup_{\M}|\dk^{\leq q}\Gag|\les\frac{\ep_0}{r^2u^\frac{s-3}{2}},\qquad \sup_{\M}|\dk^{\leq q}\Gab|\les\frac{\ep_0}{ru^\frac{s-1}{2}},
\end{align*}
We also define $S_1'(r')$ and $S^{(0)}$ as in Definition \ref{dfS1rS'1r'}. Then, the center of mass $\CC'_1$ and $\CC^{(0)}_1$ on the spheres $S'_1(r')$ and $S^{(0)}$ respectively, defined as in Theorem \ref{kickthm}, satisfy the following estimate for $r$ large enough:
\begin{align}\label{kickgeneral}
\Big|\CC'_1-\CC^{(0)}_1\Big|\les \Bigg\{ 
\begin{aligned}
    \ep_0 & r^{\frac{5-s}{2}} \qquad 3<s<5,\\
    \ep_0 & \qquad\qquad\quad\;\;\, s>5.
\end{aligned}
\end{align}
\end{thm}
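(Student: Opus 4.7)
The plan is to mimic the strategy of Theorem \ref{kickthm}, splitting the problem into two independent estimates: a bound $|\CC'_1| \les \ep_0$ obtained by a total-flux argument along $\II^+$, and a bound on $|\CC^{(0)}_1|$ read off directly from the initial-layer decay. Both bounds are tracked as explicit functions of $s$, and then combined via the triangle inequality.

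First, I would establish a generalization of Proposition \ref{totalflux}. The LGCM conditions \eqref{GCMS2limit} imply $\CC'(+\infty) = 0$, as in the $\KSAF$ case. The hypotheses of the theorem transfer to bounds on the limiting shear and news via Proposition \ref{limitexist}, giving $|\The| \les \ep_0/u'^{(s-3)/2}$ and $|\Thb| \les \ep_0/u'^{(s-1)/2}$ along the LGCM foliation on $\II^+$. Plugging these into the evolution equations of Theorem \ref{thm-BHdynamic} yields pointwise bounds
\begin{align*}
|\pr_{u'}\EE'|,\,|\pr_{u'}\PP'| \les \ep_0^2 / u'^{s-1}, \qquad |\pr_{u'}\JJ'| \les \ep_0^2 / u'^{s-2},
\end{align*}
all of which are integrable in $u'$ precisely because $s > 3$. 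The evolution of $\CC'$ is $\pr_{u'}\CC' = 2\PP' + \text{n.l.}$; since $\CC'(+\infty)=0$ forces $\pr_{u'}\CC'(+\infty) = 0$, one deduces $\PP'(+\infty) = 0$, and then backward integration gives $|\PP'(u')| \les \ep_0^2/u'^{s-2}$. Integrating once more from $u'=1$ to $u'=+\infty$ yields $|\CC'_1 - \CC'(+\infty)| \les \ep_0$, i.e.\ $|\CC'_1| \les \ep_0$.

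Second, I would estimate $\CC^{(0)}_1 = r^5(\div\b)_{\ell=1, S^{(0)}_1}$ directly. Split $\b = \b_{(0)} + \check\b$ into its background (Kerr, Schwarzschild, or Minkowski) value and the perturbation. The background contribution is controlled by the explicit formulae for the reference solution, giving $r^5|\div\b_{(0)}| \les r^{-1}$, as in the corresponding calculation in the proof of Theorem \ref{kickthm}. For the perturbation, the natural $(s,q)$-asymptotic decay of the initial data translates into the pointwise bound $|\check\b| \les \ep_0 /r^{(s+3)/2}$ on $\LL_0$ (consistent with the $\KSAF$ case $s = 4 + 2\dec$, where this is the sharp $\ep_0/r^{7/2+\dec}$ bound in Definition \ref{dfKSAF}), and one further derivative costs an $r^{-1}$. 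This yields $r^5|\div\check\b| \les \ep_0 r^{(5-s)/2}$ on $S^{(0)}_1$. Collecting,
\begin{align*}
|\CC^{(0)}_1| \les r^{-1} + \ep_0 r^{(5-s)/2},
\end{align*}
and combining with the first step,
\begin{align*}
|\CC'_1 - \CC^{(0)}_1| \les \ep_0 + \ep_0 r^{(5-s)/2} + r^{-1},
\end{align*}
which for $r$ sufficiently large reduces to \eqref{kickgeneral} in both regimes $3 < s < 5$ (where the middle term dominates) and $s > 5$ (where the first term dominates).

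The main obstacle is the first step: verifying that the backward integration actually closes with the stated uniform $O(\ep_0)$ constant. This rests critically on two LGCM features — $\CC'(+\infty) = 0$ (from \eqref{GCMS2limit}) and $e'_3(u') = 2$ on $\II^+$ (from \eqref{urconditions}) — without which $\pr_{u'}$ would not be the correct conjugate variable and $\PP'(u')$ would not decay at the rate needed for integrability in $u'$. A secondary but still nontrivial point is ensuring that the initial-layer decay $|\check\b| \les \ep_0/r^{(s+3)/2}$ indeed encodes the $(s,q)$-asymptotic assumption in the form required here; in practice this is the analogue for $\b$ of the $\Gab$-bound on $r\Bb$, but with the sharper $r$-rate $(s+3)/2$ rather than just $(s-1)/2 + 1$.
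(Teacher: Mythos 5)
Your proposal is correct and follows essentially the same route as the paper: the paper likewise obtains $\CC'_1=O(\ep_0)$ by repeating the total-flux argument of Proposition \ref{totalflux} (using $\CC'(+\infty)=0$, $\PP'(+\infty)=0$ and backward integration, with integrability exactly from $s>3$), bounds $\CC^{(0)}_1$ directly from the initial-layer decay $\div^{(0)}\b^{(0)}=O(\ep_0 r^{-\frac{s+5}{2}})$ so that $|\CC^{(0)}_1|\les\ep_0 r^{\frac{5-s}{2}}$, and concludes by the triangle inequality for $r$ large. Your extra care with the background (Kerr) contribution $r^5|\div\b_{(0)}|\les r^{-1}$ matches the treatment already present in the proof of Theorem \ref{kickthm} and is absorbed for $r$ large enough.
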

\begin{rk}
Various choices of the parameter $s$ have been made in the literature on the stability of Schwarzschild and Kerr. In particular:
\begin{itemize}
\item $s>3$ is used in \cite{ShenKerr} for the Kerr stability in the external region.
\item $s=4+2\dec$ is used in \cite{KS} for the stability of Schwarzschild and in the series of works \cite{KS:Kerr1,KS:Kerr2,KS:main,GKS,Shen} on the stability of slowly rotating Kerr black holes.
\item $s=6+2\dec$ is used in \cite{DHRT} for the stability of Schwarzschild.
\item $s>7$ is used in \cite{Caciotta} for Kerr stability in the external region.
\end{itemize}
\end{rk}
\begin{rk}
    Theorem \ref{kickthmgeneral} shows that under stronger decay assumptions on the initial data, that is, in case $s>5$, the displacement of the center of mass has size $O(\ep_0)$, which refers to the disappearance of the black hole kick. In other words, the three centers of mass described in Figure \ref{figurekick} are close to each other in the case $s>5$. However, in the case of $s\in(3,5)$\footnote{The series works \cite{KS:Kerr1,KS:Kerr2,Shen,KS:main,GKS} shows Kerr stability for small angular momentum hold in the case of $s=4+2\dec$. However, we believe that similar results may hold in the general case $s>3$.}, the center of mass of the perturbed initial data can have a large displacement, as in Figure \ref{figurekick}. This refers to the non triviality of the black hole kick.
\end{rk}
\begin{proof}[Proof of Theorem \ref{kickthmgeneral}]
Proceeding as in Proposition \ref{totalflux}, we have
\begin{align}\label{CCestgeneral}
    \CC'(+\infty)=0,\qquad\quad\CC'_1=O(\ep_0).
\end{align}
Next, we have from the initial assumption
\begin{align*}
    \div^{(0)}\b^{(0)}=O\left(\frac{\ep_0}{r^{\frac{s+5}{2}}}\right).
\end{align*}
Hence, we have on $S_1^{(0)}$
\begin{align*}
    \left|\CC^{(0)}_1\right|\les\left|(r^5\div^{(0)}\b^{(0)})_{\ell=1}\right|\les\frac{\ep_0r^5}{r^{\frac{s+5}{2}}}\les\ep_0 r^{\frac{5-s}{2}}.
\end{align*}
Combining with \eqref{CCestgeneral}, we obtain that \eqref{kickgeneral} hold for $r$ large enough. This concludes the proof of Theorem \ref{kickthmgeneral}.
\end{proof}
\appendix
\section{Asymptotic behavior of geometric quantities}\label{section:appendix}
\subsection{Main equations in general setting}\label{appmain}
We state in the following the null structure equations and the Bianchi equations in the general setting\footnote{That is we make here no gauge conditions.}. We assume given a vacuum spacetime endowed with a general null frame $(e_3,e_4,e_1,e_2)$ relative to which we define our connection and curvature coefficients.
\begin{proposition}\label{prop-nullstr}
The connection coefficients verify the following equations:
\begin{align*}
\nab_3\trchb&=-|\hchb|^2-\frac 1 2 \big(\trchb^2-\atrchb^2\big)+2\div\xib-2\omb\trchb +  2 \xib\c(\eta+\etab-2\ze),\\
\nab_3\atrchb&=-\trchb\atrchb +2\curl\xib-2\omb\atrchb+2\xib\wedge(-\eta+\etab+2\ze),\\
\nab_3\hchb&=-\trchb\,\hchb+\nab\hot\xib-2\omb\hchb+\xib\hot(\eta+\etab-2\ze)-\aa,\\
\nab_3\trch
&= -\hchb\c\hch -\frac 1 2 \trchb\trch+\frac 1 2 \atrchb\atrch+2\div\eta+2\omb\trch + 2 \big(\xi\c \xib +|\eta|^2\big)+ 2\rho,\\
\nab_3\atrch
&=-\hchb\wedge\hch-\frac 1 2(\atrchb \trch+\trchb\atrch)+2\curl\eta+2\omb\atrch + 2 \xib\wedge\xi-2\dual\rho,\\
\nab_3\hch&=-\frac 1 2 \big( \trch\,\hchb+\trchb\,\hch\big)-\frac 1 2\big(-\dual \hchb \,\atrch+\dual\hch\,\atrchb\big)+\nab\hot\eta+2\omb\hch+\xib\hot\xi+\eta\hot\eta,\\
\nab_4\trchb&= -\hch\c\hchb -\frac 1 2 \trch\trchb+\frac 1 2 \atrch\atrchb+2\div\etab+2 \om \trchb+2\big( \xi\c \xib +|\etab|^2\big)+2\rho,\\
\nab_4\atrchb&=-\hch\wedge\hchb-\frac 1 2(\atrch \trchb+\trch\atrchb)+ 2 \curl \etab + 2 \om \atrchb + 2 \xi\wedge\xib+2 \dual \rho,\\
\nab_4\hchb&=-\frac 1 2 \big( \trchb\,\hch+\trch\,\hchb\big)-\frac 1 2 \big(-\dual \hch \, \atrchb+\dual \hchb\,\atrch\big)
+\nab\hot \etab +2 \om \hchb+ \xi\hot\xib + \etab\hot\etab,\\
\nab_4\trch&=-|\hch|^2-\frac 1 2 \big(\trch^2-\atrch^2\big)+2\div\xi- 2 \om \trch + 2  \xi\c(\etab+\eta+2\ze),\\
\nab_4\atrch&=-\trch\atrch+2\curl\xi-2\om\atrch+ 2\xi\wedge(-\etab+\eta-2\ze),\\
\nab_4\hch&=-\trch\,\hch+\nab\hot\xi-2\om\hch+\xi\hot(\etab+\eta+2\ze)-\a.
\end{align*}
Also,
\begin{align*}
\nab_3 \ze+2\nab\omb&= -\hchb\c(\ze+\eta)-\frac{1}{2}\trchb(\ze+\eta)-\frac{1}{2}\atrchb(\dual\ze+\dual\eta)+ 2 \omb(\ze-\eta)\\
&+\hch\c\xib+\frac{1}{2}\trch\,\xib+\frac{1}{2}\atrch\dual\xib +2\om \xib -\bb,
\\
\nab_4 \ze -2\nab\om&= \hch\c(-\ze+\etab)+\frac{1}{2}\trch(-\ze+\etab)+\frac{1}{2}\atrch(-\dual\ze+\dual\etab)+2 \om(\ze+\etab)\\
& -\hchb\c\xi -\frac{1}{2}\trchb\,\xi-\frac{1}{2}\atrchb\dual\xi-2 \omb \xi -\b,
\\
\nab_3\etab-\nab_4\xib &= -\hchb\c(\etab-\eta) -\frac{1}{2}\trchb(\etab-\eta)+\frac{1}{2}\atrchb(\dual\etab-\dual\eta) -4 \om \xib +\bb, \\
\nab_4\eta-\nab_3\xi &= -\hch\c(\eta-\etab) -\frac{1}{2}\trch(\eta-\etab)+\frac{1}{2}\atrch(\dual\eta-\dual\etab)-4\omb \xi -\b,\\
\end{align*}
and
\begin{align*}
\nab_3\om+\nab_4\omb-4\om\omb-\xi\c\xib-(\eta-\etab)\c\ze+\eta\c\etab=\rho.
\end{align*}
Also,
\begin{align*}
\div\hch +\ze\c\hch&=\frac{1}{2}\nab\trch+\frac{1}{2}\trch\,\ze -\frac{1}{2}\dual\nab\atrch-\frac{1}{2}\atrch\dual\ze -\atrch\dual\eta-\atrchb\dual\xi-\b,\\
\div\hchb -\ze\c\hchb&=\frac{1}{2}\nab\trchb-\frac{1}{2}\trchb\,\ze -\frac{1}{2}\dual\nab\atrchb+\frac{1}{2}\atrchb\dual\ze-\atrchb\dual\etab-\atrch\dual\xib +\bb,
\end{align*}
and\footnote{Note that this equation follows from expanding $\R_{34ab}$.}
\begin{align*}
\curl\ze&=-\frac 1 2\hch\wedge\hchb+\frac 1 4\big(\trch\atrchb-\trchb\atrch\big)+\om \atrchb-\omb\atrch+\dual \rho.
\end{align*}
\end{proposition}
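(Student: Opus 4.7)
\medskip

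\noindent\textbf{Proof proposal for Proposition \ref{prop-nullstr}.} The plan is to derive the null structure equations directly from the definitions of the Ricci coefficients in Definition \ref{df:allquantities}, using the Ricci identity for the spacetime connection $\D$ and the vacuum Einstein equation $\Ric(\g)=0$ (so that $\R_{\mu\nu\rho\sigma}$ equals the full Riemann tensor, whose relevant components are the curvature coefficients $\a,\b,\rho,\rhod,\bb,\aa$). I would first establish auxiliary commutator identities: by writing any horizontal covariant derivative as $\nab_X Y = {}^{(h)}(\D_X Y)$ and expanding $[e_\mu,e_\nu]$ in the null frame, each Lie bracket among frame vectors can be written in terms of the Ricci coefficients $(\chi,\chib,\eta,\etab,\ze,\om,\omb,\xi,\xib)$ together with the anti-traces $\atrch,\atrchb$ arising from $\g(e_3,[e_a,e_b])$ and $\g(e_4,[e_a,e_b])$, which encode the non-integrability of $\HH$.

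For the transport equations in the $e_3$ (resp.\ $e_4$) direction, I would fix a pair $(a,b)$ and compute $\nab_3\chib_{ab} = e_3(\g(\D_a e_3, e_b)) - \g(\D_{\nab_3 e_a}e_3, e_b) - \g(\D_ae_3, \nab_3e_b)$, then commute $\D_3$ past $\D_a$ using the Ricci identity
\[
\D_3\D_ae_3 - \D_a\D_3e_3 - \D_{[e_3,e_a]}e_3 \;=\; \R(e_3,e_a)e_3,
\]
and expand each term using the defining formulas for $\om,\omb,\eta,\etab,\xi,\xib$. Reading off the trace, anti-trace and symmetric-traceless parts yields the equations for $\nab_3\trchb,\nab_3\atrchb,\nab_3\hchb$, respectively, and likewise for $\nab_3\trch,\nab_3\atrch,\nab_3\hch$ (using $\D_3\D_ae_4$ in place of $\D_3\D_ae_3$). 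The four analogues with $\nab_4$ replacing $\nab_3$ follow by the $e_3\leftrightarrow e_4$ symmetric computation, with the book-keeping change $\omb\leftrightarrow-\om$, $\eta\leftrightarrow\etab$, $\xib\leftrightarrow\xi$, $\chib\leftrightarrow\chi$, and sign flips on the dual terms. The mixed equations for $\nab_3\ze+2\nab\omb$, $\nab_4\ze-2\nab\om$, $\nab_3\etab-\nab_4\xib$, $\nab_4\eta-\nab_3\xi$ and the scalar identity $\nab_3\om+\nab_4\omb=\ldots+\rho$ come from computing $\nab_3\g(\D_ae_4,e_3)$, respectively $\nab_4\g(\D_3e_3,e_a)$, etc., again via the Ricci identity and using the antisymmetry of the Riemann tensor to produce the curvature term $\rho$ (or $\bb,\b$).

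For the Codazzi-type constraint equations $\div\hch+\ze\c\hch=\tfrac12\nab\trch+\ldots-\b$ and its $\hchb$ analogue, I would compute $\nab^b\chi_{ab}-\nab_a(\tfrac12\trch)$ from the definition, commuting horizontal derivatives $[\nab_a,\nab_b]$ (which produces both a curvature contribution $\R_{ab4c}$ giving $\b$ and additional $\atrch,\atrchb$ terms through the non-integrable commutator $[e_a,e_b]$), and then decompose into trace/dual/traceless parts. The identity $\curl\ze=-\tfrac12\hch\wedge\hchb+\ldots+\dual\rho$ follows by expanding $\R_{34ab}$ directly in terms of connection coefficients, using $\ze_a=\tfrac12\g(\D_ae_4,e_3)$ and the Ricci identity for $\D_a\D_b e_4 - \D_b\D_ae_4$.

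The main obstacle is not any single computation but the systematic bookkeeping forced by the non-integrability of $\HH$: every commutator $[e_\mu,e_\nu]$ carries $\atrch,\atrchb$ contributions, and every symmetric-traceless decomposition must faithfully record the $\dual$-tensor couplings (e.g.\ the $-\tfrac12\dual\hchb\,\atrch$ terms in $\nab_3\hch$ and the $\atrch\dual\eta$ in $\div\hch$). A clean way to manage this is to carry the complex notation $X=\chi+i\dual\chi$, $\Hb=\etab+i\dual\etab$, etc., of Definition \ref{df:allquantities}(4), in which the trace and anti-trace combine into $\tr X=\trch-i\atrch$; then each Ricci identity produces a single complex equation whose real and imaginary parts recover both the trace/scalar equation and its anti-trace/dual partner simultaneously. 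The final real equations in the statement are obtained by taking real and imaginary parts, consistent with the complex formulation recorded in Proposition \ref{prop-nullstrandBianchi:complex:outgoing}.
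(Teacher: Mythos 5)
The paper does not prove this proposition directly but cites Proposition 2.2.5 of \cite{GKS}; your outline --- deriving each equation from the definitions of the Ricci coefficients via the Ricci identity in the vacuum setting, reading off trace, anti-trace and symmetric-traceless parts, and tracking the non-integrability of $\HH$ through the $\atrch,\atrchb$ components of the frame commutators --- is precisely the computation carried out there, so your approach matches the paper's. One small correction to your bookkeeping rule: under the conjugation $e_3\leftrightarrow e_4$ the paper's conventions give $\om\leftrightarrow\omb$ with \emph{no} sign change (compare the term $-2\om\trch$ in the $\nab_4\trch$ equation with $-2\omb\trchb$ in the $\nab_3\trchb$ equation); the sign flips are confined to $\ze\mapsto-\ze$ and to the Hodge-dual quantities $\rhod$, $\dual\b$, $\dual\bb$, consistent with your remark about the dual terms.
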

\begin{proof}
See Proposition 2.2.5 in \cite{GKS}.
\end{proof}
\begin{proposition}\label{prop:bianchi} 
The curvature components verify the following equations:
\begin{align*}
\nab_3\a-\nab\hot \b&=-\frac 1 2 \big(\trchb\,\a+\atrchb\dual\a)+4\omb\a+(\ze+4\eta)\hot \b - 3 (\rho\hch +\rhod\dual\hch),\\
\nab_4\b-\div\a &=-2(\trch\,\b-\atrch\dual\b)-2\om\b +\a\c(2 \ze +\etab)+3  (\xi\rho+\dual \xi\rhod),\\
\nab_3 \b-(\nab\rho+\dual\nab\rhod)&=-(\trchb\,\b+\atrchb\dual\b)+2\omb\,\b+2\bb\c \hch+3 (\rho\eta+\rhod\dual \eta)+\a\c\xib, \\
\nab_4 \rho-\div \b&=-\frac 3 2 (\trch\,\rho+\atrch \rhod)+(2\etab+\ze)\c\b-2\xi\c\bb-\frac 1 2 \hchb \c\a,\\
\nab_4 \rhod+\curl\b&=-\frac 3 2 (\trch\,\rhod-\atrch \rho)-(2\etab+\ze)\c\dual \b-2\xi\c\dual \bb+\frac 1 2 \hchb \c\dual \a, \\
\nab_3 \rho+\div\bb&=-\frac 3 2 (\trchb\,\rho -\atrchb \rhod)-(2\eta-\ze) \c\bb+2\xib\c\b-\frac{1}{2}\hch\c\aa,\\
\nab_3 \rhod+\curl\bb&=-\frac 3 2 (\trchb\,\rhod+\atrchb \rho)-(2\eta-\ze) \c\dual \bb-2\xib\c\dual\b-\frac 1 2 \hch\c\dual \aa,\\
\nab_4\bb+\nab\rho-\dual\nab\rhod&=-(\trch\,\bb+\atrch\dual\bb)+ 2\om\,\bb+2\b\c \hchb-3 (\rho\etab-\rhod\dual \etab)-\aa\c\xi,\\
\nab_3\bb +\div\aa &=-2(\trchb\,\bb-\atrchb\dual\bb)-2\omb\bb-\aa\c(-2\ze+\eta)-3(\xib\rho-\dual \xib \rhod),\\
\nab_4\aa+ \nab\hot \bb&=-\frac 1 2 \big(\trch\,\aa+\atrch\dual \aa)+4\om \aa+
 (\ze-4\etab)\hot \bb - 3(\rho\hchb -\rhod\dual\hchb).
\end{align*}
\end{proposition}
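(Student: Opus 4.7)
The strategy is to apply the twice-contracted second Bianchi identity for the Weyl tensor, which in a Ricci-flat spacetime reduces to $\D^\mu \R_{\mu\nu\sigma\tau} = 0$, and then project systematically onto the null frame $(e_3, e_4, e_1, e_2)$. Since the horizontal distribution $\HH$ need not be integrable, care is required to track the antisymmetric parts $\atrch$, $\atrchb$ and the transverse coefficients $\xi$, $\xib$ that would otherwise vanish in the classical integrable setting.

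First I would record the action of the frame connection $\D_\mu e_\nu$ in terms of the Ricci coefficients of Definition~\ref{df:allquantities}. The divergence $\D^\mu$ decomposes as
\[
\D^\mu = -\tfrac{1}{2}e_4^\mu\,\D_3 - \tfrac{1}{2}e_3^\mu\,\D_4 + \delta^{ab}e_a^\mu\,\D_b,
\]
so that $\D^\mu W_{\mu\nu\sigma\tau}=0$ furnishes, after contracting $(\nu,\sigma,\tau)$ with appropriate frame vectors, one transport equation in $\nab_3$ or $\nab_4$ coupled to a horizontal divergence, curl, or $\nab\hot$-type operator on the remaining curvature components. For instance, choosing $(\nu,\sigma,\tau)=(a,4,b)$ and taking the traceless symmetric part in $(a,b)$ produces the $\nab_3\a-\nab\hot\b$ equation; choosing $(\nu,\sigma,\tau)=(4,3,4)$ produces $\nab_4\rho-\div\b$ (and its Hodge dual $\nab_4\rhod+\curl\b$ by duality); analogous projections deliver the remaining eight equations.

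The bulk of the work lies in expanding each projection with the Leibniz rule and then converting the resulting frame derivatives into the horizontal operators $\nab_a$, $\div$, $\curl$, $\nab\hot$. At every commutation step, contractions of $\chi$, $\chib$ with the Weyl components produce the leading $\trch\,\b$, $\trchb\,\bb$, $\hch\c\a$, $\hchb\c\aa$ terms, while contractions of $\zeta$, $\eta$, $\etab$, $\xi$, $\xib$, $\om$, $\omb$ with Weyl produce the remaining inhomogeneities; critically, the non-integrability pieces $\atrch$, $\atrchb$ generate exactly the terms $\atrch\dual\b$, $\atrchb\dual\bb$, $\atrch\rhod$, $\atrchb\rho$ that distinguish these formulae from their Christodoulou-Klainerman \cite{Ch-Kl} counterparts. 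The algebraic Weyl identities (tracelessness, first Bianchi symmetry, and the Hodge-duality relation $\dual W\dual{}=-W$ for self-dual $W$) must be used repeatedly to reduce all projections to the six independent components $\a,\b,\rho,\rhod,\bb,\aa$.

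The main obstacle is simply the bookkeeping: one must carry out ten such projections without losing track of sign conventions, dual-tensor identities, and the finely structured cancellations that enforce the correct coefficients on the $(\trch,\trchb,\atrch,\atrchb)$ pieces. A clean way to organize this is to work in the complex formalism of Proposition~\ref{prop-nullstrandBianchi:complex:outgoing} via the self-dual quantities $A,B,P,\Bb,\Ab$ and the complex frame coefficients $\tr X$, $\Xh$, $Z$, $H$, $\Hb$, $\Xi$, $\Xib$; derive first the complex Bianchi identities (which are half as numerous, as each encodes an equation together with its Hodge dual), and then take real and imaginary parts to recover the ten real equations of the proposition. This complexification is precisely the device used in \cite{GKS}, and once the complex equations are obtained, the listed real equations follow by straightforward decomposition.
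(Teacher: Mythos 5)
The paper offers no proof of its own here --- it simply cites Proposition 2.2.6 of \cite{GKS} --- and your outline (projecting the vacuum Bianchi identity $\D^\mu W_{\mu\nu\si\tau}=0$ onto the general, possibly non-integrable null frame, tracking the $\atrch$, $\atrchb$, $\xi$, $\xib$ contributions, and organizing the bookkeeping through the complex self-dual quantities $A,B,P,\Bb,\Ab$ before taking real and imaginary parts) is precisely how the cited reference derives these equations. Your proposal is correct and essentially the same as the paper's (cited) approach.
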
 
\begin{proof}
See Proposition 2.2.6 in \cite{GKS}.
\end{proof}
\subsection{Proof of Theorem \ref{expansionexist}}\label{secTaylor}
The following fundamental lemma is useful for treating the remainder terms of geometric quantities of a Taylor expansion of $r^{-1}$.
\begin{lem}\label{remainder}
Let $p>0$, $m\geq 0$ and let $U$ be a tensor field defined on a $\KSAF$ spacetime $(\M,\g)$ which satisfies
\begin{align}\label{scbehavior}
    \lim_{C_u,r\to\infty}r^mU=0,\qquad|\nab_4(r^mU)|\les r^{-p-1}.
\end{align}
Then, we have
\begin{align}\label{O1term}
    U=\OO_1(r^{-m-p}).
\end{align}
\end{lem}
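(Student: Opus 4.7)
The plan is to reduce the statement to a one-dimensional integration along the integral curves of $e_4$, using the PG structure of $\KSAF$ spacetimes. Set $V := r^m U$. By hypothesis we have $\lim_{C_u,r\to\infty} V = 0$ and $|\nab_4 V|\les r^{-p-1}$, with $p>0$ ensuring that this bound is integrable in $r$ at infinity.

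First I would fix an integral curve of $e_4$ (parameterized by $r$, since $e_4(r)=1$ by Definition \ref{dfoutgoingPG}) and integrate the identity $\nab_4 V = e_4(V)$ from $r$ out to $+\infty$ along the curve. The boundary term at $r=\infty$ vanishes by hypothesis, and we obtain
\begin{align*}
|V(u,r,\th^1,\th^2)|\;\leq\;\int_r^\infty |\nab_4 V|\,dr'\;\les\;\int_r^\infty (r')^{-p-1}\,dr'\;\les\;r^{-p}.
\end{align*}
Dividing by $r^m$ gives $|U|\les r^{-m-p}$, which is the pointwise part of the conclusion $U=\OO_1(r^{-m-p})$.

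Next I would control $\nab_4 U$. Since $e_4(r)=1$, one has $\nab_4 U = \nab_4(r^{-m}V) = -m\,r^{-m-1}V + r^{-m}\nab_4 V$. The first term is bounded by $r^{-m-1}\cdot r^{-p} = r^{-m-p-1}$ by the estimate just proved, and the second is bounded by $r^{-m}\cdot r^{-p-1}=r^{-m-p-1}$ by hypothesis. Combining, $|\nab_4 U|\les r^{-m-p-1}$, which together with the pointwise bound yields $U=\OO_1(r^{-m-p})$ in the sense of Definition \ref{dfHolder} (with $k=1$).

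There is no genuine obstacle here; the only thing to be careful about is that the integration is performed along an $e_4$--integral curve, where $r$ serves as an affine parameter thanks to the outgoing PG structure, so that $\nab_4 = \pr_r$ on scalar quantities like $|V|$ and the ``$C_u,r\to\infty$'' limit is exactly the boundary at $r=\infty$ along that curve. Once this is noted, the estimate is a one-line application of the fundamental theorem of calculus plus the integrability of $r^{-p-1}$ for $p>0$.
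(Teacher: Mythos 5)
Your proposal is correct and follows essentially the same route as the paper: integrate the bound on $\nab_4(r^mU)$ from $r$ to $\infty$ using the vanishing limit to kill the boundary term, deduce $|U|\les r^{-m-p}$, and then bound $\nab_4 U$ via the product-rule decomposition $\nab_4 U = r^{-m}\nab_4(r^mU)-r^{-m}e_4(r^m)U$. The only difference is cosmetic — you make explicit the role of $r$ as an affine parameter along the $e_4$--integral curves, which the paper leaves implicit.
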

\begin{proof}
We have from \eqref{scbehavior}
\begin{align*}
r^mU\les\lim_{C_u,r\to\infty}r^mU+\int_r^\infty\nab_4(s^mU)ds\les\int_r^\infty s^{-p-1}ds\les r^{-p},
\end{align*}
which implies
\begin{align*}
    |U|\les r^{-m-p}.
\end{align*}
Moreover, we have
\begin{align*}
    |\nab_4 U|\leq |r^{-m}\nab_4(r^mU)|+|r^{-m}e_4(r^m)U|\les r^{-m-p-1}.
\end{align*}
This concludes the proof of Lemma \ref{remainder}.
\end{proof}
We are now ready to prove Theorem \ref{expansionexist}.
\begin{proof}[Proof of Theorem \ref{expansionexist}]
The Taylor expansions in \eqref{GagTaylor} follow directly from Lemmas \ref{limitexist} and \ref{remainder} and \eqref{nab4eqimportant}--\eqref{r3nab4omb}. We now focus on \eqref{GabTaylor}.\\ \\
We have from Proposition \ref{prop-nullstr}
\begin{align*}
\nab_4(r\trchb)=-r\hch\c\hchb-\frac{r}{2}\trchc\trchb+\frac{r}{2}\atrch\atrchb-2r\div\ze+2r|\ze|^2+2r\rho,
\end{align*}
which implies from \eqref{GagTaylor} and \eqref{limitidentities}
\begin{align*}
    r^2\nab_4\left(r\trchb+2-\frac{\Xbscr}{r}\right)&=-r^3\hch\c\hchb-\frac{r^3}{2}\trchc\trchb-2r^3\div\ze+2r^3\rho+\Xbscr+O(r^{-1})\\
    &=-\The\c\Thb+\Xscr-2\divo\Zscr+2\Pscr+\Xbscr+\OO_1\left(r^{-\frac{1}{2}-\dec}\right)+\OO_0(r^{-1})\\
    &=\OO_0\left(r^{-\frac{1}{2}-\dec}\right).
\end{align*}
Hence, we deduce from Lemma \ref{remainder} 
\begin{align*}
    \trchb=-\frac{2}{r}+\frac{\Xbscr}{r^2}+\OO_1\left(r^{-\frac{5}{2}-\dec}\right).
\end{align*}
Similarly, we have from Proposition \ref{prop-nullstr}
\begin{align*}
\nab_4(r\atrchb)=-r\hch\wedge\hchb-\frac{r}{2}\left(\atrch \trchb+\trchc\atrchb\right)-2r\curl\ze+2r\dual\rho,
\end{align*}
which implies from \eqref{GagTaylor} and \eqref{limitidentities}
\begin{align*}
    r^2\nab_4\left(r\atrchb-\frac{\aXbscr}{r}\right)&=-\The\wedge\Thb+\aXscr-2\curlo\Zscr+2\dual\Pscr+\aXbscr\\
    &+\OO_1\left(r^{-\frac{1}{2}-\dec}\right)+\OO_0(r^{-1})\\
    &=\OO_0\left(r^{-\frac{1}{2}-\dec}\right).
\end{align*}
Hence, we deduce from Lemma \ref{remainder} 
\begin{align*}
    \atrchb=\frac{\aXbscr}{r^2}+\OO_1\left(r^{-\frac{5}{2}-\dec}\right).
\end{align*}
Next, we have from Proposition \ref{prop-nullstr}
\begin{align*}
    \nab_4\omb=\rho+2\eta\c\ze+\ze\c\ze.
\end{align*}
Combining with Proposition \ref{limitexist}, \eqref{GagTaylor} and \eqref{limitidentities}, we deduce
\begin{align*}
    r^3\nab_4\left(\omb-\frac{\Wbscr}{r^2}\right)&=r^3\rho+2r^3\eta\c\ze+r^3\ze\c\ze+2\Wbscr\\
    &=\Pscr+2\Hscr\Zscr+\OO_1\left(r^{-\frac{1}{2}-\dec}\right)+\OO_0(r^{-1})+2\Wbscr\\
    &=\OO_0\left(r^{-\frac{1}{2}-\dec}\right).
\end{align*}
Applying Lemma \ref{remainder}, we obtain
\begin{align}\label{ombTaylor}
    \omb=\frac{\Wbscr}{r^2}+\OO_1\left(r^{-\frac{5}{2}-\dec}\right).
\end{align}
Next, we have from Proposition \ref{prop-nullstr}
\begin{align*}
\nab_4\hchb=-\frac{1}{2}\big( \trchb\,\hch+\trch\,\hchb\big)-\frac 1 2\big(-\dual \hch \, \atrchb+\dual \hchb\,\atrch\big)-\nab\hot\ze+\ze\hot\ze.
\end{align*}
Thus, we obtain for any $2$-tensor $\Thb_2$, which only depends on $(u,\th^1,\th^2)$,
\begin{align*}
r^2\nab_4\left(r\hchb-\Thb-\frac{\Thb_2}{r}\right)&=-\frac{r^3}{2}\left(\trchb\,\hch+\trchc\,\hchb\right)-\frac{r^3}{2}\big(-\dual \hch \, \atrchb+\dual \hchb\,\atrch\big)\\
&-r^3\nab\hot\ze+r^3\ze\hot\ze+\Thb_2.
\end{align*}
Combining with \eqref{GagTaylor}, we deduce
\begin{align*}
    r^2\nab_4\left(r\hchb-\Thb-\frac{\Thb_2}{r}\right)=\The-\frac{1}{2}\Xscr\Thb-\frac{1}{2}\aXscr\Thb-\nabo\hot\Zscr+\Thb_2+\OO_1\left(r^{-\frac{1}{2}-\dec}\right)+\OO_0(r^{-1}).
\end{align*}
Taking
\begin{align*}
    \Thb_2:=-\The+\frac{1}{2}\Xscr\Thb+\frac{1}{2}{}^*\Thb\,\aXscr+\nabo\hot\Zscr,
\end{align*}
we deduce from Lemma \ref{remainder}
\begin{align*}
    \hchb=\frac{\Thb}{r}+\frac{-\The+\frac{1}{2}\Xscr\Thb+\frac{1}{2}\dual\Thb\,\aXscr+\nabo\hot\Zscr}{r^2}+\OO_1\left(r^{-\frac{5}{2}-\dec}\right).
\end{align*}
Similarly, we have from Proposition \ref{prop-nullstr}
\begin{align*}
\nab_4\eta+\frac{1}{2}\trch\,\eta=-\b-\hch\c(\eta+\ze)-\frac{1}{2}\trch\,\ze+\frac{1}{2}\atrch(\dual\eta+\dual\ze),
\end{align*}
which implies for any $1$--form $\Hscr_2$, which only depends on $(u,\th^1,\th^2)$,
\begin{align*}
    r^2\nab_4\left(r\eta-\Hscr-\frac{\Hscr_2}{r}\right)&=-r^3\b-\frac{r^3}{2}\trchc\,\eta-r^3\hch\c(\eta+\ze)\\
    &-\frac{r^3}{2}\trch\,\ze+\frac{r^3}{2}\atrch(\dual\eta+\dual\ze)+\Hscr_2.
\end{align*}
Combining with \eqref{GagTaylor}, we deduce
\begin{align*}
    r^2\nab_4\left(r\eta-\Hscr-\frac{\Hscr_2}{r}\right)=\OO_0(r^{-\frac{1}{2}-\dec})-\frac{1}{2}\Xscr\Hscr-\The\c\Hscr-\Zscr+\frac{1}{2}\aXscr\dual\Hscr+\Hscr_2.
\end{align*}
Taking
\begin{align*}
    \Hscr_2:=\frac{1}{2}\Xscr\Hscr+\The\c\Hscr+\Zscr-\frac{1}{2}\aXscr\dual\Hscr,
\end{align*}
we deduce from Lemma \ref{remainder}
\begin{align*}
    \eta=\frac{\Hscr}{r}+\frac{\frac{1}{2}\Xscr\c\Hscr+\The\c\Hscr+\Zscr-\frac{1}{2}\aXscr\,\dual\Hscr}{r^2}+\OO_1\left(r^{-\frac{5}{2}-\dec}\right).
\end{align*}
Next, we have from Proposition \ref{prop-nullstr}
\begin{align*}
\nab_4\xib+\frac{1}{2}\trch\,\xib=2\nab\omb+\atrchb(\dual\ze+\dual\eta)-\hch\c\xib-\frac{1}{2}\atrch\dual\xib.
\end{align*}
Hence, we obtain for any $1$--form $\Ybscr_2$, which only depends on $(u,\th^1,\th^2)$,
\begin{align*}
r^2\nab_4\left(r\xib-\Ybscr-\frac{\Ybscr_2}{r}\right)&=-\frac{r^3}{2}\trchc\,\xib+2r^3\nab\omb+r^3\atrchb(\dual\ze+\dual\eta)\\
&-r^3\hch\c\xib-\frac{r^3}{2}\atrch\dual\xib+\Ybscr_2.
\end{align*}
Combining with \eqref{GagTaylor} and \eqref{ombTaylor}, we deduce
\begin{align*}
    r^2\nab_4\left(r\xib-\Ybscr-\frac{\Ybscr_2}{r}\right)=-\frac{1}{2}\Xscr\Ybscr+2\nabo\Wbscr+\aXbscr\dual\Hscr-\The\c\Ybscr-\frac{1}{2}\aXscr\dual\Ybscr+\Ybscr_2.
\end{align*}
Taking
\begin{align*}
    \Ybscr_2:=\frac{1}{2}\Xscr\Ybscr-2\nabo\Wbscr-\aXbscr\dual\Hscr+\The\c\Ybscr+\frac{1}{2}\aXscr\dual\Ybscr,
\end{align*}
we infer from Lemma \ref{remainder}
\begin{align*}
    \xib=\frac{\Ybscr}{r}+\frac{\frac{1}{2}\Xscr\Ybscr-2\nabo\Wbscr-\aXbscr\,\dual\Hscr+\The\c\Ybscr+\frac{1}{2}\aXscr\,\dual\Ybscr}{r^2}+\OO_1\left(r^{-\frac{5}{2}-\dec}\right).
\end{align*}
Next, we have from Proposition \ref{prop:bianchi}
\begin{align*}
    \nab_4(r^2\bb)=-r^2(\nab\rho-\dual\nab\dual\rho)-r^2(\trchc\,\bb+\atrch\dual\bb)+2r^2\b\c\hchb+3r^2(\rho\ze-\rhod\dual\ze),
\end{align*}
which implies for any $1$--form $\Bbscr_3$, which only depends on $(u,\th^1,\th^2)$,
\begin{align*}
    r^2\nab_4\left(r^2\bb-\Bbscr-\frac{\Bbscr_3}{r}\right)&=-r^4(\nab\rho-\dual\nab\rhod)-r^4(\trchc\,\bb+\atrch\,\dual\bb)\\
    &+2r^4\b\c\hchb+3r^4(\rho\ze-\rhod\dual\ze)+\Bbscr_3\\
    &=-\nabo\Pscr+\dual\nabo\,\dual\Pscr-\Xscr\Bbscr-\aXscr\,\dual\Bbscr+\Bbscr_3+\OO_0\left(r^{-\frac{1}{2}-\dec}\right).
\end{align*}
Taking
\begin{align}\label{dfBbscr3}
    \Bbscr_3:=\nabo\Pscr-\dual\nabo\,\dual\Pscr+\Xscr\Bbscr+\aXscr\,\dual\Bbscr,
\end{align}
we deduce from Lemma \ref{remainder}
\begin{align*}
    \bb=\frac{\Bbscr}{r^2}+\frac{\nabo\Pscr-\dual\nabo\,\dual\Pscr+\Xscr\Bbscr+\aXscr\,\dual\Bbscr}{r^3}+\OO_1\left(r^{-\frac{7}{2}-\dec}\right).
\end{align*}
Finally, we have from Proposition \ref{prop:bianchi}
\begin{align*}
\nab_4\aa+\nab\hot\bb=-\frac 1 2 \big(\trch\,\aa+\atrch\dual \aa)+5\ze\hot\bb-3  (\rho\hchb-\rhod\dual\hchb),
\end{align*}
which implies
\begin{align*}
r^2\nab_4(r\aa)+r^3\nab\hot\bb=-\frac{r^2}{2}\left(\trchc(r\aa)+\atrch\dual(r\aa)\right)+5r^3\ze\hot \bb-3r^3(\rho\hchb -\rhod\dual\hchb).
\end{align*}
Taking $r\to\infty$ and applying Proposition \ref{limitexist}, we deduce
\begin{align*}
    \lim_{C_u,r\to\infty}r^2\nab_4(r\aa)=-\nabo\hot\Bbscr-\frac{1}{2}(\Xscr\Abscr+\aXscr\dual\Abscr).
\end{align*}
Combining with Lemma \ref{lhopital}, we obtain the existence of the following limit:
\begin{align*}
    \Abscr_2:=\lim_{C_u,r\to\infty}r(r\aa-\Abscr)=-\lim_{C_u,r\to\infty}r^2\nab_4(r\aa-\Abscr)=\nabo\hot\Bbscr+\frac{1}{2}(\Xscr\Abscr+\aXscr\dual\Abscr).
\end{align*}
Then, for any $1$--form $\Abscr_3$, which only depends on $(u,\th^1,\th^2)$, we have
\begin{align*}
&r^2\nab_4\left(r\aa-\Abscr-\frac{\Abscr_2}{r}-\frac{\Abscr_3}{r^2}\right)\\
=&-r^3\nab\hot\bb-\frac{r^3}{2}\left(\trchc\,\aa+\atrch\dual\aa\right)+5r^3\ze\hot\bb-3r^3(\rho\hchb-\rhod\dual\hchb)+\Abscr_2+\frac{2\Abscr_3}{r}\\
=&-\nabo\hot\Bbscr-\frac{\nabo\hot\Bbscr_3}{r}-\frac{\Xscr}{2}\left(\Abscr+\frac{\Abscr_2}{r}\right)-\frac{\aXscr}{2}\left(\dual\Abscr+\frac{\dual\Abscr_2}{r}\right)+\Abscr_2+\frac{2\Abscr_3}{r}\\
&+\frac{5\Zscr\hot\Bbscr}{r}-\frac{3\Pscr\Thb-3\dual\Pscr\dual\Thb}{r}+\OO_0\left(r^{-\frac{3}{2}-\dec}\right)\\
=&-\frac{\nabo\hot\Bbscr_3}{r}-\frac{\Xscr\Abscr_2+\aXscr\dual\Abscr_2}{2r}+\frac{5\Zscr\hot\Bbscr}{r}-\frac{3\Pscr\Thb-3\dual\Pscr\dual\Thb}{r}+\frac{2\Abscr_3}{r}+\OO_0\left(r^{-\frac{3}{2}-\dec}\right),
\end{align*}
where $\Bbscr_3$ is defined in \eqref{dfBbscr3}. Taking
\begin{equation*}
    \Abscr_3:=\frac{1}{2}\nabo\hot\Bbscr_3+\frac{1}{4}(\Xscr\Abscr_2+\aXscr\dual\Abscr_2)-\frac{5}{2}\Zscr\hot\Bbscr+\frac{3}{2}(\Pscr\Thb+\dual\Pscr\dual\Thb),
\end{equation*}
we deduce from Lemma \ref{remainder}
\begin{align*}
    \aa=\frac{\Abscr}{r}+\frac{\nabo\hot\Bbscr+\frac{1}{2}(\Xscr\Abscr+\aXscr\dual\Abscr)}{r^2}+\frac{\Abscr_3}{r^3}+\OO_1\left(r^{-\frac{7}{2}-\dec}\right).
\end{align*}
This concludes the proof of Theorem \ref{expansionexist}.
\end{proof}
\section{Notations and conventions}\label{summary}
Various notations of geometric quantities and their limits on $\II^+$ have been introduced throughout this paper. The table below (Table \ref{tab:null-coefficients}) summarizes all these notations for the convenience of the reader.

\begin{center}
\begin{threeparttable}[ht]
    \renewcommand{\arraystretch}{1.5}
    \caption{Null Ricci Coefficients and Null Curvature Components}
    \label{tab:null-coefficients} 
    \begin{tabular}{|c|c|c|}
        \hline
        Real Components & Complex Components & Weighted Limit towards $\II^+$ \\
        \hline
        $\hch$ & $\Xh=\hch+i\dual\hch$ & $r^2\hch\to \The$ \\
        \hline
        $(\trch,\atrch)$ & $\tr X=\trch-i\atrch$ & $r^2\left(\trch-\frac{2}{r}, \atrch\right)\to (\Xscr,\aXscr)$ \\
        \hline
        $\hchb$ & $\Xbh=\hchb+i\dual\hchb$ & $r\hchb\to \Thb$\\
        \hline
        $(\trchb,\atrchb)$ & $\tr \Xb=\trchb-i\atrchb$ & $r^2\left(\trchb+\frac{2}{r}, \atrchb\right)\to (\Xbscr,\aXbscr)$ \\
        \hline
        $\ze$ & $Z=\ze+i\dual \ze$ & $r^2\ze\to\Zscr$ \\
        \hline
        $\eta$ & $H=\eta+i\dual \eta$ & $r\eta\to\Hscr$ \\
        \hline
        $\xib$ & $\Xib=\xib+i\dual\xib$ & $r\xib\to\Ybscr$ \\
        \hline
        $(j,j_\pm)$ & $(\Jk,\Jk_\pm)=(j,j_\pm)+i\dual (j,j_\pm)$ & $r(\Jk,\Jk_\pm)\to(\Jscr,\Jscr_\pm)$ \\
        \hline
        $\omb$ &  & $\big(r\omb,r(r\omb-\Wbscrone)\big)\to(\Wbscrone,\Wbscr)$ \\
        \hline
        $\left(\mu,\mub\right)$ & & $r^3\left(\mu,\mub\right)\to (\Mscr,\Mbscr) $\\
        \hline
        $\frac{1}{2}(\mu+\mub)$ & & $\frac{r^3}{2}(\mu+\mub)\to \Mk$ \\
        \hline
        $\b$ & $B=\b+i\dual \b$ & $(r^5d_1\b)_{\ell=1}\to (\Bk,\Bkd)$ \\
        \hline
        $(\rho, \rhod)$ & $P=\rho+i\rhod$ & $r^3(\rho,\rhod)\to (\Pscr,\dual\Pscr)$ \\
        \hline
        $\bb$ & $\Bb=\bb+i\dual \bb$ & $r^2\Bb\to \Bbscr$\\
        \hline
        $\aa$ & $\Ab=\aa+i\dual \aa$ & $r\Ab\to \Abscr$\\
        \hline
    \end{tabular}
\end{threeparttable}
\end{center}
The physical quantities are defined in the LGCM foliation as follows:
\begin{align*}
    (\EE,\PP,\CC,\JJ)=(\Mk_{\ell=0},\Mk_{\ell=1},\Bk,\Bkd),
\end{align*}
which are respectively energy, linear momentum, center of mass and angular momentum.

\vspace{0.1cm}
\small{Sergiu Klainerman: Department of Mathematics, Princeton University, Princeton, NJ, 08544. \\
Email: \textit{seri@math.princeton.edu}\\ \\
Dawei Shen: Department of Mathematics, Columbia University, New York, NY, 10027. \\
Email: \textit{ds4350@columbia.edu}\\ \\
Jingbo Wan: Department of Mathematics, Columbia University, New York, NY, 10027. \\
Email: \textit{jingbowan@math.columbia.edu}}
\end{document}